\documentclass[nonacm]{jair}

\makeatletter
\AtBeginDocument{%
  \fancypagestyle{standardpagestyle}{%
    \fancyhf{}%
    \renewcommand{\headrulewidth}{\z@}%
    \renewcommand{\footrulewidth}{\z@}%
    \fancyhead[LE]{\@headfootfont\thepage\quad\textbullet\quad\@shortauthors}%
    \fancyhead[RO]{\@headfootfont\shorttitle\quad\textbullet\quad\thepage}%
  }%
  \fancypagestyle{firstpagestyle}{%
    \fancyhf{}%
    \renewcommand{\headrulewidth}{\z@}%
    \renewcommand{\footrulewidth}{\z@}%
    \fancyfoot[C]{\@headfootfont\thepage}%
  }%
  \pagestyle{standardpagestyle}%
}
\makeatother

\usepackage{graphicx}
\usepackage{amsmath} 
\usepackage{amsthm} 
\usepackage{stmaryrd}
\usepackage[all,cmtip]{xy}
\usepackage{arydshln}
\usepackage{stackengine}
\usepackage{tensor}
\usepackage{rotating}
\usepackage{tikz}

\newtheorem{definition}{Definition}
\newtheorem{postulate}{Postulate}
\newtheorem{theorem}{Theorem}
\newtheorem{corollary}{Corollary}
\newtheorem{lemma}{Lemma}

\def\T{{\small\textsf{T}}}
\def\F{{\small\textsf{F}}}
\def\TF{{\{\T,\F\}}}

\def\Y{{\small\textsf{Y}}}
\def\N{{\small\textsf{N}}}
\def\YN{{\{\Y,\N\}}}

\def\xor{{\ \textsf{xor}\ }}

\def\botSet{{\{\bot_{01},\bot_{10},\bot_{11}\}}}

\def\EQ{\textsf{EQ}}
\def\DR{\textsf{DR}}
\def\PO{\textsf{PO}}
\def\PP{\textsf{PP}}
\def\PPi{\textsf{PPi}}
\def\botOneZero{{\bot_{10}}}
\def\botZeroOne{{\bot_{01}}}
\def\botOneOne{{\bot_{11}}}

\def\dia{\textsf{dia}}
\def\DIA{\textsf{Dia}}

\def\RccFive{{\textsf{R}^\textsf{5}}}
\def\RccFiveZeroBit{{\textsf{R}_\textsf{0-bit}}}
\def\RccFiveOneBit{{\textsf{R}_\textsf{$\parallel$-1bit}}}
\def\RccFiveTwoBit{{\textsf{R}_\textsf{$\parallel$-2bit}}}
\def\RccFiveThreeBit{{\textsf{R}_\textsf{$\parallel$-3bit}}}

\def\RccFiveOneBitX{{\textsf{R}_\textsf{$\times$-1bit}}}

\def\RccFiveTwoBitXX{{\textsf{R}_\textsf{$\times\times$-2bit}}}

\def\RccFiveQA{{\textsf{P}_{\scriptscriptstyle\,\i\;\j\ \k}^{\scriptscriptstyle\alpha\beta\gamma}}}

\def\ORccFive{{{\textsf{R}}}}
\def\IdxSet{\textsf{I}}
\def\IdxSetX{{\textsf{IX}}}
\def\IdxSetNu{{\textsf{I}^\nu}}

\def\RccFiveTwoBitFms{{\PL_\textsf{$\parallel$-2-bit}}}

\def\RccFiveTwoBitFmsX{{\PL_\textsf{$\times$-2-bit}}}

\def\ThreeBit{{\textsf{3bit}\ }}

\def\rccFive{{\textsf{RCC5}\ }}
\def\rccFivePlus{{\textsf{RCC5}^+}}

\def\rccfive{\textsf{rcc5}^+}

\def\oQ{{\overline{Q}}}

\def\Reg{\textsf{REG}}
\def\OReg{{\textsf{Reg}}}

\def\OO{\mathcal{O}}

\def\cReg{{\textsf{cReg}}}
\def\vReg{{\textsf{vReg}}}
\def\eReg{{\textsf{eReg}}}
\def\zReg{\textsf{zReg}}

\def\ccReg{{\textsf{ccReg}}}
\def\ceReg{{\textsf{ceReg}}}
\def\cvReg{{\textsf{cvReg}}}
\def\czReg{{\textsf{czReg}}}
\def\ecReg{{\textsf{ecReg}}}
\def\eeReg{{\textsf{eeReg}}}
\def\evReg{{\textsf{evReg}}}
\def\ezReg{{\textsf{ezReg}}}
\def\vcReg{{\textsf{vcReg}}}
\def\veReg{{\textsf{veReg}}}
\def\vvReg{{\textsf{vvReg}}}
\def\vzReg{{\textsf{vzReg}}}
\def\zzReg{{\textsf{zzReg}}}
\def\zvReg{{\textsf{zvReg}}}
\def\zcReg{{\textsf{zcReg}}}
\def\zeReg{{\textsf{zeReg}}}

\def\starEReg{\textsf{*eReg}}
\def\EstarReg{\textsf{e*Reg}}

\def\xReg{{\textsf{xReg}}}
\def\yReg{{\textsf{yReg}}}
\def\xyReg{{\textsf{xyReg}}}

\def\ESPR{{{{\textsf{Reg}}}^2_\parallel}}
\def\ESET{{{{\textsf{Reg}}}^2_\times}}

\def\RegThreeBit{{\OReg^2_\textsf{$\parallel$-3-bit}}}
\def\RegOneBit{{\OReg^2_\textsf{1-bit}}}
\def\RegTwoBit{{\OReg^2_\textsf{2-bit}}}

\def\RegOneBitPR{{\OReg^2_\textsf{$\parallel$-1-bit}}}
\def\RegOneBitET{{\OReg^2_\textsf{$\times$-1-bit}}}
\def\RegTwoBitPR{{\OReg^2_\textsf{$\parallel$-2-bit}}}
\def\RegTwoBitET{{\OReg^2_\textsf{$\times$-2-bit}}}

\def\RegThreeBitET{{\OReg^2_\textsf{$\times\times$-2-bit}}}

\def\IdxZeroBit{{\textsf{0-bit}}}
\def\IdxOneBit{{\textsf{1-bit}}}
\def\IdxTwoBit{{\textsf{2-bit}}}
\def\IdxThreeBit{{\textsf{3-bit}}}
\def\IdxOneBitPR{{\textsf{$\parallel$-1-bit}}}
\def\IdxOneBitET{{\textsf{$\times$-1-bit}}}
\def\IdxTwoBitPR{{\textsf{$\parallel$-2-bit}}}
\def\IdxTwoBitET{{\textsf{$\times$-2-bit}}}
\def\IdxThreeBitET{{\textsf{$\times\times$-2-bit}}}
\def\IdxThreeBitPR{{\textsf{$\parallel$-3-bit}}}

\def\IdxOneToThreeBit{{\textsf{1-3-bit}}}

\def\meet{\sqcap}
\def\M{{\textsf{M}}}

\def\pr{\textsf{pr}}

\def\a{{\textsf{a}}}
\def\b{{\textsf{b}}}
\def\c{{\textsf{c}}}
\def\i{{\textsf{i}}}
\def\j{{\textsf{j}}}
\def\k{{\textsf{k}}}

\def\Q{{\textsf{Q}}}
\def\QQ{{\mathcal{Q}}}

\def\A{{\textsf{A}}}

\def\id{{\textsf{id}}}

\def\Models{{\ {\mid\models}\ }}

\def\PL{{\textsf{P}}}

\def\rel{{\textsf{rel}}}

\def\Jud{{\textsf{Jud}}}

\def\iform{\textsf{iform}}
\def\iforms{\textsf{iforms}}
\def\sharpDet{\textsf{Det}}

\def\simNu{{\stackanchor[1pt]{$\sim$}{$\scriptstyle \nu$}}}

\def\is{{\textsf{s}}}

\def\istate{{\textsf{istate}}}
\def\istates{{\textsf{istates}}}

\def\IntQ{{\textsf{I}_\textsf{Q}}}
\def\IntJ{{\textsf{I}_\textsf{J}}}

\def\AR{\text{AR}}
\def\FH{\text{FH}}
\def\NPB{\text{NPB}}
\def\MPF{\text{MPF}}

\newcommand{\lrceil}[1]{{\lceil {#1} \rceil}}

\def\citep{\cite}
\def\citet{\cite}

\newtheorem{example}{Example}
\newtheorem{remark}{Remark}

\begin{document}

\title[Quantized mereology]{Quantized mereology}

\author{Thomas Bittner}
\orcid{0000-0003-4204-1532}
\email{bittner3@buffalo.edu}
\affiliation{%
  \institution{State University of New York at Buffalo}
  \city{Buffalo}
  \state{New York}
  \country{USA}
}

\begin{abstract}
This paper investigates quantized mereology, which merges quantum information theory with classical mereology, the formal analysis of the part-whole relationship. It emphasizes that information about crisp mereological relations is captured by classical bits, whereas mereological vagueness involves the use of quantum bits, or qubits, to capture mereological information. Transitioning from classical bits to qubits offers a new understanding of part-whole relations among entities that are subject to mereological vagueness. This paper methodically studies the quantized mereology that arises from the classical \rccFive formalism, one of the most widely studied formalisms in Qualitative Reasoning (QR), a subfield of Artificial Intelligence. The argument is made that this method can be extended to quantize more sophisticated QR formalisms, mirroring how  strategies are used in physics to transition from classical to quantum systems. The study explores: (i) the conversion of bits to qubits while understanding that these (qu)bits convey mereological information; (ii) the representation of qubit states via set partitions, acknowledging that set theory aligns more closely with classical logic than with the complex vector spaces employed in quantum physics; (iii) demonstrating that classical QR formalisms can be seen as special cases within quantized formalisms; and (iv) showing that qubits more effectively encapsulate information about vague phenomena.
\end{abstract}


\maketitle

\section{Introduction}
\label{sec:Introduction}

Profound challenges are encountered by classical mereology, the theory of part-whole relations as formalized by \citet{simons:parts}, when dealing with vagueness and indeterminacy \citep{sep-mereology,hawley:vagueness,sider:FourDimensionalism, Donnelly2014,Barnes2014}. While a precise language for crisp, well-defined entities is provided by formalisms like the Region Connection Calculus (RCC) \citep{Cohn01a}, the real world is often presented with entities like clouds, forests, or ecological habitats whose boundaries are irreducibly vague \citep{frank:IndeterminateBoundaries}. Paradoxes and conceptual difficulties are created when the sharp distinctions of classical mereology are applied to such cases.

In this paper, \textit{Quantized Mereology}, a new methodological framework that addresses this challenge by reframing the problem, is introduced and defended. Instead of vagueness being treated as a flaw of language \citep{fine:Vagueness:75} or knowledge \citep{williamson:vagueness}, it is posited by the framework that vagueness is an objective, ontological feature of a system \citep{Barnes2014}. The central thesis is that (mereological) vagueness is manifested as a fundamental, information-theoretic limit: less information about its relations can be carried by an elementary system composed of vague entities than by a system of crisp entities \citep{bittner:InformationMereologyAndVagueness, Bittner:VMQI}. (See Remark \ref{remark:VoidOfInformation} for an exception.) A shift in the formal tools that are used is forced by this information-theoretic limitation: a move from the classical truth values and bits of traditional logic to the richer structure of quantum bits, or qubits, corresponding to superpositions of truth values or correlation-based entanglement.

This shift can be visualized using a geometric 'information cube' (Fig.~\ref{fig:rccFive}) where crisp relations are vertices and vagueness corresponds to higher-dimensional features like edges and faces. These features represent \textit{information states} ($\istates$) where one or more of the fundamental questions about the relationship (e.g., Overlap?, Part Of?, Inverse Part Of?) remain indeterminate. This framework is formally developed in Section~\ref{sec:CrispMereology}.

\begin{example}\normalfont
\label{ex:RunningExample}
To render this formalism accessible, a running example is introduced. The basis for the example is provided by an ecological system with two vague regions, an Acidic Region (\AR) and a Fern Habitat (\FH). Quantized mereology posits that, due to vagueness, less than 3 bits of mereological information (say 1 bit) is possible for the ordered pair (\AR,\FH). The determination of the mereological relation between \AR\ and \FH\ depends on how the system is interacted with by an observer seeking answers to three \textit{fundamental questions}: ($\emptyset$) Do they overlap?  (1) Is \AR\ part of \FH? (2) Is \FH\ part of \AR?. The \textit{information state}, or $\istate$, of the system with respect to the observer is the set of all mereological possibilities consistent with all the available answers.

\paragraph{The Soil Chemist ($O_1$), Orthogonal Vagueness:}
A "Yes" answer to "Overlap?" is obtained when a fern is found by a soil chemist growing in a highly acidic soil sample. The resulting $\istate$ is the set of all relations where overlap is true: $\{\PO, \PP, \PPi, \EQ\}$ (cf. Fig.~\ref{fig:rccFive}). This leaves the system in a state of \textit{orthogonal vagueness}; the question of overlap is settled (1 bit of information is obtained), but the possible answers to the two orthogonal questions about parthood remain completely indeterminate (not unknown).

\paragraph{The Remote Sensing Specialist ($O_2$), Non-Orthogonal Vagueness:}
A remote sensing specialist finds that the answers to the two parthood questions must be the same, but what those answers are is not determined. This establishes a state of \textit{non-orthogonal vagueness}, or entanglement. Here, the individual parthood questions are indeterminate, but their relationship (equality) is fixed. This state also corresponds to one bit of information, as it resolves the binary question of whether the two parthood relations are equal or unequal. Possible answers to the overlap question remain indeterminate.

\paragraph{The Entomologist ($O_3$), Void of Information:}
An entomologist investigates a Mosquito Population Feeding ground ($\MPF$) whose extent is impossible to determine. An entity void of mereological information is $\MPF$, which is represented by the empty region ($\emptyset$) (cf. Remark \ref{remark:VoidOfInformation}). The relation between $\FH$ and $\MPF$ is the crisp, determinate relation $\botZeroOne$ (cf. Fig.~\ref{fig:rccFive}), which signifies that the second entity is void of information.
\qed
\end{example}

\begin{remark}\normalfont
\label{remark:VoidOfInformation}
While the framework's central thesis is that vagueness corresponds to an information limit, a special case arises for entities treated as being 'void of mereological information'. Such entities, which lack any discernible mereological features to be vague \textit{about}, are formally mapped to the null region ($\emptyset$). Consequently, any relationship between a 'void' entity and a crisp region becomes fully determinate, as the relations between the null region and any crisp region are uniquely defined by the underlying lattice structure. Therefore, this specific pairing does not exhibit an information limit; instead, it represents a state of complete mereological information. The paper's core argument about information limits applies to entities with fuzzy or indeterminate features, not this unique case. \qed
\end{remark}

\begin{figure}[h!]
\begin{minipage}{9.5cm}
\centering
\begin{tikzpicture}[scale=0.75, every node/.style={scale=0.75}]

\draw[dotted] (0.85, 2) -- (0.85, -3);
\node at (-2,-3) {Overlap? = No };
\node at (3.5,-3) {Overlap? = Yes };
    \begin{scope}[shift={(-5.10,0)}]
        \node at (0,-0.6) {$r_1=\emptyset$};
        \node at (0,0.6) {$r_2=\emptyset$};
        \node[below] at (0,-1.3) {$\begin{array}{cc}\botOneOne\\(r_1,r_2)\end{array}$};
    \end{scope}

    \begin{scope}[shift={(-3.40,0)}]
        \draw[dashed] (0,-0.6) circle (0.5);
        \node at (0,-0.6) {$r_1$};
        \node at (0,0.6) {$r_2=\emptyset$};
        \node[below] at (0,-1.3) {$\begin{array}{cc}\botZeroOne\\(r_1,r_2)\end{array}$};
    \end{scope}

    \begin{scope}[shift={(-1.70,0)}]
        \node at (0,-0.6) {$r_1=\emptyset$};
        \draw (0,0.6) circle (0.5);
        \node at (0,0.6) {$r_2$};
        \node[below] at (0,-1.3) {$\begin{array}{cc}\botOneZero\\(r_1,r_2)\end{array}$};
    \end{scope}

    \begin{scope}[shift={(0,0)}]
        \draw[dashed] (0,-0.6) circle (0.5);
        \node at (0,-0.6) {$r_1$};
        \draw (0,0.6) circle (0.5);
        \node at (0,0.6) {$r_2$};
        \node[below] at (0,-1.3) {$\begin{array}{cc}\textsf{DR}\\(r_1,r_2)\end{array}$};
    \end{scope}

    \begin{scope}[shift={(1.7,0)}]
        \draw[dashed] (0,-0.3) circle (0.5);
        \node at (0,-0.5) {$r_1$};
        \draw (0,0.3) circle (0.5);
        \node at (0,0.5) {$r_2$};
        \node[below] at (0,-1.3) {$\begin{array}{cc}\textsf{PO}\\(r_1,r_2)\end{array}$};
    \end{scope}

    \begin{scope}[shift={(3.5, 1)}]
        \draw (0,0) circle (0.7);
        \node at (0,0.4) {$r_2$};
        \draw[dashed] (0,0) circle (0.3);
        \node at (0,0) {$r_1$};
        \node[above] at (0,0.8) {$\textsf{PP}(r_1,r_2)$};
    \end{scope}

    \begin{scope}[shift={(3.5,-1)}]
        \draw[dashed] (0,0) circle (0.7);
        \node at (0,0.4) {$r_1$};
        \draw (0,0) circle (0.3);
        \node at (0,0) {$r_2$};
        \node[below] at (0,-0.8) {$\textsf{PPi}(r_1,r_2)$};
    \end{scope}

    \begin{scope}[shift={(5.5,0)}]
        \draw (0,0) circle (0.6);
        \node at (0,0) {$r_1=r_2$};
        \node[below] at (0,-0.8) {$\begin{array}{cc}\EQ\\(r_1,r_2)\end{array}$};
    \end{scope}

\end{tikzpicture}
\end{minipage}
			\hspace{1cm}\begin{minipage}{4cm}
				\small
				\xymatrixcolsep{2.5pc}
				\xymatrixrowsep{2.5pc}
				\xymatrix@!0{
 & \txt{\small[011]\\$\botOneOne$} \ar@{->}[rr]\ar@{<-}'[d][dd]
   & & \txt{\small[111]\\\EQ} \ar@{<-}[dd]
\\
\txt{\small[001]\\$\botZeroOne$} \ar@{->}[ur]\ar@{->}[rr]\ar@{<-}[dd]^{\overrightarrow{z}}
 & & \txt{\small[101]\\\PPi} \ar@{->}[ur]\ar@{<-}[dd]
\\
 & \txt{\small[010]\\$\botOneZero$} \ar@{->}'[r][rr]
   & & \txt{\small[110]\\\PP}
\\
 \txt{\small[000]\\\DR} \ar@{->}[rr]_{\overrightarrow{x}}\ar@{->}[ur]_{\overrightarrow{y}}
 & & \txt{\small[100]\\$\PO$} \ar@{->}[ur]
}
\end{minipage}
\caption{\label{fig:rccFive}(left) Visualization of prototypical instances of the  $\rccFivePlus$ relations; (right) The \ThreeBit information cube (displayed in the $\emptyset12$ base) which vertices  correspond to  rows of the table which (non-directed) edges arise from the graph of the \ThreeBit lattice. \cite{bittner:InformationMereologyAndVagueness,Bittner:VMQI}}
\end{figure}

In Quantized Mereology, this process of resolving indeterminacy is modeled as a two-part process of 'information-supported judgment' \citep{bittner:Hilbert}. The first stage, \textit{Information Acquisition ($\IntQ$)}, is the observer-world interaction by which a relational $\istate$ is established. The second stage, \textit{Cognitive Judgment ($\IntJ$)}, is a subsequent act where a specific classical relation is selected from the possibilities constrained by the $\istate$. The construction of the rigorous, formal engine for that first stage, $\IntQ$, is the primary focus of this paper.

The paper proceeds by methodically "quantizing" the well-known $\rccFivePlus$ relations ($\rccFivePlus$ = \rccFive with the 'empty' region) to create a more expressive framework for handling vagueness. This approach is inspired by quantum physics in that it uses a similar formalism but its domain of application is very different: macroscopic regions vs. microscopic particles. A key aspect of this methodology is that qubit states are represented not with complex vector spaces, but with an accessible set-partition-based formalism, following  \citet{Ellerman2022}.

The internal coherence of this framework is established by the \textit{Quantized Information Semantics (QIS) Principle}, the central claim this paper will prove. The QIS principle demonstrates that two distinct pathways for determining a mereological information state always yield the same result. The first is a top-down \textit{direct semantic path}, where an answered question identifies an information state by specifying the set of all classical possibilities consistent with its meaning. The second is a bottom-up \textit{indirect quantization path}, which formally constructs the very same state by quantizing the classical bit-level information obtained through an observer-system interaction. This core principle of equivalence can be expressed in plain language as follows:
\begin{quote}
    \textit{The information state of a system derived by formally constructing it from the bottom up through an observer's interaction is equivalent to the information state derived from the top down by the semantic meaning of the question the observer asked.}
\end{quote}

The remainder of this paper will build the formal machinery necessary to prove this principle, showing how classical mereological relations emerge as a special, information-complete case of this more general framework.

\begin{remark}[A Note on the Term 'Quantization']
\normalfont\label{remark:Quantization}
The title of this paper, \textit{Quantized Mereology}, intentionally invokes the language of quantum theory, and it is crucial to clarify the nature of this connection at the outset. The framework presented here is \textit{not} a claim that macroscopic entities like clouds or habitats are governed by quantum physical laws. Rather, it leverages the powerful \textit{information-theoretic formalism} of quantum theory as a mathematical tool for modeling objective vagueness.

Specifically, while this work adopts core concepts like qubits, superposition, and entanglement, it diverges from the formalism used in physics. Instead of representing states as vectors in complex Hilbert spaces, this paper follows the work of \citet{Ellerman2022} by representing qubit states in a more accessible, set-theoretic manner: as cells within partitions of a set of classical outcomes. This approach allows the logic of quantum information to be applied to qualitative reasoning in a way that aligns more naturally with classical logic and set theory. Therefore, "quantization" in this context refers to the process of transitioning from the discrete, binary logic of classical bits to the richer, superpositional logic of qubits to better capture information about vague phenomena. \qed
\end{remark}

\subsection*{A Note to the Reader on Navigating This Paper}

This paper is structured to be read in two ways, both of which build towards understanding the paper's central idea: the \textit{Quantized Information Semantics (QIS) Principle}. You are invited to choose the path that best suits your interests. For a comprehensive list of all major notations used in this paper, their meanings, and pointers to their formal definitions, please consult the Glossary of Notation in Table~\ref{tab:Glossary} in the Appendix.

\paragraph{Path 1: The Conceptual Track}
If your goal is to understand the core argument and conclusions, reading the following sections in order is recommended:
\begin{itemize}
   \item {Section \ref{sec:Introduction} (Introduction):} Introduces the core problem, the running example, and the central QIS Principle.
   \item {Section \ref{sec:FromBinaryToProps} (From Binary Propositions to a Quantized Information Space):} Explains the key geometric intuition of the "information cube".
   \item {Section \ref{sec:PartitionLattice} (The Partition Lattice), focusing on Sec~\ref{sec:IndexNotationGuide} ("An informal guide"):} Introduces the set of all possible information states and decodes the index notation.
   \item {Section \ref{sec:ElementaryQuestions} (Elementary Mereological Questions), focusing on Sec.~\ref{sec:MereologicalRelations} and Sec.~\ref{sec:LinkingIstates}:} Develops the key components needed to understand the QIS principle.
   \item {Section \ref{sec:CompositionalityRoadmap} (The QIS Framework):} Formally presents and proves the QIS Principle.
   \item {Sections \ref{sec:Synthesis} \& \ref{sec:Conclusion} (Synthesis \& Conclusion):} Summarize the paper's contributions and significance.
\end{itemize}
Reading these sections alone will provide a self-contained and comprehensive understanding of the framework's essential features.

\paragraph{Path 2: The Formal Track}
If you are interested in the detailed proofs and mathematical machinery, the following sections provide the complete formal engine that underpins the conceptual claims:
\begin{itemize}
   \item {Section \ref{sec:CrispMereology} (Mereological Relations):} A review of the classical $\rccFivePlus$ formalism.
   \item {Sections \ref{sec:InteractionMaps}, \ref{sec:QuantizationMap}, \& \ref{sec:CombiningQuestions} (Interaction Maps, Quantization Map, \& Combining Questions):} These sections provide the complete, rigorous construction of the indirect quantization path and the direct semantic path of the QIS principle.
\end{itemize}

\section{From Binary Propositions to a Quantized Information Space}
\label{sec:FromBinaryToProps}

Many qualitative representation and reasoning techniques in Artificial Intelligence are based on sets of crisp binary relations that are jointly exhaustive and pair-wise disjoint (JEPD) \citep{Bennett00ecai,galton:Change:00,186,weld:QualitativeReasoning,cohn:TaxonomySpatialRelations,cohn:QSR}. The foundation of such a system can be understood as starting with a set of N fundamental binary propositions $\{Q_1, \dots, Q_N\}$, that can be evaluated for any pair of entities $(r_1, r_2)$ from a given domain.

For any pair, the evaluation yields a sequence of $N$ Boolean values, forming a "spectrum"\footnote{The author would like to thank Reviewer B who suggested this way of presenting the material.} that measures the relationship between the entities with respect to these propositions. This process naturally gives rise to $2^N$ possible outcomes, where each outcome corresponds to a unique sequence of truth values. Each of these sequences defines a specific, crisp relation. In this work, it is assumed that all $2^N$ combinations are possible, meaning none of the resulting relations are empty. These $2^N$ distinct crisp relations can be visualized as the vertices of an N-dimensional information cube, where the axes correspond to the fundamental propositions. This kind of construction is critical for interpreting truth values of (fundamental) propositions as bits of information.

\begin{example}[$\rccFivePlus$]\normalfont
\label{ex:rccFivePlus}
The $\rccFivePlus$ formalism, which is central to this paper, provides a concrete example where $N=3$. The three fundamental propositions about the meet operation ($Q_\emptyset, Q_1, Q_2$) (Tab.~\ref{tab:rccFive}~top) define the three axes of a 3-dimensional cube (the \ThreeBit information cube, Fig.~\ref{fig:rccFive}, right). The truth values of these propositions correspond to the Yes/No answers to the three fundamental questions of Example~\ref{ex:RunningExample}. The eight   mereological relations include the five relations \DR, \PO, \PP, \PPi, \EQ\ between `non-empty' regions and the three relations $\botOneZero$, $\botZeroOne$, $\botOneOne$ where at least one of the relata is the `empty' region (Fig.~\ref{fig:rccFive}~left). Jointly they correspond precisely to the eight vertices of the \ThreeBit information cube, where the three truth-values are interpreted as coordinates as illustrated in the figure.

The index notation employed in this paper provides a compact way to represent subspaces of the \ThreeBit cube. The vertices are zero-dimensional subspace and notated as $\PL_{\emptyset 1 2}^{\a\b\c}$, where the lower indices name the questions and the upper indices represent the corresponding answer, e.g.  $\{\PP\} = \PL_{\emptyset12}^{\T\T\F}$, etc.  The Entomologist's observation of Example~\ref{ex:RunningExample} corresponds to the crisp relation $\botZeroOne$, which in this notation is the complete 3-bit pattern $\PL_{\emptyset 1 2}^{\F\F\T}=\{\botZeroOne\}$.
\qed
\end{example}

The core thesis of Quantized Mereology is that vagueness is an objective limit on the information an elementary system (here an ordered pair of domain entities) can carry. This can be understood intuitively as a situation where not all $N$ truth values of the fundamental predicates are determinate. For instance, if the value of one proposition, say $Q_\emptyset$, is  indeterminate for a given ordered pair, two  crisp possibilities remain: one where $Q_\emptyset$ is true and one where it is false. Geometrically, these two possibilities correspond to an edge of the information cube connecting two adjacent vertices. Similarly, if two propositions are indeterminate, four possibilities remain, corresponding to a face of the cube. These geometrically significant subsets of the cube, namely vertices (0D), edges (1D), and planes (2D), are precisely the \textit{information states} (\istates) that are used to model different kinds and degrees of mereological vagueness.

\begin{example}[Geometric Interpretation of an \istate]
\normalfont
\label{ex:GeometricInterpretation}
Recall from the running example (Example~\ref{ex:RunningExample}) that the elementary system (\AR, \FH) is limited to carrying 1 bit of mereological information. The Soil Chemist's "Yes" answer to the overlap question resolves the system into the possibilities represented by the set $\PL_\emptyset^\T=\{\PO, \PP, \PPi, \EQ\}$. This  is not just a set; in the geometric framework, it corresponds to a 2D face on the information cube. The resulting state, denoted $\PL_{\, \emptyset\, 1\, 2}^{\T--}$, encapsulates a determinate answer for overlap while the parthood relations remain indeterminate. This state carries exactly 1 bit of information, as selecting this face is equivalent to choosing one cell from the underlying two-cell partition $\{\PL_\emptyset^\F, \PL_\emptyset^\T\}$, in which $\PL_\emptyset^\F$ is formed by the four remaining possibilities.  The Remote Sensing Specialist's entangled state is written as $\PL_{\emptyset 1 2}^{-\overline{\b\b}}$, signifying an indeterminate answer for overlap but a determinate correlation between the two (individually indeterminate) parthood questions. \qed
\end{example}

This geometric picture motivates the transition from classical bits to quantum bits (qubits). A crisp relation (a single vertex) is fully described by $N$ classical bits. However, a vague state (an edge or a face) represents a set of possibilities and cannot be described by a single $N$-bit string. In this framework, such a state is modeled as a qubit. Following the work of \citet{Ellerman2022}, these qubits are not represented as vectors in a  Hilbert space, but more accessibly as cells in a partition of the set of $2^N$ classical vertices. The remainder of this paper will formalize this structure, showing how these partitions form a lattice (Figure \ref{fig:RccFivePartitionLattice}) that serves as the foundation for a robust logic of vague relations.

\begin{remark}[A Primer on the Index Notation]\normalfont
\label{remark:IndexNotation}
To formally describe these information states, this paper uses a compact index notation of the form $\PL_{\emptyset12}^{\alpha\beta\gamma}$. This notation can be understood through a few core principles introduced here and detailed in Section~\ref{sec:PartitionLattice}. The three lower indices, $\emptyset, 1, 2$, serve as an "address space" corresponding to the three fundamental mereological questions: Overlap?, Part Of?, and Inverse Part Of?. The upper indices, $\alpha, \beta, \gamma$, represent the information content held in these slots, as follows: (i) A variable or a \textit{truth value} (e.g., $\a$, $\T$ or $\F$) indicating a determinate, classical answer to the corresponding question.
  (ii)   A \textit{hyphen} (e.g., $-$) signifies complete indeterminacy, where the answer is in a state of equal superposition.
  (iii)   An \textit{overline} (e.g., $\overline{\a\b}$) signifies entanglement, where the individual answers are indeterminate but are linked by a determinate correlation. This notational system provides a precise way to represent the various kinds of mereological information, from crisp classical facts to the orthogonal and non-orthogonal vagueness illustrated in the example above. \qed
\end{remark}

\section{Mereological relations and triples of Boolean values}
\label{sec:CrispMereology}

This section reviews material from \cite{bittner:InformationMereologyAndVagueness,Bittner:VMQI}. Consider the set $\Reg$ that possesses a complete lattice structure, ensuring the existence of both the greatest lower bound and the least upper bound for any of its subsets. For any two elements in $\Reg$, the operations meet and join ($\meet,\sqcup:\Reg \times \Reg \to \Reg$) are, respectively, determined by the greatest lower bound and the least upper bound of the set containing those two elements. Furthermore, within $\Reg$, it is postulated that the operations meet and join are distributive. The set $\OReg$ is the set of mereological entities/regions. The sets $\OReg$ and $\Reg$ share common elements, although not all elements of $\OReg$ are included in $\Reg$. In particular, $\emptyset$ the smallest element in the lattice structure of $\Reg$, often called the empty region, is important as an abstract mathematical object but is not recognized as a mereological entity/region.

\subsection{$\rccFivePlus$ relations in terms of the meet operation} \label{sec:RccFiveRelations}

The meet operation, $\meet$, is used to define mereological relations in a way that allows for an information-theoretic analysis. The $\rccFivePlus$ relations between two elements $r_1,r_2\in\Reg$ are determined by three indexed Boolean values, as shown in the top of Fig. \ref{fig:rccFive}. The mapping $\rccfive$ of members of $\Reg\times\Reg$ (abbreviated as $\Reg^2$) to indexed Boolean values induces an equivalence relation $\sim$ that partitions the set $\Reg^2$.\footnote{The symbol $\sim$ will be used in three senses: as a binary equivalence relation between regions (i.e. $r_1 \sim r_2 \in \Reg \times \Reg$), as the Boolean negation (i.e. $\T=\sim\F$) and as the `negation' of answers (i.e. $\Y=\sim\N$).} The symbol `$\bot$' is used to denote the patterns of Boolean values that hold for pairs of members of $\Reg$ when at least one of the components is the minimal lattice element / the empty region. The subscripts of $\bot$ indicate when the first element, the second element, or both elements of the ordered pair are the empty region. The set $\RccFive=\{\DR,\PO, \PP,\PPi,\EQ\}$ corresponds to the usual set of \rccFive\ relations between (non-empty) regions in $\Reg$.

\begin{table}
	\begin{center}
		\begin{minipage}{13cm}
				$$\begin{array}{cc|ccc|c}
				&& \multicolumn{3}{c|}{\rccfive   : (r_1,r_2) \mapsto } &\\
					r_1& r_2 & \emptyset \mapsto & 1 \mapsto & 2 \mapsto & \\
					= & = & r_1 \meet r_2, & r_1 \meet r_2, & r_1 \meet r_2 & \ORccFive \\
					\emptyset & \emptyset & \neq \emptyset & = r_1 & = r_2 \\
					\hline
					\F & \F & \F                  &         \F       &         \F                      & \DR \\
					\F & \T & \F                  &         \F       &         \T                      & \botZeroOne \\
					\T & \F & \F                  &         \T       &         \F                      & \botOneZero \\
					\T & \T & \F                  &         \T       &         \T                      & \botOneOne \\
					\F & \F & \T                  &         \F       &         \F                      & \PO \\
					\F & \F & \T                  &         \F       &         \T                    & \PPi \\
					\F & \F & \T                  &         \T       &         \F                     & \PP \\
					\F & \F & \T                  &         \T       &         \T                      & \EQ \\\hline
						\multicolumn{2}{r|}{\text{faces}} & \PL_\emptyset^\T,\PL_\emptyset^\F & \PL_1^\T,\PL_1^\F & \PL_2^\T,\PL_2^\F & \PL_\i^\a\\
\multicolumn{6}{c}{}\\
	\multicolumn{6}{c}{\text{See corresponding definitions in Tab.~\ref{tab:OverviewSupportSemantics} for:}} \\
\multicolumn{2}{r|}{\text{propositions}} & Q_\emptyset & Q_1 & Q_2 & Q_i\\\hline
			\multicolumn{2}{r|}{\txt{\\questions}} & \text{Overlap?} & \text{Parthood?} & \txt{Inverse\\Parthood}?\\
			\multicolumn{2}{r|}{} & Q_\emptyset? & Q_1? & Q_2? & Q_i? \\\hline
				\end{array}$$
			\end{minipage}

			\begin{minipage}{13cm}
	$$\arraycolsep=0.5pt\begin{array}{rcl}
	\\
	\IdxSet & = & \{\emptyset,1,2\}\\
		\rccfive & : & \Reg\times\Reg \to (\IdxSet \to \TF )\\
		\rccfive(r_1,r_2)    & = & \{\emptyset \mapsto (r_1 \meet r_2 \neq \emptyset),
		1\mapsto (r_1 \meet r_2 = r_1),
		 2\mapsto (r_1 \meet r_2 = r_2)\}\\
		 &\equiv & \llbracket r_1,r_2\rrbracket\\
		\rccfive_i (r_1,r_2) & \equiv  & (\rccfive\, (r_1,r_2)\, i) \\\hline
		(r_1,r_2)\sim(s_1,s_2) & \equiv & (\rccfive\ (r_1,r_2)) = (\rccfive\  (s_1,s_2))\\
		\text{[}(r_1,r_2)\text{]}_\sim & = & \{ (s_1,s_2) \in \Reg^2 \mid  (r_1,r_2)\sim(s_1,s_2) \}\\
		\Reg^2/_{\sim} & = & \{ \text{[}(r_1,r_2)\text{]}_{\sim} \mid (r_1,r_2)\in\Reg^2\} \\
		\textsf{[\rel]} & = & \{ (r_1,r_2)\in \Reg^2 \mid \rel = \rccfive (r_1,r_2) \}\\
		\Reg^2/_{\sim} & = & \{[\DR],[\PO], [\PP],[\PPi],[\EQ], [\botZeroOne],[\botOneZero],[\botOneOne]\} = \rccFivePlus\\
		\ORccFive & = & \{\DR,\PO, \PP,\PPi,\EQ,
		 \botZeroOne,\botOneZero,\botOneOne\} = (\TF_i)_{i\in\IdxSet}\\
		 \\
	\end{array}$$
			\end{minipage}


		\caption{\label{tab:rccFive}(top) Bit pattern (called \ThreeBit pattern) for $\rccFivePlus$ relations ($\RccFive = \ORccFive \setminus \botSet$) defined in terms of the mapping $\rccfive  :  \Reg^2 \to  (\IdxSet \to \TF)$ with $\IdxSet=\{\emptyset,1,2\}$; (bottom) The map $\rccfive$ induces the equivalence relation $(r_1,r_2)\sim(s_1,s_2)$ and the respective equivalence classes $[(r_1,r_2)]_\sim$  \cite{BittnerStellAMAI}.}	\end{center}
\end{table}

Let $\IdxSet = \{\emptyset, 1, 2\}$ be a set of indexes. The members of the index set $\IdxSet$ are assumed to be totally ordered, with the particular order $\emptyset < 1 < 2$ stipulated in this paper.  Each index $i\in\IdxSet$ is associated with an indexed Boolean variable $a_i$ such that $a_i = (\rccfive\, (r_1,r_2)\, i)= \rccfive_i(r_1,r_2)$ for ordered pairs $(r_1,r_2)\in\Reg^2$ according to the definition of the function $\rccfive$ in the top  of Tab. \ref{tab:rccFive}. That is, for $(r_1,r_2)$ the variable $a_\emptyset$ takes the truth value of formula $(r_1 \meet r_2 \neq \emptyset)$, while the value of $a_1$ corresponds to the truth value of $(r_1 \meet r_2 = r_1)$ and the value of $a_2$ corresponds to the truth value of $(r_1 \meet r_2 = r_2)$. In this view, Tab. \ref{tab:rccFive}(top) can be considered as a set of indexed families where $\ORccFive = \{ (a_i)_{i\in\IdxSet}\mid a_i\in \TF\}$ with $(a_i)_{i\in\IdxSet}= \{a_\emptyset, a_1,a_2\}$. The set $\ORccFive$ is also called the set of \ThreeBit pattern.

\subsection{Equivalence and ordering relations}


Consider the definitions of the $\rccFivePlus$ relations in terms of indexed families of Boolean values (\ThreeBit pattern) in  Tab. \ref{tab:rccFive}(top). The members of the set $\ORccFive$ are organized by a family of equivalence relations $\cong_\alpha$ that are established by setting $\{a_\emptyset,a_1,a_2\} \cong_\alpha \{b_\emptyset,b_1,b_2\}$ if $a_\alpha = b_\alpha$. There is a relation $\cong_\alpha$ for each index $\alpha \in \IdxSet= {\{\emptyset,1,2\}}$.  The sets of elements of $\ORccFive$ that are equivalent with respect to $\cong_\alpha$ are denoted as $\PL_\alpha^\a = \{ s\in\ORccFive \mid s_\alpha = \a \}$ with $\alpha\in\IdxSet$ and $\a\in\TF$. The sets $\PL_\emptyset^\T$, $\PL_\emptyset^\F$, $\PL_1^\T$, $\PL_1^\F$, $\PL_2^\T$ and $\PL_2^\F$ are shown in Figure \ref{fig:PairsOfPlanes}. The set $\ORccFive$ is divided into equivalence classes based on sets with the same lower indexes; for example, $\{\PL_\alpha^\T,\PL_\alpha^\F \}$ is one of those systems of classes, which is represented by $\ORccFive /_{\cong_\alpha}$.


Boolean values are ordered by setting $\F$ to less than $\T$.  A set of weak partial ordering relations $\le_\alpha$ is created by setting that $\{a_\emptyset,a_1,a_2\} \le_\alpha \{b_\emptyset,b_1,b_2\}$ if $a_\alpha \le b_\alpha$.  The relation $\le_\alpha$ is a weak partial ordering because if $x \le_\alpha y$ and $y \le_\alpha x$ then $x$ and $y$ are $\alpha$-equivalent in the sense of $\cong_\alpha$.  Using $\cong_\alpha$ and $\le_\alpha$ one can define a family of immediate $\alpha$-predecessor relations by setting $r \ll_\alpha s$ if $r \le_\alpha s$ and $r \not\cong_\alpha s$ and $r \cong_\beta s$ for all $\beta$ not equal to $\alpha$. The immediate $\alpha$-neighbors are signified as $r \ll\gg_\alpha s$ iff $r \ll_\alpha s$ or $r \gg_\alpha s$.

\begin{figure}
	\begin{minipage}{16cm}
		\begin{minipage}{5cm}
			\footnotesize
			\xymatrixcolsep{2.7pc}
			\xymatrixrowsep{2.7pc}
			\xymatrix@!0{
				& \txt{[\textbf{0}11]\\$\top_\alpha$\\$\botOneOne$} \ar@{..}[rr]\ar@{-}'[d][dd]
				& & \txt{[\textbf{1}11]\\$\top_\ORccFive$\\\EQ} \ar@{-}[dd]
				\\
				\txt{[\textbf{0}01]\\$\bot_{\alpha^\uparrow}$\\$\botZeroOne$} \ar@{-}[ur]\ar@{..}[rr]\ar@{-}[dd]
				& & \txt{[\textbf{1}01]\\$\top_\alpha^\downarrow$\\\PPi} \ar@{-}[ur]\ar@{-}[dd]
				\\
				& \txt{[\textbf{0}10]\\$\bot_{\alpha^\downarrow}$\\$\botOneZero$} \ar@{..}'[r][rr]
				& & \txt{[\textbf{1}10]\\$\top_\alpha^\uparrow$\\\PP}
				\\
				\txt{[\textbf{0}00]\\$\bot_\ORccFive$\\\DR} \ar@{..}[rr]\ar@{-}[ur]
				& & \txt{[\textbf{1}00]\\$\bot_\alpha$\\\PO} \ar@{-}[ur]
			}
			$$\arraycolsep=1pt
			\begin{array}{rl}
			\alpha & =\emptyset\\
\\\\				\\
				\multicolumn{2}{c}{\text{the partition}\ \{\PL_{\emptyset\,1\,2}^{\a--}\} }\\
			\end{array}
			$$
		\end{minipage}\hfill
		\begin{minipage}{5cm}
			\footnotesize
			\xymatrixcolsep{2.7pc}
			\xymatrixrowsep{2.7pc}
			\xymatrix@!0{
				& \txt{[0\textbf{1}1]\\$\top_\alpha^\downarrow$\\$\botOneOne$} \ar@{-}[rr]\ar@{-}'[d][dd]
				& & \txt{[1\textbf{1}1]\\$\top_\ORccFive$\\\EQ} \ar@{-}[dd]
				\\
				\txt{[0\textbf{0}1]\\$\bot_\alpha^\uparrow$\\$\botZeroOne$} \ar@{..}[ur]\ar@{-}[rr]\ar@{-}[dd]
				& & \txt{[1\textbf{0}1]\\$\top_\alpha$\\\PPi} \ar@{..}[ur]\ar@{-}[dd]
				\\
				& \txt{[0\textbf{1}0]\\$\bot_\alpha$\\$\botOneZero$} \ar@{-}'[r][rr]
				& & \txt{[1\textbf{1}0]\\$\top_\alpha^\uparrow$\\\PP}
				\\
				\txt{[0\textbf{0}0]\\$\bot_\ORccFive$\\\DR} \ar@{-}[rr]\ar@{..}[ur]
				& & \txt{[1\textbf{0}0]\\$\bot_\alpha^\downarrow$\\\PO} \ar@{..}[ur]
			}
			$$\arraycolsep=1pt
			\begin{array}{rl}
			\alpha & = 1 \\
			\PL_\alpha^\F & = \{\bot_\ORccFive,\bot_\alpha^\downarrow,\top_\alpha,\bot_\alpha^\uparrow\}\\
				\PL_\alpha^\T &= \{\bot_\alpha,\top_\alpha^\uparrow,\top_\ORccFive,\top_\alpha^\downarrow\}\\
				\\
				\multicolumn{2}{c}{\text{the partition}\ \{\PL_{\emptyset\,1\,2}^{-\b-}\}}\\
			\end{array}
			$$
		\end{minipage}\hfill
		\begin{minipage}{5cm}
			\footnotesize
			\xymatrixcolsep{2.7pc}
			\xymatrixrowsep{2.7pc}
			\xymatrix@!0{
				& \txt{[01\textbf{1}]\\$\top_\alpha^\downarrow$\\$\botOneOne$} \ar@{-}[rr]\ar@{..}'[d][dd]
				& & \txt{[11\textbf{1}]\\$\top_\ORccFive$\\\EQ} \ar@{..}[dd]
				\\
				\txt{[00\textbf{1}]\\$\bot_\alpha$\\$\botZeroOne$} \ar@{-}[ur]\ar@{-}[rr]\ar@{..}[dd]
				& & \txt{[10\textbf{1}]\\$\top_\alpha^\uparrow$\\\PPi} \ar@{-}[ur]\ar@{..}[dd]
				\\
				& \txt{[01\textbf{0}]\\$\bot_\alpha^\uparrow$\\$\botOneZero$} \ar@{-}'[r][rr]
				& & \txt{[11\textbf{0}]\\$\top_\alpha$\\\PP}
				\\
				\txt{[00\textbf{0}]\\$\bot_\ORccFive$\\\DR} \ar@{-}[rr]\ar@{-}[ur]
				& &  \txt{[10\textbf{0}]\\$\bot_\alpha^\downarrow$\\\PO} \ar@{-}[ur]
			}
			$$\arraycolsep=1pt
			\begin{array}{rl}
			\alpha & = 2\\
\\\\				\\
				\multicolumn{2}{c}{\text{the partition}\ \{\PL_{\emptyset\, 1\, 2}^{--\c}\}}\\
			\end{array}
			$$
		\end{minipage}
	\end{minipage}

\vspace{0.5cm}

\begin{minipage}{16cm}
\begin{minipage}{5cm}
\footnotesize
\xymatrixcolsep{2.7pc}
\xymatrixrowsep{2.7pc}
\xymatrix@!0{
 & \txt{[0{1}1]\\$\top_\alpha$\\$\botOneOne$} \ar@{-}[rr]\ar@{..}'[d][dd]
   & & \txt{[1{1}1]\\$\top_\ORccFive$\\\EQ} \ar@{..}[dd]
\\
\txt{[001]\\$\bot_\alpha^\uparrow$\\$\botZeroOne$} \ar@{..}[ur]\ar@{-}[rr]\ar@{..}[dd] \ar@{-}[rd]
 & & \txt{[101]\\$\top_\alpha^\downarrow$\\\PPi} \ar@{..}[ur]\ar@{..}[dd] \ar@{-}[rd]
\\
 & \txt{[010]\\$\bot_\alpha^\downarrow$\\$\botOneZero$} \ar@{-}'[r][rr]
   & & \txt{[110]\\$\top_\alpha^\uparrow$\\\PP}
\\
 \txt{[0{0}0]\\$\bot_\ORccFive$\\\DR} \ar@{-}[rr]\ar@{..}[ur] \ar@{-}[ruuu]
 & & \txt{[1{0}0]\\$\bot_\alpha$\\\PO} \ar@{..}[ur] \ar@{-}[ruuu]
}
$$\arraycolsep=1pt
\begin{array}{rl}
   & \alpha  = \emptyset\\
	\multicolumn{2}{c}{\text{the}\ \times\text{partition}\ \{\PL_{\emptyset\, 1\, 2}^{-\overline{\b\c}}\}}\\
\\
\PL_{\times\alpha}^\T & = \PL_\j^\a \vartriangle \PL_\k^\a\\
\PL_{\times\alpha}^\F & = \PL_\j^\a \vartriangle \PL_\k^{\sim\a}\\\end{array}
$$
\end{minipage}
\begin{minipage}{5cm}
\footnotesize
\xymatrixcolsep{2.7pc}
\xymatrixrowsep{2.7pc}
\xymatrix@!0{
 & \txt{[{0}11]\\$\top_\alpha^\downarrow$\\$\botOneOne$} \ar@{..}[rr]\ar@{..}'[d][dd]
   & & \txt{[{1}11]\\$\top_\ORccFive$\\\EQ} \ar@{..}[dd]
\\
\txt{[{0}01]\\$\bot_\alpha^\uparrow$\\$\botZeroOne$} \ar@{-}[ur]\ar@{..}[rr]\ar@{..}[dd]
 & & \txt{[{1}01]\\$\top_\alpha$\\\PPi} \ar@{-}[ur]\ar@{..}[dd]
\\
 & \txt{[{0}10]\\$\bot_\alpha$\\$\botOneZero$} \ar@{..}'[r][rr] \ar@{-}[ru]
   & & \txt{[{1}10]\\$\top_\alpha^\uparrow$\\\PP}  \ar@{-}[uull]
\\
 \txt{[{0}00]\\$\bot_\ORccFive$\\\DR} \ar@{..}[rr]\ar@{-}[ur]\ar@{-}[uurr]
 & & \txt{[{1}00]\\$\bot_\alpha^\downarrow$\\\PO} \ar@{-}[ur]  \ar@{-}[lluu]
}
$$\arraycolsep=1pt
\begin{array}{rl}
\alpha & = 1\\
\multicolumn{2}{c}{\text{the}\ \times\text{partition}\ \{\PL_{\emptyset\, 1\, 2}^{\overline{\a}-\overline{\c}}\}}\\\\

 = &   \{\bot_\alpha^\downarrow,\top_\alpha^\uparrow,\top_\alpha^\downarrow,\bot_\alpha^\uparrow\}\\
= &  \{\bot_\ORccFive,\top_\alpha,\top_\ORccFive,\bot_\alpha\} \\

\end{array}
$$
\end{minipage}
\begin{minipage}{5cm}
\footnotesize
\xymatrixcolsep{2.7pc}
\xymatrixrowsep{2.7pc}
\xymatrix@!0{
 & \txt{[01{1}]\\$\top_\alpha^\downarrow$\\$\botOneOne$} \ar@{..}[rr]\ar@{-}'[d][dd]
   & & \txt{[11{1}]\\$\top_\ORccFive$\\\EQ} \ar@{-}[dd]^{\overrightarrow{z}}
\\
\txt{[001]\\$\bot_\alpha$\\$\botZeroOne$} \ar@{..}[ur]\ar@{..}[rr]\ar@{-}[dd] \ar@{-}[urrr]
 & & \txt{[101]\\$\top_\alpha^\uparrow$\\\PPi} \ar@{..}[ur]\ar@{-}[dd]\ar@{-}[lu]\\
 & \txt{[010]\\$\bot_\alpha^\uparrow$\\$\botOneZero$} \ar@{..}'[r][rr]
   & & \txt{[110]\\$\top_\alpha$\\\PP}
\\
 \txt{[00{0}]\\$\bot_\ORccFive$\\\DR} \ar@{..}[rr]_{\overrightarrow{x}} \ar@{..}[ur] \ar@{-}[urrr]
 & & \txt{[10{0}]\\$\bot_\alpha^\downarrow$\\\PO} \ar@{..}[ur]_{\overrightarrow{y}} \ar@{-}[ul]
}
$$\arraycolsep=1pt
\begin{array}{rl}
& \alpha  = 2\\
\multicolumn{2}{c}{\text{the}\ \times\text{partition}\ \{\PL_{\emptyset\, 1\, 2}^{\overline{\a\b}-}\}}\\\\
	 = & \{ \{a_\alpha,a_j,a_k\}\in\ORccFive \mid a_\j = a_\k\} \\
	 = & \{ \{a_\alpha,a_j,a_k\}\in\ORccFive \mid a_\j \neq a_\k\} \\

\end{array}
$$
\end{minipage}
\end{minipage}

\caption{\label{fig:PairsOfPlanes}Planes of the \ThreeBit\ cube (displayed in the $\emptyset12$ base) and the corresponding 1-bit information states  (\istates) that can obtain for elementary systems $(r_1,r_2)\in\OReg^2$. Each figure shows a partition of the \ThreeBit\ cube. The three partitions $\{\PL_{\emptyset\,1\,2}^{\a--}\}$, $\{\PL_{\emptyset\,1\,2}^{-\b-}\}$ and $\{\PL_{\,\emptyset\,1\,2}^{--\c}\}$ form the family $\PL_\IdxOneBitPR$ \cite{bittner:InformationMereologyAndVagueness}. The three partitions $\{\PL_{\emptyset12}^{\overline{\a}-\overline{\c}}\}$, $\{\PL_{\emptyset12}^{-\overline{\b\c}}\}$ and $\{\PL_{\emptyset12}^{\overline{\a\b}-}\}$ form the family $\PL_\IdxOneBitET$ \cite{Bittner:VMQI}.  See also Tab. \ref{tab:VectorsInIndexNotation} for details regarding  the notation.}
\end{figure}

\subsection{The \ThreeBit lattice}

In the cube-shaped graph in Fig. \ref{fig:rccFive} the set vertices is formed by ordered triples whose slots are occupied by members of an indexed family, indicated as $\{(a_\emptyset, a_1,a_2) \mid a_\emptyset, a_1,a_2\in\TF \}=\ORccFive_{\emptyset12} = \TF_\emptyset \times \TF_1 \times \TF_2$.  For every ordered tuple $(a_\emptyset,a_1,a_2) \in \ORccFive_{\emptyset12}$, the set $\langle(a_\emptyset,a_1,a_2)\rangle = \{a_\emptyset,a_1,a_2\}$ is the permutation-invariant representation of the ordered tuple $(a_\emptyset,a_1,a_2)$ such that $\langle(a_\emptyset,a_1,a_2)\rangle = \langle(a_1,a_\emptyset,a_2)\rangle$.

A partial ordering relation $\le$ on ordered tuples $a,b\in\ORccFive_{\emptyset12}$ can be introduced by setting that $a \le b $ if and only if $(\pr_j\ a) \le (\pr_j\ b)$ for $1 \le j \le 3$, where if $(1,2,3)$ is an ordered triple, then $\pr_j (1,2,3) = j$ for $1\le j\le3$ is the projection of the ordered triple on its $j$th slot. The created lattice $(\ORccFive_{\emptyset12},\le)$ is called the \ThreeBit lattice with the base $\emptyset12$. Its graph identifies a unique direction for each of the edges of the cube-shaped graph shown in Fig. \ref{fig:rccFive}. In general ordered tuples in $\ORccFive_{ijk}$ form lattices $(\ORccFive_{ijk},\le)$ for any given permutation $ijk$ of the index set $\{i,j,k\}=\IdxSet=\{\emptyset,1,2\}$.

All lattices $(\ORccFive_{\textit{ijk}},\le)$ with $\{i,j,k\}=\IdxSet$ have $\DR$ as their sole least element ($\langle\bot_{\textit{ijk}}\rangle=\DR$) and $\EQ$ as their sole greatest element ($\langle\top_{\textit{ijk}}\rangle=\EQ$). Relations of the form  $\ll_\alpha$ can be used to identify the immediate $\alpha$-successors of the bottom element and the immediate $\alpha$-predecessors of the top element of all lattices $(\ORccFive_{\textit{ijk}},\le)$ with  $\alpha\in\{i,j,k\}=\IdxSet$ as shown in the top of Fig. \ref{fig:isuccBot_iprecTop}, where the expression $(\min\ (\IdxSet \setminus \{\alpha\}))$ designates the smallest index in $\IdxSet$ that does not include the element $\alpha$ in the context of the order $\emptyset < 1 < 2$.\footnote{Expressions of the form $(\textsf{THE}\ x.\ \Phi\ x)$ is Russell's definite description operator which designates the unique $x$ such that $(\phi\ x)$ is true. If no such unique $x$ exists then the operator \textsf{THE}\ returns an arbitrary member of the domain of $x$ \cite{Paulson1994Isabelle}.} The descriptors $\bot_\alpha^{\downarrow}, \bot_\alpha^{\uparrow}$, and $\bot_\alpha$ as well as $\top_\alpha^{\downarrow}, \top_\alpha^{\uparrow}$, and $\top_\alpha$ of Fig. \ref{fig:isuccBot_iprecTop} (top) always pick out the three $\alpha$-successors of $\DR$ and the three $\alpha$-predecessors of $\EQ$ in a way determined by $\alpha$ only. As shown in the left diagram in the middle of the figure, it is possible to describe the lattice structure  $(\ORccFive_{\textit{ijk}},\le)$  with $\{i,j,k\}=\IdxSet$ using expressions of the form $\bot_\ORccFive$, $\top_\ORccFive$, $\bot_\alpha^{\downarrow}, \bot_\alpha^{\uparrow}$,  $\bot_\alpha$, $\top_\alpha^{\downarrow}, \top_\alpha^{\uparrow}$, and $\top_\alpha$.

\begin{figure}
    $$\begin{array}{rclcrcl}
    \bot_\ORccFive & = & \DR & \not\approxeq_\alpha & \top_\ORccFive & = & \EQ \\
       \bot_\alpha^{\downarrow} &=& \textsf{THE}\ r \in \ORccFive.\ \DR \ll_{\min \IdxSet\setminus \{\alpha\}}\ r\ &
       \ll_\alpha &  \top_\alpha^{\uparrow} &=& \textsf{THE}\ r\in \ORccFive.\ r \ll_{\max \IdxSet\setminus \{\alpha\}} \EQ  \\
\bot_\alpha^{\uparrow} &=& \textsf{THE}\  r \in \ORccFive.\ \DR \ll_{\max \IdxSet\setminus \{\alpha\}}\ r\ & \ll_\alpha &
\top_\alpha^{\downarrow} &=& \textsf{THE}\  r\in \ORccFive. \ r \ll_{\min \IdxSet\setminus \{\alpha\}} \EQ \\
\bot_\alpha &=& \textsf{THE}\ r \in \ORccFive.\ \DR \ll_\alpha\ r \ & \ll_\alpha &
\top_\alpha &=& \textsf{THE}\ r \in \ORccFive.\ r \ll_\alpha\ \EQ \\
\\
    \end{array}$$
\begin{minipage}{15cm}
\begin{minipage}{7.5cm}
$$
\xymatrixcolsep{1.2pc}
\xymatrixrowsep{1.2pc}
\xymatrix{
& \top_\alpha \ar@{..>}@/^1pc/[rrr]_{<} & & & \top_\ORccFive \\
\bot_\alpha^\uparrow \ar[ru]^{<} \ar@{..>}@/_1pc/[rrr]_(.8){<} &  \approxeq_\alpha & \bot_\alpha^\downarrow \ar[lu]_{<} \ar@{..>}@/^1pc/[rrr]^(.15){<} &   \top_\alpha^\downarrow \ar[ru]^{<} & \approxeq_\alpha & \top_\alpha^\uparrow \ar[lu]_{<} \\
& \bot_\ORccFive \ar[lu]^{<}\ar[ru]_{<} \ar@{..>}@/_1pc/[rrr]^{<} & & & \bot_\alpha \ar[lu]^{<}\ar[ru]_{<} \\
& \PL_\alpha^\F &&& \PL_\alpha^\T  \\
}
$$
\end{minipage}\hfill
\begin{minipage}{7.5cm}
$$
\xymatrixcolsep{1.2pc}
\xymatrixrowsep{1.2pc}
\xymatrix{
& \top_\ORccFive  & & & \top_\alpha^\downarrow \\
\top_\alpha \ar[ru]|{\ll_\alpha} \ar@{..>}[rr]|(.3){\dia_\alpha} &  & \bot_\alpha \ar@{--}[lu]_{\approxeq_\alpha}  &   \bot_\alpha^\uparrow \ar[ru]|{\ll_\alpha} \ar@{..>}[rr]|(.3){\dia_\alpha} &  & \top_\alpha^\uparrow \ar@{--}[lu]_{\approxeq_\alpha}  \\
&  \bot_\ORccFive \ar@{..}[uu]|(.3){\dia_\alpha} \ar@{--}[lu]^{\approxeq_\alpha} \ar[ru]|{\ll_\alpha} & & & \bot_\alpha^\downarrow\ar@{--}[lu]^{\approxeq_\alpha} \ar[ru]|{\ll_\alpha} \ar@{..>}[uu]|(.3){\dia_\alpha} \\
& \PL_{\times\alpha}^\F & & & \PL_{\times\alpha}^\T \\
}$$
\end{minipage}
\end{minipage}
\begin{minipage}{15cm}
\begin{minipage}{3.5cm}
		$$\begin{array}{c}
\\
		\footnotesize
		\xymatrixcolsep{2pc}
		\xymatrixrowsep{2pc}
		\xymatrix@!0{
			& \top_\alpha \ar@{..}[rr]\ar@{..}'[d][dd]
			& & \top_\ORccFive \ar@{..}[dd]
			\\
			\bot_\alpha^\uparrow \ar@{..}[ur]\ar@{..}[rr]\ar@{..}[dd]
			& & \top_\alpha^\downarrow \ar@{..}[ur]\ar@{..}[dd]
			\\
			& \bot_\alpha^\downarrow \ar@{..}'[r][rr]
			& & \top_\alpha^\uparrow
			\\
			\bot_\ORccFive \ar@{..}[rr]\ar@{..}[ur] \ar@{-}[rrruuu]
			& & \bot_\alpha \ar@{..}[ur]
		}\\\\
		 \{\bot_\ORccFive, \top_\ORccFive\}\\
		\end{array}$$
		\end{minipage}
\begin{minipage}{3.5cm}
		$$\begin{array}{c}
\\
		\footnotesize
		\xymatrixcolsep{2pc}
		\xymatrixrowsep{2pc}
		\xymatrix@!0{
			& \top_\alpha \ar@{..}[rr]\ar@{..}'[d][dd]
			& & \top_\ORccFive \ar@{..}[dd]
			\\
			\bot_\alpha^\uparrow \ar@{..}[ur]\ar@{..}[rr]\ar@{..}[dd]
			& & \top_\alpha^\downarrow \ar@{..}[ur]\ar@{..}[dd]
			\\
			& \bot_\alpha^\downarrow \ar@{..}'[r][rr]
			& & \top_\alpha^\uparrow
			\\
			\bot_\ORccFive \ar@{..}[rr]\ar@{..}[ur]
			& & \bot_\alpha \ar@{..}[ur]  \ar@{-}[luuu]
		}\\\\
		 \{\top_\alpha, \bot_\alpha\}\\
		\end{array}$$
		\end{minipage}
\begin{minipage}{3.5cm}
		$$\begin{array}{c}
\\
		\footnotesize
		\xymatrixcolsep{2pc}
		\xymatrixrowsep{2pc}
		\xymatrix@!0{
			& \top_\alpha \ar@{..}[rr]\ar@{..}'[d][dd]
			& & \top_\ORccFive \ar@{..}[dd]
			\\
			\bot_\alpha^\uparrow \ar@{..}[ur]\ar@{..}[rr]\ar@{..}[dd]  \ar@{-}[rrrd]
			& & \top_\alpha^\downarrow \ar@{..}[ur]\ar@{..}[dd]
			\\
			& \bot_\alpha^\downarrow \ar@{..}'[r][rr]
			& & \top_\alpha^\uparrow
			\\
			\bot_\ORccFive \ar@{..}[rr]\ar@{..}[ur]
			& & \bot_\alpha \ar@{..}[ur]
		}\\\\
		\{\bot_\alpha^\uparrow, \top_\alpha^\uparrow\}\\
		\end{array}$$
		\end{minipage}
\begin{minipage}{3.5cm}
		$$\begin{array}{c}
\\
		\footnotesize
		\xymatrixcolsep{2pc}
		\xymatrixrowsep{2pc}
		\xymatrix@!0{
			& \top_\alpha \ar@{..}[rr]\ar@{..}'[d][dd]
			& & \top_\ORccFive \ar@{..}[dd]
			\\
			\bot_\alpha^\uparrow \ar@{..}[ur]\ar@{..}[rr]\ar@{..}[dd]
			& & \top_\alpha^\downarrow \ar@{..}[ur]\ar@{..}[dd]
			\\
			& \bot_\alpha^\downarrow \ar@{..}'[r][rr]  \ar@{-}[ru]
			& & \top_\alpha^\uparrow
			\\
			\bot_\ORccFive \ar@{..}[rr]\ar@{..}[ur]
			& & \bot_\alpha \ar@{..}[ur]
		}\\ \\
		\{\bot_\alpha^\downarrow,\top_\alpha^\downarrow\}\\
		\end{array}$$
		\end{minipage}
$$
\begin{array}{l}
\DIA_\alpha=\{\{\bot_\ORccFive, \top_\ORccFive\}, \{\top_\alpha, \bot_\alpha\}, \{\bot_\alpha^\uparrow, \top_\alpha^\uparrow\}, \{\bot_\alpha^\downarrow,\top_\alpha^\downarrow\}\} = \{ \{d,(\dia_\alpha\, d)\}.\   d\in\PL_\alpha^\F \} \\
\PL_{\i\, \hat{\j}\, \k}^{\overline{\a\b\c}} \quad = \quad \PL_{\,\i\ \quad\ \j}^{\a\xor\b}\cap\PL_{\,\j\ \quad\ \k}^{\b\xor\c} \quad = \quad  (\PL_\i^\a\vartriangle\PL_\j^{\sim\b})\cap(\PL_\j^{\b}\vartriangle\PL_\k^{\sim\c})\quad \in\quad \DIA_j\\
\end{array}
$$
\end{minipage}
\caption{\label{fig:isuccBot_iprecTop}(top) The \ThreeBit lattice in terms of the immediate $\alpha$-successors and $\alpha$-predecessors of the bottom and top elements with $\alpha\in\IdxSet = \{\emptyset,1,2\}$. (middle left) Faces of the \ThreeBit cube and their relation to the graph of the \ThreeBit lattice. If the graph of the relation $<$ coincides with an edge of a face then this is represented by a solid arrow.  If the graph of the relation $<$ runs diagonally across a face then this is represented by a dotted arrow.  (middle right) Interior planes of the \ThreeBit cube and relations $\ll_\alpha$ (solid arrows) $\approxeq_\alpha$ (dashed lines) and the mapping $\dia$ (dotted arrows) of the \ThreeBit lattice. (bottom) Diagonals across the interior of the \ThreeBit cube (displayed in the $\emptyset12$-base for $\alpha=\emptyset$ and collected in the set $\DIA_\alpha$) and their representation in terms of the intersection of interior planes.}
\end{figure}

\begin{example}[Grounding the Lattice Descriptors]
\normalfont
\label{ex:LatticeDescriptors}
The abstract equivalence relations and lattice descriptors from this section can be grounded in the concrete mereological relations from Fig.~\ref{fig:rccFive} and Tab.~\ref{tab:rccFive}. Let the index be $\alpha = \emptyset$, corresponding to the fundamental question of overlap. The equivalence relation $\cong_\emptyset$ groups all relations that give the same answer to this question. For instance, the set $\{\PO, \PP, \PPi, \EQ\}$ are all $\cong_\emptyset$-equivalent because for each, the $\emptyset$-indexed value is $\T$. Likewise, $\{\DR, \botOneZero, \botOneOne, \botZeroOne\}$ are $\cong_\emptyset$-equivalent because their $\emptyset$-indexed value is $\F$. The descriptors for the immediate successors of the bottom element ($\bot_\ORccFive = \DR$) and predecessors of the top element ($\top_\ORccFive = \EQ$) also map to specific relations. For $\alpha = \emptyset$, as shown in Figure~\ref{fig:PairsOfPlanes} (top-left): The immediate $\emptyset$-successor of \DR\ is $\bot_\emptyset = \PO$. The other successors are $\bot_\emptyset^\downarrow = \botOneZero$ and $\bot_\emptyset^\uparrow = \botZeroOne$.  The immediate $\emptyset$-predecessor of \EQ\ is $\top_\emptyset = \botOneOne$. The other predecessors are $\top_\emptyset^\uparrow = \PP$ and $\top_\emptyset^\downarrow = \PPi$.
These mappings provide a concrete interpretation for the abstract lattice structure. \qed
\end{example}

\section{The \ThreeBit partition lattice}
\label{sec:PartitionLattice}

The previous sections established a geometric picture of mereological information, with crisp relations as vertices of a 3-bit cube. This section now formalizes the higher-dimensional features of that cube, namely its edges, faces, and internal planes, which represent states of vagueness. This is a crucial step in developing the \textit{Quantized Information Semantics (QIS) Principle}, as this section builds the formal state space of the theory. The elements of this space, called \textit{information states} ($\istates$), are the targets of both the top-down semantic path and the bottom-up quantization path.

Following the work of \citet{Ellerman2022}, these $\istates$ are not represented as vectors in a Hilbert space but as cells within partitions of the set of all eight classical outcomes, $\ORccFive$. These partitions and their cells form a non-distributive lattice that serves as the foundation for the logic of vague mereological relations. To begin,  the formal definition of a partition is provided.

Consider a set $U = \{u_1 , \ldots , u_n \}$. A partition $\pi = \{B_1 , \ldots, B_m \}$ consists of subsets $B_j \subseteq U$ (where $j = 1, ..., m$) that do not overlap and together cover $U$. Each subset $B_i$ in the partition $\pi$ is referred to as a cell. A partial order on partitions (refinement ordering) is defined as follows: Given a set $U = \{u_1 , \ldots , u_n \}$, a partition $\pi = \{B_1 , \ldots, B_m \}$ on $U$ and another partition $\sigma = \{C_1 , \ldots, C_{m^\prime} \}$ on $U$, then $\sigma$ is refined by $\pi$, written, $\sigma \lesssim \pi$, if for each cell of $\pi$, there is a cell of $\sigma$ that contains it:
\begin{equation}
\label{eq:lesssim_def}
    \begin{array}{l}
    \sigma \lesssim \pi  \equiv  \forall\ B_j \in \pi.\ \exists C_{j^\prime} \in \sigma.\ B_j \subseteq  C_{j^\prime}; \quad
1_U  =  \{\{u_1 \} , \ldots, \{u_n \}\}; \quad 0_U = \{U \}. \\
    \end{array}
\end{equation}
\noindent For every two partitions $B,C\in\Pi(\ORccFive)$ the greatest lower bound in the refinement ordering is the member of  $\Pi(\ORccFive)$ whose cells are all the non-empty intersections $B_j \cap C_{j^\prime}$ or $0_{\ORccFive}$ the minimally refined partition. The least upper bound
in the refinement ordering is given by the union of pairwise overlapping cells. These definitions follow the presentation of \citet{Ellerman2022}.

\subsection{Partitions of the set $\ORccFive$ and their geometric interpretation}

Functions of the form $f_i=\{a_\emptyset,a_1,a_2\}\in\ORccFive \mapsto  a_i\in\TF$ with $i\in\IdxSet$ generate partitions of the form $\pi_i=\{ f_i^{-1}(a) \mid a\in\TF\}$. The symbol $\{\PL_\i^\a\}$ denotes the partition $\pi_i$ if and only if $\pi_i$ coincides with the equivalence classes represented by $\ORccFive/_{\cong_i}$. That is, $\{\PL_\i^\a\}=\pi_i$ if $\pi_i = \ORccFive/_{\cong_i}$ and $\{\PL_\i^\a\}$ is undefined otherwise. Partition cells of $\{\PL_\i^\a\}$ are referenced using the notation $\PL_\i^\a$ with $\a\in\TF$. Each cell $\PL_\i^\a$ corresponds to a face, in the non-directed version of the cube-shaped graph of the \ThreeBit lattice, the \ThreeBit cube, as visualized in Fig. \ref{fig:PairsOfPlanes} (top).\footnote{The geometry of the \ThreeBit cube  arises from the graph of the \ThreeBit lattice $(\ORccFive_\textit{ijk},\le)$ (Fig. \ref{fig:rccFive}) because the non-directed version of this graph is a cubical graph and therefore has the symmetries (rotations and mirroring operations) that characterize the geometry of a cube. \cite{Johnson2018}}

Similarly, the functions $f_{ij}=\{a_\emptyset,a_1,a_2\}\in\ORccFive \mapsto  (a_i,a_j)\in\TF^2$ with $i\neq j\in\IdxSet$ generate the partitions $\pi_{ij}=\{ f_{ij}^{-1}(a,b)\}_{a,b\in\TF}$.  The notation $\{\PL_{\,\i\,\j}^{\a\b}\}$ designates partitions $\pi_{ij}$ that correspond to the intersection of the equivalence classes $\ORccFive/_{\cong_i}$ and $\ORccFive/_{\cong_j}$ such that $\{\PL_{\,\i\,\j}^{\a\b}\}=\pi_{ij}$ iff $\pi_{ij} = \{ r_i \cap r_j \mid r_i\in\ORccFive/_{\cong_{i}},r_j\in\ORccFive/_{\cong_{j}}\}$ and $\{\PL_{\,\i\,\j}^{\a\b}\}$ is undefined otherwise. Partition cells are designated using $\PL_{\,\i\,\j}^{\a\b}$ with $\a,\b\in\TF$. Each cell $\PL_{\,\i\,\j}^{\a\b}$ corresponds to an edge of the \ThreeBit cube.

\begin{example}[An Edge on the Information Cube]\normalfont
\label{ex:EdgeState}
An edge represents a state where two fundamental questions have been answered, leaving only one degree of indeterminacy. Consider an elementary system $(\AR^\prime,\FH^\prime)$ with an information capacity of at least two bits. An observer determines that the Acidic Region ($\AR^\prime$) and Fern Habitat ($\FH^\prime$) overlap ($Q_\emptyset=\T$) and that $\AR^\prime$ is not a part of $\FH^\prime$ ($Q_1=\F$). The resulting information state is the set of possibilities consistent with both answers, which is constructed from the intersection of the two corresponding 1-bit states (faces): $\PL_{\emptyset}^{\T} \cap \PL_{1}^{\F}$. This intersection yields the set $\{\PO, \PPi\}$, which corresponds to the edge on the \ThreeBit cube connecting the two vertices. In the full index notation developed in the next subsection, a state with one indeterminate question like this is formally written as $\PL_{\emptyset 1 2}^{\T\F-}$.  More examples are  visualized in Fig. \ref{fig:RccFiveEdges}.  \qed
\end{example}

Finally, the functions $f_{ijk}=\{a_\emptyset,a_1,a_2\}\in\ORccFive \mapsto  (a_i,a_j,a_k)\in\TF^3$ with $\{i,j,k\}=\IdxSet$ generate the partitions $\pi_\textit{ijk}=\{ f_\textit{ijk}^{-1}(a,b,c)\}_{a,b,c\, \in\TF} = \{ r_i \cap r_j \cap r_k \mid r_i\in\ORccFive/_{\cong_{i}},r_j\in\ORccFive/_{\cong_{j}},r_k\in\ORccFive/_{\cong_{k}}\} = \{\PL_{\,\i\,\j\,\k}^{\a\b\c}\}$.  Each cell $\PL_{\i\,\j\,\k}^{\a\b\c}$ corresponds to a vertex of the \ThreeBit cube.

In addition to partitions of the form $\{\PL_\i^\a\}=\{\PL_\i^\F,\PL_\i^\T\}$ in which cells are the parallel faces of the \ThreeBit cube, as shown in Fig. \ref{fig:PairsOfPlanes} (top), there are partitions in which cells are interior planes of the \ThreeBit cube that are pairwise orthogonal ($\times$ planes). These are designated as $\{\PL_{\times \i}^a\}=\{\PL_{\times \i}^\F,\PL_{\times \i}^\T\}$ and can be seen in Fig. \ref{fig:PairsOfPlanes} (bottom). The interior planes in the \ThreeBit cube are obtained from the faces using a disjunctive union. For example, $ \PL_{\times 1}^\F = \PL_\emptyset^\T \vartriangle \PL_2^\F = \PL_\emptyset^\F \vartriangle \PL_2^\T$ and $ \PL_{\times 1}^\T = \PL_\emptyset^\T \vartriangle \PL_2^\T = \PL_\emptyset^\F \vartriangle \PL_2^\F$. Equivalently, the interior planes are subsets of the form $\PL_{\times \i}^\F = \{ \{a_i,a_j,a_k\}\in\ORccFive \mid a_j = a_k\}$ and $\PL_{\times \i}^\T = \{ \{a_i,a_j,a_k\}\in\ORccFive \mid a_j \neq a_k\}$.  This can also be expressed as $\PL^\b_{\times \i} = \{ \{a_i,a_j,a_k\}\in\ORccFive \mid  a_j \xor\ a_k = \b\}$\footnote{In  expressions such as $\PL^\b_{\times \i} = \{ \{a_i,a_j,a_k\}\in\ORccFive \mid  a_j \xor\ a_k = \b\}$   the \textsf{SF}\ font is used for indexes that designate partition cells. The $CMM$ font is used for indexes that designate elements of indexed sets. Within an expression indexes with the same name take the same value, i.e. $\i = i$.}  where $\xor$ is the logical operation of `exclusive or'.

Whereas the 2-bit product states in Figure~\ref{fig:RccFiveEdges} correspond to the cube's edges, other 2-bit entangled states correspond to diagonals across its faces, as visualized in Figure~\ref{fig:RccFiveEdgesX}. They arise from the intersection of a partition in which cells are interior planes with a partition in which cells are faces, i.e. $\{\PL_{\times \i}^a\}\cap \{\PL_\i^\b\}=\{\PL_{\times \i}^\F\cap\PL_\i^\F,\PL_{\times \i}^\F\cap\PL_\i^\T,\PL_{\times \i}^\T\cap\PL_\i^\F,\PL_{\times \i}^\T\cap\PL_\i^\T\}$. Finally, there is a partition in which the cells are the line segments that  diagonally pass through the interior of the \ThreeBit cube (Fig. \ref{fig:isuccBot_iprecTop} (bottom)). They arise from the intersection of two partitions in which cells are interior planes, i.e. $\{\PL_{\times \i}^a\}\cap \{\PL_{\times \j}^\b\}=\{\PL_{\times \i}^\F\cap\PL_{\times \j}^\F,\PL_{\times \i}^\F\cap\PL_{\times \j}^\T,\PL_{\times \i}^\T\cap\PL_{\times \j}^\F,\PL_{\times \i}^\T\cap\PL_{\times \j}^\T\}$.

The line segments that are the diagonals across the interior of the \rccFive cube can also be defined as maps: for every index $i\in\IdxSet$ there is a bijective mapping between vertices in $\PL_\i^\F$ and $\PL_\i^\T$ given by $\dia_i: \PL_\i^\F \to \PL_\i^\T$ with $(\dia_i\ \{a_\emptyset,a_1,a_2\}) = \{\sim a_\emptyset,\sim a_1,\sim a_2\}$. Using this map the set of diagonals is specified as $\DIA_i=\{\{\bot_\ORccFive, \top_\ORccFive\}$, $\{\top_i, \bot_i\}$, $\{\bot_i^\uparrow, \top_i^\uparrow\}$, $\{\bot_i^\downarrow,\top_i^\downarrow\}\}$. The diagonals are embedded in  interior planes of the form $\PL_{\times \i}^\a$ such that  $\{\bot_\ORccFive, \top_\ORccFive\}\cup\{\top_i, \bot_i\} = \PL_{\times \i}^\F$ and  $\{\bot_i^\uparrow, \top_i^\uparrow\}\cup\{\bot_i^\downarrow,\top_i^\downarrow\} = \PL_{\times \i}^\T$ (Fig. \ref{fig:isuccBot_iprecTop} (bottom)).

By construction, partitions are formed by cells that are geometrically homogeneous. That is, all cells of a given partition are faces or interior planes or edges or line segments across faces or line segments across the interior of the \ThreeBit cube or vertices. Cells of the same geometric type are related to the \ThreeBit lattice in the same ways: Cells that, geometrically, are vertices and edges of the \ThreeBit cube emerge immediately from the nodes and directed edges of the \ThreeBit lattice. Cells that, geometrically, are faces are related to the ordering structure of the \ThreeBit lattice through equivalence classes of the form $\ORccFive /_{\cong_\alpha}$ and the diamond-shaped sub-lattices displayed in Fig. \ref{fig:isuccBot_iprecTop} (middle left). Cells that, geometrically, are interior planes are related to the \ThreeBit lattice through the top and bottom elements  and their respective immediate $\alpha$-predecessors and $\alpha$-successors as signified in the  equations in the left of Fig. \ref{fig:PairsOfPlanes}(bottom). Every interior plane of the form $\PL_{\times \i}^\a$ is characterized by a diamond-shaped ordering structure that is related to the \ThreeBit lattice as illustrated in the right diagram in the middle of Fig. \ref{fig:isuccBot_iprecTop}.

\subsection{The index notation}

To formally describe the partition cells that serve as the information states ($\istates$) of the framework, a compact and systematic index notation is required. This section introduces this notation, which takes the general form $\PL_{\, \i\ \j\ \k}^{\alpha\beta\gamma}$. This system provides a structured language to precisely represent the full spectrum of mereological information, encoding whether a fundamental proposition is determinate, in a state of superposition, or entangled. Based on its index structure this language syntactically facilitates representing the relations between questions, answers, and the resulting $\istates$.

\subsubsection{An informal guide}
\label{sec:IndexNotationGuide}

The index notation for \istates\ ($\PL$), questions ($\QQ$), and answers ($\A$) provides a compact and systematic way to describe the informational structure of a mereological system. This guide explains the notation by building from its core principles to its more complex applications.

\paragraph{The Three-Slot Foundation:} The notation is always built upon a three-slot foundation, represented by the lower indices \i\j\k\ which stand for an arbitrary permutation of the set $\{\emptyset,1,2\}$. These three slots act as an "address space" that corresponds directly to the three fundamental propositions of the system: $Q_\emptyset$ (Overlap), $Q_1$ (Part Of), and $Q_2$ (Inverse Part Of) (cf. Fig.~\ref{fig:rccFive}, Tab.~\ref{tab:rccFive}). The upper indices ($\alpha,\beta,\gamma$) represent the "data" or information content held in these slots.

\paragraph{Complete vs. Abbreviated Notation:} A key distinction exists between the full three-slot notation and its abbreviated forms.
\textit{The Complete Notation ($\PL_{\,\i\,\j\,\k}^{\a\b-}$)} makes an ontological claim about a system with a specific, limited information capacity. The hyphen (`-`) explicitly marks a proposition as being in a state of "equal superposition". For example, $\PL_{\,\i\,\j\,\k}^{\a\b-}$ describes the objective state of a 2-bit system where the maximum information has been obtained for the first two slots, and the third slot represents  indeterminacy. \textit{The Abbreviated Notation ($\PL_{\i\,\j}^{\a\b}$)} is a mathematical convenience that represents the result of a set-theoretic operation, typically intersection ($\PL_\i^\a \cap \PL_\j^\b$). This notation is used when the focus is on the algebraic combination of states, not the total capacity of a specific system. While both $\PL_{\,\i\,\j\,\k}^{\a\b-}$ and $\PL_{\i\,\j}^{\a\b}$ are  extensionally identical (referring to the same set of classical relations), the complete notation makes a stronger, more specific claim about the system's objective state and its inherent informational limits.

\paragraph{Permutation Invariance:} A crucial property of the system is that the \istates\ are invariant under any joint permutation of the index pairs (i.e., swapping the \i\ and \j\ slots along with their corresponding upper indices $\alpha$ and $\beta$). This is formally established in Lemma~\ref{lemma:PermutationUpperAndLower}. The important consequence is that the visual adjacency of slots in the notation has no fundamental meaning. A state describing an entanglement between slots \i\ and \j\ is of the same fundamental \textit{type} as a state describing an entanglement between \i\ and \k.

\paragraph{The Building Blocks, the 1-Bit States:} The simplest information-incomplete states are those that carry exactly one bit of information. The index notation describes these states by showing how the three fundamental propositions are structured. It uses two core symbols to represent indeterminacy:

\begin{itemize}
    \item The hyphen $\binom{-}{\k}$ signifies a state of \textit{equal superposition}. A proposition in this state is wholly indeterminate (it could be \T\ or \F\ if made determinate). A hyphen in the upper index of the $\k$ column represents the partition cell $\PL_\k^\T \cup \PL_\k^\F=\ORccFive$ (no possibilities excluded, therefore no information).

    \item Pairs of overlined variables $\binom{\overline{\a}}{\i}\binom{\overline{\b}}{\j}$ written as $_{\,\i\,\j}^{\overline{\a\b}}$ signify \textit{entanglement}. The propositions for the covered slots are individually indeterminate but are linked by a determinate correlation. Pairs of overlined variables correspond to the disjunctive union of the cell identified by the first variable and the complement of the cell identified by the second, i.e. $\PL_{\,\i\,\j}^{\overline{\a\b}} = \PL_\i^\a \vartriangle \PL_\j^{\sim\b}$, the set of patterns whose two answers are $(\a,\b)$ or $(\sim\a,\sim\b)$
\end{itemize}

A 1-bit state arises when one bit of information is determinate while the rest is not. For example, in a \textit{product state} like $\PL_{\,\i\,\j\,\k}^{\a--}= \PL_\i^\a \cap \ORccFive \cap \ORccFive$, one proposition is determinate ($Q_i=\a$), while the other two are in equal superposition. In an \textit{entangled state} like $\PL_{\,\i\,\j\,\k}^{\overline{\a\b}-}=\PL_{\,\i\,\j}^{\overline{\a\b}}\cap\ORccFive$, a single correlation (between $Q_i$ and $Q_j$) is determinate, while the third proposition ($Q_k$) is in equal superposition.

\paragraph{The Unified System of Questions, Answers, and States:}
The notation for questions ($\QQ$) and answers ($\A$) directly mirrors the structure of the state notation ($\PL$) they produce. The link between them is the Quantized Information Semantics (QIS) principle, which states that an answered question corresponds to a specific information state: $\lceil\QQ...?=\A...\rceil = \PL...$ (formalized in Corollary~\ref{corollary:QuantificationEqSupport}).

The overline is the key to this unified system for entanglement. It signals that the covered slots are to be combined via `xor`, and the expressive answer notation represents the pattern of the two underlying bits that produced the single `Y` or `\N` result. The following table illustrates this for a 1-bit entangled state.

\begin{table}[h!]
\centering
\def\arraystretch{1.5}
\begin{tabular}{p{3.5cm} p{2cm} p{2.2cm} p{2cm} p{2cm}}
\hline
\textbf{Informal Question} & \textbf{Formal Question ($\QQ$)} & \textbf{Answer \newline (\texttt{xor} result)} & \textbf{Formal Answer Pattern ($\A$)} & \textbf{\istate\ Pattern ($\PL$)} \\ \hline
Is $Q_i$ indeterminate and are the truth values $Q_j$ and $Q_k$ different?   & \quad \newline $\QQ_{-\overline{\j\k}}?$ & \N\ ($\b=\c$)\newline\newline  \Y\ ($\b\neq\c$) & $\A^{-\overline{\b\b}}$\newline\newline $\A^{-\overline{\b\sim\b}}$ & $\PL_{\i\j\k}^{-\overline{\b\b}}$\newline\newline $\PL_{\i\j\k}^{-\overline{\b\sim\b}}$ \\
 \hline
\end{tabular}
\caption{Mapping Questions to Answers and States for a 1-bit Entangled System}
\label{tab:QIS_rosetta_stone}
\end{table}

This integrated system shows how a specific type of question elicits a structurally similar answer, which in turn determines a state with the exact same notational pattern. This ensures a coherent and consistent mapping from an observer's inquiry to the resulting state of the mereological system.

\paragraph{Building Higher-Order States:}
With the fundamental principles and 1-bit states established, the more complex 2-bit states can be understood as logical combinations of these building blocks. The formal definitions for all 2-bit states can be found in Table~\ref{tab:VectorsInIndexNotation} .

\textit{2-Bit Product States:} A 2-bit product state, such as $\PL_{\,\i\,\j\,\k}^{\a\b-}$, is formed when two propositions are determinate and one remains in equal superposition. This state is constructed by the cumulative combination (intersection) of two 1-bit product states:
$$ \PL_{\,\i\,\j\,\k}^{\a\b-} = \PL_{\,\i\,\j\,\k}^{\a--} \cap \PL_{\,\i\,\j\,\k}^{-\b-} \quad \text{and similarly for}\quad  \PL_{\,\i\,\j\,\k}^{\a-\c} = \PL_{\,\i\,\j\,\k}^{\a--} \cap \PL_{\,\i\,\j\,\k}^{--\c} $$
Geometrically, this corresponds to an edge of the information cube, as illustrated in Example~\ref{ex:EdgeState} .

\textit{2-Bit Entangled States:} A 2-bit entangled state combines determinate and correlational information. For instance, the state $\PL_{\i\j\k}^{\a\overline{\b\c}}$ describes a system where one proposition ($Q_i=\a$) is determinate, while the other two ($Q_j$ and $Q_k$) are entangled. This state is constructed by the intersection of a 1-bit product state and a 1-bit entangled state:
$$ \PL_{\,\i\,\j\,\k}^{\a\overline{\b\c}} = \PL_{\,\i\,\j\,\k}^{\a--} \cap \PL_{\,\i\,\j\,\k}^{-\overline{\b\c}}\quad \text{and similarly for}\quad  \PL_{\,\i\,\j\,\k}^{\overline{\a}-\overline{\c}} = \PL_{\,\i\,\j\,\k}^{\overline{\a}--} \cap \PL_{\,\i\,\j\,\k}^{--\overline{\c}}$$
This demonstrates the compositional power of the notation, allowing for the precise description of complex mixtures of classical and quantum-like information. Finally, maximally entangled states are constructed by the intersection of two 1-bit entangled states:
$$\PL_{\,\i\,\j\,\k}^{\overline{\a\b\c}} = \PL_{\,\i\,\j\,\k}^{\overline{\a\b}-} \cap \PL_{\,\i\,\j\,\k}^{-\overline{\b\c}}.$$
\noindent The state with zero bits of information is indicated by $\PL_{\, \i\ \j\ \k}^{---} = \ORccFive$.

The framework extensively uses this index-based notation. The detailed definitions are collected in Table~\ref{tab:VectorsInIndexNotation}.

\begin{example}[A 1-bit Product State.]\normalfont
To illustrate, consider the Soil Chemist from the running example. Their direct sampling method establishes the information state $\PL_{\emptyset 1 2}^{\T--}$. This notation signifies that the first proposition ($Q_\emptyset$, for overlap) has a determinate value of True (\T), while the other two propositions remain fully indeterminate (indicated by the dashes `--`). Formally, this \istate\ is the set of four unresolved classical possibilities $\{\PO, \PP, \PPi, \EQ\}$. It represents a state of \textit{orthogonal vagueness}, where the question of overlap is settled and that in this context no information about the orthogonal questions of parthood is possible. Geometrically, it corresponds to one of the faces of the 3-bit information cube (Figure \ref{fig:PairsOfPlanes}, top left). \qed
\end{example}

\begin{example}[A 1-bit Entangled State.]\normalfont
This structure is illustrated by the Remote Sensing Specialist. Their correlational analysis establishes an information state such as $\PL_{\emptyset 1 2}^{\overline{\a\a}-}$. The overbar notation signifies entanglement: while the individual values of $Q_\emptyset$ and $Q_1$ are indeterminate, their relationship (in this case, their equality) is fixed. This represents a state of \textit{non-orthogonal vagueness}. Formally, this \istate\ corresponds to the set $\{\DR, \botZeroOne, \PP, \EQ\}$. Geometrically, it does not correspond to a face of the cube, but to a diagonal plane slicing through its interior. (Figure \ref{fig:PairsOfPlanes}, middle) \qed
\end{example}

\begin{example}[A 2-bit Maximally Entangled State]\normalfont
\label{ex:MaxEntangledState}
A maximally entangled state contains two bits of information, but in a purely correlational, non-orthogonal form. Consider the state $\PL_{\emptyset \hat{1} 2}^{\overline{\T\T\T}}$. This notation represents the intersection of two non-orthogonal \istates: $(\PL_\emptyset^\T \vartriangle \PL_1^{\sim\T})$ and $(\PL_1^\T \vartriangle \PL_2^{\sim\T})$. The resulting state is the set $\{\DR, \EQ\}$ (cf. left-most column of Fig.~\ref{fig:RccFiveEdgesXX}).

Mereologically, this is a highly non-intuitive state that asserts a definite relationship between the three fundamental propositions ($Q_\emptyset \oplus Q_1 = \F$ and $Q_1 \oplus Q_2 = \F$). Here, the $\oplus$ symbol signifies the logical operation \textit{exclusive or}, so the value $\F$ means that the truth values in each pair must be the same. This correlation forces the outcome to be one of two specific, non-adjacent vertices on the cube. Neither of the three fundamental questions can be answered determinately on its own, yet the system is precisely constrained to be either 'Disjoint' or 'Equal', highlighting the unique nature of non-orthogonal vagueness. Geometrically, this state corresponds to one of the four main diagonals passing through the interior of the cube. (Fig.~\ref{fig:RccFiveEdgesXX}) \qed
\end{example}

\begin{table}
\begin{sideways}
\begin{minipage}{18cm}
   $$\arraycolsep=0.1pc\begin{array}{c|c|ccccc|c|l|l|c|c}
 \multicolumn{2}{c|}{\text{index}} &  \multicolumn{5}{c|}{\text{operator-based definition of}} & \text{op-} & \multicolumn{2}{c}{\text{examples of}}&\multicolumn{2}{c}{}\\
\multicolumn{2}{c|}{\text{pattern}} &   \multicolumn{5}{c|}{\text{index notation for partition cells}} &\text{pattern} & \multicolumn{2}{c|}{\text{partition cells (left) and partitions}} & \multicolumn{2}{c}{\text{bits of}} \\
 \iform &  \alpha\beta\gamma & \multicolumn{5}{c|}{\PL_{\, \i\ \j\ \k}^{\alpha\beta\gamma}\ = } & \boxplus & \multicolumn{2}{c|}{\text{(right) in index notation}\ (\i\j\k = \emptyset12)} & \text{info} & \txt{vague\\ness}\\\hline
 \Lambda_\IdxZeroBit &   --- & (\PL_\i^\T \cup \PL_\i^\F) & \cap & (\PL_\j^\T \cup \PL_\j^\F) & \cap & (\PL_\k^\T \cup \PL_\k^\F) & \scriptstyle\cup\cap\cup\cap\cup & \PL_{\, \emptyset\, 1\, 2}^{---}=\ORccFive & \{\PL_{\, \emptyset\, 1\, 2}^{---}\} & 0 & 3 \\  \hline
 &   \a-- &  \PL_\i^\a & \cap & (\PL_\j^\T \cup \PL_\j^\F) & \cap &  (\PL_\k^\T \cup \PL_\k^\F) & \scriptstyle\cap\cup\cap\cup & \PL_{\, \emptyset\, 1\, 2}^{\T--}=\{\PO,\PP,\PPi,\EQ \} & \{ \PL_{\, \emptyset\, 1\, 2}^{\a--}\} & 1 & 2\\
 \Lambda_\IdxOneBitPR &   -\b- &  (\PL_\i^\T \cup \PL_\i^\F) & \cap & \PL_\j^\b & \cap & (\PL_\k^\T \cup \PL_\k^\F) & \scriptstyle\cup\cap\cap\cup & \PL_{\, \emptyset\, 1\, 2}^{-\F-}  = \{\PO,\PPi,\botZeroOne,\DR\} &  \{\PL_{\, \emptyset\, 1\, 2}^{-\b-}\} & 1 & 2 \\
&    --\c &  (\PL_\i^\T \cup \PL_\i^\F) & \cap & (\PL_\j^\T \cup \PL_\j^\F) & \cap & \PL_\k^\c & \scriptstyle\cup\cap\cup\cap &  \PL_{\, \emptyset\, 1\, 2}^{--\F}= \{\PP,\PO,\botOneZero,\DR\} &  \{\PL_{\, \emptyset\, 1\, 2}^{--\c}\} & 1 & 2 \\\hline
&    \a\b- &  \PL_\i^\a & \cap & \PL_\j^\b & \cap & (\PL_\k^\T \cup \PL_\k^\F) & \scriptstyle\cap\cap\cup & \PL_{\, \emptyset 1 2}^{\T\F-}=\{\PO,\PPi\} &  \{\PL_{ \emptyset 1\, 2}^{\a\b-}\} & 2 & 1\\
\Lambda_\IdxTwoBitPR  &  \a-\c &  \PL_\i^\a & \cap & (\PL_\j^\T \cup \PL_\j^\F) & \cap & \PL_\k^\c  & \scriptstyle\cap\cup\cap &  \PL_{\emptyset\ 12}^{\T-\F}=\{\PO,\PP\} &  \{\PL_{ \emptyset\, 1 2}^{\a-\c}\} & 2 & 1 \\
 &   -\b\c &  (\PL_\i^\T \cup \PL_\i^\F)  & \cap & \PL_\j^\b & \cap & \PL_\k^\c & \scriptstyle\cup\cap\cap &  \PL_{\, \emptyset 1 2}^{-\F\F}=\{\PO,\DR\} &  \{\PL_{\, \emptyset 1 2}^{-\b\c}\} & 2 & 1 \\\hline
   \Lambda_\IdxThreeBitPR &   \a\b\c &  \PL_\i^\a & \cap & \PL_\j^\b & \cap & \PL_\k^\c & \scriptstyle\cap\cap &  \PL_{\, \emptyset 12}^{\T\F\F}=\{\PO\} &  \{\PL_{\emptyset12}^{\a\b\c}\} & 3 & 0 \\\hline
  &  \overline{\a\b}- &  (\PL_\i^\a & \vartriangle & \PL_\j^{\sim\b}) & \cap & (\PL_\k^\T \cup \PL_\k^\F) & \scriptstyle\vartriangle\cap\cup &  \PL_{\emptyset 1\,2}^{\overline{\T\F}-} =\{\PO,\PPi,\botOneZero,\botOneOne\} & \{\PL_{\emptyset 1\,2}^{\overline{\a\b}-}\} & 1 & 2 \\
 \Lambda_\IdxOneBitET &   \overline{\a}-\overline{\c} &   (\PL_\i^\a & \vartriangle & \PL_\k^{\sim\c}) & \cap & (\PL_\j^\T \cup \PL_\j^\F) & \scriptstyle\overline{\vartriangle\cap\cup}  & \PL_{\emptyset\, 1\,2}^{\overline{\T}-\overline{\F}} = \{\PO,\PP,\botZeroOne,\botOneOne\} & \{\PL_{\emptyset\, 1\,2}^{\overline{\a}-\overline{\c}}\} & 1 & 2 \\

  &  -\overline{\b\c} &  (\PL_\i^\T \cup \PL_\i^\F) & \cap & (\PL_\j^\b & \vartriangle & \PL_\k^{\sim\c}) & \scriptstyle\cup\cap\vartriangle &  \PL_{\emptyset\, 12}^{-\overline{\T\F}} = \{\PP,\PPi,\botZeroOne,\botOneZero\} & \{\PL_{\emptyset\, 12}^{-\overline{\b\c}}\} & 1 & 2 \\\hline
   &  \overline{\a\b}\c &  (\PL_\i^\a & \vartriangle & \PL_\j^{\sim\b}) & \cap & \PL_\k^\c & \scriptstyle\vartriangle\cap &  \PL_{\emptyset 12}^{\overline{\T\F}\T} = \{\PPi,\botOneOne\} & \{\PL_{\emptyset 12}^{\overline{\a\b}\c}\} & 2 & 1 \\
 \Lambda_\IdxTwoBitET  & \overline{\a}\b\overline{\c} &  (\PL_\i^\a & \vartriangle & \PL_\k^{\sim\c}) & \cap & \PL_\j^\b & \scriptstyle\overline{\vartriangle\cap} &  \PL_{\emptyset 12}^{\overline{\T}\F\overline{\T}} = \{\DR,\PPi\} & \{\PL_{\emptyset 12}^{\overline{\a}\b\overline{\c}}\} & 2 & 1 \\
  &  \a\overline{\b\c} &  \PL_\i^\a & \cap & (\PL_\j^\b & \vartriangle & \PL_\k^{\sim\c}) & \scriptstyle\cap\vartriangle &  \PL_{\emptyset 12}^{\T\overline{\F\T}} = \{\PP,\PPi\} & \{\PL_{\emptyset 12}^{\a\overline{\b\c}}\} & 2 & 1 \\\hline
 \Lambda_\IdxThreeBitET &  \overline{\a\b\c} &  (\PL_\i^\a & \vartriangle & \PL_\j^{\sim\b}) \cap  (\PL_\j^{\b} & \vartriangle & \PL_\k^{\sim\c}) & \scriptstyle\vartriangle\cap\vartriangle & \PL_{\emptyset 12}^{\overline{\T\T\T}} = \{\DR,\EQ\} & \{\PL_{\emptyset 12}^{\overline{\a\b\c}} \} & 2 & 1 \\
\end{array}$$
\caption{\label{tab:VectorsInIndexNotation}Sub-spaces of the \ThreeBit cube / partition cells of the \ThreeBit partition lattice written in index notations of the form $\PL_{\, \i\ \j\ \k}^{\alpha\beta\gamma}$ where $\alpha\in\{\a,\overline{\a},-\}$, $\beta\in\{\b,\overline{\b},-\}$ and $\gamma\in\{\c,\overline{\c},-\}$ for a permutation \i\j\k\ of the set $\{\i,\j,\k\}=\IdxSet$ (adapted from \cite{Bittner:VMQI}). Expressions in index notation are characterized by index pattern and index pattern fall into seven groups. The sequence of operations that corresponds to an index pattern is signified by $\boxplus$. }
\end{minipage}
\end{sideways}
\end{table}

\subsubsection{Index pattern, cell types and permutation invariance}

The arrangement of upper indices in the second column of Tab. \ref{tab:VectorsInIndexNotation} is categorized into seven  groups (information forms or \iforms). The set of \iforms\ is denoted by $\IdxSetX = \{\IdxZeroBit, \IdxOneBitPR$, $\IdxOneBitET$, $\IdxTwoBitPR, \IdxTwoBitET$, $\IdxThreeBitET, \IdxThreeBitPR\}$. For each of the seven groups of index patterns, there exists a class or type of partition cells. Cells of type (of \iform) $\nu$ are collected in the set $\ORccFive_\nu$.  The correspondence between the \iforms\ in the set $\IdxSetX$ of index patterns and the type $\ORccFive_\nu$ of sub-spaces is indicated by the index $\nu\in\IdxSetX$ as illustrated in Tab. \ref{tab:VectorsInIndexNotation} and Fig. \ref{fig:RccFivePartitionLattice}.

The sub-space of the \ThreeBit information cube / the partition cell that is identified by an expression of the form  $\PL_{\, \i\ \hat{\j}\ \k}^{\alpha\beta\gamma}\in\bigcup\Pi(\ORccFive)$  with  $\{\i,\j,\k\}=\IdxSet$ remains invariant under the rearrangements of triples of pairs of upper and lower indexes:
\begin{lemma}
\label{lemma:PermutationUpperAndLower}
If $\PL_{\, \i\ \j\ \k}^{\alpha\beta\gamma}\ \mapsto\ \PL\rho\left(\binom{\alpha}{\i}\binom{\beta}{\j}\binom{\gamma}{\k}\right)$ signifies the rearrangement of triples of pairs of upper and lower indexes of a partition cell in index notation by a member $\rho$ of the symmetric group $\textsf{S}_3$, then this rearrangement does not affect the content of the cell, i.e. $\PL_{\, \i\ \j\ \k}^{\alpha\beta\gamma} = \PL\rho\left(\binom{\alpha}{\i}\binom{\beta}{\j}\binom{\gamma}{\k}\right)$ for all $\rho\in\textsf{S}_3$.\footnote{If $\alpha\beta\gamma\in\Lambda_\IdxThreeBitET$ then the implicit assumption that the lower index in the middle of the sequence designates the partition cell that occurs in both discrete unions (i.e. $\PL_{\i\, \j\, \k}^{\overline{\a\b\c}} =  (\PL_\i^\a\vartriangle\PL_\j^{\sim\b})\cap(\PL_\j^{\b}\vartriangle\PL_\k^{\sim\c})$) needs to be made explicit by using the hat-based notation (i.e. $\i\j\k$ becomes $\i\hat{\j}\k$) before permutations are performed.}
\end{lemma}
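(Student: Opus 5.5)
The plan is to read every cell in index notation as the intersection of per-slot (or per-entangled-pair) constraints, and then use that intersection and union are commutative and associative and that the disjunctive union $\vartriangle$ is commutative. By Tab.~\ref{tab:VectorsInIndexNotation}, each cell $\PL_{\,\i\,\j\,\k}^{\alpha\beta\gamma}$ is a set of \ThreeBit\ patterns. I would first describe it without reference to slot order, as the set of patterns $s\in\ORccFive$ satisfying a finite conjunction of conditions, each attached to pairs $\binom{\alpha}{\i}$:
\begin{itemize}
\item a determinate pair $\binom{\a}{\i}$ contributes the condition $s_i=\a$, i.e.\ the factor $\PL_\i^\a$;
\item a hyphen pair $\binom{-}{\k}$ contributes the trivially true condition, i.e.\ the factor $\PL_\k^\T\cup\PL_\k^\F=\ORccFive$;
\item an overlined pair $\binom{\overline{\a}}{\i}\binom{\overline{\b}}{\j}$ contributes the factor $\PL_\i^\a\vartriangle\PL_\j^{\sim\b}$, i.e.\ the condition $(s_i=\a \leftrightarrow s_j=\b)$, equivalently $s_i\xor s_j=\a\xor\b$.
\end{itemize}
The key observation is that each such condition depends only on the pairs themselves, not on their positions. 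For overlined pairs this needs a short check:
\[
\PL_\i^\a\vartriangle\PL_\j^{\sim\b} \;=\; \PL_\j^\b\vartriangle\PL_\i^{\sim\a}.
\]
It holds because $s$ lies in either side iff ($s_i=\a$ and $s_j=\b$) or ($s_i\neq\a$ and $s_j\neq\b$). That condition is symmetric in the two pairs.

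Next I would verify that a permutation $\rho\in\textsf{S}_3$ acting on the three pairs maps each row of Tab.~\ref{tab:VectorsInIndexNotation} to another row of the same \iform\ with the same multiset of conditions. A transposition of two hyphen or determinate pairs merely reorders the factors of a triple intersection. A transposition that moves an overlined pair apart from or across another slot takes, for example, $\overline{\a\b}-$ to $\overline{\a}-\overline{\b}$. The table defines the latter as $(\PL_\i^\a\vartriangle\PL_\k^{\sim\c})\cap(\PL_\j^\T\cup\PL_\j^\F)$, which is the same conjunction of conditions once the lower indices travel with their upper indices. So it suffices to check the transpositions $(12)$ and $(23)$ on each of the seven \iforms. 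This is a finite case check, made short because within each \iform\ the definition is uniform.

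The main obstacle is $\Lambda_\IdxThreeBitET$, where the cell $(\PL_\i^\a\vartriangle\PL_\j^{\sim\b})\cap(\PL_\j^{\b}\vartriangle\PL_\k^{\sim\c})$ singles out the middle index. Following the footnote, I would fix the hatted index $\hat{\j}$ and let it travel with its pair. The cell is then the set of $s$ with $s_i\xor s_j=\a\xor\b$ and $s_j\xor s_k=\b\xor\c$, which is again a conjunction of two symmetric conditions, each referring to the hatted pair and one other pair. Any rearrangement that keeps the hat attached therefore leaves the conjunction unchanged, by commutativity of $\cap$ and the symmetry of $\vartriangle$ established above.

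One could add as a remark that the three pairwise xor conditions are dependent, since the third follows from the other two, so the cell is in fact hat-independent. The lemma as stated does not require this.
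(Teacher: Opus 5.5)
Your proposal is correct and takes essentially the paper's route. The paper also proves the lemma by enumerating the rows of Tab.~\ref{tab:VectorsInIndexNotation}, and all the work is done by commutativity of $\cap$ together with the identity $\PL_\i^\a\vartriangle\PL_\j^{\sim\b}=\PL_\j^\b\vartriangle\PL_\i^{\sim\a}$; you organize the same check through the position-free reading of each pair and the two generating transpositions. Your closing remark is also correct and sharper than the paper: since $s_i\xor s_j=\a\xor\b$ and $s_j\xor s_k=\b\xor\c$ imply $s_i\xor s_k=\a\xor\c$, the cell $\PL_{\i\hat{\j}\k}^{\overline{\a\b\c}}$ does not depend on which index carries the hat, so the footnote's hat convention is a notational convenience and is not needed to fix the cell's content.
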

\begin{proof}
By cases for all permutations of $\IdxSet=\{i,j,k\}$ and all $\alpha\beta\gamma\in\Lambda_\nu$ and all $\nu\in\IdxSetX$. ($\i\j\k=\emptyset12$): ($\nu\in\IdxOneBitPR$) $\PL_{\emptyset12}^{\a--} = \PL_{\, 1\emptyset2}^{-\a-} = \PL_{\, 2\, 1\, \emptyset}^{--\a} = \PL_\emptyset^\a$; ($\nu\in\IdxOneBitET$) $\PL_{\emptyset12}^{\overline{\a\b}-} = \PL_{ 1\emptyset2}^{\overline{\b\a}-} = \PL_{\, 2\, 1\, \emptyset}^{-\overline{\b\a}} = \PL_{1\, 2\, \emptyset}^{\overline{\b}-\overline{\a}} = \PL_\emptyset^\a\vartriangle\PL_1^{\sim\b} = \PL_1^\b\vartriangle\PL_\emptyset^{\sim\a}, \ldots$; ($\nu\in\IdxTwoBitET$) $\PL_{\emptyset12}^{\overline{\a\b}\c} = \PL_{ 1\emptyset2}^{\overline{\b\a}\c} = \PL_{ 2 1 \emptyset}^{\c\overline{\b\a}} = \PL_{1 2 \emptyset}^{\overline{\b}\c\overline{\a}} = (\PL_\emptyset^\a\vartriangle\PL_1^{\sim\b})\cap\PL_2^\c = (\PL_1^\b\vartriangle\PL_\emptyset^{\sim\a})\cap\PL_2^\c, \ldots$;  \ldots;  ($\nu\in\IdxThreeBitET$) $\PL_{\emptyset 12}^{\overline{\a\b\c}} = \PL_{\emptyset \hat{1}2}^{\overline{\a\b\c}} = \PL_{ \hat{1}\emptyset2}^{\overline{\b\a\c}} = \PL_{ 2 \hat{1} \emptyset}^{\overline{\c\b\a}} = \PL_{\hat{1} 2 \emptyset}^{\overline{\b\c\a}} = (\PL_\emptyset^\a\vartriangle\PL_1^{\sim\b})\cap(\PL_1^\b\vartriangle\PL_2^{\sim\c}) = (\PL_1^\b\vartriangle\PL_\emptyset^{\sim\a})\cap(\PL_1^\b\vartriangle\PL_2^{\sim\c})=(\PL_2^\c\vartriangle\PL_1^{\sim\b})\cap(\PL_1^\b\vartriangle\PL_\emptyset^{\sim\a})= (\PL_1^\b\vartriangle\PL_2^{\sim\c})\cap(\PL_1^\b\vartriangle\PL_\emptyset^{\sim\a}), \ldots$
\end{proof}
\noindent  That is, permutations $\rho$ of the symmetric group $\textsf{S}_3$ (e.g., \cite{Cameron1999}) do not change the sub-spaces that are identified. This is because the sequence $\i\j\k$ of lower indexes already stands for an arbitrary permutation of the index set $\{\emptyset,1,2\}=\IdxSet$ and the joint re-arrangement of upper and lower indexes does not change the relation between upper and lower indexes. By contrast, permutations of only the upper indexes or only the lower indexes does change the identity of the designated cell but leaves its geometric type invariant, i.e.,
\begin{equation}
\label{eq:PermutationUpperOrLower}
\PL_{\i\ \j\ \k}^{\alpha\beta\gamma}\in \ORccFive_\nu\ \text{iff}\ \PL_{\rho(\, \i\ \j\ \k)}^{\ (\alpha\beta\gamma)}\in \ORccFive_\nu\ \text{and}\ \PL_{\i\ \j\ \k}^{\alpha\beta\gamma}\in \ORccFive_\nu\ \text{iff}\ \PL_{\ (\, \i\ \j\ \k)}^{\rho  (\alpha\beta\gamma)}\in \ORccFive_\nu\ \text{for all}\ \rho\in\textsf{S}_3.
\end{equation}

\subsubsection{The \ThreeBit partition lattice in index notation}

The \ThreeBit partition lattice $\Pi (\ORccFive )$ is a structure whose carrier set is the set of partitions of the set $\ORccFive$ of mereological relations of Fig. \ref{fig:rccFive} and Tab.~\ref{tab:rccFive}. The cells/blocks of these partitions are described using index notation (Tab. \ref{tab:VectorsInIndexNotation}) such that, for a fixed frame / arrangement of lower indexes $\i\j\k$, every upper index pattern $\alpha\beta\gamma$ identifies one member of the carrier set of $\Pi (\ORccFive )$:
\begin{equation}
\label{eq:PartitionLatticeCarrierSet}
\Pi (\ORccFive ) = \{ \{\PL_{\i\ \j\ \k}^{---}\},\{\PL_{\, \i\ \j\ \k}^{\a--}\},\ldots, \{\PL_{\i\, \j\, \k}^{\a\overline{\b\c}}\}, \{\PL_{\i\, \j\, \k}^{\overline{\a\b\c}}\}\} = \{ \{\PL_{\i\ \j\ \k}^{\alpha\beta\gamma}\} \mid \exists \nu\in\IdxSetX.\ \alpha\beta\gamma \in \Lambda_\nu\ \}. \end{equation}

\noindent On the refinement order of Eq. \ref{eq:lesssim_def},  $0_{\ORccFive} = \{\ORccFive\}=\{\PL_{\, \i\ \j\ \k}^{---}\}$ is the most unrefined partition in $\Pi(\ORccFive)$ and $1_{\ORccFive} = \{\PL_{\i\, \j\, \k}^{\a\b\c}\}$ is the most refined partition in $\Pi(\ORccFive)$. Join and meet operations arise in the standard ways as least upper and greatest lower bounds. The resulting \ThreeBit partition lattice $((\Pi(\ORccFive),\sqcap,\sqcup,\{\PL_
    {\,\i\ \j\ \k}^{---}\},\{\PL_
    {\,\i\,\j\,\k}^{\a\b\c}\})$ is depicted in Fig. \ref{fig:RccFivePartitionLattice}.

\subsection{Partition cells and information states}
\label{sec:SubspacesIstates}

Information theory abstracts from the actual content of information, identifying  processes such as measurements and observations as interactions that establish correlations between the systems being observed and the observers. The amount of information that one system possesses about another is quantified by counting the possible alternatives that give rise to such correlations \citep{Shannon:1948jz,Rovelli:1995fv}. Following \citet{hintikka1962knowledge,Groenendijk:2003bc,Ciardelli2018}, the information state of a system is identified with the set of possible worlds that are compatible with a certain state of affairs. A non-empty set of information states indicates that the actual state of affairs is one of the possible worlds in the set. Information is encoded in information states (\istates) because an \istate\ may exclude as elements possible choices from a set of alternatives. The fewer possible worlds that are included in a (non-empty) $\istate$, the more information it contains.

Specifying mereological information involves identifying sets of mereologically possible worlds or mereological information states. These possibilities are limited by the axioms of mereology, the formal theory that describes the relations between parts and wholes, and wholes and their parts \cite{simons:parts,varzi:Mereology:03,Cohn01a}. Following \citet{bittner:InformationMereologyAndVagueness,Bittner:VMQI}, this paper focuses on worlds that are mereologically distinguishable for ordered pairs of regions, i.e. the binary mereological ($\rccFivePlus$) relations of Fig. \ref{fig:rccFive} and Tab.~\ref{tab:rccFive}. The mereological information of ordered pairs of regions is encoded in \istates\ that exclude certain choices from the set of mereologically distinguishable alternatives. As illustrated in Example~\ref{ex:RunningExample} information states of  elementary systems \istates\ are always relational, i.e. in relation to some observer $o\in\OO$ with $\OO\cap\OReg=\{\}$. The term $\istate_o(r_1,r_2)$ signifies the maximum information state of the elementary system $(r_1,r_2)$ with respect to the observer $o$.

\begin{remark}\normalfont
The notion of an observer is philosophically difficult. The 'thin'  observer (every entity can be an observer) of Relational Quantum Mechanics has difficulties \citep{Adlam2023}. The notion of a 'thick' (information processing and possibly cognitive) observer in Quantized Mereology is not without difficulties either. The latter is defended in \citep{bittner:Hilbert}. \qed
\end{remark}

Mereological \istates\ are subsets of the set $\ORccFive$, where each element of $\ORccFive$ matches one of the eight unique mereological types of ordered region pairs in \Reg, as shown in Fig. \ref{fig:rccFive} and Tab.~\ref{tab:rccFive}.  Subsequent discussions will concentrate on those subsets of $\ORccFive$ that constitute the cells in the partitions, which together create $\Pi(\ORccFive)$ the carrier set for the $\ThreeBit$ partition lattice shown in Fig. \ref{fig:RccFivePartitionLattice}.  In this context, the members of the set $\bigcup\Pi(\ORccFive)$, the mereological \istates, are related to the ordered pairs of regions in $\OReg^2$ and observers $o\in\OO$ (signified as $\istate_o(r_1,r_2)\in\bigcup\Pi(\ORccFive)$) via the following postulate (adapted from \citep{Bittner:VMQI}):\footnote{Postulates are statements in the meta-language  expressing principles that are assumed to be true of quantized mereology. In the formalism they shape the statement of axioms and/or arise as theorems.}

\begin{postulate}[Information states, observers and elementary systems] \label{postulate:Istates}
Let $o\in\OO$ be an  observer (`asking' questions, `receiving' answers).
(i) A sub-space $Z$ of the \ThreeBit cube / a cell in the \ThreeBit partition lattice $\Pi(\ORccFive)$ / an $\istate$ wrt. $o$ ($\istate_o$) in the sense of information theory, carries \emph{mereological} information if and only if it is correlated with an ordered pair of regions $(r_1,r_2) \in\OReg^2$ and the correlation manifests itself in terms of the mereological properties that are listed in Tab.~\ref{tab:rccFive}; (ii) For a given observer $o\in\OO$ a pair of regions $(r_1,r_2)$ cannot be simultaneously correlated in this sense  with multiple maximum \istates/cells/sub-spaces wrt. $o$;  (iii) All \istates/cells/sub-spaces of $\bigcup\Pi(\ORccFive)$ are possible for a given observer $o\in\OO$ and elementary system $(r_1,r_2) \in\OReg^2$.
\end{postulate}
\noindent  In this context  one can define (adapted from \citep{Bittner:VMQI})  in the spirit of the relational interpretation of (quantum) information \cite{Rovelli:1995fv}:
\begin{definition}
\label{definition:InfoCapacity}
If $Z\in \bigcup\Pi(\ORccFive)$ is the \istate/cell/sub-space that is correlated with the elementary system $(r_1,r_2)\in\OReg^2$ with respect to some observer $o\in\OO$ ($\istate_o(r_1,r_2)=Z$), then $Z$ is a cell in some partition of $\Pi(\ORccFive)$ (i.e. $\pi$ ($Z\in \pi\in\Pi(\ORccFive)$) and  $|Z|$ and $|\pi|$ respectively signify the cardinality of the sets $Z$ and $\pi$. In this context one can define:  (i) the ordered pair $(o,(r_1,r_2))$ or $(r_1,r_2)_o$ is called a \emph{relational elementary system}.
(ii) The \istate\ $Z$ \emph{carries} $\log_2 |\pi|$  bits of mereological information about the relational elementary system $(r_1,r_2)_o$; (iii) The quantity $\log_2 |\pi|$ is the \emph{information capacity}   of $(r_1,r_2)_o$ wrt. $o$ and (iv) The relational elementary system $(r_1,r_2)_o$ is subject to the amount of $\log_2 |Z|$ bits of \emph{ontological vagueness}.
\end{definition}
\noindent An important consequence of the relational view underlying Quantized Mereology is that the notions of `information state', `information capacity' and `amount/degree of vagueness' apply to relational elementary systems (members of the set $\OO\times\OReg^2$) only. Examples of \istates\ that can obtain of relational elementary systems as well as the amount of information that can be carried  and the amount/degree of vagueness that can be captured are enumerated in the two right-most columns of Tab. \ref{tab:VectorsInIndexNotation}.

The information states of relational elementary systems can be categorized into information forms (\iforms) that describe the different patterns of determinacy and indeterminacy in the bits describing the truth values of the predicates  $Q_0$, $Q_1$, and $Q_2$ that define mereological relations. This will be discussed in Sec.~\ref{sec:InteractionMaps}. The partition-theoretic correspondence between \istates\ and \iforms\ was enumerated in Tab.~\ref{tab:VectorsInIndexNotation} and Fig. \ref{fig:RccFivePartitionLattice}.

\section{Elementary mereological questions and their meaning}
\label{sec:ElementaryQuestions}

Having defined the state space of possible mereological outcomes in the previous section, the next step is to build the bridge between an observer's inquiry and these formal information states. This section introduces the crucial \textit{interaction map, $\sharp$}, and begins to detail its dual function within the framework. In this section, the primary focus is on its role as the  \textit{definitional anchor} for the top-down \textit{direct semantic path}, by providing the concrete classical fact that grounds the meaning of a mereological proposition.

This will also lay the groundwork for understanding the  second function of the interaction map  as the \textit{constructive first step} in the bottom-up \textit{indirect quantization path}. The full, formal details of this constructive role will be elaborated in Section \ref{sec:InteractionMaps}. Understanding both functions of this map is key to the coherence of the Quantized Information Semantics (QIS) principle, which unites these two pathways.

Information acquisition can be decomposed into the acquisition of individual pieces of information in response to elementary yes/no questions posed by an observer \citep{Wheeler1989-WHEIPQ}. This is because a yes/no answer to a question eliminates certain options from a set of alternatives, thereby conveying information. Logically possible pieces of mereological information for ordered pairs of regions with respect to some observer are encoded in the various cells of the \ThreeBit partition lattice. When responding with a yes or no answer to a  question about mereological properties, certain mereological possibilities are excluded, thereby mereological information is conveyed and correlations between observers $o\in\OO$, ordered pairs of entities/regions $(r_1,r_2)\in\OReg^2$ and information states in $\bigcup\Pi(\ORccFive)$ obtain (Postulate \ref{postulate:Istates}). The aim of what follows is to study the logic of the interactions that establish these correlations. For this purpose, the notion of  elementary questions and their meaning in the context of a support semantics from \citet{bittner:InformationMereologyAndVagueness,Bittner:VMQI} are reviewed and formally extended. For a comprehensive summary of the formal definitions that underpin the following discussion, please consult Table \ref{tab:OverviewSupportSemanticsTwo} in the Appendix.

\subsection{Mereological structures and mereological systems}
\label{sec:MereologicalRelations}

A mereological structure $(\OReg \cup \{\emptyset\}, \sim, \sharp, \simNu, \natural)$ is an ordered quintuple where $\OReg$ is a set of regions and $\sim$ is the equivalence relation on the set $\Reg$ which is defined in Tab.~\ref{tab:rccFive}.  It is assumed that $\OReg$ is partitioned into four non-empty subsets: $\cReg$, $\vReg$, $\zReg$ and $\eReg$, as shown in Table \ref{tab:OverviewSupportSemantics}(1).\footnote{This four-part classification of regions provides a concrete model for grounding the theory's information theoretic analysis of relational elementary systems. The precise logical and philosophical role of these categories of regions is discussed in more detail in Section \ref{sec:RegOneBitAndRegTwoBitAndRegThreeBit}} $\cReg$ represents crisp regions that do not include the empty region, while $\vReg$, $\zReg$ and $\eReg$ are regions that are subject to increasing degrees of vagueness due to limitations on the amount of information that is possible. Elementary mereological structures are ordered pairs of regions $(r_1,r_2)\in\OReg^2$.

The third component of a mereological structure $(\OReg \cup \{\emptyset\}, \sim, \sharp, \simNu, \natural)$ is a map $\sharp : \IdxSet \to \OO \to \OReg^2 \to  \Reg^2$. For a given observer $o\in\OO$ the map $\sharp_o$ assigns pairs of elements from $\OReg$ to pairs of elements from $\Reg$ based on postulated restrictions that determine how the elements of $\OReg^2$ can be described in terms of their relationship to the lattice structure of $\Reg$. The map $\sharp$ is called the \textit{interaction map} and is discussed in Section \ref{sec:ElementaryQuestions}.

The fourth component of the mereological structure $(\OReg\cup \{\emptyset\}, \sim, \sharp, \simNu, \natural)$ is a collection of equivalence relations $\simNu$ on subsets of $\OO\times\OReg^2$. These relations signify that relational elementary systems are information-theoretically equivalent if they are in (correlated with) the same information state in $\bigcup\Pi(\ORccFive)$ \cite{Bittner:VMQI}.  The last component of the mereological structure is a map $\natural : \OO\to\OReg^2 \to \OReg^2$ that takes relational elementary systems from one $\simNu-$  equivalence class (from one \istate) to another. The role of the map $\natural_o$ in the framework of quantized mereology is to capture situations in which relational elementary systems are changed (disturbed) by sequences of interactions via $\sharp_o$. This is discussed in \citet{bittner:InformationMereologyAndVagueness,Bittner:VMQI}.

Following the relational interpretation of quantum information \cite{Rovelli:1995fv} one can define a (relational) mereological system \M, as a pair $\M=(S,\OO)$, where $S$ is an observed system that is a mereological structure, i.e. ${S=(\OReg \cup \{\emptyset\},\sim,\sharp,(\simNu), \natural)}$ and $\OO$ with $\OO\cap\OReg=\{\}$ is a set of  observers of $S$. Members $o$ of $\OO$ obtain information about the mereological structure from $S$ by `asking' elementary yes/no questions \cite{Wheeler1989-WHEIPQ}. The `answers' to these questions are provided by $S$ and result in the information state $\istate_o$. The relational elementary systems $(r_1,r_2)_o$ of a mereological system $\M=(S,\OO)$ are the elementary structures $(r_1,r_2)\in\OReg^2$ of $S$ observed by $o$ (resulting in $\istate_o(r_1,r_2)$). The interactions of observers $o\in\OO$ with elementary mereological structures $(r_1,r_2)\in\OReg^2$ are modeled by the map $(r_1,r_2)_o \in \OO\times\OReg^2 \mapsto \sharp_o(r_1,r_2)\in\Reg^2$ and possibly affecting/disturbing the relational elementary system $(r_1,r_2)_o$ via the map $\natural$. These notations emphasize the relational foundations of Quantized Mereology.

\begin{table}[h!]
	\centering
	\begin{tabular}{p{2cm}|p{11.5cm}|c}
	\hline
	\multicolumn{3}{c}{\textbf{Part 1: Foundational Ontology \& Axioms}} \\ \hline
	\textbf{Region Types} &
	The set of regions $\OReg$ is partitioned into four distinct, non-empty classes: (\cReg): Crisp regions with determinate boundaries.
		 (\vReg): Intermediate degree of vagueness.
		 (\zReg): maximally vague but still with  discernible mereological features.
		(\eReg): Degenerate regions with no discernible mereological features.
	& (1) \\ \hline
	\textbf{Interaction Map Axioms} &
	The behavior of the interaction map $\sharp$ is axiomatically constrained by the type of region pair. The table below defines the set of possible outcomes for both simple ($\sharp^i$) and correlational ($\sharp^\times$) interactions.
	\vspace{0.2cm}
	\centering
	\begin{tabular}{c|cccc}
	\txt{$r \in$\\$\xyReg$} & $\cReg$ & $\eReg$ & $\zReg$ & $\vReg$ \\\hline
	$\cReg$ & $\sharp_o^i(r) = r$ & $\sharp_o^i(r) \in [\botZeroOne]$ & \multicolumn{2}{c}{$\sharp_o^i(r) \in [\bot_\ORccFive]\cup[\top_\ORccFive]$}\\
	$\eReg$ & $\sharp_o^i(r) \in [\botOneZero]$ & $\sharp_o^i(r) \in [\botOneOne]$ & $\sharp_o^i(r) \in [\botOneZero]$ & $\sharp_o^i(r) \in [\botOneZero]$ \\
	$\zReg$ & \txt{$\sharp_o^i(r) \in$\\$[\bot_\ORccFive]\cup[\top_\ORccFive]$} & $\sharp_o^i(r) \in [\botZeroOne]$ & \txt{$\sharp_o^i(r) \in$\\$[\bot_\ORccFive]\cup[\top_\ORccFive],$\\$\sharp_o^{\times}$ in Eq. \ref{eq:InteractionConstraints}} & \txt{$\sharp_o^i(r) \in$\\$[\bot_\ORccFive]\cup[\top_\ORccFive],$\\$\sharp_o^{\times}$ in Eq. \ref{eq:InteractionConstraints}} \\
	$\vReg$ & \txt{$\sharp_o^i(r) \in$\\$[\bot_\ORccFive]\cup[\top_\ORccFive]$} & $\sharp_o^i(r) \in [\botZeroOne]$ & \txt{$\sharp_o^i(r) \in$\\$[\bot_\ORccFive]\cup[\top_\ORccFive],$\\$\sharp_o^{\times}$ in Eq. \ref{eq:InteractionConstraints}} & \txt{$\sharp_o^i(r) \in$\\$[\bot_\ORccFive]\cup[\top_\ORccFive],$\\$\sharp_o^{\times}$ in Eq. \ref{eq:InteractionConstraintsXX}}
	\end{tabular} & (2) \\ \hline
	\multicolumn{3}{c}{\textbf{Part 2: Support Semantics for Questions}} \\ \hline
	\textbf{Elementary Questions} &
	An observer interacts with a system by asking three elementary yes/no questions ($Q_i?$) about its mereological properties (Overlap?, Part Of?, Inverse Part Of?). Each question is linked to a proposition whose truth is defined via the interaction map:
	$Q_i(r_1,r_2)_o \equiv (\rccfive(\sharp_o^i(r_1,r_2))\, i)$  & (3) \\ \hline
	\textbf{Truthsets \& Support} &
	The meaning of an answer is given by its truthset: the set of all classical possibilities consistent with it (e.g., $|Q_i(r_o)| = \PL_i^\T$). An information state ($\is$) supports an answer if it logically entails it (i.e., $\is \subseteq |Q_i(r_o)|$). & (4,5) \\ \hline
	\textbf{Maximum \istate} &
	The maximum supporting \istate\ ($\lceil \dots \rceil$) is the full truthset for an answered question. It represents all the information gained from that answer and corresponds to a specific cell in the partition lattice (e.g., $\lceil Q_i?(r_o)=\Y \rceil = \PL_i^\T$). & (6) \\ \hline
	\end{tabular}
	\caption{Overview of the Foundational Concepts of Quantized Mereology. (See also the corresponding rows of the more detailed Table \ref{tab:OverviewSupportSemanticsTwo} in the Appendix.)}
	\label{tab:OverviewSupportSemantics}
\end{table}

\subsection{Elementary questions and interaction maps}
\label{sec:ElementaryQuestionsInteractionMaps}

Before detailing the formal properties of the framework's core components, it is crucial to understand the conceptual role of the \textit{interaction map}, denoted by $\sharp$. This map should \textit{not} be mistaken for a literal model of a physical measurement process. Rather, it serves as a formal bridge between the potentially vague entities in $\OReg^2$ and the crisp, lattice-structured domain of $\Reg^2$ where classical mereological operators like $\meet$ are well-defined. Its function is to model how an interaction between an observer and a vague system resolves its inherent indeterminacy into one of a limited set of possible classical outcomes. In short, the $\sharp$ map provides a determinate, classical proxy for an indeterminate pair, thereby enabling a truth-functional evaluation of a mereological proposition. This role is illustrated concretely in Example~\ref{ex:SharpMap}.

An observing system $o\in\OO$ can obtain information about the mereological relations for an ordered pair of regions $(r_1,r_2)\in\OReg^2$ by asking three elementary yes/no questions. These questions are represented by maps of the form $Q?: \IdxSet \to \OO \to \OReg^2 \to \YN$, with answers indicated by $\Y$ or $\N$ (Tab.~\ref{tab:OverviewSupportSemantics}(3) and Tab.~\ref{tab:OverviewSupportSemanticsTwo}(3)). Each question $Q_i?$ is formally linked to a proposition, $Q_i$, whose truth or falsity is determined via the interaction map. For example, the proposition for the overlap question is defined as:
$$ Q_\emptyset (r_1,r_2)_o \equiv \pr_1(\sharp_o^\emptyset (r_1,r_2)) \meet \pr_2(\sharp_o^\emptyset (r_1,r_2)) \neq \emptyset $$
As this definition shows, the interaction map is indexed by $i \in \IdxSet$, since the specific classical proxy generated, $\sharp_o^i(r_1,r_2)$, may depend on the question being asked. In the classical limit, for crisp regions in $\cReg^2$, this map simplifies to the identity for all questions, i.e., $\sharp^i_o = (\id,\id)$. The nature of this map for vague regions is discussed in detail in Section~\ref{sec:InteractionMaps}.

\begin{example}[The  Map $\sharp$ as a Formal Bridge]\normalfont
\label{ex:SharpMap}
The interaction map $\sharp$  serves as a formal bridge, providing a determinate, crisp proxy from $\Reg^2$ for an indeterminate pair from $\OReg^2$ so that classical mereological operators can apply. Its behavior is governed by axiomatic constraints (Tab.~\ref{tab:OverviewSupportSemantics}(2)). Consider a system formed by a crisp National Park boundary ($\NPB \in \cReg$) and a vague Fern Habitat ($\FH \in \vReg$), observed by $O_4$. For this type of system $(\NPB, \FH)_{O_4} \in \OO\times\cvReg$, the axioms state that the interaction's output, $\sharp^i_{O_4}(\NPB, \FH)$, must be a crisp pair whose relation is either Disjoint (\DR) or Equal (\EQ). The indeterminacy of the vague system means that before the interaction, it is unsettled which of these two outcomes will be produced; the act of asking a question forces a resolution. For instance, when $O_4$ asks the Overlap? question ($Q_\emptyset?$), then (i) if the interaction $\sharp^\emptyset_{O_4}(\NPB, \FH)$ yields a pair related by \DR, the proposition $Q_\emptyset$ evaluates to \F, and the answer is \N\ and (ii)  if it yields a pair related by \EQ, the proposition $Q_\emptyset$ evaluates to \T, and the answer is \Y.
\qed
\end{example}

It is important to clarify the role the $\sharp$-map plays in the definitions of Table~\ref{tab:OverviewSupportSemantics}. Here, it is used as a formal tool to ground the meaning of a mereological proposition. To give a precise meaning to a question about a vague pair $(r_1, r_2)$, the framework first uses the map $\sharp_o^i$ to generate a single, crisp proxy. The properties of this proxy then serve as the "gold standard" for defining the truth or falsehood of the proposition $Q_i(r_1,r_2)_o$. This function of the $\sharp$ map as a \textit{definitional anchor} is the cornerstone of what will later be called the direct semantic path, where the goal is to identify the set of all classical possibilities (the truth set) consistent with the fact established by this interaction.\footnote{This usage is not circular. The $\sharp$ map is used here to define the \textit{propositions} ($Q_i$), but its own behavior is not defined by them. Instead, it is independently governed by the axiomatic constraints detailed in Section~\ref{sec:InteractionMaps}.}

\subsection{Linking elementary systems to maximum information states}
\label{sec:LinkingIstates}
Yes / no answers to elementary questions of the form $Q_i?(r_1,r_2)_o$  are correlated with the truth and falsehood of propositions of the form $Q_i(r_1,r_2)_o$.  The truth of the proposition $Q_i (r_1,r_2)_o$ is symbolized in the context of a standard reference semantics using expressions of the form $\sharp_o^i(r_1,r_2) \models Q_i (r_1,r_2)_o$  such that $\sharp_o^i(r_1,r_2) \models Q_i (r_1,r_2)_o$ iff $\rccfive(\sharp_o^i(r_1,r_2)\, i)=\T$ in the sense of Tab. \ref{tab:rccFive}.

\citet{Ciardelli2018} and others have shown that the meaning of a question (here yes/no questions of the form $Q_i?(r_1,r_2)_o$)  can be specified by outlining what information it takes to establish that the associated propositions (here $\oQ_i(r_1,r_2)_o$ with $\oQ_i \in \{Q_i,\neg Q_i\}$) are true. To specify  information  about an relational elementary mereological system in $(r_1,r_2)_o\in\OO\times\OReg^2$ is to identify a set of possible worlds / states of affairs / relations at/of which  propositions of the form $\oQ_i(r_1,r_2)_o$ are true. The collection of worlds in this context is the set  $\ORccFive$  of Tab.~\ref{tab:rccFive}, where it is assumed that $\ORccFive = \{ \rccfive(s_1,s_2) \mid s_1,s_2 \in \Reg^2 \}$ as specified in Tab. \ref{tab:OverviewSupportSemantics} (1)(enough regions).

The minimum amount of information that establishes that the proposition $\oQ_i(r_1,r_2)_o$ is true is contained in the subset of $\ORccFive$ where the proposition $\oQ_i(s_1,s_2)_o$ is true for some $(s_1,s_2)\in\OReg^2$ (including $(s_1,s_2)=(r_1,r_2)$). The respective subsets of $\ORccFive$ are called the truth sets of $\oQ_i(r_1,r_2)_o$ and are signified as $|\oQ_i(r_1,r_2)_o|$ (Tab.~\ref{tab:OverviewSupportSemantics}(4) and Tab.~\ref{tab:OverviewSupportSemanticsTwo}(4)). Proper, non-empty subsets of truth sets carry more information because they exclude more possibilities.

The meaning of a yes-answer to the question $Q_i?(r_1,r_2)_o$ is determined by the subsets of the truth set  $ |Q_i(r_1,r_2)_o|$  and the meaning of a no-answer is determined by the subsets of $|\neg Q_i(r_1,r_2)_o|$. In Table \ref{tab:OverviewSupportSemantics}(5) this is signified as $\is \Models (Q_i?(r_1,r_2)_o=\a)$ which means that the information state $\is\subseteq\ORccFive$ implies that the proposition $Q_i(r_1,r_2)_o$ is true if $\a=\Y$ and $\neg Q_i(r_1,r_2)_o$ is true if $\a=\N$. In short, $\is \Models (Q_i?$ $(r_1,r_2)_o=\a)$ means that the information state $\is$ settles $Q_i?(r_1,r_2)_o=\a$.   The maximum information state that settles $Q_i?(r_1,r_2)_o=\a$ is denoted $\lrceil{Q_i?(r_1,r_2)_o=\a}$ (Tab. \ref{tab:OverviewSupportSemantics}(6)) and $\lceil Q_i?(r_1,r_2)_o=\a\rceil \Models (Q_i?(r_1,r_2)_o=\a)$ means that $\lceil Q_i?(r_1,r_2)_o=\a\rceil$ is the maximum support set for $Q_i?(r_1,r_2)_o=\a$. The maximum support sets carry the minimum amount of information required to support the answer to the question at hand.

The non-empty maximum \istates\ that support answered elementary questions of the form $Q_i?(r_1,r_2)_o=\a$ are unique and identify cells in the \ThreeBit partition lattice of the form $\PL_{\,\i\ \j\ \k}^{\a--}=\PL_\i^\a=\lceil Q_i?(r_1,r_2)=\a \rceil$ (assuming $\lceil Q_i?(r_1,r_2)_o=\a\rceil\neq\{\}$).  To simplify the notation, it is assumed that, when used as indexes,  yes/no answers can be interchanged with truth values in the sense that there is a one-to-one correspondence of the form $\Y \leftrightarrow \T$ and $\N \leftrightarrow \F$ (Tab. \ref{tab:OverviewSupportSemantics} ((5) and (6))). For more details and background, see \cite{Ciardelli2018,bittner:InformationMereologyAndVagueness}.

\begin{example}[The Support Relation in Action]
\normalfont
\label{ex:SupportRelation}
The formal support semantics can be grounded using the Soil Chemist scenario from the running example (Example~\ref{ex:RunningExample}). When the chemist ($O_1$) asks $Q_\emptyset?(AR,FH)$ and gets the answer \Y, their finding is formally licensed by the support relation. The truth set for this answer, $|Q_\emptyset(AR,FH)_{O_1}|$, is the set of all classical relations where overlap is true. Any non-empty subset of this truth set, $\is$, will support the answer, which is written formally as $\is \Models (Q_\emptyset?(AR,FH)_{O_1}=\Y)$. Since the full truth set is the largest possible set that supports this answer, it constitutes the maximum support state, $\lceil Q_\emptyset?(AR,FH)_{O_1}=\Y \rceil$, which is equivalent to the partition cell $\PL_\emptyset^\T$. \qed
\end{example}

\section{The QIS framework and the compositionality of questions, answers, information states and qubits}
\label{sec:CompositionalityRoadmap}

As established in the introduction, the central thesis of this framework is encapsulated in the \textit{Quantized Information Semantics (QIS) principle}, which guarantees that the direct semantic path and the indirect quantization path for determining an information state yield the same result. The preceding sections have developed all the necessary components: the state space of the partition lattice, the interaction maps, and the support semantics for questions. This section now synthesizes these elements to formally detail the two pathways and prove their equivalence as stated in the QIS equation:
\begin{equation}
\label{eq:MeaningOfMereoQuantification}
(\overline{\lfloor\cdot\rfloor}\circ\overline{\rccfive}\circ\overline{\sharp})(r_o) = \lceil\QQ_{\i\j\k}?=\A^{\alpha\beta\gamma}\rceil(r_o)\qquad \text{(QIS)}
\end{equation}
A key to this guaranteed equivalence is that both pathways are grounded in the same formal phenomenon of observer-system interaction, modeled by the $\sharp$-map, which functions in a dual capacity across the two paths. The remainder of this section will elaborate on each pathway and detail the compositional logic that governs the combination of questions, answers, and states for vague systems.

\paragraph{The Dual Role of $\sharp$ as the Linchpin of Coherence:}

The equivalence stated in the QIS principle is not a coincidence, but a necessary consequence of the dual role played by the interaction map, $\sharp$. How this dual role ensures the framework's internal coherence can be summarized as follows:

In the Indirect Quantization Path (left side of QIS) the map $\overline{\sharp}$ functions as a generative mechanism. It takes a single relational elementary system $(r_1,r_2)_o$ and produces an indexed family of crisp proxies. This constitutes the raw classical data that is subsequently evaluated and quantized into a final $\istate$.  In the Direct Semantic Path (right side of QIS) the map $\sharp$ functions as a definitional anchor. It provides the specific "gold standard" classical outcome that defines the proposition underlying a question. The final $\istate$ is then determined by identifying all classical relations consistent with this definition.

Because both pathways are rooted in the very same formal interaction, modeled by $\sharp$, their outcomes must align. The map provides the raw data for the bottom-up construction while simultaneously setting the standard for the semantic definition on the other. This shared foundation guarantees that the constructed state is identical to the state's semantic meaning, thus ensuring the coherence of the QIS principle.

It is crucial to differentiate the various notational variations used to describe interaction maps. The granular description of an interaction map, denoted $\sharp_o^i$, models a \textit{single} interaction corresponding to a single elementary question, yielding one crisp proxy from $\Reg^2$. In contrast, the composite interaction notation, $\overline{\sharp}$, represents the entire indexed \textit{family} of all such (simultaneously possible) interactions for a given relational elementary system. It is the complete set of classical data obtained from all (simultaneously possible) probes, which is the necessary input for the bottom-up quantization path. Finally, a benchmark map, $\sharp\sharp$, will be introduced later in Section~\ref{sec:DeterminismAndInformation} solely as a formal tool to define the determinacy of an interaction.

\subsection{The Direct Semantic Path: Maximum Supporting Information States}

The right side of the QIS equation, $\lceil\QQ_{\i\j\k}?=\A^{\alpha\beta\gamma}\rceil$, represents the direct semantic path to determining an information state. It formalizes the intuitive (classical) idea that answering a question narrows the field of possibilities. This expression should be read as "the maximum supporting information state for the answer $\A^{\alpha\beta\gamma}$ to the question $\QQ_{\i\j\k}?(r_o)$."  The term "$\QQ_{\i\j\k}?(r_o)$" denotes a potentially complex mereological question posed by an observer $o$ to an elementary system $r=(r_1,r_2)$.
 The term $\A^{\alpha\beta\gamma}$ represents the specific pattern of 'Yes'/'No' answers obtained. The notation $\lceil...\rceil$ formalizes the concept of a maximum supporting set within the support semantics reviewed in Section \ref{sec:ElementaryQuestions} \citep{Ciardelli2018}. In this view, the resulting information state is simply the complete set of classical mereological relations from $\ORccFive$ that remain consistent with the given answer.

\begin{example}\label{ex:QIS_right}\normalfont
If an observer $O$ determines that regions $(r_1,r_2)=r$ overlap ($Q_\emptyset=\T$),  that $r_1$ is part of $r_2$ ($Q_1=\T$) and that inverse part of is indeterminate for $(r_1,r_2)$, the combined answered question is $\QQ_{\emptyset 1 -}?(r_O) = \A^{\Y\Y-}$. The resulting \istate\ is the intersection of the two individual maximum supporting states and the set $\ORccFive$ signaling indeterminacy:
$$ \lceil\QQ_{\emptyset 1 -}?(r_O) = \A^{\Y\Y-}\rceil = \lceil Q_\emptyset?(r_O)=\Y \rceil \cap \lceil Q_1?(r_O)=\Y \rceil\cap\ORccFive = \PL_\emptyset^\T \cap \PL_1^\T\cap\ORccFive = \{\PP,\EQ\} $$
This becomes the information state $\istate_O(r)=\PL_{\emptyset\,1\,2}^{\T\T-}$, which is the set of possibilities licensed by the observer's cumulative findings. \qed
\end{example}

The formula in Example~\ref{ex:QIS_right} includes the term `$\cap\ORccFive$` to make the logic of the third, indeterminate proposition explicit. The indeterminate answer to the question $Q_2?$ corresponds to the trivial information state $\lceil Q_2?(r_o)=\text{anything}\rceil = \PL_2^\T \cup \PL_2^\F = \ORccFive$. Thus, the full intersection $\PL_\emptyset^\T \cap \PL_1^\T \cap \ORccFive$ correctly shows that information was gained on the first two propositions but not on the third.

\subsection{The Indirect Quantization Path: Formal Construction of Qubits}

The left side of the QIS equation, $(\overline{\lfloor\cdot\rfloor}\circ\overline{\rccfive}\circ\overline{\sharp})(r_o)$, represents the indirect quantization path. This path describes a formal, bottom-up construction of an \istate\ through a sequence of three distinct operations visualized in Figure \ref{fig:MereologyVSinformation}.

\paragraph{Interaction ($\overline{\sharp}$)}: The process begins with the interaction map $\overline{\sharp}$. For a given observer $o\in\OO$ and elementary system $r \in \OReg^2$, this map produces not one, but an indexed family of crisp proxies from $\Reg^2$.

\paragraph{Classical Evaluation ($\overline{\rccfive}$)}: The classical mereology map $\overline{\rccfive}$ is applied to each crisp proxy in the indexed family. This step yields an indexed family of  classical 3-bit patterns, $(\llbracket\sharp_{o,r}^\alpha\rrbracket, \sharpDet(\llbracket\sharp_{o,r}^\alpha\rrbracket))_{\alpha \in \IdxSetNu}$, where $\sharpDet(\llbracket\sharp_{o,r}^\alpha\rrbracket))_{\alpha \in \IdxSetNu}$ signifies whether the outcome produced by the map $\llbracket\sharp_{o,r}^\alpha\rrbracket$ is the result of a determinate or an  indeterminate mapping.
At this stage, the information is still a collection of classical facts, with the outcomes for each way the system was probed and meta-information about the nature of these outcomes.\footnote{While expressions like $(\llbracket\sharp_{o,r}^\alpha\rrbracket, \sharpDet(\llbracket\sharp_{o,r}^\alpha\rrbracket)_{\alpha \in \IdxSetNu})$ may look complicated, their components are straightforward. The subscripts and superscripts simply act as labels to keep track of which question/bit (`$\alpha$`) is being considered for which specific observer-system pair (`$_{o,r}$`). The notation ensures that the formal bookkeeping for all the classical bits that are collected is precise.}

\paragraph{Quantization ($\overline{\lfloor\cdot\rfloor}$)}: This is the final and crucial step that constructs the qubit. The quantization map $\overline{\lfloor\cdot\rfloor}$ takes the indexed family of pairs of bit patterns and performs two key functions: First, for each index $\alpha$, if the interaction is determinate ($\sharpDet(\llbracket\sharp_{o,r}^\alpha\rrbracket)$, then it yields a corresponding 1-bit \istate\ (like a face $\PL_\i^\a$ or a diagonal plane $\PL_{\times\i}^\a$). If  the interaction is indeterminate, it yields the trivial state $\ORccFive$.  Second, it combines these resulting 1-bit \istates\ using the appropriate set-theoretic operator (`$\cap$` for cumulative questions, `$\vartriangle$` for non-cumulative) to construct the single, final \istate. The result of this two-stage quantization is a single cell in the \ThreeBit\ partition lattice (a vertex, edge, face, etc.) that represents the final qubit state of the system.

The QIS principle asserts that this formal, step-by-step construction will always result in the exact same partition cell $\PL_{\,\i\,\j\,\k}^{\alpha\beta\gamma}$ that is identified by the direct semantic path. This equivalence demonstrates the internal consistency of the framework, bridging the intuitive meaning of an observation with the formal machinery of quantization.\footnote{The point of this model of quantization is not that this \textit{is} the way interactions lead to sets of classical bits that then are 'quantized'. The model is intended to illustrate one way in which the process could be spelled out formally. See also  Remark~\ref{remark:CategoriesOfRegs}.}

\begin{figure}
$$\xymatrixcolsep{2pc}\xymatrix{
\Reg^2 \ar[d]_-{\txt{$\scriptstyle\rccfive$\\$\scriptstyle =\, \overline{\rccfive}$}} & \ar[l]_{\id\times\id\, =\, \overline{\sharp}} \cReg^2 \ar[d]^{\txt{$\scriptscriptstyle\lceil\QQ_{\i\j\k}?=\A^{\a\b\c}\rceil$}}&&
(({\Reg^2)}_i)_{i\in\IdxSetNu} \ar[d]_-{\txt{$\scriptstyle({\rccfive_i})_{i\in\IdxSetNu}$\\$\scriptstyle =\, \overline{\rccfive}$}}&&
\ar@/_1pc/[ll]_-{(\sharp^i)_{i\in\IdxSetNu}}
\ar@{}[ll]|-{\overline{\sharp}} \OReg^2_\nu \ar[d]_{\QQ_{\i\j\k}?} \ar@/^3pc/[dd]|{\txt{$\scriptstyle\lceil\QQ_{\i\j\k}?$\\$\scriptstyle =$\\$\scriptstyle\A^{\alpha\beta\gamma}\rceil$}}
 \\
\RccFive \ar[r]_{\{\cdot\} = \overline{\lfloor\cdot\rfloor}} &  \ORccFive_{\IdxThreeBitPR} &&
 ((\ORccFive \times \TF)_i)_{i\in\IdxSetNu} \ar[d]_{\boxtimes}  \ar[drr]|{\overline{\lfloor\,\cdot\,\rfloor}} \ar@{<.>}[rr]^{\asymp}_{\txt{\footnotesize Tab. \ref{tab:OverviewSupportSemantics}(3)}}
& & \scriptstyle \YN^{\QQ_{\i\j\k}?} \ar[d]^{\scriptstyle \lceil\,\cdot\, \rceil}_{\txt{\footnotesize Tabs.\\ \footnotesize \ref{tab:OverviewSupportSemantics}(6),\ref{tab:CombinedQuestionsIformsIstatesCum},\\ \footnotesize \ref{tab:CombinedQuestionsIformsIstatesNonCum},\ref{tab:CombinedQuestionsIformsIstatesThree},\ref{tab:defQuestionPattern}}}\\
 & &&({\scriptstyle((\ORccFive_{\IdxOneBit}\cup\{\ORccFive\})_i)_{i\in\IdxSetNu}} \ar[rr]_-{\txt{$\boxplus$\\ \tiny(Tabs. \ref{tab:VectorsInIndexNotation},\ref{tab:Quantization})}}& \qquad\qquad\quad & \ORccFive_\nu\\
 \ar@{}[rrr]_{\txt{${\scriptstyle(\overline{\lfloor\cdot\rfloor}\circ\overline{\rccfive}\circ\overline{\sharp}) = \lceil\QQ_{\i\j\k}?=\A^{\alpha\beta\gamma}\rceil}$\\ \tiny\txt{(Fig. \ref{fig:CoarseOverview})}}} & \ar@{}[l]_{\nu=\IdxThreeBitPR} &\ar@{.}[uul]|{\OReg^2=\cReg^2} \ar@{.}[ru]|{\OReg^2\supset\, \cReg^2}& \ar@{}[rr]^{\nu \in \{\IdxZeroBit, \IdxOneBitPR, \IdxOneBitET, \IdxTwoBitPR, \IdxTwoBitET, \IdxThreeBitET, \IdxThreeBitPR\}=\IdxSetX}_{\tiny\txt{Every $\nu\in\IdxSetX$ corresponds to an index pattern for $\QQ_{\i\j\k}?=\A^{\alpha\beta\gamma}$}} &&\\
}$$
\caption{\label{fig:MereologyVSinformation}Mereology vs. information in the equation $(\overline{\lfloor\cdot\rfloor}\circ\overline{\rccfive}\circ\overline{\sharp})= \lceil\QQ_{\i\j\k}?=\A^{\alpha\beta\gamma}\rceil$ (observer is understood). Diagram based representation of the crisp/classical case (left) and of the general quantum case that captures vagueness (right and Sec. \ref{sec:CompositionalityRoadmap} to \ref{sec:CombiningQuestions}). This figure is a refinement of the upper rectangle in the diagram of Fig. \ref{fig:CoarseOverview}. The map $(\overline{\rccfive}\circ\overline{\sharp})$ is specified in more detail in the upper part of Tab. \ref{tab:sharpInducesIstate}. The two-sided dotted arrow labeled $\asymp$ relates the \ThreeBit patterns of truth values to yes/no answers of elementary questions as specified in Tab. \ref{tab:OverviewSupportSemantics}(3) and represents the correlation between truth values and yes/no answers. The quantization map $\overline{\lfloor\cdot\rfloor}$ is specified in more detail in Tab. \ref{tab:Quantization}. The substitution instances for indexed expressions of the form $\QQ_{\i\j\k}?$, $\A^{\alpha\beta\gamma}$ and $\lceil\QQ_{\i\j\k}?=\A^{\alpha\beta\gamma}\rceil$  are summarized in Tab. \ref{tab:defQuestionPattern}.)}
\end{figure}

\subsection{The Classical Limit: Crisp Elementary Systems in $\cReg^2$}
\label{sec:QIS_cReg}

The power of the QIS framework is that it generalizes classical mereology. One can see this by examining how it simplifies for crisp elementary systems in $\cReg^2$, which are information-complete and can carry three bits of information. In this special case, the general machinery reduces to its classical counterpart.

On the \textit{indirect quantization path}, the maps simplify significantly: the interaction map $\overline{\sharp}$ becomes the identity map, $(\id,\id)$, and the quantization map $\overline{\lfloor\cdot\rfloor}$ reduces to the simple set comprehension operator $\{\cdot\}$. This signifies that the classical 3-bit pattern is mapped to a singleton set representing a single vertex on the information cube.

On the \textit{direct semantic path}, all three elementary questions ($Q_\emptyset?, Q_1?, Q_2?$) can be answered. The combination of these simultaneous questions is purely cumulative, which is signified by the operator $\odot$. The combined question $(Q_\emptyset?\odot Q_1?\odot Q_2?)(r_1,r_2)_o$ is abbreviated in index notation as $\QQ_{\emptyset12}?(r_1,r_2)_o$, with its answer being a 3-tuple $\A^{\a\b\c}$. The maximum supporting \istate\ is formed by the intersection of the three individual supporting states: $\lceil\QQ_{\emptyset12}?\!=\!\A^{\a\b\c}\rceil = \PL_\emptyset^\a \cap \PL_1^\b \cap \PL_2^\c = \PL_{\emptyset12}^{\a\b\c}$.

Consequently, the general QIS equation simplifies for crisp systems to:
$$(\{\cdot\}\circ\rccfive\circ(\id,\id))(r_1,r_2) =\lceil\QQ_{\emptyset12}?=\A^{\a\b\c}\rceil(r_1,r_2)_o$$
This confirms that classical mereology is recovered as the information-complete, 3-bit limit of the quantized framework.

\subsection{Compositionality and Generalization to Vague Systems}
To discuss the QIS framework for arbitrary elementary systems in $\OReg^2\supset\cReg^2$, one must introduce general statements that link the ways of combining elementary questions, answers, and \istates. This is achieved via index-based equations of the form $\lceil\QQ_{\i\j\k}?=\A^{\alpha\beta\gamma}\rceil(r_1,r_2) = \PL_{\,\i\,\j\,\k}^{\alpha\beta\gamma}$, which syntactically connect answered questions to their corresponding partition cells/\istates.

Information from elementary questions can be combined in two ways \citep{Bittner:VMQI}: \textit{Cumulative Combination ($\odot$):} This is an additive combination of information, where asking two questions provides two independent bits of information. This corresponds to the intersection ($\cap$) of the individual \istates. \textit{Non-Cumulative Combination ($\oplus$):} This combines information in a non-additive, correlational way, corresponding to the `exclusive or' (\xor) logical operation. This is used to model entanglement and corresponds to the symmetric difference ($\vartriangle$) of the individual \istates. (More details in Sec.~\ref{sec:CombiningQuestions}.)

The rules governing the index notation for these combinations are formalized as follows:
\begin{definition}[Index Substitution Rules]
\label{definition:IndexSubstitutionRules}
An equation of the form $\lrceil{\QQ_{\i\j\k}?(r_1,r_2)=\A^{\alpha\beta\gamma}} = \PL_{\,\i\,\j\,\k}^{\alpha\beta\gamma}$ is syntactically well-formed if and only if the following constraints are satisfied: (i) a permutation of the set $\{\alpha,\beta,\gamma\}$ occurs as upper indexes and the order of upper indexes for the answer pattern and for \istates\ is identical. (ii) A permutation of the set $\{\i,\j,\k\}=\IdxSet\equiv\{\emptyset,1,2\}$ occurs as lower indexes and their order is identical for the question pattern and for \istates. (iii) Substitution rules for upper indexes are: $\alpha \rightsquigarrow x, x\in\{\a,\overline{\a},-\}$; $\beta\rightsquigarrow y, y\in\{\b,\overline{\b},-\}$ and $\gamma\rightsquigarrow z, z\in\{\c,\overline{\c},-\}$ with the correspondences $\Y \leftrightarrow \T$ and $\N \leftrightarrow \F$. Substitutions of an upper index pattern $\alpha\beta\gamma\rightsquigarrow\overline{\a\b\c}$ are mirrored by a substitution of lower indexes of the form \emph{$\i\j\k\rightsquigarrow\i\hat{\j}\k$}. (iv) Lower indexes of questions are over-lined or replaced by dashes to match the pattern of the upper indexes for answers.
\end{definition}

For the QIS principle to hold consistently, the amount of information carried by an \istate\ must match the information capacity of the correlated elementary system. This is ensured by the following type-matching constraint, where $\nu\in\IdxSetX$ indexes the information forms  (e.g. $\IdxOneBitPR$, $\IdxOneBitET$ etc.):
\begin{equation}
\label{eq:ORegNuVSORccFiveNu}
((\overline{\lfloor\cdot\rfloor}\circ\overline{\rccfive}\circ\overline{\sharp})(r_1,r_2)_o =  \PL_{\,\i\,\j\,\k}^{\alpha\beta\gamma}  = \lceil\QQ_{\i\j\k}?(r_1,r_2)_o=\A^{\alpha\beta\gamma}\rceil) \in \ORccFive_\nu\ \text{iff}\ (r_1,r_2)_o\in\OReg^2_\nu \quad
\end{equation}
\noindent This constraint prevents  mappings that are nonsensical in the context of the QIS principle, such as a 1-bit system being mapped to a 3-bit information state. The index patterns themselves serve as type declarations for the maps, as summarized in Table \ref{tab:enumerationOfmapsFromQaPattren}. This table provides a roadmap for the detailed axioms presented in Sections \ref{sec:InteractionMaps} and \ref{sec:QuantizationMap}, showing how specific classes of elementary systems ($\OReg^2_\nu$) are mapped to their corresponding classes of information states ($\ORccFive_\nu$).

The QIS principle is therefore not just an equation but a comprehensive semantic framework, whose interpretation is formalized by the following postulate.
\begin{postulate}[Interpretation of QIS]
\label{postulate:MeaningOfMereoQuantification}
(i) According to the left side of QIS, partition cells are interpreted as quantum information states, and according to the right side, the same cells are interpreted as maximum supporting information states for answered questions in a classical support semantics. (ii) The correlation between the information state and the relational elementary system arises from interactions through the answers to elementary questions. (iii) Elementary questions are related to statements of classical mereology through the interaction map ($\overline{\sharp}$), such that the truth values of these statements are correlated with the yes/no answers. (iv) Classical truth values are mapped to partition cells (understood as qubits) by the quantization map ($\overline{\lfloor\cdot\rfloor}$). (v) The ways in which elementary questions and answers are combined to form complex question/answer patterns structurally mirror the ways in which information states are formed.
\end{postulate}

\begin{example}[The QIS Principle for an Entangled System]\normalfont
\label{ex:QISforEntanglement}
Consider the QIS principle in the context of the Remote Sensing Specialist ($O_2$) from the running example. Their inquiry establishes a 1-bit entangled state, and the framework must consistently account for it via both pathways.

\textit{The Direct Semantic Path:}
The specialist's informal question is, "Are the parthood relations the same?" assuming that $(AR,FH)_{O_2}$ has an information capacity of 1 bit.
To model this correlation, the framework uses the non-cumulative question $(Q_1? \oplus Q_2?)$, written in index notation as $\QQ_{-\overline{12}}?$.
A "Yes" answer to this informal question means the formal answers to $Q_1?$ and $Q_2?$ must be the same (e.g., $\b=\c$), which corresponds to a "No" result for the formal `\xor` operation.
The formal answer pattern is therefore $\A^{-\overline{\b\b}}$.
Applying the direct semantic path mapping from the QIS principle yields the state:
$$
\istate_{O_2}(AR,FH) = \lceil\QQ_{-\overline{12}}?(AR,FH)_{O_2}=\A^{-\overline{\b\b}}\rceil = \PL_{\,\emptyset 1 2}^{-\overline{\b\b}}
$$
This state is the symmetric difference $\PL_1^\b \vartriangle \PL_2^{\sim\b}$ (or more precisely $(\PL_\emptyset^\F\cup\PL_\emptyset^\T)\cap(\PL_1^\b \vartriangle \PL_2^{\sim\b})$, see Tab.~\ref{tab:VectorsInIndexNotation}), which yields the set of four mereological possibilities $\{\DR, \PO, \botOneOne, \EQ\}$.

\textit{The Indirect Quantization Path:} For a 1-bit entangled system like $(\AR,\FH)_{O_2}\equiv(O_2, (\AR,\FH)) \in \RegOneBitET$, the  interactions for the simple questions $Q_i?$ are all indeterminate: $\neg\sharpDet(\llbracket\sharp^i_{O_2,r}\rrbracket)$ for all $i \in \{\emptyset, 1, 2\}$, where the predicate $\sharpDet$ characterizes the determinacy of interaction maps  in the formalism.
This results in the trivial (but determinate) state $\ORccFive$ for each of these probes.
 However, the interaction for the correlational question $Q_{\times \emptyset}?$ (i.e., $Q_1? \oplus Q_2?$) is determinate.
The map $\overline{\sharp}$ yields a family of crisp proxies $(\sharp^{\times \emptyset}_{O_2,r})$ which, after evaluation by $\overline{\rccfive}$, produces a family of classical patterns which are flagged as the outcome of a determinate interaction $\sharpDet(\llbracket\sharp^{\times \emptyset}_{O_2,r}\rrbracket)$.
 The quantization map $\overline{\lfloor\cdot\rfloor}$ takes this determinate result and maps it to the 1-bit entangled state $\PL_{\emptyset 1 2}^{-\overline{\b\b}}$.

Both paths lead to the same result, $\PL_{\emptyset 1 2}^{-\overline{\b\b}}$, confirming the QIS principle for non-orthogonal vagueness and demonstrating how the specialist's finding is rigorously mapped to a precise set of classical outcomes.
\qed
\end{example}

\begin{table}
$$
\arraycolsep=2pt
\begin{array}{c|c|rclccccll}
 \nu & \txt{Sec. \ref{sec:InteractionMaps},\\Sec. \ref{sec:QuantizationMap},\\Tab. \ref{tab:InteractionPattern}} & \multicolumn{3}{c}{\txt{$\lfloor\QQ_{\i\j\k}?  =  \A^{\alpha\beta\gamma}\rfloor$\\$\in$\\$\lfloor\QQ_{\nu}?  =  \A^{\nu}\rfloor$}} & = & \multicolumn{2}{c}{\txt{\\$\OReg^2_\nu$}}  & \txt{\\$\to$} & \multicolumn{2}{c}{ \txt{\\$\ORccFive_\nu$}}\\
\hline
 & \text{Tab. \ref{tab:InteractionPattern}} & \multicolumn{9}{c}{\lfloor\QQ_{\i\j\k}?  =  \A^{\alpha\beta\gamma}\rfloor\quad\in\quad\lfloor\QQ_\IdxOneBitPR?  =  \A^\IdxOneBitPR \rfloor \quad = \quad \OReg^2_\IdxOneBitPR   \to  \ORccFive_\IdxOneBitPR}\\\cdashline{2-11}
 &  & \lfloor \QQ_{\i--}?&=&\A^{\a--}\rfloor & : & \OReg^2_\IdxOneBitPR && \to & \{\PL_{\,\i\ \j\ \k}^{\a--}\} \\
\IdxOneBitPR & \text{Tab. \ref{tab:Quantization}} & \lfloor \QQ_{-\j-}?&=&\A^{-\b-}\rfloor & : & \OReg^2_\IdxOneBitPR & \subseteq \ESPR & \to & \{\PL_{\,\i\ \j\ \k}^{-\b-}\} & \subset  \ORccFive_\IdxOneBitPR\\
 &  & \lfloor \QQ_{--\k}?&=&\A^{--\c}\rfloor & : & \OReg^2_\IdxOneBitPR & & \to & \{\PL_{\,\i\ \j\ \k}^{--\c}\} \\
 \hline
 & \text{Tab. \ref{tab:InteractionPattern}} & \multicolumn{9}{c}{\lfloor\QQ_{\i\j\k}?  =  \A^{\alpha\beta\gamma}\rfloor\quad\in\quad\lfloor\QQ_\IdxTwoBitPR?  =  \A^\IdxTwoBitPR \rfloor \quad = \quad \OReg^2_\IdxTwoBitPR   \to  \ORccFive_\IdxTwoBitPR}\\ \cdashline{2-11}
  &  & \lfloor \QQ_{\i\j-}?&=&\A^{\a\b-}\rfloor & : & \OReg^2_\IdxTwoBitPR &  & \to & \{\PL_{\,\i\ \j\ \k}^{\a\b-}\} \\
\IdxTwoBitPR & \text{Tab. \ref{tab:Quantization}} & \lfloor \QQ_{\i-\k}?&=&\A^{\a-\c}\rfloor & : & \OReg^2_\IdxTwoBitPR & \subseteq \ESPR & \to & \{\PL_{\,\i\ \j\ \k}^{\a-\c}\} & \subset  \ORccFive_\IdxTwoBitPR\\
  &  & \lfloor \QQ_{-\j\k}?&=&\A^{-\b\c}\rfloor & : & \OReg^2_\IdxTwoBitPR &  & \to & \{\PL_{\,\i\ \j\ \k}^{-\b\c}\} \\\hline
 & \text{Tab. \ref{tab:InteractionPattern}} & \multicolumn{9}{c}{\lfloor\QQ_{\i\j\k}?  =  \A^{\a\b\c}\rfloor\quad\in\quad\lfloor\QQ_\IdxThreeBitPR?  =  \A^\IdxThreeBitPR \rfloor \quad = \quad \OReg^2_\IdxThreeBitPR   \to  \ORccFive_\IdxThreeBitPR}\\  \cdashline{2-11}
  \IdxThreeBitPR & \text{Tab. \ref{tab:Quantization}} &  \lfloor \QQ_{\i\j\k}?&=&\A^{\a\b\c}\rfloor & : &\OReg^2_\IdxThreeBitPR & \subseteq \ESPR & \to & \{\PL_{\,\i\ \j\ \k}^{\a\b\c}\} & =  \ORccFive_\IdxThreeBitPR\\\hline
 & \text{Tab. \ref{tab:InteractionPattern}} & \multicolumn{9}{c}{\lfloor\QQ_{\i\j\k}?  =  \A^{\alpha\beta\gamma}\rfloor\quad\in\quad\lfloor\QQ_\IdxOneBitET?  =  \A^\IdxOneBitET \rfloor \quad = \quad \OReg^2_\IdxOneBitET   \to  \ORccFive_\IdxOneBitET}\\\cdashline{2-11}
 &  & \lfloor \QQ_{\overline{\i\j}-}?&=&\A^{\overline{\a\b}-}\rfloor & : & \OReg^2_\IdxOneBitET &  & \to & \{\PL_{\,\i\ \j\ \k}^{\overline{\a\b}-}\} \\
\IdxOneBitET & \text{Tab. \ref{tab:enumBoxPlusX}} & \lfloor \QQ_{\overline{\i}-\overline{\k}}?&=&\A^{\overline{\a}-\overline{\c}}\rfloor & : & \OReg^2_\IdxOneBitET & \subseteq \ESET & \to & \{\PL_{\,\i\ \j\ \k}^{\overline{\a}-\overline{\c}}\} & \subset  \ORccFive_\IdxOneBitET\\
  &  & \lfloor \QQ_{-\overline{\j\k}}?&=&\A^{-\overline{\b\c}}\rfloor & : & \OReg^2_\IdxOneBitET &  & \to & \{\PL_{\,\i\ \j\ \k}^{-\overline{\b\c}}\} \\\hline
 & \text{Tab. \ref{tab:InteractionPattern}} & \multicolumn{9}{c}{\lfloor\QQ_{\i\j\k}?  =  \A^{\alpha\beta\gamma}\rfloor\quad\in\quad\lfloor\QQ_\IdxTwoBitET?  =  \A^\IdxTwoBitET \rfloor \quad = \quad \OReg^2_\IdxTwoBitET   \to  \ORccFive_\IdxTwoBitET}\\  \cdashline{2-11}
  &  & \lfloor \QQ_{\overline{\i\j}\k}?&=&\A^{\overline{\a\b}\c}\rfloor & : & \OReg^2_\IdxTwoBitET &  & \to & \{\PL_{\,\i\, \j\, \k}^{\overline{\a\b}\c}\} \\
\IdxTwoBitET & \text{Tab. \ref{tab:enumBoxPlusX}} & \lfloor \QQ_{\overline{\i}\j\overline{\k}}?&=&\A^{\overline{\a}\b\overline{\c}}\rfloor & : & \OReg^2_\IdxTwoBitET & \subseteq \ESET & \to & \{\PL_{\,\i\, \j\, \k}^{\overline{\a}\b\overline{\c}}\} & \subset  \ORccFive_\IdxTwoBitET\\
  &  & \lfloor \QQ_{\i\overline{\j\k}}?&=&\A^{\a\overline{\b\c}}\rfloor & : & \OReg^2_\IdxTwoBitET &  & \to & \{\PL_{\,\i\, \j\, \k}^{\a\overline{\b\c}}\} \\\hline
 & \text{Tab. \ref{tab:InteractionPattern}} & \multicolumn{9}{c}{\lfloor\QQ_{\i\j\k}?  =  \A^{\alpha\beta\gamma}\rfloor\ \in\ \lfloor\QQ_\IdxThreeBitET?  =  \A^\IdxThreeBitET \rfloor \ = \ \OReg^2_\IdxThreeBitET   \to  \ORccFive_\IdxThreeBitET}\\  \cdashline{2-11}
  \IdxThreeBitET & \text{Tab. \ref{tab:Quantization}} & \lfloor \QQ_{\overline{\i\j\k}}?&=&\A^{\overline{\a\b\c}}\rfloor & : & \OReg^2_\IdxThreeBitET & \subseteq \ESET & \to & \{\PL_{\,\i\, \j\, \k}^{\overline{\a\b\c}}\} & =  \ORccFive_\IdxThreeBitET\\\hline
 \nu & \text{Sec. \ref{sec:QuantizationMap}} & \lfloor\QQ_{\i\j\k}?& = & \A^{\alpha\beta\gamma}\rfloor& : & \txt{$(r_1,r_2)_o$\\$\in$\\$\OReg^2_\nu$} & \subset \OO\times\OReg^2 & \mapsto & \txt{$\PL_{\,\i\,\j\,\k}^{\alpha\beta\gamma}$\\$\in$\\$\{\PL_{\,\i\,\j\,\k}^{\alpha\beta\gamma}\}$} & \subset \ORccFive_\nu\\\hline\hline
  \nu & \txt{Sec. \ref{sec:CombiningQuestions},\\Tab. \ref{tab:defQuestionPattern}} & \lceil\QQ_{\i\j\k}?& = & \A^{\alpha\beta\gamma}\rceil& : & \txt{$(r_1,r_2)_o$\\$\in$\\$\OReg^2_\nu$} & \subset \OO\times\OReg^2 & \mapsto & \txt{$\PL_{\,\i\,\j\,\k}^{\alpha\beta\gamma}$\\$\in$\\$\{\PL_{\,\i\,\j\,\k}^{\alpha\beta\gamma}\}$} & \subset \ORccFive_\nu\\

  \end{array}
$$
\caption{\label{tab:enumerationOfmapsFromQaPattren}
(above dashed lines) Enumeration of substitution instances of type declarations of the form  $\lfloor\QQ_{\i\j\k}?=\A^{\alpha\beta\gamma}\rfloor\in\lfloor\QQ_{\nu}?=\A^{\nu}\rfloor\ =\ \OReg^2_\nu \to \ORccFive_\nu$ (details in Sec. \ref{sec:InteractionMaps}, Lemmata \ref{lemma:InteractionQuantizationType}, \ref{lemma:QuantificationEqSupportLeft});
(below dashed lines) Enumeration of substitution instances of equations of the form $ \lfloor\QQ_{\i\j\k}?=\A^{\alpha\beta\gamma}\rfloor(r_1,r_2)_o = \PL_{\,\i\,\j\,\k}^{\alpha\beta\gamma}$ understood as value assignments that define maps of the signature/type $\lfloor\QQ_{\i\j\k}?=\A^{\alpha\beta\gamma}\rfloor\ :\ (r_1,r_2)_o \in \OReg^2_\nu \mapsto \PL_{\,\i\,\j\,\k}^{\alpha\beta\gamma}\in\{\PL_{\,\i\,\j\,\k}^{\alpha\beta\gamma}\}\subseteq\ORccFive_\nu$ (details in Sec. \ref{sec:QuantizationMap}, Lemmata \ref{lemma:InteractionQuantizationType}, \ref{lemma:QuantificationEqSupportLeft}); (below the double line) Pointer to the discussion and enumeration of substitution instances of equations of the form $ \lceil\QQ_{\i\j\k}?=\A^{\alpha\beta\gamma}\rceil(r_1,r_2)_o = \PL_{\,\i\,\j\,\k}^{\alpha\beta\gamma}$ Lemma \ref{lemma:QuantificationEqSupportRight} in Sec. \ref{sec:CombiningQuestions}.}
\end{table}

\section{Interaction maps}
\label{sec:InteractionMaps}

In what follows, the notation $\lfloor\QQ_{\i\j\k}?=\A^{\alpha\beta\gamma}\rfloor(r_o)=\PL_{\,\i\,\j\,\k}^{\alpha\beta\gamma}$ serves as a shorthand for maps of the form $(\overline{\lfloor\cdot\rfloor}\circ\overline{\rccfive}\circ\overline{\sharp})(r_o)=\PL_{\,\i\,\j\,\k}^{\alpha\beta\gamma}$ in situations where the focus is on the fact that the partition cell $\PL_{\,\i\,\j\,\k}^{\alpha\beta\gamma}$ represents qubits that result from the quantization of the (up to) three classical bits that are correlated with the (up to) three yes/no answer(s) $\A^{\alpha\beta\gamma}$ to the question $\QQ_{\i\j\k}?$.

The fact that \istates\ / partition cells of the form $\PL_{\,\i\,\j\,\k}^{\alpha\beta\gamma}$ fall into families $\ORccFive_\nu$ with $\nu\in\IdxSetX$ has led to type constraints of the form $\lfloor\QQ_\nu?=\A^\nu\rfloor =  \OReg^2_\nu \to \ORccFive_\nu$ (Eq. \ref{eq:ORegNuVSORccFiveNu}) to signify families of mappings whose members are such that for index pattern of the \iform\ $\alpha\beta\gamma\in\Lambda_\nu$ the statement $\lfloor\QQ_{\i\j\k}?  =  \A^{\alpha\beta\gamma}\rfloor \in \lfloor\QQ_{\nu}?  =  \A^{\nu}\rfloor$ signifies that  the map $\lfloor\QQ_{\i\j\k}?  = \A^{\alpha\beta\gamma}\rfloor\equiv(\overline{\lfloor\cdot\rfloor}\circ\overline{\rccfive}\circ\overline{\sharp})$ is of type $\OReg^2_\nu \to \ORccFive_\nu$.

Since the map $(\overline{\lfloor\cdot\rfloor}\circ\overline{\rccfive}\circ\overline{\sharp}):\OReg^2_\nu\to\ORccFive_\nu$ has a composite structure, it follows that its composite maps satisfy the following type constraints: $\overline{\sharp}\in \OReg^2_\nu \to X$, $\overline{\rccfive}\in X \to Y$ and $\overline{\lfloor\cdot\rfloor}\in Y \to \ORccFive_\nu$, where $X$ and $Y$ are sets whose extensions are listed in the right diagram of Fig. \ref{fig:MereologyVSinformation}, but are yet to be discussed. Maps of the form $(\overline{\rccfive}\circ\overline{\sharp}):\OReg^2_\nu \to Y$ are studied in the remainder of this section and maps of the form $\overline{\lfloor\cdot\rfloor}: Y \to \ORccFive_\nu$ are investigated in Sec. \ref{sec:QuantizationMap}.

\subsection{Constraints on the interaction map ($\sharp$)}

Tab. \ref{tab:OverviewSupportSemantics}(2) (details in \cite{bittner:InformationMereologyAndVagueness} and below) and Eqs.  \ref{eq:InteractionConstraints} and \ref{eq:InteractionConstraintsXX} (discussed below) list constraints on the interaction map $\sharp_o$ where $o\in\OO$ keeps track of the observer. The constraints distinguish between regions that are mereologically crisp (i.e. members of \cReg), regions that are void of mereological information (i.e. members of \eReg) and regions that are neither crisp nor void of information (i.e., members of \vReg\ and \zReg\ with members of \vReg\ less vague than members of \zReg). The elementary structures in $\OReg^2$ with $\OReg=\cReg\cup\vReg\cup\eReg\cup\zReg$ are distinguished according to which subset  $\xyReg \equiv \xReg \times \yReg \subset \OReg^2$ with \textsf{x,y} $\in \{\textsf{c},\textsf{v},\textsf{e},\textsf{z}\}$ they belong to. These distinctions are used to impose constraints on the domain and co-domain of interaction maps $\sharp_o^\alpha:\OReg^2 \to \Reg^2$ such that sub-maps of the form $(\sharp_o^\alpha)$ with $(r_1,r_2)\in\xyReg \mapsto \sharp_o^\alpha(r_1,r_2) \in \Reg^2$ arise as illustrated in Table \ref{tab:OverviewSupportSemantics}(2). The map $\sharp_o^\alpha$ in conjunction with information about its (in)determinacy induce families/types of maps of the form $\overline{\sharp}:\OReg^2_\nu \to (\Reg^2_\alpha)_{\alpha\in\IdxSetNu}$ where $\IdxSetNu\supset\IdxSet$ is a $\nu$-dependent index set (see below in Eq. \ref{eq:IdxSetNu}) and $\nu \in \IdxSetX$ as illustrated in Fig.~\ref{fig:MereologyVSinformation}.

The various cases that occur when using the subsets  $\xyReg \subset \OReg^2$  to impose constraints on the domain and co-domain of interaction maps $\sharp_o^\alpha:\OReg^2 \to \Reg^2$ are enumerated in the upper part of Tab. \ref{tab:sharpInducesIstate} (above the dashed lines). In general, four cases are distinguished:

\begin{table}
$$ 
\begin{array}{c||c|c|c|c|c|c}
& \multicolumn{6}{c}{\text{Axiomatic constraints on the domain and co-domain of the  maps}}\\
 & \multicolumn{6}{c}{\sharp_o^\alpha,\sharp\sharp_o^\alpha:\ r\in\xyReg \mapsto \sharp_o^\alpha(r)\in\Reg^2; r\in\xyReg \mapsto \sharp\sharp_o^\alpha(r)\in\Reg^2}\\  &\multicolumn{6}{c}{\text{(from Tab. \ref{tab:OverviewSupportSemantics}(2) and Eqs. \ref{eq:InteractionConstraints}, \ref{eq:InteractionConstraintsXX})}}\\\hline
 \txt{$r$\\$\in$\\$\OReg^2$} &\multicolumn{2}{c|}{\txt{$r$\\$\in$\\$\cvReg\cup\vcReg$\\$\cup$\\$\czReg\cup\zcReg$}}& \txt{$r$\\$\in$\\$\starEReg$\\$\cup$\\$\EstarReg$\\$\cup$\\$\eeReg$\\$\cup$\\$\ccReg$} & \multicolumn{2}{c|}{\txt{$r$\\$\in$\\$\zvReg\cup\vzReg$\\$\cup\, \zzReg$}} & \txt{$r$\\$\in$\\$\vvReg$} \\
 \txt{\rotatebox[origin=c]{-90}{$\mapsto$}} & \multicolumn{2}{c|}{\txt{\rotatebox[origin=c]{-90}{$\mapsto$}}} & \txt{\rotatebox[origin=c]{-90}{$\mapsto$}}   & \multicolumn{2}{c|}{\txt{\rotatebox[origin=c]{-90}{$\mapsto$}}}& \txt{\rotatebox[origin=c]{-90}{$\mapsto$}} \\
\txt{$\sharp_o^\alpha(r),$\\$\sharp\sharp_o^\alpha(r)$\\$\in$\\$\Reg^2$} & \multicolumn{2}{c|}{\txt{$\sharp_o^\alpha(r),\sharp\sharp_o^\alpha(r)$\\$\in$\\$[\bot_\ORccFive]\cup [\top_\ORccFive]$}} & \txt{$\sharp_o^\alpha(r),$\\$\sharp\sharp_o^\alpha(r)$\\$\in$\\$\Reg^2$} & \multicolumn{2}{c|}{ \begin{array}{l}
\txt{$\sharp_o^\alpha(r),\sharp\sharp_o^\alpha(r)$\\$\in$}\\
\text{[}\bot_i^{\downarrow}]\cup [\top_i^{\uparrow}]\  \text{or}\\
                     \text{[}\bot_i^{\uparrow}]\cup [\top_i^\downarrow]\  \text{or}\\
                     \text{[}\bot_{\ORccFive}] \cup [\bot_i]\ \text{or} \\
                      \text{[}\top_{\ORccFive}] \cup [\top_i]\\
                      \qquad \text{if}\  \alpha = \times i;\\
                      \text{[}\bot_\ORccFive]\cup [\top_\ORccFive]\\
                      \qquad \text{if}\ \alpha \in\IdxSet.\\
                \end{array}} &  \begin{array}{l}
\qquad \txt{$\sharp_o^\alpha(r),\sharp\sharp_o^\alpha(r)$\\$\in$}\\
\qquad \text{[}\bot_k^{\downarrow}]\cup [\top_k^{\uparrow}]\  \text{or}\\
                   \qquad   \text{[}\bot_k^{\uparrow}]\cup [\top_k^\downarrow]\  \text{or}\\
                 \qquad     \text{[}\bot_{\ORccFive}] \cup [\bot_k]\ \text{or} \\
                 \qquad      \text{[}\top_{\ORccFive}] \cup [\top_k]\ \text{with}\\           \qquad             k=i\ \text{if}\ \alpha = \times i\\
             \qquad          k=j\ \text{if}\ \alpha = \times j;\\
               \qquad        \text{[}\bot_\ORccFive]\cup [\top_\ORccFive]\ \text{if}\ \alpha \in \IdxSet;\\

               \qquad \quad       \text{and}\\
             \quad         \txt{$\{\llbracket \sharp_o^{\times i}(r)\rrbracket, \llbracket\sharp_o^{\times j}(r)\rrbracket\}$\\$\in \DIA_k$}\\

                \end{array} \\
                \hdashline
\txt{$\sharp_o^\alpha(r),$\\$\sharp\sharp_o^\alpha(r)$\\defined\\for $\alpha \in$} & \multicolumn{3}{c|}{\alpha \in\IdxSet} &  \multicolumn{2}{c|}{\alpha \in \IdxSet \cup\{\times i\},\ i \in \IdxSet} & \txt{$\alpha \in\IdxSet\cup\{\times i, \times j\},$\\$ \{i,j,k\} = \IdxSet$}\\
\end{array}
$$
\caption{\label{tab:sharpInducesIstate}Axiomatic constraints on the domains and co-domains of interaction maps of the form $\sharp_o^\alpha,\sharp\sharp_o^\alpha:\xyReg \to \Reg^2$ with $\alpha \in \{\emptyset,1,2,\times \emptyset, \times1, \times2 \}$, $\xyReg \equiv \xReg \times \yReg \subset \OReg^2$, \textsf{x,y} $\in \{\textsf{c},\textsf{v},\textsf{e},\textsf{z}\}$ and $\{i,j,k\} = \IdxSet = \{\emptyset,1,2\}$. The map $\sharp\sharp_o^\alpha$ is a `benchmark' that is used to distinguish situations in which $\sharp_o^\alpha$ is determinate from those where it is indeterminate (cf. Sec.~\ref{sec:DeterminismAndInformation}). }
\end{table}

\paragraph{The case $\ccReg$}

Consider Tab. \ref{tab:OverviewSupportSemantics}(2). For crisp elementary systems in $\ccReg$ ($\cReg \times \cReg$), the operator $\sharp_o^i$ is the identity map $\id\times\id$ for all $i\in\IdxSet$, and maps of the form $(\rccfive\circ\sharp_o^i) (r_1,r_2)$ identify \rccFive relations using the \ThreeBit patterns in the set $\RccFive$ as enumerated in Tab.~\ref{tab:rccFive}.  The result of $(\sharp_o^i r)$ for elementary systems $r \in \ccReg$ is not affected by the parameter $i\in \IdxSet$.

\paragraph{The case $\ceReg\ \cup\ \ecReg\ \cup\ \evReg\ \cup\ \veReg\ \cup\ \ezReg\ \cup\ \zeReg$}

These cases capture elementary systems involving regions from $\eReg$, which are void of mereological information. For a system $r=(r_1,r_2)$ in $\starEReg \equiv \ceReg\cup\veReg\cup\zeReg$ (where $r_2 \in \eReg$), the interaction map's co-domain is restricted to the equivalence class $[\botZeroOne]$, meaning the second component of the output is always the empty region ($\pr_2(\sharp_o^i r)=\emptyset$). Symmetrically, if $r \in \EstarReg \equiv \ecReg\cup\evReg\cup\ezReg$, the output is constrained to $[\botOneZero]$, and if $r \in \eeReg$, the output is fixed to $(\emptyset,\emptyset)$ in the class $[\botOneOne]$. For any system $r \in \EstarReg\cup\starEReg\cup\eeReg$, the outcome of the interaction $(\sharp_o^i r)$ is invariant with respect to the index $i \in \IdxSet$; the specific question asked does not alter the resulting proxy when at least one region is void of mereological features.

\paragraph{The case $\cvReg\ \cup\ \vcReg \cup \czReg \cup \zcReg $}

As specified in Tab. \ref{tab:OverviewSupportSemantics}(2), for elementary systems $r\in\cvReg\ \cup\ \vcReg \cup\czReg\ \cup\ \zcReg$ the interaction map $\sharp_o^i$ is subject to the constraint $(\sharp_o^i r)\in [\bot_{\ORccFive}] \cup [\top_{\ORccFive}]\subset \Reg^2$  with $\bot_{\ORccFive}=\DR$ and $\top_{\ORccFive}=\EQ$ as specified in Fig. \ref{fig:isuccBot_iprecTop} (top) and $[\bot_{\ORccFive}]\cap [\top_{\ORccFive}]=\{\}$. That is, within the constraint $(\sharp_o^i r)\in [\bot_{\ORccFive}] \cup [\top_{\ORccFive}]$ for every $i\in\IdxSet$ there is a certain degree of freedom in that, depending on the value of the argument $i\in\IdxSet$, a total of $2^3$ different outcomes are combinatorially possible, including $(\sharp_o^\emptyset r)\in [\bot_{\ORccFive}]$, $(\sharp_o^1 r)\in [\top_{\ORccFive}]$, $(\sharp_o^2 r)\in [\bot_{\ORccFive}]$, etc.

\paragraph{The case $\vzReg\ \cup\ \zvReg\ \cup\ \zzReg$}

Elementary systems  $r\in \vzReg\cup\zvReg\cup\zzReg$ are taken to $\Reg^2$ by the map $(\sharp_o^{\times i}\ r)$ in ways that satisfy the following constraints:
\begin{equation}
\label{eq:InteractionConstraints}
    \begin{array}{l}
\ \, \text{for one index}\ i\in\IdxSet\ \text{(bottom row of Tab.~\ref{tab:sharpInducesIstate}) the map}\ \sharp_o^{\times i}: \vzReg\cup\zvReg\cup\zzReg \to  \Reg^2\ \text{is defined, such that}\\
   (\sharp_o^{\times i}\ r) \in
                     \text{[}\bot_i^{\downarrow}] \cup [\top_i^{\uparrow}]\  \text{or}\ (\sharp_o^{\times i}\ r) \in \text{[}\bot_i^{\uparrow}] \cup [\top_i^{\downarrow}] \  \text{or}\
                     (\sharp_o^{\times i}\ r) \in \text{[}\bot_{\ORccFive}] \cup [\bot_i]\ \text{or} \ (\sharp_o^{\times i}\ r) \in \text{[}\top_{\ORccFive}] \cup [\top_i]. \\
    \end{array}
\end{equation}
\noindent In Equation~\ref{eq:InteractionConstraints}  the conventions of Fig. \ref{fig:isuccBot_iprecTop} (top) are used for the symbols $\bot_i^{\downarrow}, \top_i^{\uparrow}$ etc. to signify members of the set $\ORccFive$ of \ThreeBit patterns for $\rccFivePlus$ relations depending on the index $i\in\IdxSet$. One can verify in Fig. \ref{fig:isuccBot_iprecTop} (middle right) that $\bot_i^{\downarrow} \ll_i \top_i^{\uparrow}$ and $\bot_i^{\uparrow} \ll_i \top_i^{\downarrow}$ are the arrows in the graph of the \ThreeBit lattice  which vertices  jointly constitute the interior plane  $\PL_{\times \i}^\T$. Similarly, the vertices of the edges $\bot_{\ORccFive} \ll_i \bot_i$ and $\top_i \ll_i \top_{\ORccFive}$ of the graph of the \ThreeBit lattice jointly constitute the interior plane $\PL_{\times \i}^\F$. The four alternatives partition $\Reg^2$ (every pattern of $\ORccFive$ lies in exactly one of them), so the constraint excludes no outcome by itself. Its role is to fix which of the two interior planes $\PL_{\times \i}^{\T}$ and $\PL_{\times \i}^{\F}$ an outcome selects, which the predicate $\sharpDet$ and the map $\boxtimes$ use.

All the constraints in Eq. \ref{eq:InteractionConstraints} have in common that they only depend on the argument $i\in\IdxSet$ and the sets $[\bot_i^{\downarrow}] \cup [\top_i^{\uparrow}]$, $[\bot_i^{\uparrow}] \cup [\top_i^{\downarrow}]$, $[\bot_{\ORccFive}] \cup [\bot_i]$, $[\top_{\ORccFive}] \cup [\top_i]$ have different extensions in $\Reg^2$ for every index $i\in\IdxSet$. As specified in Tab. \ref{tab:OverviewSupportSemantics}(2) the constraint $(\sharp_o^i r)\in [\bot_{\ORccFive}] \cup [\top_{\ORccFive}]$ also holds for every $i\in\IdxSet$. These specific constraints on the outcomes are crucial, as they ensure that interactions with entangled systems produce results that can be uniquely quantized into the corresponding non-orthogonal information states (the interior planes of the information cube), a process detailed in Section \ref{sec:QuantizationMap}.

\paragraph{The case $\vvReg$}

Elementary systems  $r\in \vvReg$ are taken to $\Reg^2$ by the maps $(\sharp_o^{\times i}\ r)$ and $(\sharp_o^{\times j}\ r)$ in ways that satisfy the following constraints:
\begin{equation}
\label{eq:InteractionConstraintsXX}
    \begin{array}{l}
\text{for two distinct indexes}\ i,j\in\IdxSet\ \text{with}\ k\ \text{the remaining index, the maps}\ \sharp_o^{\times i},\sharp_o^{\times j}: \vvReg \to  \Reg^2\ \text{are defined, such that}\\
 \quad \text{for}\ l\in\{i,j\}:\ \big((\sharp_o^{\times l}\ r) \in \text{[}\bot_l^{\downarrow}] \cup [\top_l^{\uparrow}]\  \text{or}\ (\sharp_o^{\times l}\ r) \in \text{[}\bot_l^{\uparrow}] \cup [\top_l^{\downarrow}] \  \text{or}\\
 \qquad\qquad\quad (\sharp_o^{\times l}\ r) \in \text{[}\bot_{\ORccFive}] \cup [\bot_l]\ \text{or} \ (\sharp_o^{\times l}\ r) \in \text{[}\top_{\ORccFive}] \cup [\top_l]\big)\ \text{and} \\
   \qquad \{\llbracket\sharp_o^{\times i}\ r\rrbracket,  \llbracket\sharp_o^{\times j}\ r\rrbracket\} \in \DIA_k\ \text{where}\ \llbracket x \rrbracket \equiv (\rccfive\ x)\ \text{see Tab.~\ref{tab:rccFive}}.\\
    \end{array}
\end{equation}
\noindent That is, for an elementary systems  $r\in \vvReg$ for the two defined indexes $i$ and $j$, jointly, the outcomes interaction maps of the form $(\sharp_o^{\times i}\ r)$ and $(\sharp_o^{\times j}\ r)$ pick out diagonals across the interior of the \ThreeBit cube (Fig. \ref{fig:isuccBot_iprecTop} (bottom)) via  constraints of the form $\{\llbracket\sharp_o^{\times i}\ r\rrbracket,  \llbracket\sharp_o^{\times j}\ r\rrbracket\} \in\DIA_k$. As specified in Tab. \ref{tab:OverviewSupportSemantics}(2), the constraint $(\sharp_o^i r)\in [\bot_{\ORccFive}] \cup [\top_{\ORccFive}]$ also holds for every $i\in\IdxSet$.

In what follows, the notation  $(\sharp_o^\alpha)$ abbreviates either $(\sharp_o^i)$ or $(\sharp_o^{\times i})$ for $i\in\IdxSet$ such that $\alpha\in\IdxSetNu$. The constraints on the interaction map $\sharp_o$  in Tab. \ref{tab:OverviewSupportSemantics}(2), Eq. \ref{eq:InteractionConstraints} and Eq. \ref{eq:InteractionConstraintsXX} and their summary in Tab. \ref{tab:sharpInducesIstate} are axioms in the formalized theory.

\subsection{Determinism and information}
\label{sec:DeterminismAndInformation}

It follows from the constraints in Tab. \ref{tab:OverviewSupportSemantics}(2),  Eq. \ref{eq:InteractionConstraints} and Eq.
\ref{eq:InteractionConstraintsXX} that relational elementary systems $(o,r)\in(\OO\times\xyReg)$ must carry exactly one bit of information for maps of the form $r\in\xyReg\mapsto \llbracket\sharp_o^\alpha r\rrbracket \in \ORccFive$ with $\llbracket\sharp_o^\alpha r\rrbracket\equiv((\rccfive\circ\sharp_o^\alpha)\ r)$ to be determinate for a given index $\alpha\in\IdxSetNu$. A map is determinate if and only if it always yields the same output for the same input. Let $r\in\xyReg$ and $\alpha \in \IdxSetNu$ be fixed and assume that $(\sharp_o^\alpha r) \in S \cup T\subset \Reg^2$ with $S\neq \{\}, T \neq \{\},  S\cap T = \{\}$. If the relational elementary system does not carry the requisite bit of information, its state is one of ontological indeterminacy between the possibilities $S$ and $T$. An interaction will actualize one of these possibilities, but the formalism must be able to distinguish this case from one where the outcome was necessitated from the start.

To formally capture this distinction, the existence or non-existence of  a benchmark map, $\sharp\sharp_o^\alpha$, that processes the same information for $r$  as $\sharp_o^\alpha$ is probed  and a determinacy predicate for $\sharp_o^\alpha r$ is introduced:

\begin{definition}
\label{def:det_sharp}
For a given index $\alpha$ and relational elementary system $(o,r)$ with a potential outcome set $S\cup T$, an interaction $(\sharp_o^\alpha r)\in S\cup T$ is \textit{determinate} if there does not exist a valid benchmark map $(\sharp\sharp_o^\alpha r)\in S\cup T$ such that the outcomes of the two maps are incompatible:
$$
\sharpDet(\sharp_o^\alpha r) = \left\{\begin{array}{cl} \alpha\in\IdxSet & \neg \exists (\sharp\sharp_o^i r).\ \llbracket\sharp_o^i r\rrbracket \neq \llbracket\sharp\sharp_o^i r\rrbracket\\
\left(\begin{array}{c}\alpha = \times i,\\ i\in\IdxSet\end{array}\right) &  \neg \exists (\sharp\sharp_o^{\times i}r).\  \llbracket\sharp_o^{\times i} r\rrbracket \neq \llbracket\sharp\sharp_o^{\times i}r\rrbracket\ \&\  \neg(\llbracket\sharp_o^{\times i} r\rrbracket \ll\gg_i \llbracket\sharp\sharp_o^{\times i}r\rrbracket) \end{array}\right.
$$
\noindent   The interaction is \textit{indeterminate} otherwise.
\end{definition}
\noindent This predicate is the formal switch that governs the subsequent quantization process (Sec. \ref{sec:QuantizationMap}): a determinate result will be mapped to a specific planar feature of the \ThreeBit information cube, while an indeterminate result will be mapped to cube as a whole ($\ORccFive$).

The predicate $\sharpDet$ evaluates the determinacy of the map $(\sharp_o^\alpha r)$ based on the (non-) existence of a benchmark map $(\sharp\sharp_o^\alpha r)$ that matches conditions depending on $\alpha$.
That is, an interaction (modeled by $(\sharp_o^\alpha r)$) is determinate if its outcome is necessary and fixed (with respect to the $\alpha$-based conditions), meaning no valid benchmark map could produce a different result and it is indeterminate otherwise. The purpose of this definition is to formally identify the cases where there is indeterminacy due to the absence of information, i.e. information not carried by the elementary system $r\in\xyReg$ on which the interaction map $\sharp_o^\alpha$ acts.

\begin{example}[Illustrating the \sharpDet\ Predicate]\normalfont
\label{ex:SharpDetPredicate}
The abstract determinacy predicate from Definition \ref{def:det_sharp} can be clarified by contrasting two scenarios: \textit{An Indeterminate Interaction:}
Consider a crisp-vague system $(o,r) \in \OO\times\cvReg$, which has an information form of $\IdxTwoBitPR$. The axioms state its interaction $(\sharp_o^i r)$ can result in a crisp proxy pair related by either $\DR$ or $\EQ$. Because two outcomes are possible, a valid benchmark map $(\sharp\sharp_o^i r)$ could exist that produces the alternate outcome (e.g., a pair related by $\EQ$ if $\sharp$ produced one related by $\DR$). Since such a conflicting benchmark map can exist, the condition for determinacy is not met, and the pattern postulated for this class states that the interaction is indeterminate ($\neg\sharpDet(\sharp_o^i r)$ for one $i \in \IdxSet$).
\textit{A Determinate Interaction:}
In contrast, consider a crisp system $(o,r) \in \OO\times\ccReg$, which has an information form of $\IdxThreeBitPR$. For such a system, the interaction map $\sharp_o^i$ is the identity map, and the outcome is fixed and necessary. No valid benchmark map $\sharp\sharp_o^i$ could exist that produces a different result. Therefore, the condition $\llbracket\sharp_o^i r\rrbracket \neq \llbracket\sharp\sharp_o^i r\rrbracket$ can never be met, and the pattern postulated for this class states that the interaction is determinate ($\sharpDet(\sharp_o^i r)$ for all $i \in \IdxSet$). \qed
\end{example}

\subsection{Classifying elementary systems using the interaction maps}
\label{sec:RegOneBitAndRegTwoBitAndRegThreeBit}

This information-theoretic framework examines the vagueness of entities/regions indirectly, based on the form of information (\iform) carried by relational elementary systems. Information forms (\iforms) are the members of $\IdxSetX$ as enumerated in Tab.~\ref{tab:VectorsInIndexNotation} and Fig. \ref{fig:RccFivePartitionLattice}. The framework rests on an axiomatic structure that distinguishes between foundational postulates, formal definitions, and their logical consequences.

The core of this structure is the postulate that a relational elementary system's capacity to carry information is determined by the nature of its constituent regions (i.e., whether they are members of $\cReg$, $\vReg$, $\zReg$ or $\eReg$). This capacity is captured by its information form, $\iform(o,r)\in\IdxSetX$, which serves as the axiomatic basis for the theory. From this, formal definitions for the classes of relational elementary systems arise. The interaction behaviors of these systems, specifically the determinacy patterns of their interaction maps are then derived as theorems.

This logical progression is detailed in Tables \ref{tab:ClassesDefAxThPR} and \ref{tab:ClassesDefAxThET}. These tables are organized into three columns to make the axiomatic structure explicit: (i) Classes of relational elementary systems: This column provides the formal \textit{definitions} of the various classes of systems $\OReg^2_\nu$ (e.g., $\OReg^2_\IdxTwoBitPR$). A system's membership in a class is defined by the information form, $\iform(o,r)=\IdxTwoBitPR$, that it carries. (ii) Postulate of  information form: This column states the foundational \textit{axioms} of the theory. Each postulate links the physical type of the constituent regions $\xyReg$ (e.g. members of $\cvReg \cup \vcReg \cup \czReg \cup \zcReg$) to the specific information form(s) that relational elementary systems composed of such regions can carry. Each postulate also fixes, through the column of Tab.~\ref{tab:InteractionPattern} (top) that bears the name of the form, the determinacy pattern of the interaction maps of such systems. (iii) Theorems stating the  entailed interaction properties: This column lists the logical \textit{consequences} that are read off the postulated patterns. These theorems describe the necessary and observable behaviors of the interaction maps for any system belonging to a given class, expressed using the determinacy predicate $\sharpDet$ from Def.~\ref{def:det_sharp}.

This structure ensures that the determinacy or indeterminacy of an interaction is not a brute fact, but a consequence of the interaction pattern that is postulated with a system's information form. The information form, $\iform(o,r)$, serves as the crucial link, connecting the axiomatic properties of a system to its derived, observable behaviors. For example, if a system's regions dictate that its $\iform(o,r) = \IdxOneBitPR$, it is by definition a member of the class $\OReg^2_\IdxOneBitPR$, and it is a theorem that exactly one of its simple interactions must be determinate.

\begin{remark}\normalfont
\label{remark:CategoriesOfRegs}
It is important to clarify the logical status of the components presented in Tables \ref{tab:ClassesDefAxThPR} and \ref{tab:ClassesDefAxThET}. The information forms (\iforms) and the resulting classification of relational elementary systems (e.g., $\OReg^2_\IdxOneBitPR$) are considered fundamental to the axiomatic structure of the theory. The categories of regions \cReg, \vReg, \zReg\ and \eReg, while essential to the model, serve three specific and distinct roles: First, they provide the necessary components to build models that ground the framework in classical mereology, demonstrating that Quantized Mereology is a consistent generalization.  Second, they illustrate that metaphysical consequences can be drawn from the formalism, showing how an underlying ontology of regions can give rise to the specific information-theoretic limits the theory describes. Third, the actual choice of \cReg, \vReg, \zReg, \eReg\ and their specific interpretation are intended as a coherent proof-of-concept; they are not meant to carry a final or exclusive metaphysical commitment. The formalism is, in principle, open to other physical groundings that could entail the same information forms. A deeper investigation into the definitive ontology of vague regions remains an important area for future work. \qed
\end{remark}

The sets $\RegOneBitPR$, $\RegTwoBitPR$, $\RegOneBitET$, $\RegTwoBitET$, $\RegThreeBitET$ and $\RegThreeBit$ are mutually exclusive. Jointly, all the relational elementary systems form the set $\OReg_\IdxOneToThreeBit^2$. The relatively fine-grained classification of relational elementary systems according to their information form can be coarsened by considering only their information capacity of one, two or three bits or their entanglement status.
\begin{equation}
\label{eq:AxiomORegExhaustive}
\begin{array}{rccl}
 \text{information} & \RegOneBit & =& \RegOneBitPR \cup \RegOneBitET\\
 \text{capacity} &  \RegTwoBit & = &\RegTwoBitPR \cup \RegTwoBitET\cup \RegThreeBitET\\
 & \OReg^2_\IdxThreeBit &  = & \OReg^2_\IdxThreeBitPR \\
 \hline
    \txt{entangle-} &  \ESPR & = & \RegOneBitPR \cup \RegTwoBitPR \cup \RegThreeBit \\
    \text{ment} & \ESET & = & \RegOneBitET \cup \RegTwoBitET \cup \RegThreeBitET \\\hline

\text{relational} & \OReg_\IdxOneToThreeBit^2 &=& \RegOneBit \cup \RegTwoBit \cup \OReg^2_\IdxThreeBitPR \\
\txt{elementary\\\quad systems} && = & \ESPR  \cup \ESET\\
\end{array}
\end{equation}
\noindent The classes that partition $\OReg_\IdxOneToThreeBit^2$ can be refined further by equivalence relations between elementary systems with identical maximum non-empty \istates\ \cite{bittner:InformationMereologyAndVagueness}.
The equivalence relations are indicated by the symbol $\simNu$ with $\nu\in\IdxSetX$ and occupy the fourth slot of mereological structures of the form $(\OReg\cup\{\emptyset\},\sim,\sharp,\simNu,\natural)$.

\begin{table}[h!]
\begin{sideways}
\begin{minipage}{18.5cm}
\centering
\begin{tabular}{|p{5.5cm}|p{5.5cm}|p{6cm}|}
\hline
\textbf{Classes of relational elementary systems} & \textbf{Postulate: Information Capacity} & \textbf{Theorem: Entailed Interaction Properties} \\
\hline \hline

\begin{center}$\arraycolsep=1pt\begin{array}{rl} \OReg^2_\IdxThreeBitPR =& \{(o,r)\in \OO\times(\ccReg\, \cup\\ & \starEReg \cup \EstarReg \cup \eeReg).\\ &\qquad \iform(o,r)=\IdxThreeBitPR \}\end{array}$\end{center}  &
Elementary systems in $ \ccReg \cup \starEReg \cup \EstarReg \cup \eeReg$ form  relational systems that carry information of the form $\IdxThreeBitPR$.
\begin{center}
$\begin{array}{r}\forall o \in \OO, \forall r \in \ccReg \cup \starEReg \cup \EstarReg \cup \eeReg:\\ \iform(o,r) = \IdxThreeBitPR\end{array} $\end{center} &
For any system in $\OReg^2_\IdxThreeBitPR$, all three simple interactions must be determinate.
\vspace{0.5pc}\begin{center}$ \forall (o,r) \in \OReg^2_\IdxThreeBitPR, \forall i \in \IdxSet: \sharpDet(\sharp_o^i r) $ \end{center}\\
\hline

\begin{center}$\arraycolsep=1pt\begin{array}{rl} \OReg^2_\IdxTwoBitPR =& \{(o,r)\in \OO\times\\ & (\cvReg \cup \vcReg).\\ &\iform(o,r)=\IdxTwoBitPR \}\\\\
\OReg^2_\IdxOneBitPR =& \{(o,r)\in \OO\times\\
& (\czReg \cup \zcReg).\\ & \iform(o,r)=\IdxOneBitPR\} \\
\end{array}$\end{center}  &
Elementary systems in $\cvReg \cup \vcReg$ and $\czReg \cup \zcReg$ form  relational systems that respectively carry information of the form $\IdxTwoBitPR$ or $\IdxOneBitPR$.
\vspace{0.5pc}
\begin{center}
$\begin{array}{l}\forall o \in \OO, \forall r \in  \cvReg \cup \vcReg:\\
\qquad\qquad\qquad\iform(o,r) =\IdxTwoBitPR\\
\\
\forall o \in \OO, \forall r \in  \czReg \cup \zcReg:\\
\qquad\qquad\qquad\iform(o,r) =\IdxOneBitPR\end{array} $ \end{center}&
For any system in $\OReg^2_\IdxTwoBitPR$, exactly one interaction must be indeterminate.
\begin{center}$ \forall (o,r) \in \OReg^2_\IdxTwoBitPR, \exists! i \in \IdxSet: \neg\sharpDet(\sharp_o^i r) $\end{center}
For any system in $\OReg^2_\IdxOneBitPR$, exactly one simple interaction must be determinate.
\begin{center}$ \forall (o,r) \in \RegOneBitPR, \exists! i \in \IdxSet: \sharpDet(\sharp_o^i r)  $\end{center}\\

\hline
\end{tabular}
\vspace{1pc}
\caption{\label{tab:ClassesDefAxThPR}Integrated View of Axiomatic System Classifications for non-entangled systems, Postulated Capacities, and Entailed Properties. (The entailed properties are read off the interaction patterns of Tab.~\ref{tab:InteractionPattern} (top), which are postulated with the information forms.)}
\end{minipage}
\end{sideways}
\end{table}

\begin{table}[h!]
\begin{sideways}
\begin{minipage}{18.5cm}
\centering
\begin{tabular}{|p{5.5cm}|p{5cm}|p{6.5cm}|}
\hline
\textbf{Classes of relational elementary systems} & \textbf{Postulate: Information Capacity} & \textbf{Theorem: Entailed Interaction Properties} \\
\hline \hline

\begin{center}$\arraycolsep=1pt\begin{array}{rl} \OReg^2_\IdxThreeBitET =& \{(o,r)\in \OO\times\vvReg .\\ & \iform(o,r)=\IdxThreeBitET \}\end{array}$\end{center}  &
Elementary systems in $ \vvReg $ form  relational systems that carry information of the form $\IdxThreeBitET$.
\begin{center}$\begin{array}{l}\forall o \in \OO, \forall r \in \vvReg:\\ \qquad\qquad \iform(o,r) = \IdxThreeBitET\end{array} $\end{center} &
For any system in $\OReg^2_\IdxThreeBitET$, all simple interactions must be indeterminate, and for exactly one axis $k$ the two correlational interactions $i,j\neq k$ are determinate and lie jointly in the diagonal $\DIA_k$.
\begin{center}$ \begin{array}{l}\forall (o,r) \in \RegThreeBitET: (\forall i \in \IdxSet: \neg\sharpDet(\sharp_o^i r))\\
\land (\exists! k \in \IdxSet, \forall i,j \neq k: \sharpDet(\sharp_o^{\times i} r) \land \sharpDet(\sharp_o^{\times j} r)\\ \land \{\llbracket \sharp^{\times i}_{o,r}\rrbracket, \llbracket \sharp^{\times j}_{o,r}\rrbracket\} \in \DIA_k) \end{array}$\end{center} \\
\hline
\begin{center}$\arraycolsep=1pt\begin{array}{rl} \OReg^2_\IdxTwoBitET =& \{(o,r)\in \OO\times\\ & (\zvReg \cup \vzReg).\\ &\iform(o,r)=\IdxTwoBitET \}\\\\
\OReg^2_\IdxOneBitET =& \{(o,r)\in \OO\times\\ &(\zvReg \cup \vzReg \cup \zzReg).\\
& \iform(o,r)=\IdxOneBitET \} \\
\end{array}$\end{center}  &
Elementary systems in $\zvReg \cup \vzReg \cup \zzReg$ form  relational systems that carry information of the form $\IdxTwoBitET$ or $\IdxOneBitET$.
\begin{center}$\begin{array}{l}\forall o \in \OO, \forall r \in  \zvReg \cup \vzReg:\\ \qquad\iform(o,r) \in \{\IdxTwoBitET,\IdxOneBitET\}\\\\
\forall o \in \OO, \forall r \in  \zzReg:\\ \qquad\iform(o,r) = \IdxOneBitET \end{array} $\end{center} &
For any system in $\OReg^2_\IdxTwoBitET$, exactly one simple interaction and exactly one correlational interaction must be determinate.
\begin{center}$ \begin{array}{l}\forall (o,r) \in \RegTwoBitET: (\exists! i \in \IdxSet: \sharpDet(\sharp_o^i r))\\
\qquad \land (\exists! k \in \IdxSet: \sharpDet(\sharp_o^{\times k} r)) \end{array}$\end{center}

For any system in $\OReg^2_\IdxOneBitET$,  all simple interactions must be indeterminate, and exactly one correlational interaction must be determinate.
\begin{center}$ \begin{array}{l} \forall (o,r) \in \RegOneBitET: (\forall i \in \IdxSet: \neg\sharpDet(\sharp_o^i r))\\ \qquad \land (\exists! k \in \IdxSet: \sharpDet(\sharp_o^{\times k} r))\end{array}  $\end{center}\\
\hline
\end{tabular}
\vspace{1pc}
\caption{\label{tab:ClassesDefAxThET}Integrated View of Axiomatic System Classifications for Entangled Systems, Postulated Capacities, and Entailed Properties. (The entailed properties are read off the interaction patterns of Tab.~\ref{tab:InteractionPattern} (top), which are postulated with the information forms.)}
\end{minipage}
\end{sideways}
\end{table}

\begin{example}[Classifying the elementary Systems of the Running Example]\normalfont
\label{ex:ClassifyingSystems}
The abstract classes of elementary systems defined in this section can be made concrete by applying them to the scenarios from the running example (Example~\ref{ex:RunningExample}).

\textit{The Soil Chemist's System ($\in \RegOneBitPR$):} After the Soil Chemist $O_1$ determines that the regions overlap, the elementary system $(\AR, \FH)$ is in a state of orthogonal vagueness. The ordered pair $(O_1,(\AR,\FH))$ belongs to the set $\RegOneBitPR$ because exactly one elementary question yields a determinate answer. Formally, for $i=\emptyset$, the interaction is determinate ($\sharpDet(\llbracket\sharp^i_{O_1,r}\rrbracket)$), while for the other two questions ($j=1, k=2$), the interactions are indeterminate ($\neg\sharpDet{\llbracket\sharp^j_{O_1,r}\rrbracket})$ and $\neg\sharpDet(\llbracket\sharp^k_{O_1,r}\rrbracket)$).

\textit{The Remote Sensing Specialist's System ($\in \RegOneBitET$):} After the specialist $O_2$ determines that the parthood relations are the same, the system $(\AR, \FH)$ is in a state of non-orthogonal vagueness. The ordered pair $(O_2,(\AR,\FH))$ belongs to the set $\RegOneBitET$. In this case, no single elementary question can yield a determinate answer ($\neg\sharpDet\llbracket\sharp^i_{O_2,r}\rrbracket)$ for all $i \in \{\emptyset, 1, 2\}$). However, one bit of information is obtained from a non-cumulative, correlational question (e.g., $Q_{\times k}?$), meaning its corresponding interaction map is determinate ($\sharpDet(\llbracket \sharp^{\times k}_{O_2,r}\rrbracket)$ behaves as described in Tab.~\ref{tab:ClassesDefAxThET}).

\textit{A Crisp System ($\in \ccReg$):} For comparison, an elementary system composed of two crisply-defined land parcels observed by $O$ would belong to the set $\RegThreeBit$. For such a system, all three elementary questions have determinate answers, meaning $\sharpDet(\llbracket\sharp^i_{O,r}\rrbracket)$ for all $i \in \{\emptyset, 1, 2\}$.
\qed
\end{example}

\begin{table}[h!]
$$
\begin{array}{c|c|c|c|c|c|c}
& \multicolumn{6}{c}{\text{Interaction patterns}}\\
\sharpDet(\llbracket \sharp^\alpha_{o,r}\rrbracket)  & \IdxOneBitPR & \IdxTwoBitPR & \IdxThreeBitPR & \IdxOneBitET & \IdxTwoBitET & \IdxThreeBitET\\\hline
\sharpDet(\llbracket \sharp^i_{o,r}\rrbracket) & \T & \T & \T & \F & \T & \F \\
\sharpDet(\llbracket \sharp^j_{o,r}\rrbracket)  &  \F & \T & \T & \F & \F & \F \\
\sharpDet(\llbracket \sharp^k_{o,r}\rrbracket)  &  \F & \F & \T & \F & \F & \F \\
\sharpDet(\llbracket \sharp^{\times i}_{o,r}\rrbracket)  &  - & - & - & \T & \T & \T \\
\sharpDet(\llbracket \sharp^{\times j}_{o,r}\rrbracket)   &  - & - & -& - & - & \T \\
 \txt{$\{\llbracket \sharp^{\times i}_{o,r}\rrbracket, \llbracket \sharp^{\times j}_{o,r}\rrbracket\}$\\$ \in \DIA_k$}&  - & - & -& - & - & \T \\
\hline
& \multicolumn{6}{c}{\text{Type of (quantization map $\circ$ interaction map) (i.e. $\overline{\lfloor \cdot\rfloor}\circ(\overline{\rccfive}\circ\overline{\sharp})$)}}\\
r_o \in \OReg^2_\nu & \OReg^2_\IdxOneBitPR & \OReg^2_\IdxTwoBitPR & \OReg^2_\IdxThreeBitPR & \OReg^2_\IdxOneBitET & \OReg^2_\IdxTwoBitET & \OReg^2_\IdxThreeBitET\\
\txt{\rotatebox[origin=c]{-90}{$\to$}} & \txt{\rotatebox[origin=c]{-90}{$\to$}} & \txt{\rotatebox[origin=c]{-90}{$\to$}} & \txt{\rotatebox[origin=c]{-90}{$\to$}} & \txt{\rotatebox[origin=c]{-90}{$\to$}} & \txt{\rotatebox[origin=c]{-90}{$\to$}} & \txt{\rotatebox[origin=c]{-90}{$\to$}} \\
\txt{$\overline{\rccfive}\circ\overline{\sharp}(r_o)\in$\\$((\ORccFive\times\TF)_\alpha)_{\alpha\in\IdxSetNu}$} & \multicolumn{3}{c|}{((\ORccFive\times\TF)_\alpha)_{\alpha\in\IdxSet}} & \multicolumn{2}{c|}{((\ORccFive\times\TF)_\alpha)_{\alpha\in\IdxSet\cup\{\times i\}}} & \txt{$((\ORccFive\times\TF)_\alpha)$\\$\scriptstyle{\alpha\in\IdxSet\cup\{\times i,\times j\}}$}\\
\txt{\rotatebox[origin=c]{-90}{$\to$}} & \txt{\rotatebox[origin=c]{-90}{$\to$}} & \txt{\rotatebox[origin=c]{-90}{$\to$}} & \txt{\rotatebox[origin=c]{-90}{$\to$}} & \txt{\rotatebox[origin=c]{-90}{$\to$}} & \txt{\rotatebox[origin=c]{-90}{$\to$}} & \txt{\rotatebox[origin=c]{-90}{$\to$}} \\
\txt{$\overline{\lfloor \cdot\rfloor}\circ\overline{\rccfive}\circ\overline{\sharp}(r_o)$\\$ \in \ORccFive_\nu$} & \ORccFive_\IdxOneBitPR & \ORccFive_\IdxTwoBitPR & \ORccFive_\IdxThreeBitPR & \ORccFive_\IdxOneBitET & \ORccFive_\IdxTwoBitET & \ORccFive_\IdxThreeBitET\\\cline{2-7}
& \multicolumn{6}{c}{\overline{\lfloor \cdot\rfloor}\circ(\overline{\rccfive}\circ\overline{\sharp})\in\lfloor\QQ_{\nu}?  =  \A^{\nu}\rfloor =  \OReg^2_\nu\to\ORccFive_\nu\quad \text{(cf. Tabs. \ref{tab:enumerationOfmapsFromQaPattren}, \ref{tab:Quantization})}}\\
\end{array}
$$
\caption{\label{tab:InteractionPattern} Interaction pattern (top) and their encoding in type  declarations (bottom). The interaction patterns are postulated with the information forms (second column of Tabs.~\ref{tab:ClassesDefAxThPR}~and~\ref{tab:ClassesDefAxThET}). The entailed properties in the third column of those tables are read off them.
}
\end{table}

\subsection{Map types revisited}

This concludes the discussion of how the map $\sharp^\alpha$   induces families/types of maps of the form $\overline{\sharp}:\OReg^2_\nu \to ((\Reg^2)_\alpha)_{\alpha\in\IdxSetNu}$ with $\nu \in \IdxSetX$ (\iforms) and $\nu$-dependent index sets $\IdxSetNu$ of the form
\begin{equation}
\label{eq:IdxSetNu}
\begin{array}{c||c|c|c|c|c|c}
\nu & \IdxOneBitPR & \IdxTwoBitPR & \IdxThreeBitPR & \IdxOneBitET & \IdxTwoBitET & \IdxThreeBitET\\\hline
\IdxSetNu & \IdxSet & \IdxSet & \IdxSet & \IdxSet\cup\{\times i\} &  \IdxSet\cup\{\times i\} & \IdxSet\cup\{\times i, \times j\}
\end{array}\quad \text{with}\ i\neq j \in \IdxSet
\end{equation}
\noindent corresponding to the bottom row of Tab. \ref{tab:sharpInducesIstate}.
In the joint context of the axioms in Tabs.~\ref{tab:sharpInducesIstate},~\ref{tab:ClassesDefAxThPR} and~\ref{tab:ClassesDefAxThET} then maps of the form $(\overline{\rccfive}\circ\overline{\sharp}):\OReg^2_\nu \to ((\ORccFive\times\TF)_\alpha)_{\alpha\in\IdxSetNu}$  arise as enumerated in the bottom of Tab. \ref{tab:InteractionPattern}. Each map is characterized by one of the \textit{interaction patterns} that are enumerated in the top of Tab. \ref{tab:InteractionPattern}. The top part of the table shows that interaction patterns arise jointly from the constraints on the interaction maps (Tab. \ref{tab:sharpInducesIstate}) and the interrelations between determinism and information that were captured in the truth and falsehood of expressions of the form $\sharpDet(\llbracket \sharp^{\alpha}_{o,r}\rrbracket)$ for the indexes $\alpha\in\IdxSetNu$ (Eq. \ref{eq:IdxSetNu}) in the definitions of information-theoretic kinds of elementary systems.

Maps of the form $(\overline{\rccfive}\circ\overline{\sharp})$ are parts of maps of the form $\overline{\lfloor \cdot\rfloor}\circ(\overline{\rccfive}\circ\overline{\sharp})\equiv \lfloor\Q_{\i\j\k}?=\A^{\alpha\beta\gamma}\rfloor\in\lfloor\Q_\nu?=\A^\nu\rfloor=\OReg^2_\nu\to\ORccFive_\nu$ (cf. Tab. \ref{tab:enumerationOfmapsFromQaPattren}). These maps constitute the left part of the equation that expresses the QIS-principle of Eq. \ref{eq:MeaningOfMereoQuantification} which was visualized using the diagram in Fig. \ref{fig:MereologyVSinformation} (right). The aim in what follows is to establish  a one-one correspondence between information-theoretic kinds of elementary systems ($\OReg^2_\nu$), the interaction patterns that are enumerated in the top of Tab. \ref{tab:InteractionPattern} and types of maps of the form $(\OReg^2_\nu \to ((\ORccFive\times\TF)_\alpha)_{\alpha\in\IdxSetNu})\to \ORccFive_\nu$  that are enumerated in the bottom of Tab. \ref{tab:InteractionPattern}. The maps arise as the composition of interaction maps of the form $(\overline{\rccfive}\circ\overline{\sharp})$ that were discussed in this section and the quantization maps signified by $\overline{\lfloor\cdot\rfloor}:((\ORccFive\times\TF)_\alpha)_{\alpha\in\IdxSetNu}\to \ORccFive_\nu$ that are discussed in the next section.

\section{The quantization map}
\label{sec:QuantizationMap}

For relational elementary systems with pairs of regions in $\OReg^2\supset \cReg^2$ the quantization map $\overline{\lfloor\cdot\rfloor}$ takes configurations of classical bits in $((\ORccFive\times\TF)_i)_{i\in\IdxSetNu}$ to qubits that are represented as cells in the set $\ORccFive_\nu\subset\bigcup \Pi(\ORccFive)$ that arises from the \ThreeBit partition lattice $(\Pi(\ORccFive),\sqcap,\sqcup)$. This indexed family of classical bit patterns and their determinacy flags is the direct output of the preceding $(\overline{\rccfive}\circ\overline{\sharp})$ stage of the indirect quantization path. As illustrated on the right of Fig. \ref{fig:MereologyVSinformation} the quantization map $\overline{\lfloor\cdot\rfloor}$ has a compositional structure that is signified as $(\boxplus\circ\boxtimes): ((\ORccFive \times \TF)_i)_{i\in\IdxSetNu} \to \ORccFive_\nu$. To analyze this map, three cases are distinguished: quantization processes that result in representations of non-entangled information states (which include the classical \istates), quantization processes that result in representations of partially entangled \istates, and quantization processes that result in representations of maximally entangled \istates.

The sub-map $\boxplus$  completes the quantization process. Its function is uniform across all cases: it takes the family of non-trivial 1-bit partition cells (of the form $\PL^\a_\alpha$ with $\alpha\in\{\times\i, \i\}$) and trivial 0-bit partition cells (\ORccFive) (i.e. the output of the preceding $\boxtimes$ map) and combines them using set intersection ($\cap$). This projection merges the individual pieces of information, whether determinate or correlational or indeterminate, into a single, final partition cell that represents the system's complete \istate. The operator patterns in Table~\ref{tab:VectorsInIndexNotation} provide a formal verification of this construction; they demonstrate that applying the universal principle of intersection to the intermediate states correctly yields the final \istates\ as defined, thus ensuring the QIS principle's consistency. The following subsections detail how this principle applies to construct non-entangled and entangled states from their respective intermediate components.

\subsection{Non-entangled quantum \istates}

The notation $\boxtimes$ is used to signify maps of the form
\begin{equation}
\label{eq:IndexToPlanePR}\arraycolsep=1pt
\begin{array}{rcl}
\boxtimes : ((\ORccFive\times\TF)_i)_{i\in\IdxSet} & \to & ((\ORccFive_{\IdxOneBitPR}\cup\{\ORccFive\})_i)_{i\in\IdxSet} \\
    (\llbracket\sharp^i_{o,r}\rrbracket,\sharpDet(\llbracket\sharp^i_{o,r}\rrbracket))_i\in \ORccFive\times\TF & \mapsto & \left\{ \begin{array}{cl} \PL_\i^\F \cup \PL_\i^\T & \text{if}\ \neg\sharpDet(\llbracket\sharp^i_{o,r}\rrbracket) \\
   \PL_\i^\F & \text{if}\ \sharpDet(\llbracket\sharp^i_{o,r}\rrbracket)\ \text{and}\ \llbracket\sharp^i_{o,r}\rrbracket \in \PL_\i^\F \\
   \PL_\i^\T & \text{if}\ \sharpDet(\llbracket\sharp^i_{o,r}\rrbracket)\ \text{and}\ \llbracket\sharp^i_{o,r}\rrbracket \in \PL_\i^\T\\
    \end{array} \right. \\
\end{array}
\end{equation}
That is, the sub-map $\boxtimes$ of the quantization map $(\boxplus\circ\boxtimes)$ takes indexed sets from $((\ORccFive\times\TF)_i)_{i\in\IdxSet}$ to cells of the \ThreeBit partition lattice in $\ORccFive_\IdxOneBitPR\cup\{\ORccFive\}$. One can verify that if $\sharpDet(\llbracket\sharp^i_{o,r}\rrbracket)$, then the cells $((\boxtimes\circ\rccfive_i)\, (o,r))\in \ORccFive_\IdxOneBitPR$ are equal to the truth sets of formulas of the form $|\oQ_i(r_o)|$ as specified in Tabs.~\ref{tab:OverviewSupportSemantics}(4),\ref{tab:OverviewSupportSemanticsTwo}(4). In virtue of being cells in the \ThreeBit partition lattice, members of $\ORccFive_{\IdxOneBitPR}$ carry one bit of information. Similarly, if $\neg\sharpDet(\llbracket\sharp^i_{o,r}\rrbracket)$, then the partition cell $((\boxtimes\circ\rccfive_i)\, (o,r))=\ORccFive$ carries zero bits of information. See Tab. \ref{tab:Quantization}.

\begin{example}[Quantizing a 1-Bit Product State]\normalfont
\label{ex:QuantizingOneBitPR}
To illustrate the quantization process, consider a relational elementary system $(o,r) \in \OReg^2_\IdxOneBitPR$. As per the theorems in Table~\ref{tab:ClassesDefAxThPR}, an interaction with such a system yields one determinate and two indeterminate outcomes. Assume the classical results and determinacy flags are $(\llbracket\sharp^i_{o,r}\rrbracket, \T)$, $(\llbracket\sharp^j_{o,r}\rrbracket, \F)$, and $(\llbracket\sharp^k_{o,r}\rrbracket, \F)$, where $\llbracket\sharp^i_{o,r}\rrbracket \in \PL_\i^\a$.
The quantization map $\overline{\lfloor\cdot\rfloor} = (\boxplus\circ\boxtimes)$ processes this input as follows:
($\boxtimes$) The map is applied to each pair individually. The determinate result is mapped to its corresponding partition cell $\PL_\i^\a$ (a face), while the indeterminate results are mapped to the  trivial partition cell (\ORccFive): $ (\PL_\i^\a, \ORccFive, \ORccFive) $.
    ($\boxplus$) The map then applies the universal principle of intersection to this family of partial states:
    $\boxplus(\PL_\i^\a, \ORccFive, \ORccFive) = \PL_\i^\a \cap \ORccFive \cap \ORccFive = \PL_{\,\i\,\j\,\k}^{\a--}$.
The final quantized \istate\ is $\PL_{\,\i\,\j\,\k}^{\a--}$. Geometrically, this process corresponds to selecting a single face of the information cube, which correctly represents the one bit of information gained from the determinate interaction. \qed
\end{example}

\subsection{Entangled \istates}

\paragraph{Partially entangled \istates}

The map $(\overline{\rccfive}\circ\overline{\sharp})(o,r)$ with $(o,r)\in \OReg^2_\nu$ and $\nu\in\{\IdxOneBitET,\IdxTwoBitET\}$  is partially enumerated in the left column of Tab. \ref{tab:enumBoxPlusX}. In this table, cells with two entries of the form `$\{(\bot_\ORccFive,\F), (\top_\ORccFive,\F)\}$' signify that either $(\llbracket\sharp^i_{o,r}\rrbracket,\sharpDet(\llbracket\sharp_{o,r}^i\rrbracket))_i=(\bot_\ORccFive,\F)$ or $(\llbracket\sharp^i_{o,r}\rrbracket,\sharpDet(\llbracket\sharp_{o,r}^i\rrbracket))_i=(\top_\ORccFive,\F)$ and that neither possibility can be excluded given the information that is possible in the context of the current row and the information capacity of the relational elementary system $(o,r)\in\OReg^2_\nu$. Similarly, cells with four entries of the form `$\{(\bot_\ORccFive,\T), (\bot_i,\T),(\top_\ORccFive,\T), (\top_i,\T)\}$' signify that $(\llbracket\sharp^{\times i}_{o,r}\rrbracket,\sharpDet(\llbracket\sharp_{o,r}^{\times i}))_{\times i}$ takes the value of one of the four set elements and that (due to entanglement)  none of the four possibilities can be excluded given the information that is possible in the context of the current row and the information capacity of the relational elementary system $(o,r)\in\OReg^2_\nu$.

The results of the remaining sub-map $(\boxplus\circ\boxtimes): ((\ORccFive \times \TF)_\alpha)_{\alpha\in\{i,\times i\}} \to \ORccFive_\nu$ of the quantization map $\overline{\lfloor\cdot\rfloor}$  is partially enumerated in the right most column of Tab. \ref{tab:enumBoxPlusX}. The map $\boxtimes$ is defined as:
\begin{equation}\arraycolsep=1pt
\label{eq:IndexToPlaneX}
\begin{array}{rcl}
\boxtimes  :  ((\ORccFive \times \TF)_\alpha)_{\alpha\in\{i,\times i\}} & \to &  ((\ORccFive_{\IdxOneBit}\cup\{\ORccFive\})_\alpha)_{\alpha\in\{\times i,i\}}\\
    (\llbracket\sharp^{\times i}_{o,r}\rrbracket,\sharpDet(\llbracket\sharp^{\times i}_{o,r}\rrbracket))_{\times i}\in (\ORccFive \times\{\T\}) & \mapsto & \left\{ \begin{array}{cl}
   \PL_{\times \i}^\T & \text{if}\ \sharpDet(\llbracket\sharp^{\times i}_{o,r}\rrbracket)\
   \text{and}\\
   & \llbracket\sharp^{\times i}_{o,r}\rrbracket \in  \{\bot_i^{\downarrow}, \top_i^{\uparrow}, \bot_i^{\uparrow}, \top_i^{\downarrow} \}\\
     \PL_{\times \i}^\F & \text{if}\  \sharpDet(\llbracket\sharp^{\times i}_{o,r}\rrbracket)\ \text{and}\\
     & \llbracket\sharp^{\times i}_{o,r}\rrbracket \in  \{\bot_\ORccFive, \bot_i, \top_\ORccFive, \top_i \} \\
    \end{array} \right.  \\
   (\llbracket\sharp^i_{o,r}\rrbracket,\sharpDet(\llbracket\sharp^i_{o,r}\rrbracket))_i\in \{\bot_\ORccFive,\top_\ORccFive\}\times\TF & \mapsto & \left\{ \begin{array}{cl}
   \PL_\i^\F & \text{if}\ \sharpDet(\llbracket\sharp^i_{o,r}\rrbracket)\ \text{and}\ \llbracket\sharp^i_{o,r}\rrbracket \in  \PL_\i^\F \\
   \PL_\i^\T & \text{if}\  \sharpDet(\llbracket\sharp^i_{o,r}\rrbracket)\ \text{and}\ \llbracket\sharp^i_{o,r}\rrbracket \in  \PL_\i^\T\\
    \PL_\i^\F \cup \PL_\i^\T  & \text{if}\  \neg\sharpDet(\llbracket\sharp^i_{o,r}\rrbracket)  \\
    \end{array} \right.\\
\end{array}
\end{equation}
\noindent The sets  $\PL_{\times \i}^\F=\PL_\j^\b\vartriangle\PL_\k^{\sim\b}=\PL_{\j\k}^{\overline{\b\b}}= \{\bot_\ORccFive, \bot_i, \top_\ORccFive, \top_i \}$ and $\PL_{\times \i}^\T=\PL_\j^\b\vartriangle\PL_\k^{\b}=\PL_{\j\k}^{\overline{\b\,\sim\b}}=\{\bot_i^{\downarrow}, \top_i^{\uparrow}, \bot_i^{\uparrow}, \top_i^{\downarrow} \}$ with $\b\in\TF$ pick out the (pairwise orthogonal) interior planes $\PL_{\times \i}^\T$ and $\PL_{\times \i}^\F$ of the \ThreeBit cube as specified in Fig. \ref{fig:PairsOfPlanes} (bottom).
These sub-spaces are also partition cells in $\ORccFive_{\IdxOneBitET}$ that carry one bit of information in virtue of being cells in the \ThreeBit partition lattice.

The sub-map $\boxplus$ then projects the cell  $\PL_{\times \i}^{\a}$  onto $\PL_\i^\b$ or $\PL_\i^\a\cup\PL_\i^{\sim\a}$ using the intersection operator $\cap$. That is, the sub-map $\boxplus$ takes indexed sets of partition cells in $((\ORccFive_{\IdxOneBit}\cup\{\ORccFive\})_\alpha)_{\alpha\in\{\times i,i\}}$ to partition cells in $\ORccFive_\nu$ with $\nu\in\{\IdxOneBitET,\IdxTwoBitET\}$. The various patterns in the second right most column of Tab. \ref{tab:enumBoxPlusX} correspond to the ways in which $\boxplus$ corresponds to the definitions of the index notation in Tab \ref{tab:VectorsInIndexNotation}.

\begin{example}[Quantizing a 2-Bit Partially Entangled State]\normalfont
\label{ex:QuantizingTwoBitET}
Consider a relational elementary system $(o,r) \in \OReg^2_\IdxTwoBitET$. According to the theorem in Table~\ref{tab:ClassesDefAxThET}, an interaction with such a system yields exactly one determinate simple outcome (say, for index $\k$) and one determinate correlational outcome (say, for index $\times \k$). The quantization map processes this as follows:
($\boxtimes$) The map is applied to the results from the $(\overline{\rccfive}\circ\overline{\sharp})$ stage. The determinate simple interaction yields a  1-bit partition cell (a face, $\PL_k^\c$), while the determinate correlational interaction yields a  1-bit entangled partition cell (an interior plane, $\PL_{\ \ \times \k}^{\a\xor\b}$). All other indeterminate interactions map to the trivial 0-bit partition cell. The resulting family of intermediate partial states is:
    $ (\PL_\k^\c, \PL_{\ \ \times \k}^{\a\xor\b}, \ORccFive, \dots) $.
    ($\boxplus$) The map then applies the universal principle of intersection to the  intermediate states generated by the $\boxtimes$ map:
   $\boxplus(\PL_\k^\c, \PL_{\ \ \times \k}^{\a\xor\b},\ORccFive, \dots) = \PL_\k^\c \cap \PL_{\ \ \times \k}^{\a\xor\b} \cap \ORccFive \cap \dots = \PL_{\,\i\,\j\,\k}^{\overline{\a\b}\c}$.
The final quantized \istate\ is $\PL_{\,\i\,\j\,\k}^{\overline{\a\b}\c}$, a 2-bit state that correctly combines one bit of determinate (classical) information with one bit of correlational (entangled) information. \qed
\end{example}

\paragraph{Maximally entangled \istates}

For elementary systems in $\OReg^2_\IdxThreeBitET$ the sub-map  $\boxtimes$ of the quantization map $\overline{\lfloor\cdot\rfloor}=(\boxplus\circ\boxtimes)$ is defined as
\begin{equation}
\label{eq:IndexToPlaneXX}
\begin{array}{rcl}
\boxtimes  :  ((\ORccFive \times \TF)_\alpha)_{\alpha\in\{\times i,\times j\}} & \to &  ((\ORccFive_{\IdxOneBitET})_\alpha)_{\alpha\in\{\times i,\times j\}}\\
    (\llbracket\sharp^{\alpha}_{o,r}\rrbracket,\sharpDet(\llbracket\sharp^{\alpha}_{o,r}\rrbracket))_{\alpha}\in (\ORccFive \times\{\T\}) & \mapsto & \left\{ \begin{array}{cl}
   \PL_{\times\k}^\T & \text{if}\ \sharpDet(\llbracket\sharp^{\times k}_{o,r}\rrbracket)\
   \text{and}\ \llbracket\sharp^{\times k}_{o,r}\rrbracket \in  \{\bot_k^{\downarrow}, \top_k^{\uparrow}, \bot_k^{\uparrow}, \top_k^{\downarrow} \}\\
     \PL_{\times\k}^\F & \text{if}\  \sharpDet(\llbracket\sharp^{\times k}_{o,r}\rrbracket)\ \text{and}\ \llbracket\sharp^{\times k}_{o,r}\rrbracket \in  \{\bot_\ORccFive, \bot_k, \top_\ORccFive, \top_k \} \\
    \end{array} \right.  \\
    && \qquad \text{if}\  \alpha = \times k\ \text{with}\ k\in\{i,j\}\\
\end{array}
\end{equation}
\noindent  The  sub-map $\boxplus$ of $(\boxplus\circ\boxtimes)$ takes  the partition cells of the form  $\PL_{\times\k}^{\a\xor\b}=(\PL_\i^\a\vartriangle\PL_\j^{\sim\b})$ and $\PL_{\times\i}^{\b\xor \c}=(\PL_\j^{\b}\vartriangle\PL_\k^{\sim\c})$ in $\ORccFive_\IdxOneBitET$ to the partition cell $(\PL_\i^\a\vartriangle\PL_\j^{\sim\b})\cap (\PL_\j^{\b}\vartriangle\PL_\k^{\sim\c})$ in the set $\ORccFive_\IdxThreeBitET$ in ways that correspond to the definitions of the index notation in Tab. \ref{tab:VectorsInIndexNotation}.

\begin{example}[Quantizing a 2-Bit Maximally Entangled State]\normalfont
\label{ex:QuantizingThreeBitET}
Consider a system $(o,r) \in \OReg^2_\IdxThreeBitET$. Per the theorem in Table~\ref{tab:ClassesDefAxThET}, this system yields two determinate correlational interactions (say, for indices $\times k$ and $\times i$) and no determinate simple interactions. ($\boxtimes$) The map processes the two determinate correlational results, yielding two distinct 1-bit entangled partition cells (two interior planes, $\PL_{\ \ \times \k}^{\a\xor\b}$ and $\PL_{\ \ \times \i}^{\b\xor\c}$). All indeterminate simple interactions map to the trivial 0-bit partition cell. The intermediate family of states is:
    $ (\PL_{\ \ \times \k}^{\a\xor\b}, \PL_{\ \ \times \i}^{\b\xor\c}, \ORccFive, \dots) $.
    ($\boxplus$) The map intersects these cells, applying the universal principle of intersection:
    $\boxplus(\PL_{\ \ \times \k}^{\a\xor\b}, \PL_{\ \ \times \i}^{\b\xor\c},\ORccFive, \dots) = \PL_{\ \ \times \k}^{\a\xor\b} \cap \PL_{\ \ \times \i}^{\b\xor\c} \cap \ORccFive \cap \dots = \PL_{\,\i\,\hat{\j}\,\k}^{\overline{\a\b\c}}$.
The final result is the 2-bit maximally entangled \istate\ $\PL_{\,\i\,\hat{\j}\,\k}^{\overline{\a\b\c}}$. Geometrically, this corresponds to the intersection of two interior planes, resulting in a diagonal line passing through the cube's interior (e.g., the set $\{\DR, \EQ\}$). \qed
\end{example}

\begin{table}
$$ \arraycolsep=1pt
\begin{array}{c|c|c|c|c|c|c}
& \multicolumn{6}{c}{r_o \in \OReg^2_\nu}\\
\text{maps} & \OReg_{\IdxOneBitPR}^2 & \OReg_{\IdxTwoBitPR}^2 & \OReg_{\IdxThreeBitPR}^2 & \OReg_{\IdxOneBitET}^2 & \OReg_{\IdxTwoBitET}^2 & \OReg_{\IdxThreeBitET}^2 \\\hline
&&&&\multicolumn{2}{c|}{}\\
(\boxtimes(\llbracket\sharp_{o,r}^i\rrbracket))_i = & \txt{$\PL_\i^\a=$\\$|\oQ_i(r_o)|$} & \txt{$\PL_\i^\a=$\\$|\oQ_i(r_o)|$}  & \txt{$\PL_\i^\a=$\\$|\oQ_i(r_o)|$}  & \multicolumn{2}{c|}{\txt{$\ORccFive =$\\$|\neg Q_i(r_o)|\cup|Q_i(r_o)|$}} &  \ORccFive\\
(\boxtimes(\llbracket\sharp_{o,r}^j\rrbracket))_j =& \ORccFive & \txt{$\PL_\j^\b=$\\$|\oQ_j(r_o)|$} & \txt{$\PL_\j^\b=$\\$|\oQ_j(r_o)|$}  & \multicolumn{2}{c|}{\txt{$\ORccFive =$\\$|\neg Q_j(r_o)|\cup|Q_j(r_o)|$}} & \ORccFive \\
(\boxtimes(\llbracket\sharp_{o,r}^k\rrbracket))_k = & \ORccFive & \ORccFive & \txt{$\PL_\k^\c=$\\$|\oQ_k(r_o)|$} & \ORccFive & \txt{$\PL_\k^\c=$\\$|\oQ_k(r_o)|$} & \ORccFive \\

(\boxtimes(\llbracket\sharp_{o,r}^{\times k}\rrbracket))_{\times k} \in & \multicolumn{3}{c|}{-} & \multicolumn{3}{c}{\txt{$\{ \{\bot_k^{\downarrow}, \top_k^{\uparrow}, \bot_k^{\uparrow}, \top_k^{\downarrow} \}, \{\bot_\ORccFive, \bot_k, \top_\ORccFive, \top_k \}\}$\\$= \{\PL_{\times\k}^\T,\PL_{\times\k}^{\F}\} =\{\PL_\i^{\sim\b}\vartriangle\PL_\j^{\b},\PL_\i^\b\vartriangle\PL_\j^\b\}$}}  \\

(\boxtimes(\llbracket\sharp_{o,r}^{\times i}\rrbracket))_{\times i}\in & \multicolumn{3}{c|}{-} & \multicolumn{2}{c|}{-} & \txt{$\{ \{\bot_i^{\downarrow}, \top_i^{\uparrow}, \bot_i^{\uparrow}, \top_i^{\downarrow} \},$\\$\quad \{\bot_\ORccFive, \bot_i, \top_\ORccFive, \top_i \}\}$\\$= \{\PL_{\times\i}^\T,\PL_{\times\i}^{\F}\} =$\\$ \{\PL_\j^\b\vartriangle\PL_\k^{\sim\b},\PL_\j^\b\vartriangle\PL_\k^\b\}$}  \\
(\boxtimes(\llbracket\sharp_{o,r}^{\times j}\rrbracket))_{\times j}\in &
\multicolumn{3}{c|}{-} & \multicolumn{2}{c|}{-} & -\\\hline

\txt{$(\boxplus\circ\boxtimes)$\\$(\llbracket\sharp_{o,r}^\alpha\rrbracket )_{\alpha\in\IdxSetNu}=$} & \txt{$\quad\PL_\i^\a$\\$\cap\ \ORccFive$\\$\cap\ \ORccFive$} & \txt{$\quad\PL_\i^\a$\\$\cap\ \PL_\j^\b$\\$\cap\ \ \ORccFive$} & \txt{$\quad\PL_\i^\a$\\$\cap\ \PL_\j^\b$\\$\cap\  \PL_\k^\c$} &  \txt{$\qquad\PL_{\times\k}^\a$\\$\cap\ \ORccFive$} & \txt{$\quad\, \ \PL_{\times\k}^\a$\\$\cap\ \PL_\k^\c$} & \txt{$(\PL_\i^\a\vartriangle\PL_\j^{\b})$\\$\cap$\\$(\PL_\j^{\b}\vartriangle\PL_\k^\c)$} \\
& = & = & = & = & = & = \\
& \PL_{\i\ \j\ \k}^{\a--} & \PL_{\i\, \j\ \k}^{\a\b-} & \PL_{\i\, \j\, \k}^{\a\b\c} & \PL_{\i\, \j\ \k}^{\overline{\a\b}-} & \PL_{\i\, \j\, \k}^{\overline{\a\b}\c}  & \PL_{\i\, \j\, \k}^{\overline{\a\b\c}} \\
 \hline
\txt{$(\boxplus\circ\boxtimes\circ$\\$\overline{\rccfive}\circ\overline\sharp)$}=  &  \txt{$\lfloor\Q_{\i\j\k}?$\\=\\$\A^{\a--}\rfloor$}  & \txt{$\lfloor\Q_{\i\j\k}?$\\=\\$\A^{\a\b-}\rfloor$} & \txt{$\lfloor\Q_{\i\j\k}?$\\=\\$\A^{\a\b\c}\rfloor$} & \txt{$\lfloor\Q_{\overline{\i\j}\k}?$\\=\\$\A^{\overline{\a\b}-}\rfloor$} &  \txt{$\lfloor\Q_{\overline{\i\j}\k}?$\\=\\$\A^{\overline{\a\b}\c}\rfloor$} &  \lfloor\Q_{\overline{\i\j\k}}?=\A^{\overline{\a\b\c}}\rfloor  \\
& \multicolumn{6}{c}{=\ (\overline{\lfloor\cdot\rfloor}\circ \overline{\rccfive} \circ\ \overline{\sharp})\ \in \lfloor\Q_\nu?=\A^\nu\rfloor:\OReg^2_\nu \to \ORccFive_\nu\quad \text{(cf. Tabs. \ref{tab:enumerationOfmapsFromQaPattren}, \ref{tab:InteractionPattern})}}\\
\end{array}
$$
\caption{\label{tab:Quantization}The quantization map $(\boxplus\circ\boxtimes): ((\ORccFive \times \TF)_\alpha)_{\alpha\in\IdxSetNu} \to \ORccFive_\nu$ for relational elementary systems $(o,r)\in\OReg^2_\nu$ with $\nu\in\IdxSetX$ in the context of the combined interaction- and quantization maps of the form $(\boxplus\circ\boxtimes\circ\rccfive\circ\overline\sharp) \equiv \lfloor\Q_{\i\j\k}?=\A^{\alpha\beta\gamma}\rfloor\in \lfloor\Q_\nu?=\A^\nu\rfloor:\OReg^2_\nu \to \ORccFive_\nu$.
}
\end{table}

\subsection{The combined interaction- and quantization maps}

In the context discussion of the left side of the QIS principle of Eq. \ref{eq:MeaningOfMereoQuantification} and its diagram-based representation in Fig. \ref{fig:MereologyVSinformation}, constraints on the domain and co-domains of the interaction maps of the form $\sharp_o^\alpha:\OReg^2 \to \Reg^2$ with $\alpha \in \{\emptyset,1,2,\times \emptyset, \times1, \times2 \}$ (Tabs. \ref{tab:OverviewSupportSemantics}(2), \ref{tab:sharpInducesIstate} and \ref{tab:InteractionPattern}) in conjunction with quantization maps of the form $\overline{\lfloor\cdot\rfloor}=(\boxplus\circ\boxtimes)$ (Tab. \ref{tab:Quantization}) induce constraints on the domains and co-domains of maps of form $(\overline{\lfloor\cdot\rfloor}\circ\overline{\rccfive}\circ\overline{\sharp}):\OReg^2_\IdxOneToThreeBit \to \bigcup\Pi(\ORccFive)$ that can be stated as follows:
\begin{lemma}
\label{lemma:InteractionQuantizationType}
For all $\nu\in\IdxSetX\setminus\{\IdxZeroBit\}$, $o\in\OO$, and $r\in\OReg^2$ with $(o,r)\in\OReg_\nu^2$: $(o,r)\mapsto ((\boxplus\circ\boxtimes) (((\rccfive\circ\sharp_o^\alpha)(r),\sharpDet((\rccfive\circ\sharp_o^\alpha)(r))))_
{\alpha\in\IdxSetNu})\in \ORccFive_\nu$.
\end{lemma}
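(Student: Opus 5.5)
The plan is a case analysis over the six information forms $\nu\in\IdxSetX\setminus\{\IdxZeroBit\}$. In each case the determinacy pattern that the postulates of Tabs.~\ref{tab:ClassesDefAxThPR} and~\ref{tab:ClassesDefAxThET} attach to $\OReg^2_\nu$, read off Tab.~\ref{tab:InteractionPattern} (top), is fed through the maps $\boxtimes$ and $\boxplus$. Fix $(o,r)\in\OReg^2_\nu$. The first step is to record, for each $\alpha\in\IdxSetNu$ (Eq.~\ref{eq:IdxSetNu}), which of the values $\sharpDet(\llbracket\sharp^\alpha_{o,r}\rrbracket)$ is $\T$ and which is $\F$. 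This is exactly the content of the theorem column of those tables. Up to the permutation invariance of Lemma~\ref{lemma:PermutationUpperAndLower} and Eq.~\ref{eq:PermutationUpperOrLower}, we may then rename indices so that the determinate simple interactions are $i$, then $j$, then $k$, in the order in which they are filled. Likewise the determinate correlational interaction is $\times k$, or $\times k,\times i$ in the $\IdxThreeBitET$ case. This renaming does not change the type $\ORccFive_\nu$ of the resulting cell.

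The second step computes $\boxtimes$ componentwise. In the non-entangled cases, Eq.~\ref{eq:IndexToPlanePR} sends each determinate pair to the face $\PL_\i^\a$ with $\a=\rccfive_i(\sharp^i_o r)$. The map is well defined because $\ORccFive=\PL_\i^\T\uplus\PL_\i^\F$. Each indeterminate pair is sent to $\ORccFive$. In the partially entangled cases, Eq.~\ref{eq:IndexToPlaneX} applies. Here the constraint of Eq.~\ref{eq:InteractionConstraints} guarantees that $\llbracket\sharp^{\times k}_{o,r}\rrbracket$ lies in exactly one of $\PL_{\times\k}^\T$ and $\PL_{\times\k}^\F$, since these two sets partition $\ORccFive$. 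So $\boxtimes$ yields a unique interior plane. The simple components are handled by the second clause, using $\sharp_o^i r\in[\bot_\ORccFive]\cup[\top_\ORccFive]$ from Tab.~\ref{tab:OverviewSupportSemantics}(2). In the $\IdxThreeBitET$ case, Eq.~\ref{eq:IndexToPlaneXX} yields two interior planes $\PL_{\times\k}^{\cdot}$ and $\PL_{\times\i}^{\cdot}$.

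The third step applies $\boxplus$, which is intersection, and identifies the result with a row of Tab.~\ref{tab:VectorsInIndexNotation}. The identifications are as follows: $\PL_\i^\a\cap\ORccFive\cap\ORccFive=\PL_{\i\j\k}^{\a--}$, $\PL_\i^\a\cap\PL_\j^\b\cap\ORccFive=\PL_{\i\j\k}^{\a\b-}$, and $\PL_\i^\a\cap\PL_\j^\b\cap\PL_\k^\c=\PL_{\i\j\k}^{\a\b\c}$. For the entangled forms, the interior plane has to be rewritten as a symmetric difference, $\PL_{\times\k}^{\a\xor\b}=\PL_\i^\a\vartriangle\PL_\j^{\sim\b}$. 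That gives $\PL_{\i\j\k}^{\overline{\a\b}-}$, and after intersecting with $\PL_\k^\c$ it gives $\PL_{\i\j\k}^{\overline{\a\b}\c}$. In every case the resulting cell is one of the index-notation cells defining $\ORccFive_\nu$, which is the claim.

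The main obstacle is the $\IdxThreeBitET$ case. There one must show that the two planes $\PL_{\times\k}^{\a\xor\b}$ and $\PL_{\times\i}^{\b\xor\c}$ share the middle index $\j$ in the right way, so that their intersection is $(\PL_\i^\a\vartriangle\PL_\j^{\sim\b})\cap(\PL_\j^\b\vartriangle\PL_\k^{\sim\c})=\PL_{\i\hat\j\k}^{\overline{\a\b\c}}$, a nonempty element of $\DIA_j$. It must in particular not be empty. To settle this, I would use the joint constraint $\{\llbracket\sharp^{\times i}_{o,r}\rrbracket,\llbracket\sharp^{\times j}_{o,r}\rrbracket\}\in\DIA_k$ of Eq.~\ref{eq:InteractionConstraintsXX}. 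It forces both outcomes onto one interior diagonal, and that diagonal is contained in both selected planes. I would then check by the four-element enumeration of $\DIA$ in Fig.~\ref{fig:isuccBot_iprecTop} (bottom) that each diagonal arises as such an intersection, relabeling indices as permitted by the hat convention of Lemma~\ref{lemma:PermutationUpperAndLower}.
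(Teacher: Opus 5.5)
Your overall route is the paper's own: a case-by-case push of the postulated determinacy pattern through $\boxtimes$ and then $\boxplus$, which the paper compresses into ``inspect the columns of Tabs.~\ref{tab:sharpInducesIstate}, \ref{tab:InteractionPattern} and \ref{tab:Quantization}''. But the $\IdxTwoBitET$ case has a gap at exactly the point where the type claim is decided. Your first sentence mentions Tab.~\ref{tab:InteractionPattern} (top), but you then identify the determinacy data with ``exactly the content of the theorem column''. For $\IdxTwoBitET$ that column only asserts a unique $i$ with $\sharpDet(\sharp_o^i r)$ and a unique $k$ with $\sharpDet(\sharp_o^{\times k} r)$; it says nothing about how $i$ and $k$ are related. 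Your renaming rule accordingly calls the determinate simple interaction $i$ and the correlational one $\times k$. With that labelling, $\boxplus$ returns $\PL_\i^\a\cap\PL_{\times\k}^{x}=\PL_\i^\a\cap\PL_\j^{\a\xor x}$. That is an edge of the cube, i.e.\ a cell of $\ORccFive_\IdxTwoBitPR$ and not of $\ORccFive_\IdxTwoBitET$; concretely, $\PL_\emptyset^\T\cap\PL_{\times 2}^\F=\{\PP,\EQ\}=\PL_{\emptyset12}^{\T\T-}$. In your third step you instead intersect with $\PL_\k^\c$. That tacitly assumes the determinate simple and correlational interactions share the index $k$, and nothing in your argument supplies this.

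The missing ingredient is the postulated pattern itself. In the $\IdxTwoBitET$ column of Tab.~\ref{tab:InteractionPattern} (top), $\sharpDet(\llbracket\sharp^i_{o,r}\rrbracket)$ and $\sharpDet(\llbracket\sharp^{\times i}_{o,r}\rrbracket)$ are both $\T$ for the same $i$. The same alignment appears in Eq.~\ref{eq:IdxSetNuX}, which pairs $a_i$ with $a_j\xor a_k$, and in Tab.~\ref{tab:Quantization}, which combines $\PL_\k^\c$ with $\PL_{\times\k}$. Take the data from that column and rename the shared index to $k$. Then $\PL_\k^\c\cap\PL_{\times\k}^{\a\xor\b}$ fixes $a_k$ and $a_i\xor a_j$, which is the face diagonal $\PL_{\i\j\k}^{\overline{\a\b}\c}$, as required.

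By contrast, the $\IdxThreeBitET$ case you flag as the main obstacle is benign. For $k\neq i$, any cell of $\{\PL_{\times\k}^{\a}\}$ and any cell of $\{\PL_{\times\i}^{\a}\}$ meet in exactly two complementary vertices. So their intersection is automatically a nonempty main diagonal. Your argument via the $\DIA$ constraint is correct, but it is not needed for non-emptiness.
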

\begin{proof} By enumeration of cases. For every $\nu\in\IdxSetX$ inspect the respective columns in Tabs. \ref{tab:sharpInducesIstate},  \ref{tab:InteractionPattern} (for $(\overline{\rccfive}\circ\overline{\sharp})$) and Tab. \ref{tab:Quantization} (for $(\boxplus\circ\boxtimes)$).
\end{proof}

Above, notations of the form $\lfloor\QQ_{\i\j\k}?=\A^{\alpha\beta\gamma}\rfloor$ were introduced as abbreviations for maps of the form $(\overline{\lfloor\cdot\rfloor}\circ\overline{\rccfive}\circ\overline{\sharp})$ where the former term emphasizes the relationships between patterns of upper and lower indexes of questions, answers and \istates\ and signatures/types of the underlying mappings and the second term emphasizes operational and compositional aspects of the maps themselves. This can be spelled out more precisely as follows:
\begin{definition}  \label{definition:InteractionQuantizationMap} For every lower  index pattern $\i\j\k$ with $\{i,j,k\}=\IdxSet$, every upper index pattern $\alpha\beta\gamma\in\Lambda_\nu$ and every relational elementary system $(o,r)\in\OReg^2_\nu$: if $\lfloor\QQ_{\i\j\k}?=\A^{\alpha\beta\gamma}\rfloor(o,r) \neq\{\}$, then the term $\lfloor\QQ_{\i\j\k}?=\A^{\alpha\beta\gamma}\rfloor(o,r)$ designates the partition cell in $\ORccFive_\nu$ that results from the quantization of the classical bits in the indexed sets $\llbracket\sharp^\omega_{o,r}\rrbracket$ and corresponding determinacy information $\sharpDet(\llbracket\sharp^\omega_{o,r}\rrbracket)$ with index $\omega\in\IdxSetNu$ that are correlated with the answer pattern $\alpha\beta\gamma\in\Lambda_\nu$ such that $\lfloor\QQ_{\i\j\k}?=\A^{\alpha\beta\gamma}\rfloor(o,r) = ((\boxplus\circ\boxtimes) (((\rccfive\circ\sharp^\omega)(o,r),\sharpDet((\rccfive\circ\sharp^\omega)(o,r))))_
{\omega\in\IdxSetNu})$ and the following correspondences obtain between the  pattern of upper and lower indexes in $\lfloor\QQ_{\i\j\k}?=\A^{\alpha\beta\gamma}\rfloor$ and indexed sets of truth values of the form $\llbracket\sharp^\omega_{o,r}\rrbracket=(\rccfive\circ\sharp_o^\omega)(r)$ and $\sharpDet(\llbracket\sharp^\omega_{o,r}\rrbracket)$:
\begin{equation}\normalfont
\label{eq:IdxSetNuX}\arraycolsep=2pt
\begin{array}{c||ccc|ccc|ccc|cc|cc|cc}
\nu & \multicolumn{3}{c|}{\IdxOneBitPR} & \multicolumn{3}{c|}{\IdxTwoBitPR} & \multicolumn{3}{c|}{\IdxThreeBitPR} & \multicolumn{2}{c|}{\IdxOneBitET} & \multicolumn{2}{c|}{\IdxTwoBitET} & \multicolumn{2}{c}{\IdxThreeBitET}\\\hline
\alpha\beta\gamma\in\Lambda_\nu & \a&-&- & \a & \b & - & \a & \b & \c & - & \overline{\b\c} & \a & \overline{\b\c} & \multicolumn{2}{c}{\overline{\a\b\c}} \\\txt{permutation\\of $\IdxSet$} & \i & \j & \k & \i & \j & \k & \i & \j & \k & \i & \j\k & \i & \j\k & \multicolumn{2}{c}{\i\hat{\j}\k} \\
\hline
\llbracket\sharp^\omega_{o,r}\rrbracket & a_i & a_j &a_k & a_i& a_j & a_k & a_i & a_j & a_k & a_i & \txt{$a_j$\\\textsf{xor}\\$a_k$} & a_i & \txt{$a_j$\\\textsf{xor}\\$a_k$} & \txt{$a_i$\\\textsf{xor}\\$a_j$} & \txt{$a_j$\\\textsf{xor}\\$a_k$} \\\hdashline
\sharpDet(\llbracket\sharp^\omega_{o,r}\rrbracket) & \T & \F & \F & \T & \T & \F & \T & \T & \T & \F & \T & \T & \T & \T & \T\\\hline
\omega\in\IdxSetNu & \multicolumn{3}{c|}{\IdxSet} & \multicolumn{3}{c|}{\IdxSet} & \multicolumn{3}{c|}{\IdxSet} & \multicolumn{2}{c|}{\{i\}\cup\{\times i\}} &  \multicolumn{2}{c|}{\{i\}\cup\{\times i\}} & \multicolumn{2}{c}{\{\times k\, , \times i\}}\\
\end{array}
\end{equation}
\end{definition}
\noindent One can then state claims about map signatures that were enumerated in Tab. \ref{tab:enumerationOfmapsFromQaPattren} (above the horizontal double line) to motivate specific instances of the left side of the QIS principle formally as:
\begin{lemma}
 \label{lemma:QuantificationEqSupportLeft}
  The left side of the QIS principle is true in quantized mereology, i.e, for all $\nu\in\IdxSetX\setminus\{\IdxZeroBit\}$, $(o,r)\in\OReg^2_\nu$, $\alpha\beta\gamma\in\Lambda_\nu$, and $\i\j\k$ with $\{i,j,k\}=\IdxSet$: if $\lfloor\QQ_{\i\j\k}?=\A^{\alpha\beta\gamma}\rfloor(r)\neq\{\}$ then $ \lfloor\QQ_{\i\j\k}?=\A^{\alpha\beta\gamma}\rfloor(o,r)= \PL_{\,\i\,\j\,\k}^{\alpha\beta\gamma}$.
\end{lemma}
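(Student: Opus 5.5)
The plan is a case analysis over the six non-trivial information forms $\nu\in\IdxSetX\setminus\{\IdxZeroBit\}$. For each form, Definition~\ref{definition:InteractionQuantizationMap} is unfolded, the determinacy flags are read off, and the resulting intersection is compared with the operator-based definition of $\PL_{\,\i\,\j\,\k}^{\alpha\beta\gamma}$ in Tab.~\ref{tab:VectorsInIndexNotation}. Lemma~\ref{lemma:PermutationUpperAndLower} lets me fix one representative lower-index arrangement per pattern (e.g.\ $\a--$ for $\IdxOneBitPR$, $-\overline{\b\c}$ for $\IdxOneBitET$, $\overline{\a\b\c}$ with hatted middle index for $\IdxThreeBitET$), as in the columns of Eq.~\ref{eq:IdxSetNuX}. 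Other arrangements then follow by joint permutation of upper and lower indexes. Throughout, I use the hypothesis $\lfloor\QQ_{\i\j\k}?=\A^{\alpha\beta\gamma}\rfloor(o,r)\neq\{\}$, together with the correlation between answers and truth values in Eq.~\ref{eq:IdxSetNuX}, to fix that the bit values $a_i$ (or their \textsf{xor}s) equal the upper indexes $\a,\b,\c$.

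For the non-entangled forms $\IdxOneBitPR,\IdxTwoBitPR,\IdxThreeBitPR$, I first take the determinacy pattern from Tab.~\ref{tab:InteractionPattern}, which is guaranteed by the theorems of Tab.~\ref{tab:ClassesDefAxThPR}. By Eq.~\ref{eq:IndexToPlanePR}, each determinate index $i$ with $\llbracket\sharp^i_{o,r}\rrbracket\in\PL_\i^\a$ yields $\PL_\i^\a$, and each indeterminate index yields $\PL_\i^\F\cup\PL_\i^\T=\ORccFive$. The map $\boxplus$ then intersects these. Comparing with the rows $\Lambda_\IdxOneBitPR$, $\Lambda_\IdxTwoBitPR$, $\Lambda_\IdxThreeBitPR$ of Tab.~\ref{tab:VectorsInIndexNotation}, where each hyphen column is literally $(\PL^\T\cup\PL^\F)$, gives equality term by term; Example~\ref{ex:QuantizingOneBitPR} is the template for this step.

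For $\IdxOneBitET$ and $\IdxTwoBitET$, Eq.~\ref{eq:IndexToPlaneX} sends the determinate correlational outcome to $\PL_{\times\i}^{\T}$ or $\PL_{\times\i}^{\F}$, according to which of the two four-element sets of Eq.~\ref{eq:InteractionConstraints} contains it. I must then check the identity $\PL_{\times\i}^{\b\,\textsf{xor}\,\c}=\PL_\j^\b\vartriangle\PL_\k^{\sim\c}$. This follows from the characterisation $\PL_{\times\i}^{x}=\{a\mid a_j\ \textsf{xor}\ a_k=x\}$ given with the interior planes, together with a direct check against the lattice descriptors of Fig.~\ref{fig:isuccBot_iprecTop}, e.g.\ $\{\bot_\ORccFive,\bot_i,\top_\ORccFive,\top_i\}$ is exactly the set with $a_j=a_k$. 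Intersecting with $\ORccFive$ for $\IdxOneBitET$, or with the face $\PL_\i^\a$ for $\IdxTwoBitET$, reproduces the table entries $-\overline{\b\c}$ and $\a\overline{\b\c}$. For $\IdxThreeBitET$, Eq.~\ref{eq:IndexToPlaneXX} yields two interior planes, and $\boxplus$ intersects them to give $(\PL_\i^\a\vartriangle\PL_\j^{\sim\b})\cap(\PL_\j^\b\vartriangle\PL_\k^{\sim\c})$, which is the definition of $\PL_{\i\hat{\j}\k}^{\overline{\a\b\c}}$.

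The main obstacle is the $\IdxThreeBitET$ case, and secondarily the bookkeeping in the entangled cases. There the two planes are indexed by $\times k$ and $\times i$, and I must show that they share the correct middle index $\j$. I must also show that the diagonal constraint $\{\llbracket\sharp^{\times i}\rrbracket,\llbracket\sharp^{\times j}\rrbracket\}\in\DIA_k$ of Eq.~\ref{eq:InteractionConstraintsXX} makes the intersection nonempty and equal to the asserted diagonal, rather than merely contained in it. I would first try to discharge this by using the formula at the bottom of Fig.~\ref{fig:isuccBot_iprecTop}, which states $\PL_{\i\hat{\j}\k}^{\overline{\a\b\c}}\in\DIA_j$, combined with the nonemptiness hypothesis. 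Failing that, I would verify it by direct enumeration of the four diagonals for each choice of $k$.
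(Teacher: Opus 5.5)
Your proposal is correct and is essentially the paper's own proof. The paper's proof is this same enumeration of cases over the information forms $\nu$: it checks the output of $(\boxplus\circ\boxtimes)$ in Tab.~\ref{tab:Quantization} against the operator-based definitions of Tab.~\ref{tab:VectorsInIndexNotation}, and you simply make explicit what the paper leaves to inspection.

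The $\IdxThreeBitET$ step you flag as the main obstacle is easier than you expect. By the same \textsf{xor} identity you use in the other entangled cases, $\PL_{\times\k}^{\a\xor\b}\cap\PL_{\times\i}^{\b\xor\c}=(\PL_\i^\a\vartriangle\PL_\j^{\sim\b})\cap(\PL_\j^\b\vartriangle\PL_\k^{\sim\c})$, which is literally the definition of $\PL_{\i\hat{\j}\k}^{\overline{\a\b\c}}$ and always consists of exactly two complementary patterns. So neither nonemptiness nor equality depends on the diagonal constraint of Eq.~\ref{eq:InteractionConstraintsXX}; that constraint only ensures that the two actual outcomes form that very diagonal.
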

\begin{proof} Enumeration of cases that link the index notation $\PL_{\,\i\,\j\,\k}^{\alpha\beta\gamma}$ for \istates\ (defined  in Tab. \ref{tab:VectorsInIndexNotation}) to qubits $\lfloor\QQ_{\i\j\k}?=\A^{\alpha\beta\gamma}\rfloor(o,r)\in\Pi(\ORccFive)$ that result via $\lfloor\QQ_{\i\j\k}?=\A^{\alpha\beta\gamma}\rfloor(o,r) = ((\boxplus\circ\boxtimes) (((\rccfive\circ\sharp_o^\omega)(r),\sharpDet((\rccfive\circ\sharp_o^\omega)(r))))_
{\omega\in\IdxSetNu})=(\overline{\lfloor\cdot\rfloor}\circ\overline{\rccfive}\circ\overline{\sharp})(o,r)$ (Def. \ref{definition:InteractionQuantizationMap}) from the combined actions of interaction maps $(\overline{\rccfive}\circ\overline{\sharp})$ (Tabs. \ref{tab:sharpInducesIstate} and \ref{tab:InteractionPattern}) and quantization maps $\overline{\lfloor\cdot\rfloor}$ (Tab. \ref{tab:Quantization}).
\end{proof}

In summary and as visualized in Fig. \ref{fig:MereologyVSinformation}, from the outcomes $(\overline{\sharp})(o,r)\in ((\Reg^2)_\alpha)_{\alpha\in\IdxSetNu}$ of interaction mappings classical bits of information are extracted via $(\overline{\rccfive}\circ\overline{\sharp})(o,r)\equiv (\llbracket\sharp^\alpha_{o,r}\rrbracket,\sharpDet(\llbracket\sharp^\alpha_{o,r}\rrbracket)) \in((\ORccFive\times\TF)_\alpha)_{\alpha\in\IdxSetNu}$. As illustrated in Tab. \ref{tab:InteractionPattern} the interaction maps explicate certain interaction patterns. The classical bits are quantized by mapping them to cells in the \ThreeBit\ partition lattice and by combining the cells using the set-theoretic operations of intersection and disjunctive union as enumerated in Tab. \ref{tab:Quantization}. Jointly, the various cases that occur are enumerated in Tabs. \ref{tab:OverviewSupportSemantics}(2),  \ref{tab:sharpInducesIstate},  \ref{tab:InteractionPattern} and Tab. \ref{tab:Quantization}. In this context,  Lemmata \ref{lemma:InteractionQuantizationType} and \ref{lemma:QuantificationEqSupportLeft}  prove the left side of Eq. \ref{eq:MeaningOfMereoQuantification}, $(\overline{\lfloor\cdot\rfloor}\circ\overline{\rccfive}\circ\overline{\sharp})(o,r)  \equiv \lfloor\QQ_{\i\j\k}?=\A^{\alpha\beta\gamma}\rfloor(o,r) = \PL_{\, \i\,\j\,\k}^{\alpha\beta\gamma}$ for $(o,r)\in\OReg^2_\nu$ and $\PL_{\, \i\,\j\,\k}^{\alpha\beta\gamma}\in\ORccFive_\nu$ as well as its variation in Eq.~\ref{eq:ORegNuVSORccFiveNu}.

The ways in which the intermediate results are propagated through the sub-maps  of  composite maps of the form $(\boxplus\circ\boxtimes\circ\overline{\rccfive}\circ\overline{\sharp})$ via indexed sets which indexes abstract from specific permutations of the underlying index sets allows the statement of a stronger version of  the left side of Eq. \ref{eq:MeaningOfMereoQuantification}:
\begin{theorem}
\label{theorem:QuantificationEqSupportLeft}
The left side of the QIS principle remains invariant under the action $\rho$ of the symmetric group $\textsf{S}_3$ on the indexes of expressions in index notation i.e., for all $\nu\in\IdxSetX\setminus\{\IdxZeroBit\}$, $(o,r)\in\OReg^2_\nu$, $\alpha\beta\gamma\in\Lambda_\nu$, $\i\j\k$ with $\{i,j,k\}=\IdxSet$, and permutations $\rho\in\textsf{S}_3$: if $\lfloor\QQ_{\i\j\k}?(o,r)=\A^{\alpha\beta\gamma}\rfloor \neq \{\}$ then
$\lfloor\QQ_{\rho(\i\j\k)}?(o,r)=\A^{\rho(\alpha\beta\gamma)}\rfloor =  \PL \rho\left( \binom{\alpha}{\i}\binom{\beta}{\j}\binom{\gamma}{\k} \right)$.
\end{theorem}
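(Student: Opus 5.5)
The proof reduces the theorem to Lemma~\ref{lemma:QuantificationEqSupportLeft} followed by Lemma~\ref{lemma:PermutationUpperAndLower}. The key observation is that the composite map $(\boxplus\circ\boxtimes\circ\overline{\rccfive}\circ\overline{\sharp})$ of Definition~\ref{definition:InteractionQuantizationMap} depends on the lower and upper indexes only through the correspondence table of Eq.~\ref{eq:IdxSetNuX}. That table pairs each upper index (answer pattern) with a lower index (question slot). It does \emph{not} use the position of a pair within the sequence. Its rows are indexed families $(\llbracket\sharp^\omega_{o,r}\rrbracket,\sharpDet(\llbracket\sharp^\omega_{o,r}\rrbracket))_{\omega\in\IdxSetNu}$, whose index set $\IdxSetNu$ (Eq.~\ref{eq:IdxSetNu}) is a set, not a sequence.

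\textbf{Step 1.} Fix $\nu$, $(o,r)\in\OReg^2_\nu$, $\alpha\beta\gamma\in\Lambda_\nu$, $\i\j\k$ and $\rho\in\textsf{S}_3$, and assume $\lfloor\QQ_{\i\j\k}?(o,r)=\A^{\alpha\beta\gamma}\rfloor\neq\{\}$. I will show that applying $\rho$ jointly to the pairs $\binom{\alpha}{\i}\binom{\beta}{\j}\binom{\gamma}{\k}$ yields a question/answer pattern that is well formed in the sense of Definition~\ref{definition:IndexSubstitutionRules}. Clauses (i)--(iii) are preserved because they constrain only the sets of upper and lower indexes and the correspondence between the question, answer and \istate\ patterns. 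For $\nu=\IdxThreeBitET$, the hat convention (footnote to Lemma~\ref{lemma:PermutationUpperAndLower}) must be applied first, so that the shared middle index $\hat{\j}$ is carried along by $\rho$.

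\textbf{Step 2.} I will show that the indexed family fed into $\boxtimes$ is the same for the permuted and the unpermuted pattern. Each component $\omega\in\IdxSetNu$ is still determined by the same pair of upper and lower index. The simple bits $a_i$ are attached to $\i$, and the xor bits are attached to the entangled pair of lower indexes. The determinacy flags are fixed by the interaction pattern of Tab.~\ref{tab:InteractionPattern} for $\nu$, and that pattern is itself stated up to renaming of $i,j,k$. Since $\boxtimes$ acts componentwise (Eqs.~\ref{eq:IndexToPlanePR}, \ref{eq:IndexToPlaneX}, \ref{eq:IndexToPlaneXX}) and $\boxplus$ is an iterated intersection, which is commutative and associative, the result is unchanged:
\[
\lfloor\QQ_{\rho(\i\j\k)}?(o,r)=\A^{\rho(\alpha\beta\gamma)}\rfloor=\lfloor\QQ_{\i\j\k}?(o,r)=\A^{\alpha\beta\gamma}\rfloor .
\]
In particular, the permuted term is also non-empty.

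\textbf{Step 3.} By Lemma~\ref{lemma:QuantificationEqSupportLeft}, the right-hand side of the display equals $\PL_{\,\i\,\j\,\k}^{\alpha\beta\gamma}$. By Lemma~\ref{lemma:PermutationUpperAndLower}, this cell equals $\PL\rho\bigl(\binom{\alpha}{\i}\binom{\beta}{\j}\binom{\gamma}{\k}\bigr)$, which completes the chain.

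The main obstacle is Step~2 in the entangled cases, especially $\IdxThreeBitET$. There, $\boxtimes$ produces $\PL_{\times\k}^{\a\xor\b}$ and $\PL_{\times\i}^{\b\xor\c}$, and I must check that after permuting, the labels $\times k$ and $\times i$ still refer to the same two interior planes. For this I would invoke the symmetry of $\xor$ together with the identities $\PL_\i^\a\vartriangle\PL_\j^{\sim\b}=\PL_\j^\b\vartriangle\PL_\i^{\sim\a}$ already used in the proof of Lemma~\ref{lemma:PermutationUpperAndLower}. I would handle this case by explicit enumeration over the six permutations, and treat the remaining $\nu$ by the same but shorter inspection of the columns of Tab.~\ref{tab:Quantization}.
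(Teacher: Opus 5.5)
Your proposal is correct and follows essentially the same route as the paper. The composite map depends on the index sequences only through the pairing of upper and lower slots recorded in Eq.~\ref{eq:IdxSetNuX}, a joint permutation preserves that pairing, and Lemma~\ref{lemma:QuantificationEqSupportLeft} together with Lemma~\ref{lemma:PermutationUpperAndLower} then gives the permuted equation. Your version is more explicit than the paper's short proof, in particular in carrying the non-emptiness hypothesis over to the permuted term and in fixing the hatted index before permuting in the $\IdxThreeBitET$ case, but the argument is the same.
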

\begin{proof}
Sequences of lower indexes the form  $\i\j\k$ (or $\i\hat{\j}\k$) and upper indexes of the form $\alpha\beta\gamma$ characterize relationships between arguments of map declarations of the form $\lfloor\QQ_{\i\j\k}?=\A^{\alpha\beta\gamma}\rfloor(o,r) = \PL_{\, \i\,\j\,\k}^{\alpha\beta\gamma}$. Relationships between map arguments depend only on abstract specifications of matching slots  in index sequences (cf. Lemma \ref{lemma:PermutationUpperAndLower} and Eq. \ref{eq:IdxSetNuX}) across the whole equation. By assumption, $\textsf{S}_3$ acts in the same ways on all index sequences of the equation and thus the relations between matching slots are preserved. By Lemma \ref{lemma:QuantificationEqSupportLeft} the left side QIS principle is true for an permutation of indexes and it remains true for all further permutations $\rho\in\textsf{S}_3$.
\end{proof}

\section{Combining elementary questions}
\label{sec:CombiningQuestions}

If the QIS principle $(\overline{\lfloor\cdot\rfloor}\circ\overline{\rccfive}\circ\overline{\sharp})  = \lceil\QQ_{\i\j\k}?=\A^{\alpha\beta\gamma}\rceil$ (Eq. \ref{eq:MeaningOfMereoQuantification}) is true, then the type constraints of Lemma \ref{lemma:InteractionQuantizationType} extend to the right side of the equation such that, for the index pattern $\alpha\beta\gamma\in\Lambda_\nu$, the term $\lceil\QQ_{\i\j\k}?=\A^{\alpha\beta\gamma}\rceil$ designates a map of the type: $\OReg^2_\nu\to\ORccFive_\nu$. In this context typed expressions of the form $(o,r)\in\OReg_\nu^2\mapsto \lfloor\QQ_{\i\j\k}?=\A^{\alpha\beta\gamma}\rfloor(o,r)\in \ORccFive_\nu$ and $(o,r)\in\OReg_\nu^2\mapsto \lceil\QQ_{\i\j\k}?=\A^{\alpha\beta\gamma}\rceil(o,r)\in \ORccFive_\nu$ capture the matching relation between (i) the amount of information that a relational elementary system in the set $\OReg^2_\nu$ carries, (ii) the amount of information that is sought by the observer $o$  by posing elementary questions of the form $\QQ_{\i\j\k}?$ that elicit answers of the form $\A^{\alpha\beta\gamma}$ with $\alpha\beta\gamma\in\Lambda_\nu$, and (iii) the information states in $\ORccFive_\nu$ that arise from the answers to the questions that were posed.\footnote{\label{footnote:IncompatibleAmountOfInfo}The situations in which the amount of information sought by asking questions of the form $\QQ_{\i\j\k}?$ to relational elementary systems $(o,(r_1,r_2))\in\OO\times\OReg^2$ is incompatible with the amount of information carried by $(r_1,r_2)$ are discussed in \cite{Bittner:VMQI}.}

Within the framework of matching amounts of information that is enforced in the formalism by typed expressions of the form $(o,r)\in\OReg_\nu^2\mapsto \lfloor\QQ_{\i\j\k}?=\A^{\alpha\beta\gamma}\rfloor(o,r)\in\ORccFive_\nu$  and $(o,r)\in\OReg_\nu^2\mapsto \lceil\QQ_{\i\j\k}?=\A^{\alpha\beta\gamma}\rceil(o,r)\in\ORccFive_\nu$, the QIS principle,  when stated in index notation $\lfloor\QQ_{\i\j\k}?=\A^{\alpha\beta\gamma}\rfloor(o,r)=\PL_{\,\i\,\j\,\k}^{\alpha\beta\gamma}=\lceil\QQ_{\i\j\k}?=\A^{\alpha\beta\gamma}\rceil(o,r)$ (Eq. \ref{eq:ORegNuVSORccFiveNu}), in addition indicates by matching index patterns that (a) the ways in which elementary questions and answers are combined correspond to (b) the ways in which \istates\ are combined and (c) the ways in which qubits are combined in the quantization process through $\overline{\lfloor\cdot \rfloor}$. While (c) was discussed above, the correspondence between (a) and (b) was discussed by \citet{Bittner:VMQI}. The latter is summarized below to provide a self-contained presentation of the QIS framework for quantized mereology.

\subsection{Cumulative and non-cumulative combination of information}

As summarized in Tab. \ref{tab:OverviewSupportSemantics}, the set of non-empty maximum information states that support for answers to questions of the form $Q_i?$  posed in the context of the relational elementary systems $(r_1,r_2)_o \in \OReg^2_\IdxOneBitPR$ is the set $\RccFiveOneBit$ which consists of  six \istates\ of the form $\PL_\i^\a$ with $i\in\IdxSet$ and $\a\in\TF$. The elementary question $Q_i?$ is related to the partition $\{\PL_\i^\a\}$  by matching lower indices such that the maximum \istate\ supporting the answer $\a\in\YN$ to the question $Q_i?(r_1,r_2)_o$ is the cell $\PL_\i^\a$ of the partition $\{\PL_\i^\a\}$. Again, it is justified to seamlessly switch between the interpretation of the value of the variable $\a$ as a truth value and a yes/no answer, because (by construction) the set $\PL_\i^\a$ is identical to the truth-set of the formula $\oQ_i(r_1,r_2)_o$ and $\PL_\i^\a$ is also identical to the maximum supporting \istate\ $\lceil Q_i?(r_1,r_2)_o=\a\rceil$. In Fig. \ref{fig:MereologyVSinformation}, this was indicated by the two-sided dotted arrow labeled $\asymp$.

In this framework, every \istate\ in $\RccFiveOneBit$ carries one bit of mereological information by selecting one of two cells in one of the three partitions $\{\PL_{\emptyset}^{\a}\}$, $\{\PL_{1}^{\a}\}$, and $\{\PL_{2}^{\a}\}$.  From this and the constraints on elementary systems $(r_1,r_2)_o \in \OReg^2_\IdxOneBitPR$ in Tab.  \ref{tab:sharpInducesIstate} it follows that the information capacity of $(r_1,r_2)_o$ of one bit (definition of $\OReg^2_\IdxOneBitPR$ in Tab. \ref{tab:ClassesDefAxThPR}) is matched by the one bit of information carried by its \istate\ $\PL_\i^\a$ in $\RccFiveOneBit$.  Both $(r_1,r_2)_o$ and $\PL_\i^\a$ are correlated as a result of the interaction $(\sharp_o^i\, (r_1,r_2))$ with $\sharpDet(\sharp_o^i\, (r_1,r_2))$ and the fact that the truth-set for the formula $\oQ_i(r_1,r_2)_o$ is identical to the maximum supporting \istate\ for the answer $\a$ to the question $Q_i?(r_1,r_2)_o$ (Fig. \ref{fig:MereologyVSinformation} and Tab. \ref{tab:OverviewSupportSemantics}(3)).

It is possible to obtain more than one bit of information from an elementary system $(r_1,r_2)_o\in \OReg^2_\IdxOneToThreeBit\setminus \OReg^2_\IdxOneBit$ with an information capacity of $1<n\le 3$ bits. This can be achieved by simultaneously asking two questions, $Q_i?(r_1,r_2)_o$ and $Q_j?(r_1,r_2)_o$, in  the context of the relational elementary system $(r_1,r_2)_o$ and combining the answers $\a$ and $\b$. The combination of questions and answers is mirrored by the combination of the correlated \istates\ $\lrceil{Q_i?(r_1,r_2)_o=\a}=\PL_\i^{\a}$ and $\lrceil{Q_j?(r_1,r_2)_o=\b}=\PL_\j^{\b}$. Since the information capacity of $(r_1,r_2)_o$ is  $1<n\le 3$ bits, the \istates\ $\PL_\i^{\a}$ and $\PL_\j^{\b}$ can be combined cumulatively. The additive or cumulative combination of information produces more information and therefore \istates\ of smaller cardinality resulting from the intersection/projection of \istates\ as illustrated in the left of Eq. \ref{eq:NonCumulativeDisjointUnion}.

\begin{equation}
\label{eq:NonCumulativeDisjointUnion}
\begin{array}{ccc}
\overbrace{\underbrace{\PL_\i^\a}_{\txt{1 bit:\\$\a=\Y$ vs. $\a=\N$}} \cap \underbrace{\PL_\j^\b}_{\txt{1 bit:\\$\b=\Y$ vs. $\b=\N$}}}^{\txt{2 bits: $(\a,\b)\in \{\Y\Y,\Y\N,\N\Y,\N\N\}$}} & &
\overbrace{\underbrace{\PL_\i^\a}_{\txt{1 bit:\\$\a=\Y$ vs. $\a=\N$}} \vartriangle \underbrace{\PL_\j^\b}_{\txt{1 bit:\\$\b=\Y$ vs. $\b=\N$}}}^{\txt{1 bit: $\a=\b$ vs. $\a\neq\b$}}	\\\\
\txt{cumulative combination\\ of information} & &  \txt{non-cumulative combination\\ of information}
\end{array}
\end{equation}

Information can also be combined in non-additive or non-cumulative ways for elementary systems in $\OReg^2_\IdxOneBitET$. Consider the interior planes in Fig. \ref{fig:PairsOfPlanes} (bottom) which are the result of combining the cells selected by $\PL_\i^\a$ and $\PL_\j^\b$ using the operation of disjunctive union $\PL_\i^\a\vartriangle\PL_\j^\b$. The two interior planes that form a partition of a \ThreeBit cube are differentiated by satisfying the requirement $\a = \b$ or the condition $\a \neq \b$. That is, although one bit of information is carried by each of the simple terms that pick out the cells that are combined ($\a\in\YN$ and $\b\in\YN$), the resulting disjunctive union also carries only one bit of information ( $\a = \b$ vs. $\a \neq \b$) as illustrated in the right of Eq. \ref{eq:NonCumulativeDisjointUnion}. Combining information in such a non-additive way is called \textit{non-cumulative} \cite{Bittner:VMQI}.

\subsection{Combining questions, answers and \istates}

\paragraph{The operator-based view}

Consider simultaneously asking two questions, $\Q_\delta?(r_1,r_2)$ and $\Q_\epsilon?(r_1,r_2)_o$, in the context of the relational elementary system $(r_1,r_2)_o$ and combining the answers $\a_\delta$ and $\b_\epsilon$. The combination of questions and answers is mirrored by the combination of the correlated \istates\ $\lrceil{\Q_\delta?(r_1,r_2)_o=\a_\delta}=\PL_\delta^{\a_\delta}$ and $\lrceil{\Q_\epsilon?(r_1,r_2)_o=\b_\epsilon}=\PL_\epsilon^{\b_\epsilon}$. Let "combining simultaneous questions" be designated by the symbol "${\bigcirc}$" and "combining simultaneous maximum \istates" be designated by "${\cdot}$" and "combining simultaneous answers" be designated by "${\Diamond}$". Using these conventions, variants of the following schemata for statements can be investigated in the context of the support semantics of Tab. \ref{tab:OverviewSupportSemantics}:
\begin{equation}
\label{eq:CompositionSchemata}
\begin{array}{cl}
(1) & \is \Models  (\Q_\delta?\bigcirc \Q_\epsilon?)(r_1,r_2)_o=(\a_\delta\, \Diamond\ \b_\epsilon)\ \text{if and only if}\ \is \subseteq \PL_\delta^{\a_\delta} {\ \cdot\ }  \PL_\epsilon^{\b_\epsilon}\\
(2) & \text{if}\ \lrceil{(\Q_\delta? \bigcirc \Q_\epsilon?)(r_1,r_2)_o=(\a\ \Diamond\ \b)}\neq \{\}\ \text{then}\\
& \multicolumn{1}{r}{\lrceil{(\Q_\delta? \bigcirc \Q_\epsilon?)(r_1,r_2)_o=(\a_\delta \Diamond\  \b_\epsilon)} = \PL_\delta^{\a_\delta} {\ \cdot\ }  \PL_\epsilon^{\b_\epsilon}} \\\\
\multicolumn{2}{l}{ \txt{with substitutions\\for $\bigcirc$, $\Diamond$ and $\cdot$}\ : \quad {\arraycolsep=2pt \begin{array}{c|c|c|l|c}
\multicolumn{3}{c|}{\text{when combining simultaneous}}\\
 \text{questions} & \text{answers} & \istates  & \text{mode of} & \text{details}\\
 \Q_\delta?\bigcirc\Q_\epsilon? & \a_\delta\, \Diamond\ \b_\epsilon & \PL_\delta^{\a_\delta} {\ \cdot\ }  \PL_\epsilon^{\b_\epsilon} & \text{combination} & \text{in Tabs} \\\hline
 \odot & (\cdot\, ,\cdot) & \cap & \text{cumulative} & \ref{tab:CombinedQuestionsIformsIstatesCum}, \ref{tab:CombinedQuestionsIformsIstatesThree}\\
 \oplus & \xor & \vartriangle & \text{non-cumulative} & \ref {tab:CombinedQuestionsIformsIstatesNonCum}, \ref{tab:CombinedQuestionsIformsIstatesThree}\\
\end{array}}}\\
\end{array}
\end{equation}

The schematic statements in Eq. \ref{eq:CompositionSchemata} (top) offer a framework for the examination of different scenarios of combining questions, answers and \istates\ in the context of a classical support semantics. As visualized in Eq. \ref{eq:CompositionSchemata} (bottom), the meta-variable $\bigcirc$  can be substituted by the operators $\odot$ or $\oplus$ which respectively signify the cumulative and non-cumulative combination of two simultaneous elementary questions. The meta-variable $\Diamond$ can be substituted by the tuple constructor $(\cdot\, ,\cdot)$ or the operator $\xor$ which signify, respectively, the cumulative and non-cumulative combination of answers to (combined) simultaneous elementary questions. Finally, the meta-variable $\cdot$ can be substituted by the set-theoretic operations of intersection and disjunctive union which signify, respectively, the cumulative and non-cumulative combination of \istates. The cases where the meta-variables $\Q_\delta?$ and $\Q_\epsilon?$ are substituted by the elementary questions $Q_i?$ and $Q_j?$ are enumerated in Tabs. \ref{tab:CombinedQuestionsIformsIstatesCum} and \ref{tab:CombinedQuestionsIformsIstatesNonCum}.

As illustrated in Tab. \ref{tab:CombinedQuestionsIformsIstatesThree}, the meta-variables $\Q_\delta?$ and $\Q_\epsilon?$ can also be substituted by simultaneously combined elementary questions of the form $(Q_i? \bigcirc Q_j?)$ and/or $(Q_j? \bigcirc Q_k?)$. This represents the various possibilities of combining three simultaneous elementary questions cumulatively and/or non-cumulatively. The various possible ways to combine three elementary questions / answers / \istates\ are enumerated in the table. Again, a detailed discussion can be found in \cite{Bittner:VMQI}.

\paragraph{Answered questions in index notation}

Having reviewed the compositional structure of combined simultaneous elementary questions and their semantics using the operator-based notation of the schemata in Eq. \ref{eq:CompositionSchemata}, the operator-based notation is now related to the more compact index-based notation for questions and their answers.  In analogy to the definition of the index-based notation for \istates\ using the operations of disjunctive union and intersection in Tab. \ref{tab:VectorsInIndexNotation}, index-based notations for answered  questions are  defined using the operations $\odot$ and $\oplus$ to compose questions, as well as the operations of tuple formation and \xor\ to combine the answers summarized in Tab. \ref{tab:defQuestionPattern}.

Although the index notation for \istates\ employs upper and lower indexes, the index notation for combined elementary questions employs only lower indexes. As discussed above, the index-based expression $\QQ_{\emptyset12}?(r_1,r_2)_o$ abbreviates the cumulative combination of the elementary questions $Q_\emptyset?\odot Q_1? \odot Q_2?(r_1,r_2)_o$. An answer to the combined question is of the form $(\a,\b,\c)\in \YN^3$ such that in index notation $\A^{\a\b\c}\equiv(\a,\b,\c)$. The corresponding non-empty maximum supporting \istate\ is of the form $\PL_{\emptyset12}^{\a\b\c}$, that is, $\lceil\QQ_{\emptyset12}?(r_1,r_2)_o=\A^{\a\b\c}\rceil \in \{\PL_{\emptyset12}^{\a\b\c},\{\}\}$.

Two cumulatively combined elementary questions result in index-based expressions with an unoccupied slot such that $\QQ_{\,\i\,\j-}?(r_1,r_2)_o$ abbreviates a cumulative combination of the form $(Q_\i?\odot Q_\j?)(r_1,r_2)_o$ whose answer is of the form $(\a,\b)\equiv\A^{\a\b-}\in \YN^2$. The corresponding non-empty maximum supporting \istate\ is of the form $\PL_{\,\i\,\j\,\k}^{\a\b-}$, that is, $\lceil\QQ_{\,\i\,\j-}?(r_1,r_2)_o=\A^{\a\b-}\rceil \in \{\PL_{\,\i\,\j\,\k}^{\a\b-},\{\}\}$. That there are three ways of combining two out of three questions is represented by distinguishing the additional question/answer patterns and the maximum supporting \istates: $\lceil\QQ_{-\j\,\k}?=\A^{-\b\c}\rceil\in\{\PL_{\,\i\,\j\,\k}^{-\b\c},\{\}\}$ and $\lceil\QQ_{\i-\k}?=\A^{\a-\c}\rceil \in \{\PL_{\,\i\ \j\,\k}^{\a-\c},\{\}\}$. This is summarized in the rows labeled $\QQ_\IdxTwoBitPR?$ of Tab. \ref{tab:defQuestionPattern}. The special case of a single question is captured by index-based expressions with two unoccupied slots of the form $\lceil\QQ_{\i--}?=\A^{\a--}\rceil\in\{\PL_{\,\i\ \j\ \k}^{\a--},\{\}\}$, $\lceil\QQ_{-\j-}?=\A^{-\b-}\rceil\in\{\PL_{\ \i\,\j\,\k}^{-\b-},\{\}\}$ and $\lceil\QQ_{--\k}?=\A^{--\c}\rceil\in\{\PL_{\ \i\ \j\,\k}^{--\c},\{\}\}$ (see the rows labeled $\QQ_\IdxOneBitPR?$ in the table).

Non-cumulatively combined elementary questions result in index-based expressions with over-lined  indexes of the form $\lceil\QQ_{\overline{\i\,\j}-}?={\A^{\overline{\a\b}-}}\rceil\in\{{\PL_{\,\i\,\j\,\k}^{\overline{\a\b}-}},\{\}\}$. Here, $\QQ_{\overline{\i\,\j}-}?$ abbreviates $Q_i?\oplus Q_j?$ and ${\A^{\overline{\a\b}-}}$ abbreviates $\a \xor \b$ (rows labeled $\QQ_\IdxOneBitET?$). Similarly for $\lceil\QQ_{\overline{\i\,\j}\k}?={\A^{\overline{\a\b}\c}}\rceil\in\{{\PL_{\,\i\,\j\,\k}^{\overline{\a\b}\c}},\{\}\}$ that abbreviates $(Q_i?\oplus Q_j?)\odot Q_k?=(\a \xor \b, \c)$ (rows labeled $\QQ_\IdxTwoBitET?$) and $\lceil\QQ_{\overline{\i\hat{\j}\k}}?={\A^{\overline{\a\b\c}}}\rceil\in\{{\PL_{\,\i\,\hat{\j}\,\k}^{\overline{\a\b\c}}},\{\}\}$ which abbreviates $Q_i?\oplus Q_{\hat{j}}?\oplus Q_k?=(\a \xor \b, \b\xor\c)$ (row labeled $\QQ_\IdxThreeBitET?$)\footnote{\label{footnote:HatSymbol}$\lceil\QQ_{\overline{\i\hat{\j}\k}}?={\A^{\overline{\a\b\c}}}\rceil\in\{{\PL_{\i\,\hat{\j}\,\k}^{\overline{\a\b\c}}},\{\}\}$ in operator notation corresponds to $Q_i?\oplus Q_{\hat{j}}?\oplus Q_k?=(\a \xor \b, \b\xor\c)$ where the hat symbol (here $\hat{\j}$)  indicates the designated index  which occurs in both $\xor$ terms (via the upper index \b\ which corresponds to the lower index \j) and the corresponding discrete unions of partition cells (here $\PL_\j^\b$)  (cf. Lemma \ref{lemma:PermutationUpperAndLower}). When the hat symbol is omitted the designated index is the index in the middle.}. This illustrates that the substitutions for the operator-based view of combining elementary questions/answers/\istates\  that are employed in Tabs. \ref{tab:CombinedQuestionsIformsIstatesCum}, \ref {tab:CombinedQuestionsIformsIstatesNonCum} and \ref{tab:CombinedQuestionsIformsIstatesThree} mirror the substitution rules for the expressions in index notation that were introduced in Def. \ref{definition:IndexSubstitutionRules}.

\subsection{The right side of the QIS principle}

Answered questions of the form $\QQ_{\i\j\k}?$ $(r_1,r_2)_o=\A^{\alpha\beta\gamma}$ are maps with type patterns of the form $\QQ_{\i\j\k}? : \OReg_\nu^2 \to \YN^{\QQ_\nu}$, where the  schematic variable $\QQ_{\i\j\k}?$ can be substituted by  the  index-based expressions within the $\QQ_\nu?$ groupings of  the second column of Tab. \ref{tab:defQuestionPattern}. As shown in the table, the substitution instances of the schematic  variable $\A^{\alpha\beta\gamma}$  correspond in operator notation to tuples of yes/no answers, the interpretation of which is determined by the compositional structure of the simultaneous elementary questions that constitute substitution instances of $\QQ_{\i\j\k}?$ such that either $\A^{\alpha\beta\gamma}$ corresponds to $\YN$ or $\A^{\alpha\beta\gamma}$ corresponds to $\YN^2$ or $\A^{\alpha\beta\gamma}$ corresponds to $\YN^3$. This is indicated as  $\YN^{\QQ_\nu}$ in  declarations of the form $\QQ_{\i\j\k}? : \OReg^2_\nu \to \YN^{\QQ_\nu}$ for all $\QQ_{\i\j\k}?\in\QQ_\nu?$.

By inspection of the fourth column of Tab. \ref{tab:defQuestionPattern} which enumerates the combination of answers in operator notation to questions of the form $Q_i?\bigcirc Q_j?\bigcirc Q_k? \equiv \Q_{\i\j\k}?$ one can verify that there is a one-one correspondence between the combination of yes/no answers and the combination of truth values obtained via interaction maps of the form $(\overline{\rccfive}\circ\overline{\sharp})$. As enumerated in Eq. \ref{eq:IdxSetNuX}, truth values are combined via $\xor$ operations and the formation of indexed sets that corresponds to the ways in which (in Tab. \ref{tab:defQuestionPattern}) yes/no answers  are combined via $\xor$ operations and tuple formation.

The absence of information that can arise in the context of interactions via $(\overline{\rccfive}\circ\overline{\sharp})$ is complemented by the absence of a question that would require/force an answer. This mirrors the point that was made in the beginning of this chapter: the QIS principle requires that the amount of information that an elementary system  carries matches the amount of information that is sought by posing  questions to that system.\footnote{Again,  situations in which the amount of information sought by asking questions  to elementary systems is incompatible with the amount of information carried by that system are discussed in \cite{Bittner:VMQI}.} The absence of a bit of information / a question / a yes/no answer is represented by a dash (signifying an empty slot) in index notation.

One can prove that the right side of the QIS principle is true in quantized mereology:

\begin{lemma}
\label{lemma:QuantificationEqSupportRight}
For all $\nu\in\IdxSetX\setminus\{\IdxZeroBit\}$, $r_o\in\OReg^2_\nu$, $\alpha\beta\gamma\in\Lambda_\nu$, and $\i\j\k$ with $\{i,j,k\}=\IdxSet$: if $\lceil\QQ_{\i\j\k}?(r_o)=\A^{\alpha\beta\gamma}\rceil \neq \{\}$ then
$\lceil\QQ_{\i\j\k}?(r_o)=\A^{\alpha\beta\gamma}\rceil =  \PL_{\,\i\,\j\,\k}^{\alpha\beta\gamma}$.
\end{lemma}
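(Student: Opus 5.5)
The plan is to prove the lemma by structural induction on the composition of the answered question, followed by an enumeration over the six non-trivial \iforms\ $\nu\in\IdxSetX\setminus\{\IdxZeroBit\}$. The argument uses three ingredients: the support semantics of Tab.~\ref{tab:OverviewSupportSemantics}(4)--(6), the composition schemata of Eq.~\ref{eq:CompositionSchemata}, and the operator-based definitions of the index notation in Tab.~\ref{tab:VectorsInIndexNotation}.

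\emph{Base case.} The base case concerns single elementary questions. By the discussion in Sec.~\ref{sec:LinkingIstates}, the truth set $|\oQ_i(r_o)|$ is $\PL_\i^\a$. Since every non-empty subset of this truth set supports the answer, the maximum supporting state is $\lceil Q_i?(r_o)=\a\rceil=\PL_\i^\a$. An unoccupied slot (a dash) corresponds to the absence of a question, so its contribution is the trivial state $\PL_\k^\T\cup\PL_\k^\F=\ORccFive$.

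\emph{Inductive step.} For a combined question $\Q_\delta?\bigcirc\Q_\epsilon?$, I would argue from schema (1) of Eq.~\ref{eq:CompositionSchemata}. A state $\is$ supports the combined answer iff $\is\subseteq\PL_\delta^{\a_\delta}\cdot\PL_\epsilon^{\b_\epsilon}$. The largest such $\is$ is therefore the combined cell itself, which gives schema (2) whenever the cell is non-empty. The substitutions are $\odot/(\cdot,\cdot)/\cap$ in the cumulative case and $\oplus/\xor/\vartriangle$ in the non-cumulative case.

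\emph{Translation into index notation.} For each $\nu$ I would read off the question pattern in Tab.~\ref{tab:defQuestionPattern}, expand it into operator form, and apply the inductive step, comparing the result with the corresponding row of Tab.~\ref{tab:VectorsInIndexNotation}:
\begin{itemize}
\item For $\IdxOneBitPR$ and $\IdxTwoBitPR$ this gives intersections of faces with $\ORccFive$, for example $\PL_\i^\a\cap\PL_\j^\b\cap\ORccFive=\PL_{\,\i\,\j\,\k}^{\a\b-}$.
\item For $\IdxThreeBitPR$ it gives the triple intersection.
\item For $\IdxOneBitET$ one needs the identity that the answer $\a\xor\b$ to $Q_i?\oplus Q_j?$ is supported exactly by $\PL_\i^\a\vartriangle\PL_\j^{\sim\b}$, i.e.\ the patterns whose $i,j$ answers are $(\a,\b)$ or $(\sim\a,\sim\b)$. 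This is checked by unfolding $\vartriangle$ pointwise on the eight \ThreeBit patterns.
\item $\IdxTwoBitET$ then follows by intersecting this set with $\PL_\k^\c$.
\item $\IdxThreeBitET$ follows by intersecting the two $\xor$-cells that share the hatted index $\hat{\j}$ (footnote~\ref{footnote:HatSymbol}).
\end{itemize}

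\emph{Extension to all orderings.} To cover all orderings $\i\j\k$, I would prove each case for one fixed frame and then invoke Lemma~\ref{lemma:PermutationUpperAndLower}, together with the observation that the definitions in Def.~\ref{definition:IndexSubstitutionRules} are stated uniformly in the slot variables. The non-emptiness hypothesis is needed only to exclude the degenerate value $\{\}$ in schema (2).

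\emph{Main obstacle.} I expect the entangled cases to be the hardest, specifically the bookkeeping between the answer encoding (\Y\ for $\xor$ true, i.e.\ unequal answers) and the upper-index convention $\overline{\a\b}\mapsto\PL_\i^\a\vartriangle\PL_\j^{\sim\b}$ with its negation on the second index. A sign slip here would identify the complementary interior plane. My first step would be to fix this correspondence on one explicit instance, e.g.\ $\PL_{\emptyset12}^{-\overline{\b\b}}=\{\DR,\PO,\botOneOne,\EQ\}$ from Example~\ref{ex:QISforEntanglement}, and only then generalize symbolically, verifying the $\IdxThreeBitET$ case against $\{\DR,\EQ\}$.
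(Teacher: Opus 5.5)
Your overall plan matches the paper's proof, which proceeds ``by case patterns'': unfold each index pattern into operator form via Tab.~\ref{tab:defQuestionPattern} and compute the maximal supporting state case by case. The step that fails is your inductive step for $\oplus$.

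\textbf{The $\oplus$ step.} Your base case delivers $\lceil Q_i?=\a\rceil=\PL_\i^\a$ and $\lceil Q_j?=\b\rceil=\PL_\j^\b$. Combining these with $\vartriangle$ gives $\PL_\i^\a\vartriangle\PL_\j^\b$. That is the \emph{complementary} interior plane, not $\PL_{\i\j\k}^{\overline{\a\b}-}=\PL_\i^\a\vartriangle\PL_\j^{\sim\b}$. For the pattern $-\overline{\T\T}$ in the $\emptyset12$ frame, $\PL_1^\T\vartriangle\PL_2^\T=\{\botZeroOne,\botOneZero,\PP,\PPi\}$, whereas $\PL_{\emptyset12}^{-\overline{\T\T}}=\{\DR,\PO,\botOneOne,\EQ\}$ (Example~\ref{ex:QISforEntanglement}). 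The substitution $\PL_\epsilon^{\b_\epsilon}\rightsquigarrow\PL_\j^\b$ in the caption of Tab.~\ref{tab:CombinedQuestionsIformsIstatesNonCum} has the same defect; the body of that table is the operative definition.

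Beyond the sign, you are combining the wrong objects. The $\oplus$-answer is evaluated on $\sharp_o^{\times k}(r)$, not on $\sharp_o^i(r)$ and $\sharp_o^j(r)$. For $r_o\in\ESET$ the simple interactions on the entangled slots are indeterminate (Tab.~\ref{tab:InteractionPattern}), so elementary maximal states are not what gets combined. The sign problem you list as a bookkeeping risk is therefore built into the step as you formulated it.

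Your $\IdxOneBitET$ bullet already uses the correct set, so the repair is local. Treat $Q_i?\oplus Q_j?$ as a second base case, with support read directly from Tab.~\ref{tab:CombinedQuestionsIformsIstatesNonCum}: $\is\subseteq\PL_\i^\a\vartriangle\PL_\j^{\sim\b}$. The only genuine combination step left is then $\odot$ with $\cap$, which also handles $\IdxTwoBitET$ and $\IdxThreeBitET$.

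\textbf{The $\odot$ step.} You cannot take schema~(1) of Eq.~\ref{eq:CompositionSchemata} as an unconditional biconditional. If you did, the maximal state would never be $\{\}$ and the lemma's hypothesis would do no work. Use instead the compositional definition of Tab.~\ref{tab:CombinedQuestionsIformsIstatesCum}: $s=s_i\cap s_j$ with non-empty supporting $s_i,s_j$. This makes the maximal state $|\oQ_i(r_o)|\cap|\oQ_j(r_o)|$ when both truth sets are non-empty, and $\{\}$ otherwise. The hypothesis is what forces both truth sets to be the faces $\PL_\i^\a$ and $\PL_\j^\b$. It is not a condition that ``the cell is non-empty'', since all cells are non-empty.

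\textbf{The fixed-frame reduction.} Lemma~\ref{lemma:PermutationUpperAndLower} acts only on the cell side. A reduction to one frame would also need invariance on the question side, and that is what Theorem~\ref{theorem:QuantificationEqSupportRight} later derives \emph{from} this lemma. Since your computations are symbolic in $\i\j\k$ anyway, as in the paper, simply drop the reduction.
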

\begin{proof}
By case patterns: $\lceil\Q_{\i\j\k}?(r_1,r_2)_o=\A^{\a\b\c}\rceil\equiv\lceil(Q_i?\odot Q_j?\odot Q_k?)(r_1,r_2)_o=(\a,\b,\c)\rceil=\PL_{\,\i\j\k}^{\a\b\c}$, a vertex of the cube (Tab.~\ref{tab:rccFive}); $\lceil\Q_{\i\j\k}?(r_1,r_2)_o=\A^{\a\b-}\rceil\equiv\lceil(Q_i?\odot Q_j?) (r_1,r_2)_o=(\a,\b)\rceil=\PL_{\,\i\j\k}^{\a\b-}$ in Fig. \ref{fig:RccFiveEdges};  $\lceil\Q_{\i\overline{\j\k}}?(r_1,r_2)_o = \A^{\a\overline{\b\c}}\rceil\equiv\lceil (Q_{\times i}?\odot Q_i?)(r_1,r_2)_o=(\b_{\times i},\a)\rceil=\PL_{\,\i\,\j\,\k}^{\a\overline{\b\c}}$ with $\b_{\times i}=\b\xor\c$ in Fig. \ref{fig:RccFiveEdgesX}; $\lceil\Q_{\overline{\i\hat{\j}\k}}?(r_1,r_2)_o=\A^{\overline{\a\b\c}}\rceil\equiv\lceil (Q_i?\oplus Q_j?\oplus Q_k?)(r_1,r_2)_o=(\a\xor\b,\b \xor \c)\rceil=\PL_{\,\i\,\hat{\j}\,\k}^{\overline{\a\b\c}}$ in Fig. \ref{fig:RccFiveEdgesXX}; $\lceil\Q_{\overline{\i\j}-}?(r_1,r_2)_o=\A^{\overline{\a\b}-}\rceil\equiv\lceil(Q_i?\oplus Q_j?) (r_1,r_2)_o=(\a \xor\b)\rceil$ in Fig. \ref{fig:PairsOfPlanes} (bottom); $\lceil\Q_{\i--}?(r_1,r_2)_o=\A^{\a--}\rceil\equiv\lceil Q_i?(r_1,r_2)_o=\a\rceil$ in Fig. \ref{fig:PairsOfPlanes} (top).
All figures are displayed in the $\emptyset12$-base with $\i=\emptyset$, $\j=1$ and $\k=2$.
\end{proof}

 In analogy to Theorem \ref{theorem:QuantificationEqSupportLeft} one can state that the right side of the QIS principle remains invariant under the action $\rho$ of the symmetric group $\textsf{S}_3$:
\begin{theorem}
\label{theorem:QuantificationEqSupportRight}
 For all $\nu\in\IdxSetX\setminus\{\IdxZeroBit\}$, $r_o\in\OReg^2_\nu$, $\alpha\beta\gamma\in\Lambda_\nu$, $\i\j\k$ with $\{i,j,k\}=\IdxSet$, and permutations $\rho\in\textsf{S}_3$: if $\lceil\QQ_{\i\j\k}?(r_o)=\A^{\alpha\beta\gamma}\rceil \neq \{\}$ then
$\lceil\QQ_{\rho(\i\j\k)}?(r_o)=\A^{\rho(\alpha\beta\gamma)}\rceil =  \PL \rho\left( \binom{\alpha}{\i}\binom{\beta}{\j}\binom{\gamma}{\k} \right)$.
\end{theorem}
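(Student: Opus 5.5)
The plan mirrors the proof of Theorem~\ref{theorem:QuantificationEqSupportLeft}, with Lemma~\ref{lemma:QuantificationEqSupportRight} in place of Lemma~\ref{lemma:QuantificationEqSupportLeft}. Fix $\nu\neq\IdxZeroBit$, $r_o\in\OReg^2_\nu$, $\alpha\beta\gamma\in\Lambda_\nu$, a frame $\i\j\k$, and $\rho\in\textsf{S}_3$. Assume $\lceil\QQ_{\i\j\k}?(r_o)=\A^{\alpha\beta\gamma}\rceil\neq\{\}$.

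\textbf{Step 1: the answered question is unchanged by $\rho$.} Under the joint action of $\rho$, the pair $\QQ_{\rho(\i\j\k)}?=\A^{\rho(\alpha\beta\gamma)}$ still obeys the substitution rules of Definition~\ref{definition:IndexSubstitutionRules}. The upper and lower indexes are permuted together, so each slot keeps its pairing $\binom{\alpha}{\i}$, and overlines and dashes travel with their slots. In the $\Lambda_\IdxThreeBitET$ case one first writes $\i\hat{\j}\k$ so that the hat travels with $\j$.

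Next, read the permuted expression back into operator notation via Tab.~\ref{tab:defQuestionPattern}. It is a reordering of the same elementary questions combined by $\odot$ and $\oplus$, with answers combined by tuple formation and $\xor$. These operations are symmetric: $\odot$ and $\cap$ are commutative, and $\oplus$, $\xor$ and $\vartriangle$ are commutative up to the identity $\PL_\i^\a\vartriangle\PL_\j^{\sim\b}=\PL_\j^\b\vartriangle\PL_\i^{\sim\a}$. By the schemata of Eq.~\ref{eq:CompositionSchemata}, the maximum supporting set therefore has the same extension before and after permutation, so
\[\lceil\QQ_{\rho(\i\j\k)}?(r_o)=\A^{\rho(\alpha\beta\gamma)}\rceil=\lceil\QQ_{\i\j\k}?(r_o)=\A^{\alpha\beta\gamma}\rceil\neq\{\}.\]

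\textbf{Step 2: evaluate with the lemmas.} Since the permuted frame $\rho(\i\j\k)$ is again a permutation of $\IdxSet$ and $\rho(\alpha\beta\gamma)$ again lies in $\Lambda_\nu$ by Eq.~\ref{eq:PermutationUpperOrLower}, Lemma~\ref{lemma:QuantificationEqSupportRight} applies directly to it. Combining this with Step 1 and with Lemma~\ref{lemma:PermutationUpperAndLower} gives the chain
\[\lceil\QQ_{\rho(\i\j\k)}?=\A^{\rho(\alpha\beta\gamma)}\rceil=\lceil\QQ_{\i\j\k}?=\A^{\alpha\beta\gamma}\rceil=\PL_{\,\i\,\j\,\k}^{\alpha\beta\gamma}=\PL\rho\left(\binom{\alpha}{\i}\binom{\beta}{\j}\binom{\gamma}{\k}\right).\]
Equivalently, one can apply the lemma to the permuted frame and read the result off the permuted pattern directly.

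\textbf{Main obstacle.} The hard part is Step 1 in the non-cumulative cases, i.e.\ showing that the question/answer semantics respects the same symmetric-difference identities already used for cells. For $\Lambda_\IdxOneBitET$ and $\Lambda_\IdxTwoBitET$ it suffices to check that $\a\xor\b=\b\xor\a$ and that the truth set of $\oplus$ is symmetric.

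The maximally entangled case is the delicate one. Here $\overline{\a\b\c}$ must be checked for all six $\rho$, showing that $(\a\xor\b,\b\xor\c)$ and its permuted variants select the same diagonal of $\DIA_\j$, exactly as in the last clause of the proof of Lemma~\ref{lemma:PermutationUpperAndLower}. I would handle this by a direct case enumeration over $\rho$, using that the hatted index is preserved.
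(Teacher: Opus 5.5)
Your proposal is correct and follows essentially the paper's own argument. Both reduce the claim to Lemma~\ref{lemma:QuantificationEqSupportRight}, which holds for every frame $\i\j\k$, together with the observation that a joint permutation of upper and lower indexes preserves the slot matching between questions, answers and cells. Your Step~1 spells out what the paper asserts in one sentence: the permuted answered question has the same maximum supporting set, by the symmetry of $\cap$, $\vartriangle$ and $\xor$ and the identity $\PL_\i^\a\vartriangle\PL_\j^{\sim\b}=\PL_\j^\b\vartriangle\PL_\i^{\sim\a}$. This also transfers the non-emptiness hypothesis to the permuted question, a point the paper leaves implicit.
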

\begin{proof} Sequences of the form  $\i\j\k$ ($\i\hat{\j}\k$) and $\alpha\beta\gamma$ are used only to specify matching slots of related questions, answers and \istates\ and their combination across equations of the form $\lceil\QQ_{\i\j\k}?(r_o)=\A^{\alpha\beta\gamma}\rceil =  \PL_{\,\i\,\j\,\k}^{\alpha\beta\gamma}$.  These relations are not affected by imposing a permutation $\rho\in\textsf{S}_3$ on all upper and lower index pattern. By Lemma \ref{lemma:QuantificationEqSupportRight} the right side QIS principle is true for an arbitrary permutation of indexes and it remains true for all further permutations $\rho\in\textsf{S}_3$.

\end{proof}

Jointly, Theorems \ref{theorem:QuantificationEqSupportLeft} and  \ref{theorem:QuantificationEqSupportRight} prove the QIS principle is true and  permutation invariant  in quantized mereology:
\begin{corollary}
\label{corollary:QuantificationEqSupport}
For all $\nu\in\IdxSetX\setminus\{\IdxZeroBit\}$, $r_o\in\OReg^2_\nu$, $\alpha\beta\gamma\in\Lambda_\nu$, $\i\j\k$ with $\{i,j,k\}=\IdxSet$, and permutations $\rho\in\textsf{S}_3$,
the QIS principle is true in quantized mereology, i.e. if $\lceil\QQ_{\i\j\k}?(r_o)=\A^{\alpha\beta\gamma}\rceil \neq \{\}$ then
$\lfloor\QQ_{\rho(\i\j\k)}?(r_o)=\A^{\rho(\alpha\beta\gamma)}\rfloor =  \PL \rho\left( \binom{\alpha}{\i}\binom{\beta}{\j}\binom{\gamma}{\k} \right)=\lceil\QQ_{\rho(\i\j\k)}?(r_o)=\A^{\rho(\alpha\beta\gamma)}\rceil$.
\end{corollary}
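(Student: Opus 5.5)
The plan is to obtain the corollary by chaining the two permutation-invariance theorems through the common middle term $\PL\rho\left(\binom{\alpha}{\i}\binom{\beta}{\j}\binom{\gamma}{\k}\right)$. Fix $\nu\in\IdxSetX\setminus\{\IdxZeroBit\}$, $r_o\in\OReg^2_\nu$, $\alpha\beta\gamma\in\Lambda_\nu$, a lower index pattern $\i\j\k$, and $\rho\in\textsf{S}_3$, and assume $\lceil\QQ_{\i\j\k}?(r_o)=\A^{\alpha\beta\gamma}\rceil\neq\{\}$.

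The right-hand equality follows immediately from Theorem~\ref{theorem:QuantificationEqSupportRight}, whose hypothesis is exactly the assumption. It gives $\lceil\QQ_{\rho(\i\j\k)}?(r_o)=\A^{\rho(\alpha\beta\gamma)}\rceil=\PL\rho\left(\binom{\alpha}{\i}\binom{\beta}{\j}\binom{\gamma}{\k}\right)$.

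The left-hand equality comes from Theorem~\ref{theorem:QuantificationEqSupportLeft}. That theorem, however, is conditioned on $\lfloor\QQ_{\i\j\k}?=\A^{\alpha\beta\gamma}\rfloor(o,r)\neq\{\}$, not on the nonemptiness of the ceiling term. The main step is therefore to transfer the hypothesis from the ceiling to the floor.

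\begin{itemize}
\item By Lemma~\ref{lemma:QuantificationEqSupportRight}, the nonempty ceiling term equals $\PL_{\,\i\,\j\,\k}^{\alpha\beta\gamma}$.
\item By Lemma~\ref{lemma:InteractionQuantizationType} together with Definition~\ref{definition:InteractionQuantizationMap}, the floor term is always a cell of some partition in $\ORccFive_\nu$, namely $(\boxplus\circ\boxtimes)$ applied to the interaction data. Cells of partitions are nonempty.
\end{itemize}

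So the floor is nonempty whenever it is defined, and its defining correspondence Eq.~\ref{eq:IdxSetNuX} with the answer pattern $\alpha\beta\gamma$ is guaranteed by the type constraint Eq.~\ref{eq:ORegNuVSORccFiveNu}, since $r_o\in\OReg^2_\nu$. With that hypothesis in hand, Theorem~\ref{theorem:QuantificationEqSupportLeft} yields $\lfloor\QQ_{\rho(\i\j\k)}?(r_o)=\A^{\rho(\alpha\beta\gamma)}\rfloor=\PL\rho\left(\binom{\alpha}{\i}\binom{\beta}{\j}\binom{\gamma}{\k}\right)$.

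Combining the two displays by transitivity through the common term gives the stated chain of equalities. Lemma~\ref{lemma:PermutationUpperAndLower} guarantees that this middle term denotes the same cell regardless of how the permuted index pairs are written, with hat notation in the $\IdxThreeBitET$ case.

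I expect the nonemptiness transfer to be the main obstacle. If the direct argument feels too quick, I would fall back on checking it column by column in Tab.~\ref{tab:Quantization}: in each of the six iforms the intersection yields one of the cells listed in Tab.~\ref{tab:VectorsInIndexNotation}, all of which are nonempty.
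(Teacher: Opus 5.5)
Your chaining through the common middle term is the paper's whole argument: the paper only remarks that Theorems~\ref{theorem:QuantificationEqSupportLeft} and~\ref{theorem:QuantificationEqSupportRight} jointly give the corollary. You are right that this leaves a mismatch of hypotheses (a nonempty floor versus a nonempty ceiling). However, the step you use to close it fails.

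\begin{itemize}
\item Lemma~\ref{lemma:InteractionQuantizationType} says only that the raw output of $(\boxplus\circ\boxtimes)$ on the interaction data of $r_o$ is some nonempty cell of $\ORccFive_\nu$.
\item Definition~\ref{definition:InteractionQuantizationMap} identifies $\lfloor\QQ_{\i\j\k}?=\A^{\alpha\beta\gamma}\rfloor(o,r)$ with that output only when the quantized bits and the $\sharpDet$ flags correspond, as in Eq.~\ref{eq:IdxSetNuX}, to this particular answer pattern.
\item Eq.~\ref{eq:ORegNuVSORccFiveNu} fixes only the iform $\nu$, not the answer values or which slots are answered. Moreover, the paper lists that equation as a consequence of this very corollary, so citing it here is circular.
\end{itemize}

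The giveaway is that your transfer never uses the hypothesis that the ceiling is nonempty. It would therefore make the floor nonempty for every pattern in $\Lambda_\nu$, e.g.\ for both $\A^{\T--}$ and $\A^{\F--}$. Since the raw output does not depend on the answer pattern, Lemma~\ref{lemma:QuantificationEqSupportLeft} would then give $\PL_{\i\j\k}^{\T--}=\PL_{\i\j\k}^{\F--}$, which is false. Your fallback through Tab.~\ref{tab:Quantization} has the same defect: it shows the output cells are nonempty, not that the output is the cell named by the given answer.

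A correct transfer needs two further ingredients.

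\emph{Value matching.} $\A^{\alpha\beta\gamma}$ must be the answer pattern actually returned. Its components are by definition read off the same proxies $\sharp_o^\omega(r)$ that the floor quantizes (Tab.~\ref{tab:OverviewSupportSemanticsTwo}(3), and Tab.~\ref{tab:CombinedQuestionsIformsIstatesNonCum}, where the $\xor$ answers are computed from $\sharp_o^{\times k}$). For elementary components this is exactly what a nonempty ceiling expresses via the truth sets of Tab.~\ref{tab:OverviewSupportSemanticsTwo}(4). This is the dual role of $\sharp$ that the paper invokes.

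\emph{Slot alignment.} The answered slots must be those with $\sharpDet=\T$ in Eq.~\ref{eq:IdxSetNuX}. This does not follow from a nonempty ceiling, because those truth sets ignore determinacy. For example, take $r_o\in\RegOneBitPR$ whose determinate interaction is $\sharp_o^\emptyset$. Asking $Q_1?$ gives the nonempty ceiling $\PL_{\emptyset12}^{-\b-}$, whereas the quantization path produces $\PL_{\emptyset12}^{\a--}$. Alignment therefore has to come from the paper's informal requirement that the question posed match the information the system carries (mismatched cases are deferred to \cite{Bittner:VMQI}). The paper's one-line proof leaves this implicit; your write-up should state it as an explicit assumption rather than derive it from Eq.~\ref{eq:ORegNuVSORccFiveNu}.
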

\noindent From this theorem it follows that the statements in Eqs. \ref{eq:MeaningOfMereoQuantification} and \ref{eq:ORegNuVSORccFiveNu} are all true in quantized mereology.

\begin{table}
\begin{sideways}
\begin{minipage}{16.5cm}
$$\arraycolsep=0.1pc\begin{array}{c|c|ccccc|ccccc|c|c|c|c}
\txt{Q/A-\\pattern}& \multicolumn{6}{c|}{\text{combined questions in}} & \multicolumn{7}{c|}{\text{answer patterns}} & \multicolumn{2}{c}{\istates} \\
\text{type} & \txt{index\\notation} & \multicolumn{5}{c|}{\text{operator notation}} & \multicolumn{5}{c|}{\text{operator notation}} &  \txt{index\\notation} & \txt{$\QQ_{\i\j\k}?(r_1,r_2)_o$\\$\in$} & \txt{index\\notation} & \text{type} \\\hline
    & \QQ_{\i--}? & Q_i? &  & - &  & - & \a & & - & & - & \A^{\a--} & & \PL_{\i\,\j\,\k}^{\a--} & \\
    \QQ_\IdxOneBitPR? & \QQ_{-\j-}? & - &  & \Q_j? & & - & - & & \b && - & \A^{-\b-} & \YN & \PL_{\i\,\j\,\k}^{-\b-} & \ORccFive_\IdxOneBitPR\\
    & \QQ_{--\k}? & - & & - & & Q_k? & - && - && \c & \A^{--\c} & & \PL_{\i\,\j\,\k}^{--\c} \\\hline
    & \QQ_{\i\j-}? & Q_i? & \odot & Q_j? & & - & (\a & , & \b) & & - & \A^{\a\b-} && \PL_{\i\,\j\,\k}^{{\a\b}-}\\
     \QQ_\IdxTwoBitPR? & \QQ_{\i-\k}? & Q_i? & & \odot  &  & Q_k? & (\a &  & , & & \c) & \A^{\a-\c} &  \YN \times \YN & \PL_{\i\,\j\,\k}^{{\a}-{\c}} & \ORccFive_\IdxTwoBitPR\\
     & \QQ_{-\j\k}? & - & & Q_j? & \odot & Q_k? & - & & (\b & , & \c) & \A^{-\b\c} & & \PL_{\i\,\j\,\k}^{-{\b\c}} \\\hline
     \QQ_\IdxThreeBitPR? & \QQ_{\i\j\k}? & Q_i? & \odot & Q_j? & \odot & Q_k? & (\a & , & \b & , & \c) & \A^{\a\b\c} & \txt{$\YN \times \YN$\\$\times \YN$} & \PL_{\i\,\j\,\k}^{\a\b\c} & \ORccFive_\IdxThreeBitPR\\\hline
     & \QQ_{\overline{\i\j}-}? & (Q_i? & \oplus & Q_j?) & & - & (\a & \xor & \b) & & - & \A^{\overline{\a\b}-} & & \PL_{\i\,\j\,\k}^{\overline{\a\b}-}\\
     \QQ_\IdxOneBitET? & \QQ_{\overline{\i}-\overline{\k}}? & Q_i? & & \oplus & & Q_k? & (\a & & \xor & & \c) & \A^{\overline{\a}-\overline{\c}} & \YN & \PL_{\i\,\j\,\k}^{\overline{\a}-\overline{\c}} & \ORccFive_\IdxOneBitET\\
     & \QQ_{-\overline{\j\k}}? & - & & Q_j? & \oplus & \Q_k? & - & & (\b & \xor & \c) & \A^{-\overline{\b\c}} & & \PL_{\i\,\j\,\k}^{-\overline{\b\c}}\\\hline
   & \QQ_{\overline{\i\j}\k}? & (Q_i? & \oplus & Q_j?) & \odot & Q_k? & ((\a & \xor & \b) & , & \c) & \A^{\overline{\a\b}\c} &  & \PL_{\i\,\j\,\k}^{\overline{\a\b}\c}\\
   \QQ_\IdxTwoBitET? & \QQ_{\overline{\i}\j\overline{k}}? & (Q_i? & \oplus & Q_k?) & \odot & Q_j? & ((\a & \xor & \c) & , & \b) & \A^{\overline{\a}\b\overline{\c}} & \YN \times \YN &  \PL_{\i\,\j\,\k}^{\overline{\a}\b\overline{\c}} & \ORccFive_\IdxTwoBitET \\
   & \QQ_{\i\overline{\j\k}}? & Q_i? & \odot & (Q_j? & \oplus & Q_k?) & (\a & , & (\b & \xor & \c)) & \A^{\a\overline{\b\c}} & & \PL_{\i\,\j\,\k}^{\a\overline{\b\c}} \\\hline
    \QQ_\IdxThreeBitET? & \QQ_{\overline{\i\hat{\j}\k}}? & Q_i? & \oplus & Q_j? & \oplus & Q_k? & ((\a & \xor & \b) & , & (\b\xor\c)) & \A^{\overline{\a\b\c}} & \YN \times \YN & \PL_{\i\,\hat{\j}\,\k}^{\overline{\a\b\c}} & \ORccFive_\IdxThreeBitET \\  \hline
\QQ_\nu?    & \QQ_{\i\j\k}? & \multicolumn{5}{c|}{\Q_\delta? \bigcirc\, \Q_\epsilon?} & \multicolumn{5}{c|}{\a_\delta \Diamond\, \b_\epsilon} & \A^{\alpha\beta\gamma} & \YN^{\QQ_{\i\j\k}} & \RccFiveQA  & \ORccFive_\nu\\
     \multicolumn{16}{l}{\text{Symbols that are used in schematic expressions of the form}\ \lrceil{\QQ_{\i\j\k}?(r_1,r_2)_o=\A^{\alpha\beta\gamma}}=\RccFiveQA= \PL_\delta^{\a_\delta} {\ \cdot\ }  \PL_\epsilon^{\b_\epsilon}}\\
    \multicolumn{16}{l}{= \lrceil{(\Q_\delta? \bigcirc \Q_\epsilon?)(r_1,r_2)_o=(\a_\delta \Diamond\  \b_\epsilon)}\  \text{and which substitution instances occur in the respective columns.}}\\
\end{array}$$
\caption{\label{tab:defQuestionPattern}Definition of question/answer patterns (Q/A patterns) and expressions in index-based notation for combined questions and answers as well as their relation to maximum supporting \istates\ in index-based notation such that $\lrceil{\QQ_{\i\j\k}?(r_1,r_2)_o=\A^{\alpha\beta\gamma}}=\RccFiveQA$ if $\lrceil{\QQ_{\i\j\k}?(r_1,r_2)_o=\A^{\alpha\beta\gamma}}\neq\{\}$ for $(r_1,r_2)_o\in\OReg^2_\nu$ and $\nu\in\IdxSetX$. \cite{Bittner:VMQI} }
\end{minipage}
\end{sideways}
\end{table}

\section{Framework Synthesis and the QIS Principle}
\label{sec:Synthesis}

Having established the core components of quantized mereology in the preceding sections (from the 3-bit information cube and the partition lattice to the interaction and quantization maps) these elements can now be synthesized to present a complete, synoptic view of the framework. The relationships between these components, developed throughout this paper, are summarized in Figure \ref{fig:CoarseOverview}. This diagram illustrates the full quantization process, showing how the logic of vague relations is formally derived from, and relates back to, its classical mereological foundation. The entire formal structure is unified by the Quantized Information Semantics (QIS) principle, which is now presented in its complete context.

The series of downward pointing arrows on the left side of Figure \ref{fig:CoarseOverview} represents the classical mereology for crisp regions. This is mirrored on the right by the structure for potentially vague regions, which maps ordered pairs to qubits and their corresponding equivalence classes. The arrows bridging from left to right symbolize the quantization process, while those pointing from right to left denote projection operations that relate the quantized framework back to its classical counterpart.

\paragraph{(i) The Partition-Theoretic Representation of Qubits}
A foundational methodological choice of this framework, following \citet{Ellerman2022}, was the representation of qubit states not as vectors in a Hilbert space, but as cells in partitions of the set of classical bit patterns, $\ORccFive$. As was shown, the geometric symmetries of the 3-bit information cube and the demands of the QIS principle constrain the 4140 possible partitions of the 8-element set $\ORccFive$ down to the 15 mereologically meaningful partitions that form the carrier set of the non-distributive lattice shown in Figure \ref{fig:RccFivePartitionLattice}. The cells of these specific partitions constitute the set of all possible pure states, or \istates, for a three-qubit mereological system.

\paragraph{(ii) The Unifying Role of the QIS Principle}
The framework's structure is unified by the Quantized Information Semantics (QIS) principle, first stated in Eq. \ref{eq:MeaningOfMereoQuantification}. This principle establishes the formal equivalence between two distinct pathways for determining an information state: The \textit{direct semantic path}, where an answered question directly identifies a maximum supporting \istate\ within a classical support semantics. This was denoted by  $\lceil\QQ_{\i\j\k}?=\A^{\alpha\beta\gamma}\rceil$. The \textit{indirect quantization path}, denoted $(\overline{\lfloor\cdot\rfloor}\circ\overline{\rccfive}\circ\overline{\sharp})$, which constructs the very same \istate\ by formally quantizing the underlying classical mereological information obtained via an interaction.

As established in Section~\ref{sec:CompositionalityRoadmap}, the guaranteed equivalence of these paths is a direct consequence of the dual role played by the interaction map, $\sharp$, which serves as both the constructive engine for the indirect path and the definitional anchor for the direct path. The proof of this equivalence, culminating in Corollary \ref{corollary:QuantificationEqSupport}, confirms the internal coherence and explanatory power of the entire framework.

\paragraph{(iii) Classical Mereology as a Limiting Case}
Finally, the framework successfully demonstrates that classical mereological relations are preserved as a special, information-complete case. As shown in Section \ref{sec:QIS_cReg} and visualized on the left of Figure \ref{fig:MereologyVSinformation}, when the formalism is applied to crisp regions in $\cReg^2$, the system simplifies completely. The interaction map $\overline{\sharp}$ becomes the identity; the quantization map $\overline{\lfloor\cdot\rfloor}$ reduces to simple set formation; and the system operates exclusively on the most refined partition of the lattice, corresponding to the classical 3-bit states. This confirms that quantized mereology serves as a true generalization, extending rather than replacing the classical formalism.

 \begin{figure}
	$$
	\xymatrixcolsep{2pc}
	\xymatrix{
		\txt{crisp\\regions}	& (\Reg\setminus \{\emptyset\})^2 \ar@{<-}[d]_{\scriptstyle\id\times\id}\\
		&	\overline{\Reg^2} \ar[d]_{\txt{$\overline{\rccfive}$\\\tiny (Tab.~\ref{tab:rccFive})}} \ar@{}[rrd]|*+[F]{\txt{$\scriptstyle(\overline{\lfloor\cdot\rfloor}\circ\overline{\rccfive}\circ\overline{\sharp})=\lceil\QQ_{\i\j\k}?=\A^{\alpha\beta\gamma}\rceil$\\\tiny  Fig. \ref{fig:MereologyVSinformation}}}& &\OReg^2 \ar@{-->}@(ur,dr)[]^{\txt{$\natural$\\\tiny (B2)}}	\ar[d]^{\txt{$\scriptstyle \lceil\QQ_{\i\j\k}?=\A^{\alpha\beta\gamma}\rceil$\\\tiny (Tabs.  \ref{tab:OverviewSupportSemantics}, \ref{tab:CombinedQuestionsIformsIstatesCum},\\ \tiny\ref{tab:CombinedQuestionsIformsIstatesNonCum}, \ref{tab:CombinedQuestionsIformsIstatesThree}, \ref{tab:defQuestionPattern})}} \ar@{-->}[ll]^{\overline{\sharp}\ \txt{\tiny (Sec. \ref{sec:RegOneBitAndRegTwoBitAndRegThreeBit}, Eq. \ref{eq:InteractionConstraints})}} \ar@{-->}[llu]_{\txt{$\Jud$ \tiny (B3)}}& \txt{vague\\regions}\\
		\txt{$\ORccFive=(\TF_i)_{1\le i \le 3}$\\$(\ORccFive,\wedge,\vee)$\\dist. lattice\\\tiny (Tab.~\ref{tab:rccFive})}&	\overline{\ORccFive\times\TF} \ar[d]_{\txt{$\scriptstyle([\cdot]_\sim)$\\\tiny (Tab.~\ref{tab:rccFive})}} \ar[rr]^{\overline{\lfloor\cdot\rfloor}}_{\txt{\tiny (Tabs.  \ref{tab:sharpInducesIstate}, \ref{tab:Quantization})}}&&
{\scriptstyle \bigcup\Pi(\ORccFive)} \ar[d]^{\txt{$([\cdot]_{\scriptstyle\simNu})$\\\tiny (B2)}}
\ar@{-->}@(ur,dr)[]^{\txt{$\scriptstyle\lceil\cdot | \cdot\rceil$\\\tiny (B2)}}
		& \txt{$(\Pi(\ORccFive),\sqcap,\sqcup)$\\ non-distr.\\lattice {\tiny (Fig.\ref{fig:RccFivePartitionLattice})}}
		\\
		\txt{Information\\complete\\elem. syst.}& \Reg^2/_\sim  \ar[rr]_{\id\times\id} &  & {\displaystyle \OReg^2/}_{\scriptscriptstyle\simNu} \ar@{-->}@(ld,rd)[]_{\txt{$[\natural]$\\\tiny (B2)}} \ar@{-->}[lld]^{\txt{$[\Jud]$ \tiny (B3)}}& \txt{Information\\incomplete\\elem. syst.}\\
		\txt{Classical\\mereology}	& (\Reg\setminus \{\emptyset\})^2/_\sim \ar[u]_{\id\times\id} & &&  \txt{Quantized\\mereology}\\
	}
	$$
	\caption{\label{fig:CoarseOverview}Overview of the framework of quantized mereology as a diagram with pointers to the equations, tables and figures where the symbolic expressions that are used in the diagram are introduced. $A \to B$ signifies a map that takes members of the set $A$ to members of the set $B$. The over-lined symbols indicate different interpretations depending on whether ordered pairs of crisp entities/regions or ordered pairs of vague entities/regions are considered.  Dashed arrows signify maps that are subject to indeterminacy. A more detailed diagram can be found in Fig. \ref{fig:MereologyVSinformation}. Related work in \cite{Bittner:VMQI} and \cite{bittner:Hilbert} that is not reviewed here is signified as B2 and B3 respectively.}
\end{figure}

\section{Conclusion}
\label{sec:Conclusion}

This paper addressed the long-standing challenge of applying classical mereology to entities with vague or indeterminate boundaries. The central thesis was that mereological vagueness can be productively reframed as an objective, information-theoretic limit on a system, a perspective that necessitates a move from the classical bits of traditional formalisms to the richer structure of quantum bits, or qubits.

The framework of Quantized Mereology was constructed through a series of methodical steps. First, the eight classical relations of the $\rccFivePlus$ formalism were mapped to the vertices of a 3-bit information cube, establishing a geometric framework for mereological information. The key step was to map mereological vagueness and indeterminacy to  higher-dimensional geometric features of this cube: its edges, faces, and interior planes.  These features were formalized as cells in a non-distributive partition lattice, providing an accessible, set-theoretic representation of qubits that avoids the Hilbert spaces formalism of physics.

The internal coherence of this entire structure was illustrated through the Quantized Information Semantics (QIS) principle.  This principle establishes a formal equivalence between two distinct pathways: the intuitive, top-down direct semantic path, which derives an information state from the meaning of an answered question, and the formal, bottom-up indirect quantization path, which constructs the same state from the underlying classical bits.  The proof of this equivalence confirms that the framework is both formally coherent and semantically meaningful.

The resulting framework makes several key contributions. It provides a true generalization of classical mereology, which is recovered as a special, information-complete case where all three bits are determinate.  It offers a novel and precise language for describing different kinds of vagueness, distinguishing between simple superposition (orthogonal vagueness) and correlation (non-orthogonal vagueness, or entanglement). Finally, the methodology employed, quantizing a classical qualitative reasoning formalism via a partition-based approach, offers a promising heuristic for other domains in AI and Qualitative Reasoning that must grapple with the irreducible indeterminacy and vagueness of the real world.

Ultimately, this work presents a formal structure for vagueness based on abstract information forms (\iforms), using a specific ontology of regions as a powerful illustrative model. The choice of four distinct classes, regions that are crisp (\cReg), vague to two different degrees (\vReg, \zReg), and void of information (\eReg), is not arbitrary (cf. Remarks~\ref{remark:VoidOfInformation}~and~\ref{remark:CategoriesOfRegs}): the four sorts provide non-trivial examples for each of the framework's six information forms of positive capacity, ensure symmetric interaction behaviors, and account for the specific case of entities void of mereological features. Whether a smaller grounding would do is left open.

While this demonstrates the model's formal adequacy and internal consistency, it also highlights the distinction between formal machinery and \textit{philosophical grounding}. The four-part classification is stipulated axiomatically to satisfy the needs of the information-theoretic framework rather than being derived from an independent, pre-formal theory of vagueness. This leaves open several promising avenues for future research: What, if any, is the independent metaphysical criterion that distinguishes a \vReg\ from a \zReg? And is there a deeper reason to believe that objective vagueness is itself ``quantized'' into such discrete levels? Exploring alternative physical or metaphysical groundings that can be integrated into this information-theoretic framework will be crucial for clarifying the connection between the structure of reality and the information it can convey.

\section*{AI disclosure}

Large language models were used in the preparation of this paper. Google Gemini
2.5 Pro was used in the revision of the manuscript, following this methodology:
(1) the model was asked to perform a critical but helpful review of the author's
manuscript using its deep research tool, (2) the model was asked to develop a
detailed revision plan that addresses the problems raised in that review, (3)
for each item of the plan the model was asked to generate text snippets for
inclusion in the manuscript, (4) the generated text was edited by the author and
the manuscript was updated, and (5) the model was asked to criticize the edited
text. The last two steps were repeated until agreement was reached or the model
was overruled. In this way the structure of the paper, the argument of the
paper, and the presentation of the definitions and proofs were improved.

In the later revision stages, Anthropic Claude (Opus) and OpenAI Codex were used
to locate relevant results in the literature, to check the definitions,
derivations, tables, and cross-references of the manuscript against each other,
to simulate peer review, and to copy-edit prose. The findings of each model were
reviewed independently of the other model, and no finding was adopted before it
had been re-derived from the manuscript.

The author reviewed and edited all generated material, independently checked
every citation, definition, derivation, and claim, and takes full responsibility
for the content of this publication. The conversation logs can be obtained from
the author upon request.

\bibliography{tom_07_25_23_min}

\appendix

\section*{Appendix}

\begin{figure}[h]
$$
\xymatrixcolsep{1pc}\xymatrix{
&&&& \ORccFive_\IdxTwoBit  \ar@{<..}[llld]_\subset \ar@{<..}[d]_\subset \ar@{<..}[rd]^\subset& &\\
& \RccFiveTwoBit \ar@{<..}[ddr]_<<<<<\subset \ar@{<..}[dd]_<<<<<\subset \ar@{<..}[ddl]_<<<<<\subset & &  \txt{$\ORccFive_\IdxThreeBit=$\\$\RccFiveThreeBit$} \ar@{<..}[d]_<<<<<= & \RccFiveTwoBitXX \ar@{<..}[ddl]_<<<<<=& \RccFiveTwoBit \ar@{<..}[ddr]_<<<<<\subset \ar@{<..}[dd]_<<<<<\subset \ar@{<..}[ddl]_<<<<<\subset &\\
&  & &  \{\PL_{\i\, \j\, \k}^{\a\b\c}\} & &  &\\
\{\PL_{\i\ \j\ \k}^{\a\b-}\} \ar[urrr] & \{\PL_{\i\ \j\ \k}^{\a-\c}\} \ar[urr] & \{\PL_{\i\ \j\ \k}^{-\b\c}\} \ar[ur] & \{\PL_{\i\,\hat{\j}\, \k}^{\overline{\a\b\c}}\} \ar[u] & \{\PL_{\i\, \j\, \k}^{\overline{\a\b}\c}\} \ar[ul] & \{\PL_{\i\, \j\, \k}^{\overline{\a}\b\overline{\c}}\} \ar[ull] & \{\PL_{\i\, \j\, \k}^{\a\overline{\b\c}}\} \ar[ulll] \\
\{\PL_{\i\ \j\ \k}^{\a--}\} \ar[u] \ar[ur]  \ar[urrrrrr]  & \{\PL_{\i\ \j\ \k}^{-\b-}\} \ar[ul] \ar[ur]  \ar[urrrr] & \{\PL_{\i\ \j\ \k}^{--\c}\} \ar[ul] \ar[u] \ar[urr] & & \ar[ul] \{\PL_{\i\ \j\ \k}^{\overline{\a\b}-}\} \ar[u] & \ar[ull] \{\PL_{\i\ \j\ \k}^{\overline{\a}-\overline{\c}}\} \ar[u] & \ar[ulll] \{\PL_{\i\ \j\ \k}^{-\overline{\b\c}}\} \ar[u]\\
& \RccFiveOneBit \ar@{<..}[ur]_<<<<<\subset \ar@{<..}[u]_<<<<<\subset \ar@{<..}[ul]_<<<<<\subset \ar@{..>}[rrd]_\subset & & \ar[ulll] \ar[ull] \ar[ul] \{\PL_{\i\ \j\ \k}^{---}\} \ar[ur] \ar[urr] \ar[urrr] \ar@{..>}[r]_{=} & \RccFiveZeroBit & \RccFiveOneBitX \ar@{<..}[ur]_<<<<<\subset \ar@{<..}[u]_<<<<<\subset \ar@{<..}[ul]_<<<<<\subset  \ar@{..>}[lld]^\subset &\\
&&& \ORccFive_\IdxOneBit & & &\\
}$$
$$
   \begin{array}{c|c|c|c|c|c|c|c}
        \ORccFive_\nu  & \ORccFive_{\IdxOneBitPR} & \ORccFive_{\IdxOneBitET} & \ORccFive_{\IdxTwoBitPR} & \ORccFive_{\IdxThreeBitPR} & \ORccFive_{\IdxTwoBitET} & \ORccFive_{\IdxThreeBitET} & \RccFiveZeroBit\\  \hline
\txt{depicted in} & \txt{Fig. \ref{fig:PairsOfPlanes}(t)} & \txt{Fig. \ref{fig:PairsOfPlanes}(b)} & \txt{Fig. \ref{fig:RccFiveEdges}} & \txt{Fig.~\ref{fig:rccFive}(l)}  & \txt{Fig. \ref{fig:RccFiveEdgesX}} & \txt{Fig. \ref{fig:RccFiveEdgesXX}} & \txt{Fig.~\ref{fig:rccFive}(r)} \\
\end{array}$$
    \caption{\label{fig:RccFivePartitionLattice}The \ThreeBit\ partition lattice $(\Pi(\ORccFive),\sqcap,\sqcup,\{\PL_
    {\,\i\ \j\ \k}^{---}\},\{\PL_
    {\,\i\, \j\,\k}^{\a\b\c}\})$ of partitions formed by intersections and discrete unions of planar partitions of the \ThreeBit\ cube \cite{Bittner:VMQI}. Bold lines indicate the join operator $\sqcup$. The subset relations between partitions and \iform-indexed sets $\ORccFive_\nu$ with $\nu\in\IdxSetX$ are indicated by dotted lines. (See  Tab. \ref{tab:VectorsInIndexNotation} for definitions. An arbitrary but fixed assignment of the indexes \i\j\k\ with some permutation of members of $\IdxSet$ is assumed.)}

\end{figure}

\begin{table}
	\begin{minipage}{12cm}
	$$\arraycolsep=2pt \begin{array}{l|lc}
		    \text{`empty' region} & \emptyset,\ \text{where}\ (\Reg,\sqcap,\sqcup, \emptyset,\Reg)\ \text{is a complete lattice}  \\
    \text{crisp regions} & \cReg = \Reg \setminus \{\emptyset\}\\
    & \text{$\Reg$ contains enough regions for each of the $\rccFive$ relations}\\
   \text{enough regions} &  \qquad \text{to be non-empty}:\ \forall r \in \RccFive.\  \exists r_1,r_2 \in\Reg.\ \rccfive(r_1,r_2) = r\\
    \text{vague regions} &  \vReg \cup \zReg \cup \eReg,\ \vReg \neq \{\},\ \zReg \neq \{\},\ \eReg\neq\{\}. & (1)\\
    \text{regions} & \OReg =  \cReg \cup \vReg \cup \zReg \cup\eReg; &  \\
    & \xReg \cap \yReg = \{\}\  \text{for}\ \textsf{x}\neq\textsf{y}\in\{\textsf{c},\textsf{v},\textsf{z},\textsf{e}\} & \\
& \xyReg\equiv\xReg\times\yReg\subset\OReg^2\ \text{with}\ \textsf{x},\textsf{y}\in \{\textsf{c},\textsf{v},\textsf{z},\textsf{e}\} \\
    \hline
\txt{interaction\\map} & \parbox[t]{10cm}{The map $\sharp: \IdxSet \to \OO \to \OReg^2 \to \Reg^2$ provides a crisp proxy for a potentially vague pair. Its behavior is axiomatically constrained based on the types of constituent regions (\cReg, \vReg, \zReg, \eReg). If $r \in \vzReg \cup \zvReg \cup \vvReg\cup \zzReg$ then an additional map $(\sharp_o^{\times i}\, r)$ is defined for one index $i$, and two such maps if $r\in\vvReg$ (cf. Eqs.~\ref{eq:InteractionConstraints},\ref{eq:InteractionConstraintsXX} and Tab.~\ref{tab:sharpInducesIstate}). The notation $\sharp_o^\alpha:\OReg^2\to\Reg^2$ is used to abbreviate  $\sharp_o^i,\sharp_o^{\times i}$ with $\alpha \in \{\emptyset,1,2,\times \emptyset, \times 1, \times 2 \},\ i\in\IdxSet$.
 } & (2)\\
\hline
\text{elementary} &	Q? : \IdxSet \to \OO\times\OReg^2 \to \YN\ \ \equiv\ \ Q_i?: \OO\times\OReg^2 \to \YN\quad  \text{with}\ i\in\IdxSet \\
\text{questions} & \quad \text{where each elem. question $Q_i?$ has an associated proposition:}\\
&	\quad  Q_\emptyset(r_1,r_2)_o \equiv \pr_1(\sharp_o^\emptyset (r_1,r_2)) \meet \pr_2(\sharp_o^\emptyset (r_1,r_2)) \neq \emptyset\\
& \qquad = (\rccfive\,(\sharp_o^\emptyset(r_1,r_2))\, \emptyset);\\
&	\quad  Q_1(r_1,r_2)_o \equiv \pr_1(\sharp_o^1 (r_1,r_2)) \meet \pr_2(\sharp_o^1 (r_1,r_2)) = \pr_1(\sharp_o^1 (r_1,r_2))\\
& \qquad = (\rccfive\,(\sharp_o^1(r_1,r_2))\, 1); & (3)\\
&	\quad  Q_2 (r_1,r_2)_o \equiv \pr_1(\sharp_o^2 (r_1,r_2)) \meet \pr_2(\sharp_o^2 (r_1,r_2)) = \pr_2(\sharp_o^2 (r_1,r_2))\\
& \qquad  = (\rccfive\,(\sharp_o^2(r_1,r_2))\, 2)\quad \text{with}\ \pr_l(r_1,r_2)=r_l\ \text{for}\ l\in\{1,2\}. \\
&		Q_i?(r_1,r_2)_o= \left\{ \begin{array}{lll}\Y & \text{if} & \sharp_o^i(r_1,r_2) \models Q_i(r_1,r_2)_o\\
		 \N &   \text{if} & \sharp_o^i(r_1,r_2) \models \neg Q_i(r_1,r_2)_o\\ \end{array}\right. \\\hline

\text{truthsets} &	|\oQ | : \IdxSet \to \OO\times\OReg^2 \to \ORccFive_\IdxOneBitPR\ \text{such that}\\
&	|\oQ_i(r_1,r_2)_o|  = \{ \rccfive(s_1,s_2)\in\ORccFive \mid  \sharp_o^i(r_1,r_2) \models \oQ_i(r_1,r_2)_o\ \text{and} & (4) \\
&	 \qquad (s_1,s_2) \models \oQ_i(s_1,s_2)_o\ \text{and}\ (s_1,s_2)\in\Reg^2 \}\  \text{with}\ \oQ_i \in \{Q_i,\neg Q_i\} &  \\
& \text{if}\ |Q_i(r_1,r_2)_o| \neq \{\}\ \text{then}\  |Q_i(r_1,r_2)_o| = \PL_{\,\i}^\T \\
&  \text{if}\ |\neg Q_i(r_1,r_2)_o| \neq \{\}\ \text{then}\  |\neg Q_i(r_1,r_2)_o| = \PL_{\,\i}^\F. \\
 \hline
\text{support} &  \is \Models (Q_i?(r_1,r_2)_o=\a)  \equiv \text{if}\ \a=\Y\ \text{then}\ \is \subseteq |Q_i(r_1,r_2)_o|\\
& \qquad\qquad\text{else}\ \is \subseteq |\neg Q_i(r_1,r_2)_o|	 & (5)\\
\text{from \istates} & \quad \text{such that}\ \is\subseteq \PL_\i^\a \ \text{with the one-one correspondence}\ \Y \leftrightarrow \T,\ \N \leftrightarrow \F \\
\hline
 &  \lrceil{Q?} : \IdxSet \to \OO\times\OReg^2 \to \YN\to (\ORccFive_{\IdxOneBitPR} \cup \{\{\}\})\\
\text{max. \istates}& \lrceil{Q_i?(r_1,r_2)_o=\a} = \bigcup\{ \is \subseteq\ORccFive \mid  \ \is \Models Q_i?(r_1,r_2)_o=\a \} &  (6) \\
&\text{if}\ \lrceil{Q_i? (r_1,r_2)_o=\a} \neq\{\}\ \text{then}\ \lrceil{Q_i? (r_1,r_2)_o=\a} = \PL_\i^\a \\
\end{array}$$
	\end{minipage}
	\caption{\label{tab:OverviewSupportSemanticsTwo}Overview of elementary questions and the associated support semantics. (Adapted from \citet{bittner:InformationMereologyAndVagueness,Bittner:VMQI}) and the support semantics of \citet{Ciardelli2018}).}
\end{table}

\def\QOneOne{{\tiny\xymatrixcolsep{1.5pc}\xymatrixrowsep{1.5pc}\xymatrix@!0{
			& \bot \ar@{..}[rr]\ar@{..}'[d][dd]
			& & \EQ \ar@{..}[dd]
			\\
			\bot \ar@{..}[ur]\ar@{..}[rr]\ar@{..}[dd] 
			& & \PPi \ar@{-}[ur]\ar@{..}[dd] 
			\\
			& \bot \ar@{..}'[r][rr]
			& & \PP
			\\
			\DR\ar@{..}[rr]\ar@{..}[ur]
			& & \PO \ar@{..}[ur]
}}}

\def\QOneTwo{{\tiny\xymatrixcolsep{1.5pc}\xymatrixrowsep{1.5pc}\xymatrix@!0{
			& \bot \ar@{..}[rr]\ar@{..}'[d][dd]
			& & \EQ \ar@{..}[dd]
			\\
			\bot \ar@{..}[ur]\ar@{..}[rr]\ar@{..}[dd] 
			& & \PPi \ar@{..}[ur]\ar@{..}[dd] 
			\\
			& \bot \ar@{..}'[r][rr]
			& & \PP
			\\
			\DR\ar@{..}[rr]\ar@{..}[ur]
			& & \PO \ar@{-}[ur]
}}}

\def\QOneThree{{\tiny\xymatrixcolsep{1.5pc}\xymatrixrowsep{1.5pc}\xymatrix@!0{
			& \bot \ar@{..}[rr]\ar@{..}'[d][dd]
			& & \EQ \ar@{..}[dd]
			\\
			\bot \ar@{-}[ur]\ar@{..}[rr]\ar@{..}[dd] 
			& & \PPi \ar@{..}[ur]\ar@{..}[dd] 
			\\
			& \bot \ar@{..}'[r][rr]
			& & \PP
			\\
			\DR\ar@{..}[rr]\ar@{..}[ur]
			& & \PO \ar@{..}[ur]
}}}

\def\QOneFour{{\tiny\xymatrixcolsep{1.5pc}\xymatrixrowsep{1.5pc}\xymatrix@!0{
			& \bot \ar@{..}[rr]\ar@{..}'[d][dd]
			& & \EQ \ar@{..}[dd]
			\\
			\bot \ar@{..}[ur]\ar@{..}[rr]\ar@{..}[dd] 
			& & \PPi \ar@{..}[ur]\ar@{..}[dd] 
			\\
			& \bot \ar@{..}'[r][rr]
			& & \PP
			\\
			\DR\ar@{..}[rr]\ar@{-}[ur]
			& & \PO \ar@{..}[ur]
}}}

\def\QZeroOne{{\tiny\xymatrixcolsep{1.5pc}\xymatrixrowsep{1.5pc}\xymatrix@!0{
			& \bot \ar@{..}[rr]\ar@{..}'[d][dd]
			& & \EQ \ar@{-}[dd]
			\\
			\bot \ar@{..}[ur]\ar@{..}[rr]\ar@{..}[dd] 
			& & \PPi \ar@{..}[ur]\ar@{..}[dd] 
			\\
			& \bot \ar@{..}'[r][rr]
			& & \PP
			\\
			\DR\ar@{..}[rr]\ar@{..}[ur]
			& & \PO \ar@{..}[ur]
}}}

\def\QZeroTwo{{\tiny\xymatrixcolsep{1.5pc}\xymatrixrowsep{1.5pc}\xymatrix@!0{
			& \bot \ar@{..}[rr]\ar@{..}'[d][dd]
			& & \EQ \ar@{..}[dd]
			\\
			\bot \ar@{..}[ur]\ar@{..}[rr]\ar@{..}[dd] 
			& & \PPi \ar@{..}[ur]\ar@{-}[dd] 
			\\
			& \bot \ar@{..}'[r][rr]
			& & \PP
			\\
			\DR\ar@{..}[rr]\ar@{..}[ur]
			& & \PO \ar@{..}[ur]
}}}

\def\QZeroThree{{\tiny\xymatrixcolsep{1.5pc}\xymatrixrowsep{1.5pc}\xymatrix@!0{
			& \bot \ar@{..}[rr]\ar@{-}'[d][dd]
			& & \EQ \ar@{..}[dd]
			\\
			\bot \ar@{..}[ur]\ar@{..}[rr]\ar@{..}[dd] 
			& & \PPi \ar@{..}[ur]\ar@{..}[dd] 
			\\
			& \bot \ar@{..}'[r][rr]
			& & \PP
			\\
			\DR\ar@{..}[rr]\ar@{..}[ur]
			& & \PO \ar@{..}[ur]
}}}

\def\QZeroFour{{\tiny\xymatrixcolsep{1.5pc}\xymatrixrowsep{1.5pc}\xymatrix@!0{
			& \bot \ar@{..}[rr]\ar@{..}'[d][dd]
			& & \EQ \ar@{..}[dd]
			\\
			\bot \ar@{..}[ur]\ar@{..}[rr]\ar@{-}[dd] 
			& & \PPi \ar@{..}[ur]\ar@{..}[dd] 
			\\
			& \bot \ar@{..}'[r][rr]
			& & \PP
			\\
			\DR\ar@{..}[rr]\ar@{..}[ur]
			& & \PO \ar@{..}[ur]
}}}

\def\QThreeOne{{\tiny\xymatrixcolsep{1.5pc}\xymatrixrowsep{1.5pc}\xymatrix@!0{
			& \bot \ar@{-}[rr]\ar@{..}'[d][dd]
			& & \EQ \ar@{..}[dd]
			\\
			\bot \ar@{..}[ur]\ar@{..}[rr]\ar@{..}[dd] 
			& & \PPi \ar@{..}[ur]\ar@{..}[dd] 
			\\
			& \bot \ar@{..}'[r][rr]
			& & \PP
			\\
			\DR\ar@{..}[rr]\ar@{..}[ur]
			& & \PO \ar@{..}[ur]
}}}

\def\QThreeTwo{{\tiny\xymatrixcolsep{1.5pc}\xymatrixrowsep{1.5pc}\xymatrix@!0{
			& \bot \ar@{..}[rr]\ar@{..}'[d][dd]
			& & \EQ \ar@{..}[dd]
			\\
			\bot \ar@{..}[ur]\ar@{..}[rr]\ar@{..}[dd] 
			& & \PPi \ar@{..}[ur]\ar@{..}[dd] 
			\\
			& \bot \ar@{-}'[r][rr]
			& & \PP
			\\
			\DR\ar@{..}[rr]\ar@{..}[ur]
			& & \PO \ar@{..}[ur]
}}}

\def\QThreeThree{{\tiny\xymatrixcolsep{1.5pc}\xymatrixrowsep{1.5pc}\xymatrix@!0{
			& \bot \ar@{..}[rr]\ar@{..}'[d][dd]
			& & \EQ \ar@{..}[dd]
			\\
			\bot \ar@{..}[ur]\ar@{-}[rr]\ar@{..}[dd] 
			& & \PPi \ar@{..}[ur]\ar@{..}[dd] 
			\\
			& \bot \ar@{..}'[r][rr]
			& & \PP
			\\
			\DR\ar@{..}[rr]\ar@{..}[ur]
			& & \PO \ar@{..}[ur]
}}}

\def\QThreeFour{{\tiny\xymatrixcolsep{1.5pc}\xymatrixrowsep{1.5pc}\xymatrix@!0{
			& \bot \ar@{..}[rr]\ar@{..}'[d][dd]
			& & \EQ \ar@{..}[dd]
			\\
			\bot \ar@{..}[ur]\ar@{..}[rr]\ar@{..}[dd] 
			& & \PPi \ar@{..}[ur]\ar@{..}[dd] 
			\\
			& \bot \ar@{..}'[r][rr]
			& & \PP
			\\
			\DR\ar@{-}[rr]\ar@{..}[ur]
			& & \PO \ar@{..}[ur]
}}}

 \begin{figure}\small
 	\centering
 	\small$$\arraycolsep=0.6pt\begin{array}{c|cccc}
 		& \multicolumn{4}{c}{\PL_\i^\a \cap \PL_\j^\b\Models(Q_i?\odot Q_j?)(r_1,r_2)=(\a,\b)} \\
 		\a\b & \Y\Y & \Y\N & \N\Y & \N\N\\ \hline
   \{\tensor*{\PL}{^\a_\emptyset^\b_1^-_2}\} & \QZeroOne & \QZeroTwo & \QZeroThree & \QZeroFour\\
 		& \PL_{\emptyset 1}^{\T\T}=\PL_\emptyset^\T\cap\PL_1^\T & \PL_{\emptyset 1}^{\T\F}=\PL_\emptyset^\T\cap\PL_1^\F & \PL_{\emptyset 1}^{\F\T}=\PL_\emptyset^\F\cap\PL_1^\T & \PL_{\emptyset 1}^{\F\F}=\PL_\emptyset^\F\cap\\PL_1^\F  \\
 		\{\tensor*{\PL}{^\a_\emptyset^-_1^\b_2}\} &  \QOneOne & \QOneTwo& \QOneThree& \QOneFour\\
 		& \PL_{\emptyset 2}^{\T\T}=\PL_\emptyset^\T\cap\PL_2^\T& \PL_{\emptyset 2}^{\T\F}=\PL_\emptyset^\T\cap\PL_2^\F & \PL_{\emptyset 2}^{\F\T}=\PL_\emptyset^\F\cap\PL_2^\T & \PL_{\emptyset 2}^{\F\F}=\PL_\emptyset^\F\cap\PL_2^\F \\
 		\{\tensor*{\PL}{_{\emptyset 1 2}^{-\a\b}}\} & \QThreeOne & \QThreeTwo & \QThreeThree& \QThreeFour\\
 		& \PL_{12}^{\T\T}=\PL_1^\T\cap\PL_2^\T &  \PL_{12}^{\T\F}=\PL_1^\T\cap\PL_2^\F & \PL_{12}^{\F\T}=\PL_1^\F\cap\PL_2^\T & \PL_{12}^{\F\F}=\PL_1^\F\cap\PL_2^\F\\
 	\end{array}$$
 	\caption{\label{fig:RccFiveEdges}Geometric illustration (in the $\emptyset12$ base) of maximum information states of the form  $\PL_\i^\a \cap \PL_\j^\b$ that  support answers to questions of the form $(Q_i?\odot Q_j?) (r_1,r_2)=(\a,\b)$. Supporting maximum information states are signified as solid edges in the \rccFive\ cube \cite{bittner:InformationMereologyAndVagueness}. The rows of the table form partitions of the \rccFive\ cube, and the three partitions form the family $ \RccFiveTwoBitFms$ of partitions formed by edges of the \rccFive\ cube.}
 \end{figure}

\def\QXOneOne{{\tiny\xymatrixcolsep{1.5pc}\xymatrixrowsep{1.5pc}\xymatrix@!0{
			& \bot \ar@{..}[rr]\ar@{..}'[d][dd]
			& & \EQ \ar@{..}[dd]
			\\
			\bot \ar@{..}[ur]\ar@{..}[rr]\ar@{..}[dd] 
			& & \PPi \ar@{..}[ur]\ar@{..}[dd] \ar@{-}[ld]
			\\
			& \bot \ar@{..}'[r][rr]
			& & \PP
			\\
			\DR\ar@{..}[rr]\ar@{..}[ur]
			& & \PO \ar@{..}[ur]
}}}

\def\QXOneTwo{{\tiny\xymatrixcolsep{1.5pc}\xymatrixrowsep{1.5pc}\xymatrix@!0{
			& \bot \ar@{..}[rr]\ar@{..}'[d][dd]
			& & \EQ \ar@{..}[dd]
			\\
			\bot \ar@{..}[ur]\ar@{..}[rr]\ar@{..}[dd] \ar@{-}[drrr]
			& & \PPi \ar@{..}[ur]\ar@{..}[dd] 
			\\
			& \bot \ar@{..}'[r][rr]
			& & \PP
			\\
			\DR\ar@{..}[rr]\ar@{..}[ur]
			& & \PO \ar@{..}[ur]
}}}

\def\QXOneThree{{\tiny\xymatrixcolsep{1.5pc}\xymatrixrowsep{1.5pc}\xymatrix@!0{
			& \bot \ar@{..}[rr]\ar@{..}'[d][dd]
			& & \EQ \ar@{..}[dd]
			\\
			\bot \ar@{..}[ur]\ar@{..}[rr]\ar@{..}[dd] 
			& & \PPi \ar@{..}[ur]\ar@{..}[dd] 
			\\
			& \bot \ar@{..}'[r][rr]
			& & \PP
			\\
			\DR\ar@{..}[rr]\ar@{..}[ur]
			& & \PO \ar@{..}[ur]\ar@{-}[uuul]
}}}

\def\QXOneFour{{\tiny\xymatrixcolsep{1.5pc}\xymatrixrowsep{1.5pc}\xymatrix@!0{
			& \bot \ar@{..}[rr]\ar@{..}'[d][dd]
			& & \EQ \ar@{..}[dd]
			\\
			\bot \ar@{..}[ur]\ar@{..}[rr]\ar@{..}[dd] 
			& & \PPi \ar@{..}[ur]\ar@{..}[dd] 
			\\
			& \bot \ar@{..}'[r][rr]
			& & \PP
			\\
			\DR\ar@{..}[rr]\ar@{..}[ur]\ar@{-}[rrruuu]
			& & \PO \ar@{..}[ur]
}}}

\def\QXTwoOne{{\tiny\xymatrixcolsep{1.5pc}\xymatrixrowsep{1.5pc}\xymatrix@!0{
			& \bot \ar@{..}[rr]\ar@{..}'[d][dd]
			& & \EQ \ar@{..}[dd]
			\\
			\bot \ar@{..}[ur]\ar@{..}[rr]\ar@{..}[dd] 
			& & \PPi \ar@{..}[ur]\ar@{..}[dd] 
			\\
			& \bot \ar@{..}'[r][rr]
			& & \PP \ar@{-}[lllu]
			\\
			\DR\ar@{..}[rr]\ar@{..}[ur]
			& & \PO \ar@{..}[ur]
}}}

\def\QXTwoTwo{{\tiny\xymatrixcolsep{1.5pc}\xymatrixrowsep{1.5pc}\xymatrix@!0{
			& \bot \ar@{..}[rr]\ar@{..}'[d][dd]
			& & \EQ \ar@{..}[dd]
			\\
			\bot \ar@{..}[ur]\ar@{..}[rr]\ar@{..}[dd] 
			& & \PPi \ar@{..}[ur]\ar@{..}[dd] \ar@{-}[dl]
			\\
			& \bot \ar@{..}'[r][rr]
			& & \PP
			\\
			\DR\ar@{..}[rr]\ar@{..}[ur]
			& & \PO \ar@{..}[ur]
}}}

\def\QXTwoThree{{\tiny\xymatrixcolsep{1.5pc}\xymatrixrowsep{1.5pc}\xymatrix@!0{
			& \bot \ar@{..}[rr]\ar@{..}'[d][dd]
			& & \EQ \ar@{..}[dd]
			\\
			\bot \ar@{..}[ur]\ar@{..}[rr]\ar@{..}[dd] 
			& & \PPi \ar@{..}[ur]\ar@{..}[dd] 
			\\
			& \bot \ar@{..}'[r][rr]
			& & \PP
			\\
			\DR\ar@{..}[rr]\ar@{..}[ur]
			& & \PO \ar@{..}[ur] \ar@{-}[uuul]
}}}

\def\QXTwoFour{{\tiny\xymatrixcolsep{1.5pc}\xymatrixrowsep{1.5pc}\xymatrix@!0{
			& \bot \ar@{..}[rr]\ar@{..}'[d][dd]
			& & \EQ \ar@{..}[dd]
			\\
			\bot \ar@{..}[ur]\ar@{..}[rr]\ar@{..}[dd] 
			& & \PPi \ar@{..}[ur]\ar@{..}[dd] 
			\\
			& \bot \ar@{..}'[r][rr]
			& & \PP
			\\
			\DR\ar@{..}[rr]\ar@{..}[ur]\ar@{-}[rrruuu]
			& & \PO \ar@{..}[ur]
}}}

\def\QXThreeOne{{\tiny\xymatrixcolsep{1.5pc}\xymatrixrowsep{1.5pc}\xymatrix@!0{
			& \bot \ar@{..}[rr]\ar@{..}'[d][dd]\ar@{-}[rd]
			& & \EQ \ar@{..}[dd]
			\\
			\bot \ar@{..}[ur]\ar@{..}[rr]\ar@{..}[dd] 
			& & \PPi \ar@{..}[ur]\ar@{..}[dd] 
			\\
			& \bot \ar@{..}'[r][rr]
			& & \PP
			\\
			\DR\ar@{..}[rr]\ar@{..}[ur]
			& & \PO \ar@{..}[ur]
}}}

\def\QXThreeTwo{{\tiny\xymatrixcolsep{1.5pc}\xymatrixrowsep{1.5pc}\xymatrix@!0{
			& \bot \ar@{..}[rr]\ar@{..}'[d][dd]
			& & \EQ \ar@{..}[dd]
			\\
			\bot \ar@{..}[ur]\ar@{..}[rr]\ar@{..}[dd] 
			& & \PPi \ar@{..}[ur]\ar@{..}[dd] 
			\\
			& \bot \ar@{..}'[r][rr]
			& & \PP
			\\
			\DR\ar@{..}[rr]\ar@{..}[ur]
			& & \PO \ar@{..}[ur] \ar@{-}[ul]
}}}

\def\QXThreeThree{{\tiny\xymatrixcolsep{1.5pc}\xymatrixrowsep{1.5pc}\xymatrix@!0{
			& \bot \ar@{..}[rr]\ar@{..}'[d][dd]
			& & \EQ \ar@{..}[dd]\ar@{-}[dlll]
			\\
			\bot \ar@{..}[ur]\ar@{..}[rr]\ar@{..}[dd] 
			& & \PPi \ar@{..}[ur]\ar@{..}[dd] 
			\\
			& \bot \ar@{..}'[r][rr]
			& & \PP
			\\
			\DR\ar@{..}[rr]\ar@{..}[ur]
			& & \PO \ar@{..}[ur]
}}}

\def\QXThreeFour{{\tiny\xymatrixcolsep{1.5pc}\xymatrixrowsep{1.5pc}\xymatrix@!0{
			& \bot \ar@{..}[rr]\ar@{..}'[d][dd]
			& & \EQ \ar@{..}[dd]
			\\
			\bot \ar@{..}[ur]\ar@{..}[rr]\ar@{..}[dd] 
			& & \PPi \ar@{..}[ur]\ar@{..}[dd] 
			\\
			& \bot \ar@{..}'[r][rr]
			& & \PP
			\\
			\DR\ar@{..}[rr]\ar@{..}[ur]\ar@{-}[urrr]
			& & \PO \ar@{..}[ur]
}}}

\def\QXFourOne{{\tiny\xymatrixcolsep{1.5pc}\xymatrixrowsep{1.5pc}\xymatrix@!0{
			& \bot \ar@{..}[rr]\ar@{..}'[d][dd]
			& & \EQ \ar@{..}[dd]
			\\
			\bot \ar@{..}[ur]\ar@{..}[rr]\ar@{..}[dd] 
			& & \PPi \ar@{..}[ur]\ar@{..}[dd] 
			\\
			& \bot \ar@{..}'[r][rr]
			& & \PP\ar@{-}[lluu]
			\\
			\DR\ar@{..}[rr]\ar@{..}[ur]
			& & \PO \ar@{..}[ur] 
}}}

\def\QXFourTwo{{\tiny\xymatrixcolsep{1.5pc}\xymatrixrowsep{1.5pc}\xymatrix@!0{
			& \bot \ar@{..}[rr]\ar@{..}'[d][dd]
			& & \EQ \ar@{..}[dd] \ar@{-}[lldd]
			\\
			\bot \ar@{..}[ur]\ar@{..}[rr]\ar@{..}[dd] 
			& & \PPi \ar@{..}[ur]\ar@{..}[dd] 
			\\
			& \bot \ar@{..}'[r][rr]
			& & \PP
			\\
			\DR\ar@{..}[rr]\ar@{..}[ur]
			& & \PO \ar@{..}[ur]
}}}

\def\QXFourThree{{\tiny\xymatrixcolsep{1.5pc}\xymatrixrowsep{1.5pc}\xymatrix@!0{
			& \bot \ar@{..}[rr]\ar@{..}'[d][dd]
			& & \EQ \ar@{..}[dd]
			\\
			\bot \ar@{..}[ur]\ar@{..}[rr]\ar@{..}[dd] 
			& & \PPi \ar@{..}[ur]\ar@{..}[dd] 
			\\
			& \bot \ar@{..}'[r][rr]
			& & \PP
			\\
			\DR\ar@{..}[rr]\ar@{..}[ur]
			& & \PO \ar@{..}[ur]\ar@{-}[lluu]
}}}

\def\QXFourFour{{\tiny\xymatrixcolsep{1.5pc}\xymatrixrowsep{1.5pc}\xymatrix@!0{
			& \bot \ar@{..}[rr]\ar@{..}'[d][dd]
			& & \EQ \ar@{..}[dd]
			\\
			\bot \ar@{..}[ur]\ar@{..}[rr]\ar@{..}[dd] 
			& & \PPi \ar@{..}[ur]\ar@{..}[dd] 
			\\
			& \bot \ar@{..}'[r][rr]
			& & \PP
			\\
			\DR\ar@{..}[rr]\ar@{-}[rruu]\ar@{..}[ur]
			& & \PO \ar@{..}[ur]
}}}

\def\QXFiveOne{{\tiny\xymatrixcolsep{1.5pc}\xymatrixrowsep{1.5pc}\xymatrix@!0{
			& \bot \ar@{..}[rr]\ar@{..}'[d][dd]
			& & \EQ \ar@{..}[dd]
			\\
			\bot \ar@{..}[ur]\ar@{..}[rr]\ar@{..}[dd] 
			& & \PPi \ar@{..}[ur]\ar@{..}[dd] 
			\\
			& \bot \ar@{..}'[r][rr]
			& & \PP\ar@{-}[lu]
			\\
			\DR\ar@{..}[rr]\ar@{..}[ur]
			& & \PO \ar@{..}[ur] 
}}}

\def\QXFiveTwo{{\tiny\xymatrixcolsep{1.5pc}\xymatrixrowsep{1.5pc}\xymatrix@!0{
			& \bot \ar@{..}[rr]\ar@{..}'[d][dd]
			& & \EQ \ar@{..}[dd]
			\\
			\bot \ar@{..}[ur]\ar@{..}[rr]\ar@{..}[dd]  \ar@{-}[rd]
			& & \PPi \ar@{..}[ur]\ar@{..}[dd] 
			\\
			& \bot \ar@{..}'[r][rr]
			& & \PP
			\\
			\DR\ar@{..}[rr]\ar@{..}[ur]
			& & \PO \ar@{..}[ur]
}}}

\def\QXFiveThree{{\tiny\xymatrixcolsep{1.5pc}\xymatrixrowsep{1.5pc}\xymatrix@!0{
			& \bot \ar@{..}[rr]\ar@{..}'[d][dd]
			& & \EQ \ar@{..}[dd]
			\\
			\bot \ar@{..}[ur]\ar@{..}[rr]\ar@{..}[dd] 
			& & \PPi \ar@{..}[ur]\ar@{..}[dd] 
			\\
			& \bot \ar@{..}'[r][rr]
			& & \PP
			\\
			\DR\ar@{..}[rr]\ar@{..}[ur]
			& & \PO \ar@{..}[ur]\ar@{-}[ruuu]
}}}

\def\QXFiveFour{{\tiny\xymatrixcolsep{1.5pc}\xymatrixrowsep{1.5pc}\xymatrix@!0{
			& \bot \ar@{..}[rr]\ar@{..}'[d][dd]
			& & \EQ \ar@{..}[dd]
			\\
			\bot \ar@{..}[ur]\ar@{..}[rr]\ar@{..}[dd] 
			& & \PPi \ar@{..}[ur]\ar@{..}[dd] 
			\\
			& \bot \ar@{..}'[r][rr]
			& & \PP
			\\
			\DR\ar@{..}[rr]\ar@{..}[ur]\ar@{-}[uuur]
			& & \PO \ar@{..}[ur] 
}}}

\begin{figure}
\centering
	$$\begin{array}{c|cccc}
 		(\b_{\times i},\a)= & \multicolumn{4}{c}{\tensor*{\PL}{_{\times\i}^{\b_{\times\i}}} \cap \tensor*{\PL}{_{\i}^{\a}}\Models(Q_{\times i}?\odot Q_i?)(r_1,r_2)=(\b_{\times i},\a)} \\
 		(\b\xor\c,\a) & \N\Y & \Y\Y & \N\N & \Y\N\\ \hline
	\txt{\quad\\$i=1$\\$\{\PL_{\emptyset12}^{\overline{\b}\a\overline{\c}}\}$} & \QXFourTwo & \QXFourOne  & \QXFourFour & \QXFourThree \\
	& \tensor*{\PL}{_{\emptyset12}^{\overline{\T}\T\overline{\T}}} = \tensor*{\PL}{_{\emptyset12}^{\overline{\F}\T\overline{\F}}} & \tensor*{\PL}{_{\emptyset12}^{\overline{\T}\T\overline{\F}}} = \tensor*{\PL}{_{\emptyset12}^{\overline{\F}\T\overline{\T}}} &  \tensor*{\PL}{_{\emptyset12}^{\overline{\T}\F\overline{\T}}} = \tensor*{\PL}{_{\emptyset12}^{\overline{\F}\F\overline{\F}}} &  \tensor*{\PL}{_{\emptyset12}^{\overline{\T}\F\overline{\F}}} = \tensor*{\PL}{_{\emptyset12}^{\overline{\F}\F\overline{\T}}} \\
	\txt{\quad\\$i=2$\\$\{\PL_{\emptyset12}^{\overline{\b\c}\a}\}$} & \QXThreeThree &  \QXThreeOne& \QXThreeFour & \QXThreeTwo\\
	& \tensor*{\PL}{_{\emptyset12}^{\overline{\T\T}\T}} = \tensor*{\PL}{_{\emptyset12}^{\overline{\F\F}\T}}  &  \tensor*{\PL}{_{\emptyset12}^{\overline{\T\F}\T}} = \tensor*{\PL}{_{\emptyset12}^{\overline{\F\T}\T}} &  \tensor*{\PL}{_{\emptyset12}^{\overline{\T\T}\F}} = \tensor*{\PL}{_{\emptyset12}^{\overline{\F\F}\F}} &  \tensor*{\PL}{_{\emptyset12}^{\overline{\T\F}\F}} = \tensor*{\PL}{_{\emptyset12}^{\overline{\F\T}\F}}\\
	\txt{\quad\\$i=\emptyset$\\$\{\PL_{\emptyset12}^{\a\overline{\b\c}}\}$} & \QXFiveThree & \QXFiveOne &  \QXFiveFour &\QXFiveTwo\\
	& \tensor*{\PL}{_{\emptyset12}^{\T\overline{\T\T}}} = \tensor*{\PL}{_{\emptyset12}^{\T\overline{\F\F}}} & \tensor*{\PL}{_{\emptyset12}^{\T\overline{\T\F}}} = \tensor*{\PL}{_{\emptyset12}^{\T\overline{\F\T}}} & \tensor*{\PL}{_{\emptyset12}^{\F\overline{\T\T}}} = \tensor*{\PL}{_{\emptyset12}^{\F\overline{\F\F}}} &  \tensor*{\PL}{_{\emptyset12}^{\F\overline{\T\F}}} = \tensor*{\PL}{_{\emptyset12}^{\F\overline{\F\T}}}\\
	\end{array}$$

	\caption{\label{fig:RccFiveEdgesX}While the states in Figure~\ref{fig:RccFiveEdges} correspond to the cube's edges, these 2-bit entangled states correspond to diagonals across its faces. Interpretation  of    2-bit \istates\ as solid linear features across the $\parallel$planes of the \rccFive\ cube (displayed in the $\emptyset12$ base). The rows of the table form the partitions  $\{\tensor*{\PL}{_{\emptyset 12}^{\a\overline{\b\c}}}\}$, $\{\tensor*{\PL}{_{\emptyset 12}^{\overline{\b}\a\overline{\c}}}\}$ and   $\{\tensor*{\PL}{_{\emptyset 12}^{\overline{\b\c}}}\}$ of the \rccFive\ cube. The three partitions form the family $\RccFiveTwoBitFmsX$. \cite{Bittner:VMQI}}
\end{figure}

\begin{figure}
\centering

	$$\def\arraystretch{1.4}
	\begin{array}{c|cccc}
 	\txt{maximum\\supporting\\\istates}	& \multicolumn{4}{c}{ \left\{\begin{array}{c}\PL_{\i\, \hat{\j}\, \k}^{\overline{\a\b\c}}\\ \PL_{\,\i\ \quad\ \j}^{\a\xor\b}\cap\PL_{\,\j\ \quad\ \k}^{\b\xor\c} \\ (\PL_\i^\a\vartriangle\PL_\j^{\sim\b})\cap(\PL_\j^{\b}\vartriangle\PL_\k^{\sim\c}) \end{array} \right\} \Models
 	\left\{ \begin{array}{l}(Q_i?\oplus Q_j?\oplus Q_k?)(r_1,r_2)=\\
 	\qquad(\a\xor\b,\b \xor \c)\\
 	\end{array} \right\}
 	} \\\\
\txt{\\displayed\\in $\emptyset12$\\base}	  & \QXOneFour & \QXOneThree & \QXOneTwo & \QXOneOne\\
\DIA_j= & \{\bot_\ORccFive,\top_\ORccFive\} &  \{\bot_j^\downarrow,\top_j^\downarrow\}  & \{\bot_j^\uparrow, \top_j^\uparrow\}  & \{\bot_j,\top_j\}\\
 & = & = & = & = \\
 &\PL_{\times\k}^\F\cap\PL_{\times\i}^\F &\PL_{\times\k}^\T\cap\PL_{\times\i}^\F & \PL_{\times\k}^\F\cap\PL_{\times\i}^\T & \PL_{\times \k}^\T\cap\PL_{\times \i}^\T \\
 & = & = & = & =\\
 & \txt{$\PL_\i^\b \vartriangle\PL_\j^{\sim\b}$\\$ \cap$\\ $\PL_\j^{\b}\vartriangle\PL_\k^{\sim\b}$}&  \txt{$\PL_\i^{\sim\b} \vartriangle\PL_\j^{\sim\b}$\\$ \cap$\\ $\PL_\j^{\b}\vartriangle\PL_\k^{\sim\b}$} & \txt{$\PL_\i^\b \vartriangle\PL_\j^{\sim\b}$\\$ \cap$\\ $\PL_\j^{\b}\vartriangle\PL_\k^{\b}$}  & \txt{$\PL_\i^{\sim\b} \vartriangle\PL_\j^{\sim\b}$\\$ \cap$\\ $\PL_\j^{\b}\vartriangle\PL_\k^{\b}$} \\
& = & = & = & =\\
 \txt{index\\notation} & \PL_{\,\i\,\j\,\k}^{\overline{\F\F\F}}=\PL_{\,\i\,\j\,\k}^{\overline{\T\T\T}} & \PL_{\,\i\,\j\,\k}^{\overline{\T\F\F}}=\PL_{\,\i\,\j\,\k}^{\overline{\F\T\T}}  & \PL_{\,\i\,\j\,\k}^{\overline{\F\F\T}}=\PL_{\,\i\,\j\,\k}^{\overline{\T\T\F}} & \PL_{\,\i\,\j\,\k}^{\overline{\T\F\T}}=\PL_{\,\i\,\j\,\k}^{\overline{\F\T\F}} \\
 \\
 \txt{$(\a\xor\b,$\\$\ \b \xor \c)$} = &  \N\N & \Y\N & \N\Y  & \Y\Y\\
	\end{array}$$

	\caption{\label{fig:RccFiveEdgesXX}Interpretation  of  maximum supporting  2-bit \istates\ for answered questions of the form $(Q_i?\oplus Q_j?\oplus Q_k?)(r_1,r_2)=(\a\xor\b,\b \xor \c)$ with $(r_1,r_2)\in \OReg^2_\IdxThreeBitET$ as solid linear features throughout the interior of the \ThreeBit\ information cube (displayed in the $\emptyset12$ base with $\emptyset12=\i\j\k=\i\hat{\j}\k$).  (Adapted from  \citet{Bittner:VMQI}.)}
\end{figure}

\begin{table}
\begin{sideways}
\begin{minipage}{18cm}
$$ \def\arraystretch{1}
\begin{array}{ccc|ccc|cc}
\multicolumn{8}{c}{}\\
\multicolumn{6}{c|}{\text{for}\ (o,r)\in(\OReg^2_\IdxOneBitET\cup\OReg^2_\IdxTwoBitET)\ \text{and}\ \alpha \in \{\times k, k\}} & \multicolumn{2}{c}{\boxplus :}\\
\multicolumn{3}{c|}{(\overline{\rccfive}\circ\overline{\sharp})(r_o) \in ((\ORccFive\times\TF)_\alpha)_{\alpha\in\{\times k, k\}}} & \multicolumn{3}{c|}{(\boxtimes\circ\overline{\rccfive}\circ\overline{\sharp})(r_o)} & \multicolumn{2}{c}{\scriptstyle((\ORccFive_{\IdxOneBitPR}\cup\{\ORccFive\})_\alpha)_{\alpha\in\IdxSet} }\\
\multicolumn{2}{c}{(\llbracket\sharp^{\times k}_{o,r}\rrbracket,\sharpDet(\llbracket\sharp_{o,r}^{\times k}\rrbracket))_{\times k} }  & (\llbracket\sharp^k_{o,r}\rrbracket,\sharpDet(\llbracket\sharp_{o,r}^k\rrbracket))_k & \multicolumn{3}{c|}{\in ((\ORccFive_{\IdxOneBitPR}\cup\{\ORccFive\})_\alpha)_{\alpha\in\IdxSet}} & \multicolumn{2}{c}{\to \ORccFive_\nu}\\
\multicolumn{2}{c}{\in}  & \in & \alpha=i & \alpha=j & \alpha=k & \txt{\small(Tab \ref{tab:VectorsInIndexNotation})}  &  \ORccFive_\nu\\\hline
\{(\bot_\ORccFive,\T),(\bot_k,\T), & (\top_\ORccFive,\T), (\top_k,\T)\} & \{(\bot_\ORccFive,\F),(\top_\ORccFive,\F)\} & \PL_\i^\F & \PL_\j^\T & \PL_\k^\F\cup\PL_\k^\T & \vartriangle\cap\cup & \PL_{\,\i\,\j\,\k}^{\overline{\F\F}-} \\
\multicolumn{2}{c}{=\PL_{\times \k}^\F=\PL_\i^\a\vartriangle\PL_\j^{\sim\a}} & =  \PL_\k^\F\cup\PL_\k^\T& \PL_\i^\T & \PL_\j^\F & \PL_\k^\F\cup\PL_\k^\T & \vartriangle\cap\cup & \PL_{\,\i\,\j\,\k}^{\overline{\T\T}-}  \\
\{(\bot_k^j,\T),(\top_k^i,\T), & (\bot_k^i,\T), (\top_k^j,\T)\} & \{(\bot_\ORccFive,\F),(\top_\ORccFive,\F)\} & \PL_\i^\F & \PL_\j^\F & \PL_\k^\F\cup\PL_\k^\T & \vartriangle\cap\cup & \PL_{\,\i\,\j\,\k}^{\overline{\F\T}-} \\
\multicolumn{2}{c}{=\PL_{\times \k}^\T=\PL_\i^\a\vartriangle\PL_\j^{\a}} & =  \PL_\k^\F\cup\PL_\k^\T& \PL_\i^\T & \PL_\j^\T & \PL_\k^\F\cup\PL_\k^\T & \vartriangle\cap\cup & \PL_{\,\i\,\j\,\k}^{\overline{\T\F}-}  \\
\ldots & \ldots &\ldots &\ldots & \ldots & \ldots & \ldots & \ldots\\
\{(\bot_\ORccFive,\T),(\bot_k,\T), & (\top_\ORccFive,\T), (\top_k,\T)\} & \{(\bot_\ORccFive,\T)\} & \PL_\i^\F & \PL_\j^\T & \PL_\k^\F & \vartriangle\cap & \PL_{\,\i\,\j\,\k}^{\overline{\F\F}\F} \\
\multicolumn{2}{c}{=\PL_{\times \k}^\F=\PL_\i^\a\vartriangle\PL_\j^{\sim\a}} & =  \PL_\k^\F& \PL_\i^\T & \PL_\j^\F & \PL_\k^\F & \vartriangle\cap & \PL_{\,\i\,\j\,\k}^{\overline{\T\T}\F}  \\
\ldots & \ldots &\ldots &\ldots & \ldots & \ldots & \ldots & \ldots\\
\{(\bot_k^j,\T),(\top_k^i,\T), & (\bot_k^i,\T), (\top_k^j,\T)\} & \{(\top_\ORccFive,\T)\} & \PL_\i^\F & \PL_\j^\F & \PL_\k^\T & \vartriangle\cap & \PL_{\,\i\,\j\,\k}^{\overline{\F\T}\T} \\
\multicolumn{2}{c}{=\PL_{\times \k}^\T=\PL_\i^\a\vartriangle\PL_\j^{\a}} & =  \PL_\k^\T& \PL_\i^\T & \PL_\j^\T & \PL_\k^\T & \vartriangle\cap & \PL_{\,\i\,\j\,\k}^{\overline{\T\F}\T} \\
\ldots & \ldots &\ldots &\ldots & \ldots & \ldots & \ldots & \ldots\\
\multicolumn{8}{c}{\txt{notational convention: $i = \min (\IdxSet\setminus\{k\}), \IdxSet=\{i,j,k\}$} } \\
\end{array}$$
\caption{\label{tab:enumBoxPlusX}Partial enumeration of the composite map $(\boxplus\circ\boxtimes): ((\ORccFive \times \TF)_\alpha)_{\alpha\in\{\times k,k\}} \to \ORccFive_\nu$ of Fig. \ref{fig:MereologyVSinformation} with $\nu\in\{\IdxOneBitET,\IdxTwoBitET\}$. The first sub-map $\boxtimes$ takes indexed sets of the form $((\ORccFive\times\TF)_\alpha)_{\alpha\in\{\times k,k\}}$ to partition cells in the set $\ORccFive_{\IdxOneBitET}$ (expressed as disjunctive unions) and $\ORccFive_{\IdxOneBitPR}\cup\{\ORccFive\}$. The second sub-map $\boxplus$ takes partition cells from $\ORccFive_{\IdxOneBitET}$ (expressed as disjunctive unions) and $\ORccFive_{\IdxOneBitPR}\cup\{\ORccFive\}$ to partition cells in $\ORccFive_\nu$ using the projection operator $\cap$. The various patterns in the associated column identify the ways in which $\boxplus$ corresponds to the definitions of the index notation in Tab \ref{tab:VectorsInIndexNotation}.}
\end{minipage}\end{sideways}
\end{table}

\begin{table}[h]
    $$\begin{array}{r|l}
         & \odot : (\IdxSet\to \OReg^2_\nu \to \YN) \to (\IdxSet\to\OReg^2_\nu\to \YN) \to \YN^2\\
  \text{questions} & Q_i?(r_1,r_2)_o \odot Q_j?(r_1^\prime,r_2^\prime)_o = \left\{\begin{array}{l} \textsf{undefined}\quad \text{if}\ (r_1,r_2)_o \neq (r_1^\prime,r_2^\prime)_o\ \text{or}\ i = j\\
  (Q_i?(r_1,r_2)_o,Q_j?(r_1,r_2)_o)\quad \text{otherwise} \\
\end{array} \right.  \\
 & \text{abbrev:}\  ({Q_i?\odot Q_j?})(r_1,r_2)_o  \equiv  Q_i?(r_1,r_2)_o \odot Q_j?(r_1,r_2)_o\quad \text{when}\ \odot\ \text{is defined}.  \\
 \hline
\text{support} & 	s \Models ((Q_i?\odot Q_j?)(r_1,r_2)_o=(\a,\b))  \equiv s = \{\}\ \text{or}\ \exists  s_i,s_j.\ \{\} \subset s_i,s_j \subset \ORccFive\ \text{and}   \\
&	\qquad\qquad s_i \Models (Q_i?(r_1,r_2)_o=\a)\ \ \text{and}\ s_j \Models (Q_j?(r_1,r_2)_o=\b)  \ \text{and}\ s = s_i \cap s_j  \\\hline
\text{max} & \lrceil{(Q_i?\odot Q_j?)(r_1,r_2)_o=(\a,\b)} \equiv \bigcup \{ s\subseteq \ORccFive.\ s \Models (Q_i?\odot Q_j?)(r_1,r_2)_o=(\a,\b) \}\\
	\istates &	\text{if}\ \lrceil{(Q_i?\odot Q_j?)(r_1,r_2)_o=(\a,\b)} \neq\{\}\ \text{then} \\
	& \qquad \lrceil{(Q_i?\odot Q_j?)(r_1,r_2)_o=(\a,\b)}  = \PL_\i^\a \cap \PL_\j^\b = \tensor*{\PL}{_{\i\ \j}^{\a\b}} = \tensor*{\PL}{_{\i\ \j\ \k}^{\a\b-}}  \\  \hline
\text{partitions} &  \{\PL_{\,\i\, \j\, \k}^{\a\b-} \} \supseteq \{ \lrceil{Q_i?\odot Q_j?(r_1,r_2)_o=(\a,\b)} \neq \{\} \mid (r_1,r_2)_o\in\OReg^2_\IdxTwoBitPR, \a,\b\in\YN\}    \\

	   \end{array}$$
    \caption{\label{tab:CombinedQuestionsIformsIstatesCum}Cumulatively combined elementary questions and elementary \istates\ and the elementary systems that are in those \istates\ which arise from the following substitutions in Eq. \ref{eq:CompositionSchemata}\ : $\Q_\delta?\rightsquigarrow Q_i?$, $\a_\delta\rightsquigarrow\a$, $\Q_\epsilon?\rightsquigarrow Q_j?$, $\b_\epsilon\rightsquigarrow\b$, $\bigcirc\rightsquigarrow\odot$, $\Diamond\rightsquigarrow(\id,\id)$,  $\PL_\delta^{\a_\delta} \rightsquigarrow \PL_\i^\a$, $\PL_\epsilon^{\b_\epsilon}  \rightsquigarrow \PL_\j^\b$, $\cdot\rightsquigarrow\cap$. \cite{bittner:InformationMereologyAndVagueness,Bittner:VMQI}}
\end{table}

\begin{table}
    $$\begin{array}{r|l}
\text{questions} &     \begin{array}{l}
    \xor : \YN \to \YN \to \YN; \quad \a \xor \b = \left\{ \begin{array}{ll}
        \Y & \text{if}\ \a \neq \b\\
        \N & \text{if}\ \a = \b\\
    \end{array}\right.\\
    \oplus : (\IdxSet\to \OReg^2_\times\to \YN) \to (\IdxSet\to\OReg^2_\times\to \YN) \to \YN\\
        Q_i?(r_1,r_2)_o \oplus Q_j?(r_1^\prime,r_2^\prime)_o = \left\{\begin{array}{l} \textsf{undefined}\quad \text{if}\ (r_1,r_2)_o \neq (r_1^\prime,r_2^\prime)_o\ \text{or}\ i = j\\
  (Q_i?(\sharp_o^{\times k}(r_1,r_2)) \xor Q_j?(\sharp_o^{\times k}(r_1,r_2)))\\ \qquad \text{with $\{i,j,k\}=\IdxSet$ otherwise} \\
\end{array} \right.\\
    \end{array}
\\
&  \text{abbrev:}\   \begin{array}{rcl}
        	  ({Q_i?\oplus Q_j?})(r_1,r_2)_o  &\equiv & Q_i?(r_1,r_2)_o \oplus Q_j?(r_1^\prime,r_2^\prime)_o\\
	\qquad\qquad Q_{\times k}?(r_1,r_2)_o &\equiv & ({Q_i?\oplus Q_j?})(r_1,r_2)_o\ \text{with}\ \{i,j\} = \IdxSet\setminus \{k\}\ \\
    \end{array}\\\hline
\text{support} &
	s \Models ((Q_i?\oplus Q_j?)(r_1,r_2)_o=\a\xor\b))  \equiv s \subseteq  \PL_\i^\a \vartriangle \PL_\j^{\sim\b} = \PL_{\i\, \j\, \k}^{\overline{\a\b}-} \\\hline
	\begin{array}{c} \text{max}\\ \istates\end{array} &
	\begin{array}{l}
	\text{if}\ \lrceil{(Q_i?\oplus Q_j?)(r_1,r_2)_o= (\a\xor\b)} \neq \{\}\ \text{then}\\
	\qquad\qquad\qquad \lrceil{(Q_i?\oplus Q_j?)(r_1,r_2)_o= (\a\xor\b)}  =  \PL_\i^\a \vartriangle \PL_\j^{\sim\b} = \PL_{\i\, \j\, \k}^{\overline{\a\b}-}\\
\end{array} \\\hline
	\text{partitions} & \{\PL_{\i\ \j\ \k}^{\overline{\a\b}-}\}  \supseteq  \{ \lrceil{Q_i?\oplus Q_j?(r_1,r_2)_o=(\a\xor\b)} \neq \{\} \mid (r_1,r_2)_o\in\OReg^2_\IdxOneBitET, \\
	 & \qquad\qquad\qquad \qquad\qquad\qquad\qquad \qquad   \a,\b\in\YN\}  \\
	   \end{array}$$
    \caption{\label{tab:CombinedQuestionsIformsIstatesNonCum}Non-cumulatively combined elementary questions and elementary \istates\ and the elementary systems that are in those \istates\ which arise from the following substitutions in Eq. \ref{eq:CompositionSchemata}\ : $\Q_\delta?\rightsquigarrow Q_i?$, $\a_\delta\rightsquigarrow\a$, $\Q_\epsilon?\rightsquigarrow Q_j?$, $\b_\epsilon\rightsquigarrow\b$, $\bigcirc\rightsquigarrow\oplus$, $\diamond\rightsquigarrow\xor$,  $\PL_\delta^{\a_\delta} \rightsquigarrow \PL_\i^\a$, $\PL_\epsilon^{\b_\epsilon}  \rightsquigarrow \PL_\j^\b, \cdot\rightsquigarrow\vartriangle$. \cite{bittner:InformationMereologyAndVagueness,Bittner:VMQI}}
\end{table}

\begin{table}
    $$\begin{array}{r|l}
\text{questions} &  \begin{array}{rl}
    (\bigcirc\bigcirc) & ::  (\IdxSet \to \OReg^2_\nu \to \YN) \to (\IdxSet \to \OReg^2_\nu \to \YN) \to\\
    & \qquad\qquad\qquad\qquad\qquad \quad (\IdxSet \to \OReg^2_\nu \to \YN) \to \YN^n\\
 \text{where} &  \bigcirc \in \{\odot,\oplus\}\ \text{and}\ \left\{ {\arraycolsep=2pt \begin{array}{ll}\text{if}\ \bigcirc\bigcirc = \odot\odot\ \text{then} & n = 3\ \text{and}\ \nu = \IdxThreeBitPR\\
\text{if}\ \bigcirc\bigcirc = \oplus\oplus\ \text{then} &  n=2\ \text{and}\ \nu = \IdxThreeBitET\\
\text{otherwise} & n=2\ \text{and}\ \nu = \IdxTwoBitET\\
\end{array}} \right.\\
  (\bigcirc\bigcirc)& Q_i?(r_1,r_2)_o\ Q_j?(r_1^\prime,r_2^\prime)_o\ Q_k?(r_1^{\prime\prime},r_2^{\prime\prime})_o\ \text{ is undefined if}\\
& \qquad \neg((r_1,r_2)_o = (r_1^\prime,r_2^\prime)_o =  (r_1^{\prime\prime},r_2^{\prime\prime})_o) \
     \text{or}\ i = j\ \text{or}\ i = k\ \text{or}\ k = j\\
\text{abbrev:}\ & ({Q_i?\bigcirc Q_j?\bigcirc Q_k?})(r_o)  \equiv  Q_i?(r_o) \bigcirc Q_j?(r_o) \bigcirc Q_k?(r_o)\\
 & \qquad\qquad \text{whenever}\ (\bigcirc\bigcirc)\ \text{is defined, with}\ r = (r_1,r_2)_o
    \end{array}  \\

& \begin{array}{l}
 Q_i?\bigcirc Q_j?\bigcirc Q_k? (r_o)  = \left\{\arraycolsep=1pt\begin{array}{ll}
    (Q_i?\oplus Q_j?(r_o),Q_k?(r_o))&  \text{if}\ \bigcirc\bigcirc = \oplus\odot\\
 (Q_i?(r_o), Q_j?\oplus Q_k?(r_o)) & \text{if}\ \bigcirc\bigcirc = \odot\oplus\\
       (Q_i?\oplus Q_j?(r_o),
      \ Q_j? \oplus Q_k?(r_o)) & \text{if}\ \bigcirc\bigcirc = \oplus\oplus\\
    (Q_i?(r_o), Q_j?(r_o),Q_k?(r_o)) & \text{if}\ \bigcirc\bigcirc = \odot\odot\\
\end{array} \right. \\
    \end{array}  \\\hline
	& \oplus\odot:\ \Q_\delta?\rightsquigarrow (Q_i?\oplus Q_j?), \a_\delta\rightsquigarrow(\a\xor\b), \Q_\epsilon?\rightsquigarrow Q_k?, \b_\epsilon\rightsquigarrow\c, \bigcirc\rightsquigarrow\odot, \\
& \qquad \quad \diamond\rightsquigarrow(\id,\id), \PL_\delta^{\a_\delta} \rightsquigarrow \PL_{\,\i\,\j}^{\overline{\a\b}}, \PL_\epsilon^{\b_\epsilon}  \rightsquigarrow \PL_\k^\c, \cdot\rightsquigarrow\cap  \\
	& \odot\oplus:\ \Q_\delta?\rightsquigarrow Q_i?, \a_\delta\rightsquigarrow\a, \Q_\epsilon?\rightsquigarrow (Q_j?\oplus Q_k?), \b_\epsilon\rightsquigarrow(\b\xor\c), \bigcirc\rightsquigarrow\odot, \\
\txt{subst. in\\Eq. \ref{eq:CompositionSchemata}} & \qquad \quad \diamond\rightsquigarrow(\id,\id), \PL_\delta^{\a_\delta} \rightsquigarrow \PL_\i^\a, \PL_\epsilon^{\b_\epsilon}  \rightsquigarrow \PL_{\,\j\,\k}^{\overline{\b\c}}, \cdot\rightsquigarrow\cap \\
	& \oplus\oplus:\ \Q_\delta?\rightsquigarrow (Q_i?\oplus Q_j?), \a_\delta\rightsquigarrow(\a\xor\b), \Q_\epsilon?\rightsquigarrow (Q_j?\oplus Q_k?),  \\
& \qquad \quad \b_\epsilon\rightsquigarrow(\b\xor\c), \bigcirc\rightsquigarrow\odot, \diamond\rightsquigarrow(\id,\id), \PL_\delta^{\a_\delta} \rightsquigarrow \PL_{\,\i\, \j}^{\overline{\a\b}}, \PL_\epsilon^{\b_\epsilon}  \rightsquigarrow \PL_{ \j\, \k}^{\overline{\b \c}}, \cdot\rightsquigarrow\cap  \\
	& \odot\odot:\ \Q_\delta?\rightsquigarrow (Q_i?\odot Q_j?), \a_\delta\rightsquigarrow(\a,\b), \Q_\epsilon?\rightsquigarrow Q_k?, \b_\epsilon\rightsquigarrow\c, \bigcirc\rightsquigarrow\odot, \diamond\rightsquigarrow(\id,\id),\\
& \qquad \quad \PL_\delta^{\a_\delta} \rightsquigarrow \PL_{\,\i\,\j}^{\a\b}, \PL_\epsilon^{\b_\epsilon}  \rightsquigarrow \PL_\k^\c, \cdot\rightsquigarrow\cap  \\

	   \end{array}$$
    \caption{\label{tab:CombinedQuestionsIformsIstatesThree}Cumulatively and non-cumulatively combined three elementary questions, three elementary \istates\ and the elementary systems that are in those \istates. \cite{bittner:InformationMereologyAndVagueness,Bittner:VMQI}}
\end{table}

\begin{table}[h!]
\centering
\begin{tabular}{p{2cm} p{9.5cm} p{2.5cm}}
\hline
\textbf{Symbol} & \textbf{Informal Meaning} & \textbf{Formal Defs} \\ \hline \hline
\multicolumn{3}{l}{\textbf{Foundational Sets and Classes}} \\ \hline
$\cReg, \vReg$, \newline $\zReg$, $\eReg$ & The classification of regions as Crisp regions (\cReg) and Vague regions (\vReg, \zReg, \eReg\ in increasing degree of vagueness). & Sec.~\ref{sec:MereologicalRelations} \\
$\ORccFive$ & The set of 8 classical mereological relations (the vertices of the cube).  & Sec.~\ref{sec:RccFiveRelations} \\
$\OReg^2_\nu$ & A class of relational elementary systems (pairs of regions) with a specific information capacity type, $\nu$. & Sec.~\ref{sec:RegOneBitAndRegTwoBitAndRegThreeBit} \\
$\ORccFive_\nu$ & A class of $\istates$ (subsets of $\ORccFive$) with a specific information type, $\nu$.  & Sec.~\ref{sec:PartitionLattice} \\
\hline
\multicolumn{3}{l}{\textbf{Core Concepts}} \\ \hline
$\istate$ & An information state; a set of possible classical relations.  & Sec.~\ref{sec:SubspacesIstates} \\
$\iform$ & An information form; an axiomatic type that defines a system's information capacity.  & Sec.~\ref{sec:RegOneBitAndRegTwoBitAndRegThreeBit} \\
\hline
\multicolumn{3}{l}{\textbf{Index Notations}} \\ \hline
$\PL_{\i\j\k}^{\alpha\beta\gamma}$ & The index notation for an $\istate$, representing a cell in a partition. & Sec.~\ref{sec:PartitionLattice}, Tab.~\ref{tab:VectorsInIndexNotation} \\
$\QQ_{\i\j\k}?$ & The index notation for a (potentially complex) mereological question.  & Sec.~\ref{sec:CombiningQuestions}, Tab.~\ref{tab:defQuestionPattern} \\
$\A^{\alpha\beta\gamma}$ & The index notation for an answer pattern, corresponding to a $\QQ$ and a $\PL$. & Sec.~\ref{sec:CombiningQuestions}, Tab.~\ref{tab:defQuestionPattern} \\
\hline
\multicolumn{3}{l}{\textbf{Maps and Operators}} \\ \hline
$\rccfive$ & The classical map from a pair of crisp regions to its 3-bit pattern. & Sec.~\ref{sec:RccFiveRelations} \\
$\sharp, \sharp_o^i$ & The fundamental interaction map ($\sharp,\sharp^\times: \IdxSet \to \OO \to \OReg^2 \to \Reg^2$); $\sharp_o^i$ refers to its application for a single interaction indexed by $i$. & Sec.~\ref{sec:MereologicalRelations} \\
$\overline{\sharp}$ & A description of the output of the $\sharp$ map for all relevant indices; refers to the indexed family of outcomes, $(\sharp_o^\alpha)_{\alpha \in \IdxSetNu}$. & Sec.~\ref{sec:CompositionalityRoadmap} \\
$\sharp\sharp$ & Benchmark map used to formally define the determinacy of an interaction. & Sec.~\ref{sec:DeterminismAndInformation} \\
$\lceil \dots \rceil$ & The maximum supporting $\istate$ operator for the direct semantic path. & Sec.~\ref{sec:LinkingIstates} \\
$\overline{\lfloor\cdot\rfloor}$ & The quantization map for the indirect quantization path. & Sec.~\ref{sec:QuantizationMap} \\\hline
\multicolumn{3}{l}{\textbf{Combination Logic}} \\ \hline
$\odot$ & Cumulative combination of questions/information.  & Sec.~\ref{sec:CompositionalityRoadmap} \\
$\cap$ & Set intersection; the formal operator for cumulative combination ($\odot$). & Sec.~\ref{sec:PartitionLattice}; Sec.~\ref{sec:CompositionalityRoadmap} \\$\oplus$ & Non-cumulative (entangled) combination of questions/information.  & Sec.~\ref{sec:CombiningQuestions} \\
$\vartriangle$ & Symmetric difference; the formal operator for non-cumulative combination ($\oplus$). & Sec.~\ref{sec:PartitionLattice}; Sec.~\ref{sec:CombiningQuestions} \\\hline
\end{tabular}
\caption{Glossary of Major Notations}
\label{tab:Glossary}
\end{table}
\end{document}